\documentclass[12pt, conference, final,
onecolumn, technote,a4paper]{IEEEtran}
\usepackage{latexsym}
\usepackage{todonotes}
\usepackage[numbers,square,comma,compress]{natbib}
\usepackage{mathrsfs}
\usepackage[normalem]{ulem}
\usepackage{url}
\usepackage{amsfonts,amssymb,amsmath,amsthm,bm}
\usepackage{xspace}
\usepackage{xcolor}
\usepackage{hyperref}
\usepackage{cleveref}
\newtheorem{theorem}{Theorem}
\newtheorem{assumption}{Assumption}

\newtheorem{definition}[theorem]{Definition}
\newtheorem{remark}[theorem]{Remark}
\newtheorem{corollary}{Corollary}
\newtheorem{lemma}[theorem]{Lemma}

\newtheorem{construction}{Construction}
\newtheorem{claim}[theorem]{Claim}
\newcommand{\cproblem}[1]{\ensuremath{\mathsf{#1}}\xspace}
\newcommand{\LWE}{\cproblem{LWE}}

\newcommand{\SIS}{\cproblem{SIS}}

\newcommand{\rand}{\xleftarrow{\$}}

\newcommand{\ABE}{\cproblem{ABE}}
\newcommand{\FE}{\cproblem{FE}}
\newcommand{\IPFE}{\cproblem{IPFE}}
\newcommand{\maabe}{\cproblem{MA}\text{-}\cproblem{ABE}}
\newcommand{\maipfe}{\cproblem{MA}\text{-}\cproblem{ABIPFE}}
\newcommand{\EVIPFE}{\cproblem{evIPFE}}
\newcommand{\flipLWE}{\cproblem{FlipLWE}}
\newcommand{\INDIPFE}{\cproblem{IND}\text{-}\cproblem{evIPFE}}
\newcommand{\NIPFE}{\cproblem{NIPFE}}

\newcommand{\DNF}{\cproblem{DNF}}
\newcommand{\ABIPFE}{\cproblem{AB}\text{-}\cproblem{IPFE}}
\newcommand{\LSSS}{\cproblem{LSSS}}
\newcommand{\MAABIPFE}{\cproblem{MA}\text{-}\cproblem{ABIPFE}}
\newcommand{\manipfe}{\cproblem{MA}\text{-}\cproblem{ABNIPFE}}
\newcommand{\mannipfe}{\mathsf{MA}\text{-}\mathsf{AB(N)IPFE}}
\newcommand{\maev}{\cproblem{MA}\text{-}\cproblem{ABevIPFE}}
\newcommand{\IND}{\cproblem{IND}}

\newcommand{\globalset}{\cproblem{GlobalSetup}}
\newcommand{\authset}{\cproblem{AuthSetup}}
\newcommand{\gp}{\cproblem{gp}}
\newcommand{\pk}{\cproblem{pk}}
\newcommand{\msk}{\cproblem{msk}}
\newcommand{\mpk}{\cproblem{mpk}}
\newcommand{\keygen}{\cproblem{Keygen}}
\newcommand{\sk}{\cproblem{sk}}
\newcommand{\gid}{\cproblem{gid}}
\newcommand{\aid}{\cproblem{aid}}
\newcommand{\enc}{\cproblem{Enc}}
\newcommand{\ct}{\cproblem{ct}}
\newcommand{\dec}{\cproblem{Dec}}
\renewcommand{\H}{\cproblem{H}}

\renewcommand{\P}{\mathcal{P}}
\newcommand{\AU}{\mathcal{AU}}
\newcommand{\GID}{\mathcal{GID}}
\newcommand{\A}{\mathcal{A}}
\newcommand{\pr}{\mathrm{Pr}}
\newcommand{\SD}{\mathrm{SD}}
\newcommand{\trapgen}{\cproblem{TrapGen}}
\newcommand{\samplepre}{\cproblem{SamplePre}}
\newcommand{\td}{\cproblem{td}}
\newcommand{\samp}{\cproblem{Samp}}
\newcommand{\aux}{\cproblem{aux}}
\newcommand{\pub}{\cproblem{pub}}
\newcommand{\pri}{\cproblem{pri}}
\newcommand{\Adv}{\cproblem{Adv}}
\newcommand{\pre}{\cproblem{Pre}}

\newcommand{\post}{\cproblem{Post}}

\newcommand{\dbdh}{\cproblem{DBDH}}

\newcommand{\superpoly}{\mathrm{superpoly}}
\newcommand{\partset}{\cproblem{par}}

\newcommand{\grow}{\lceil \log q\rceil}
\newcommand{\chal}{\cproblem{chal}}

\newcommand{\midd}{\cproblem{mid}}

\newcommand{\iitem}[1]{\bigskip\noindent\textbf{#1.}\ }
\newcommand{\dotitem}[1]{\noindent\textbf{#1:}}

\newcommand{\evlwe}{\cproblem{evLWE}}

\DeclareMathOperator{\poly}{poly}
\DeclareMathOperator{\negl}{negl}

\newcommand{\matA}{\ensuremath{\mathbf{A}}}
\newcommand{\matB}{\ensuremath{\mathbf{B}}}
\newcommand{\matC}{\ensuremath{\mathbf{C}}}
\newcommand{\matD}{\ensuremath{\mathbf{D}}}

\newcommand{\matG}{\ensuremath{\mathbf{G}}}

\newcommand{\matI}{\ensuremath{\mathbf{I}}}

\newcommand{\matK}{\ensuremath{\mathbf{K}}}

\newcommand{\matP}{\ensuremath{\mathbf{P}}}
\newcommand{\matQ}{\ensuremath{\mathbf{Q}}}

\newcommand{\matU}{\ensuremath{\mathbf{U}}}
\newcommand{\matV}{\ensuremath{\mathbf{V}}}

\newcommand{\matX}{\ensuremath{\mathbf{X}}}

\newcommand{\vecc}{\ensuremath{\mathbf{c}}}
\newcommand{\vecd}{\ensuremath{\mathbf{d}}}
\newcommand{\vece}{\ensuremath{\mathbf{e}}}

\newcommand{\vecg}{\ensuremath{\mathbf{g}}}

\newcommand{\veck}{\ensuremath{\mathbf{k}}}

\newcommand{\vecr}{\ensuremath{\mathbf{r}}}
\newcommand{\vecs}{\ensuremath{\mathbf{s}}}
\newcommand{\vect}{\ensuremath{\mathbf{t}}}
\newcommand{\vecu}{\ensuremath{\mathbf{u}}}
\newcommand{\vecv}{\ensuremath{\mathbf{v}}}

\newcommand{\vecx}{\ensuremath{\mathbf{x}}}
\newcommand{\vecy}{\ensuremath{\mathbf{y}}}
\newcommand{\vecz}{\ensuremath{\mathbf{z}}}
\newcommand{\veczero}{\ensuremath{\mathbf{0}}}

\newcommand{\N}{\ensuremath{\mathbb{N}}}

\newcommand{\R}{\ensuremath{\mathbb{R}}}
\newcommand{\T}{\cproblem{T}}
\newcommand{\Z}{\ensuremath{\mathbb{Z}}}

\newcommand{\vecdelta}{\ensuremath{\bm{\delta}}}

\begin{document}

\title{Decentralized Multi-Authority Attribute-Based Inner-Product Functional Encryption: Noisy and Evasive Constructions from Lattices$^\dag$}
\author{
  Jiaqi Liu,
  Yan Wang, 
  Fang-Wei Fu\thanks{Jiaqi Liu, Yan Wang and Fang-Wei Fu are with Chern Institute of Mathematics and LPMC, Nankai University, Tianjin 300071, P. R. China, Emails: ljqi@mail.nankai.edu.cn, yan.wang@mail.nankai.edu.cn, fwfu@nankai.edu.cn.}
  \thanks{$^\dag$This research is supported by the National Key Research and Development Program of China (Grant No. 2022YFA1005000), the National Natural Science Foundation of China (Grant Nos. 12141108 and 62371259), the Fundamental Research Funds for the Central Universities of China (Nankai University), and the Nankai Zhide Foundation.}
\thanks{manuscript submitted \today}}

\maketitle

\begin{abstract}
  We initiate the study of multi-authority attribute-based functional encryption for \emph{noisy inner-product functionality}, and propose two new primitives: (1) multi-authority attribute-based (noisy) inner-product functional encryption ($\mannipfe$), and (2) multi-authority attribute-based evasive inner-product functional encryption ($\maev$). The $\mannipfe$ primitive generalizes the existing multi-authority attribute-based inner-product functional encryption schemes by Agrawal et al.~\cite{AGT21}, by enabling \emph{approximate} inner-product computation under decentralized attribute-based control. This newly proposed notion combines the approximate function evaluation of noisy inner-product functional encryption (\IPFE) with the decentralized key-distribution structure of multi-authority attribute-based encryption. To better capture noisy functionalities within a flexible security framework, we formulate the $\maev$ primitive under a generic-model view, inspired by the evasive $\IPFE$ framework by Hsieh et al.~\cite{HLL24}. It shifts the focus from pairwise ciphertext indistinguishability to a more relaxed pseudorandomness-based game.

  To support the above notions, we introduce two variants of lattice-based computational assumptions: \begin{itemize}
      \item The \emph{evasive $\IPFE$ assumption} ($\EVIPFE$): it generalizes the assumption introduced in~\cite{HLL24} to the multi-authority setting and admits a reduction from the evasive $\LWE$ assumption proposed by Waters et al.~\cite{WWW22};

      \item The \emph{indistinguishability-based evasive $\IPFE$ assumption} ($\INDIPFE$): it is an indistinguishability-based variant of the evasive $\IPFE$ assumption designed to capture the stronger security guarantees required by our $\mannipfe$ scheme.
  \end{itemize}

  We present concrete lattice-based constructions for both primitives supporting subset policies, building upon the framework of \cite{WWW22}. Our schemes are proven to be statically secure in the random oracle model under the standard $\LWE$ assumption and the newly introduced assumptions. Additionally, we demonstrate that our $\mannipfe$ scheme can be transformed, via standard modulus switching, into a \emph{noiseless} $\maipfe$ scheme that supports exact inner-product functionality consistent with the $\maipfe$ syntax in \cite{AGT21,DP23}. This yields the first lattice-based construction of such a primitive. All our schemes support arbitrary polynomial-size attribute policies and are secure in the random oracle model under lattice assumptions with a sub-exponential modulus-to-noise ratio, making them practical candidates for noise-tolerant, fine-grained access control in multi-authority settings.
\end{abstract}

\newpage
\tableofcontents

\newpage
\section{Introduction}\label{intro}
\noindent \emph{Functional Encryption} (\FE), introduced by Boneh, Sahai, and Waters \cite{BSW11}, is a versatile cryptographic paradigm that extends the capabilities of traditional public-key encryption schemes. It facilitates fine-grained access control over encrypted data, enabling authorized users to compute specific functions of the plaintext without recovering the entire plaintext. More formally, in an \FE scheme, each secret key $\sk_f$ is associated with a particular function $f$. Given an encryption $\enc(\mpk, \vecx)$ of a message $\vecx$ encrypted under the master public key $\mpk$, decrypting with $\sk_f$ reveals only the value $f(\vecx)$ without leaking any additional information about the message $\vecx$. The inherent property of selective disclosure in \FE makes it particularly valuable in applications involving sensitive or confidential data.
For example, in healthcare settings, \FE allows researchers to aggregate statistics from encrypted patient records while maintaining the confidentiality of individual data entries.

The standard security framework for \FE is \emph{indistinguishability-based security} (\IND). In this model, an adversary attempts to distinguish between the encryptions of two selected messages $\vecx_0$ and $\vecx_1$, while being allowed to query secret keys $\sk_f$ for functions $f$ such that $f(\vecx_0)=f(\vecx_1)$. This restriction ensures that the adversary gains no additional information beyond the outputs of permitted functions. The security guarantee requires that the adversary remain unable to distinguish between the ciphertexts, even when given access to multiple such keys.

\iitem{Inner-Product Functional Encryption} A notable subclass of \FE schemes designed for computing linear functions is known as \emph{Inner-Product Functional Encryption} (\IPFE), which has become an active area of research over the past decade. The study of \IPFE began with the work of Abdalla et al.~\cite{ABDP15} and has since been extensively studied in a series of works~\cite{ABDP16,ALS16,DDM16,AGRW17,BBL17,CSGPP18,ACFGU18,LT19,ABKW19,Tom19,ABM20}. In an \IPFE scheme, a ciphertext $\ct_{\vecu}$ encrypts an $\ell$-dimensional vector $\vecu\in \mathcal{R}^{\ell}$ over some ring $\mathcal{R}$, while a secret key $\sk_{\vecv}$ for a vector $\vecv\in \mathcal{R}^{\ell}$ enables the computation of the inner-product function $f_{\vecv}(\cdot)=\langle\cdot,\vecv\rangle$. Decrypting $\ct_{\vecu}$ using $\sk_{\vecv}$ yields the value $\langle\vecu,\vecv\rangle$, without leaking any other information about the message $\vecu$. \IPFE provides a powerful cryptographic tool for selective computation over encrypted data, ensuring that sensitive information remains protected. 
Notable applications include secure data analysis, privacy-preserving machine learning, and privacy-enhanced database queries. In addition, \IPFE serves as a foundational building block for more advanced cryptographic primitives, such as \FE for quadratic functions~\cite{JLS19,Gay20} and \emph{attribute-based encryption} (\ABE)~\cite{WFL19,HLL24}. These extensions further enhance the versatility and applicability of functional encryption.

Building on \IPFE, Agrawal~\cite{Agr19,AP20} introduced \emph{Noisy Inner-Product Functional Encryption} (\NIPFE), an extension that incorporates noise into the computation process. Unlike the standard \IPFE scheme in \cite{ABDP15}, which supports the \emph{exact} evaluation of inner products, \NIPFE enables the computation of \emph{approximate} inner products with an additive noise term of the form $\langle\vecu,\vecv\rangle + e$, where $e$ is a small-norm noise term. This relaxation allows decryption to return an approximate result, which improves flexibility in noisy settings. The security guarantee of \NIPFE ensures that for any two plaintext vectors $\vecu$ and $\vecu'$, and key vectors $\vecv_1, \dots, \vecv_m$, if $\langle \vecu,\vecv_j\rangle\approx\langle \vecu',\vecv_j\rangle$ holds for each $j$, the ciphertexts of $\vecu$ and $\vecu'$ remain computationally indistinguishable, even when the adversary is given the secret keys corresponding to $\vecv_j$. The security property is achieved by leveraging the noise term $e$, which effectively ``smudges'' small differences in the inner-product computations, thereby preventing the adversary from extracting meaningful distinguishing information. \NIPFE provides a security guarantee in applications in which a certain level of noise tolerance is acceptable, balancing between utility and privacy.

\iitem {Attribute-Based Inner-Product Functional Encryption}
\emph{Attribute-Based Inner-Product Functional Encryption} ($\ABIPFE$), introduced by Abdalla et al.~\cite{ACGU20}, is an advanced cryptographic primitive that combines the access control capabilities of \ABE with the inner-product computation functionality of \IPFE. This hybrid notion enables fine-grained access control over encrypted data while supporting privacy-preserving computations on ciphertexts. In a \emph{ciphertext-policy} $\ABIPFE$ scheme, each ciphertext is associated with a set of attributes, and each secret key corresponds to an access policy defined over the attribute universe, following a structure similar to that of a ciphertext-policy \ABE. In the dual setting, namely the \emph{key-policy} $\ABIPFE$ scheme, ciphertexts are associated with access policies, and secret keys are associated with attribute sets. In either case, decryption is allowed only when the attributes satisfy the access policy. Unlike traditional \IPFE, $\ABIPFE$ allows inner-product computations on encrypted data only when decryption is authorized, making it particularly useful for applications such as privacy-preserving machine learning, secure data analytics, and encrypted search. By integrating the strengths of \ABE and \IPFE, $\ABIPFE$ offers a powerful tool for secure data sharing and processing in multi-user environments, with practical relevance to applications such as cloud computing, the Internet of Things (IoT), and healthcare systems.

\iitem{Multi-Authority Attribute-Based Inner-Product Functional Encryption}
Most prior work on $\ABE$ or $\ABIPFE$ focuses on the single-authority setting, where a central trusted authority is the only entity responsible for validating user attributes and issuing the corresponding secret keys. However, in many practical scenarios, it is more natural for multiple separate authorities to independently manage and authorize different subsets of attributes for users. \emph{Multi-Authority} \ABE ($\maabe$)~\cite{LW11,RW15,DKW21a,DKW21b,WWW22} addresses this limitation by decentralizing the authority structure in an \ABE system, allowing multiple independent authorities to operate and manage attribute-based access control. This decentralization improves the scalability and flexibility of the system. Motivated by the need for secure and privacy-preserving computation in decentralized environments, recent efforts have explored the integration of $\maabe$ and \IPFE into a unified framework known as \emph{Multi-Authority Attribute-Based Inner-Product Functional Encryption} ($\MAABIPFE$). Agrawal et al.~\cite{AGT21} first highlighted the potential of combining multi-authority systems with functional encryption by extending the attribute-based component of $\ABIPFE$ to a multi-authority setting. $\MAABIPFE$ is a significant cryptographic framework that unifies the decentralized access control of $\maabe$ and the computation functionality of \IPFE. 

Informally, in an $\MAABIPFE$ scheme, each authority generates its own master secret key and is responsible for issuing secret keys associated with attributes it governs. Let $\ct_{f}(\vecu)$ be a ciphertext encrypting a plaintext vector $\vecu\in\mathcal{R}^{\ell}$ under an access policy $P$. A user holding secret keys corresponding to a vector $\vecv\in\mathcal{R}^{\ell}$ and a set of attributes $A$ can recover the inner product $\langle \vecu, \vecv \rangle$ if and only if $A$ satisfies the policy $P$ (i.e., $A$ is \emph{authorized}); otherwise, the ciphertext reveals no information about $\vecu$.

\subsection{Related Works} 
The $\MAABIPFE$ scheme proposed in \cite{AGT21} lifts the $\ABIPFE$ construction in \cite{ACGU20} to the multi-authority setting. It is known as the first nontrivial multi-authority \FE scheme beyond $\maabe$. It supports access policies that can be realized by \emph{Linear Secret Sharing Schemes} (\LSSS) and is constructed using pairing-based techniques. Its security relies on variants of the subgroup decision assumptions over composite-order bilinear groups introduced in \cite{BSW13}. 

Subsequently, Datta and Pal~\cite{DP23} proposed two $\maipfe$ schemes that also support \LSSS-based access structures. These schemes are built in the more efficient prime-order bilinear group setting and rely on the well-studied \emph{Decisional Bilinear Diffie-Hellman Assumption} (\dbdh) and its variants. Their constructions are proven secure in the Random Oracle Model. By moving to the prime-order group setting, their constructions achieve notable improvements in efficiency over the scheme of \cite{AGT21}. The construction supports computations over vectors of a priori unbounded length and an unbounded number of authorities. Furthermore, the schemes in \cite{DP23} overcome the ``one-use'' restriction, allowing attributes to appear arbitrarily many times within access policies, thereby providing greater flexibility and versatility in policy design.

\iitem{Evasive \LWE Assumption}
\emph{Evasive Learning with Errors} (Evasive \LWE)~\cite{Wee22} is a non-standard variant of the \LWE assumption. Roughly, the assumption states that for any efficient sampling algorithm $\samp$ that outputs a matrix $\matQ\in \Z_q^{n\times t}$,
\begin{align}\label{precon}
&\text{if}\quad (\matA,\matQ,\vecs^\top\matA+\vece_1^\top,\vecs^\top\matQ+\vece_2^\top,\aux)\approx (\matA,\matQ,\$_1,\$_2,\aux),\\
&\text{then}\quad 
(\matA,\matQ,\vecs^\top\matA+\vece_1^\top,\matA^{-1}(\matQ),\aux)\approx (\matA,\matQ,\$_1,\matA^{-1}(\matQ),\aux),\nonumber
\end{align}
for uniformly random matrix $\matA\rand\Z_q^{n\times m}$, uniformly random \LWE secret $\vecs\rand\Z_q^n$, Gaussian noise vectors $\vece_1\in \Z_q^m,\vece_2\in\Z_q^t$, and uniformly random vectors $\$_1\in\Z_q^m,\$_2\in\Z_q^t$. Here, $\matA^{-1}(\matQ)$ denotes a low-norm (typically Gaussian) preimage of $\matQ$ with respect to $\matA$, i.e., a matrix with low-norm entries such that $\matA\cdot (\matA^{-1}(\matQ))=\matQ$. In \cite{Wee22}, the assumption is formulated for \emph{public-coin} sampling algorithms $\samp$, meaning that the auxiliary string $\aux$ contains all the tossed coins (randomness) used by $\samp$. 

Intuitively, the evasive \LWE assumption essentially asserts that the only meaningful way to exploit the preimage $\matA^{-1}(\matQ)$ is to multiply it by the $\LWE$ sample $\vecs^\top\matA+\vece_1^\top$,  obtaining $$(\vecs^\top\matA+\vece_1^\top)\cdot(\matA^{-1}\matQ)\approx \vecs^\top\matQ,$$ and then attempting to distinguish this value from uniform samples. However, the precondition \eqref{precon} in the assumption guarantees that this advantage remains negligible, thereby preventing zeroizing attacks such as those discussed in \cite{CHLR15,CVW18,HJL21,JLLS23}.

In follow-up work, \cite{WWW22} proposed a variant of public-coin evasive \LWE assumption involving multiple matrix pairs $(\matA_i,\matQ_i)_i$, along with their respective \LWE samples $(\vecs_i^\top\matA_i+\vece_{1,i}^\top,\vecs_i^\top\matQ_i+\vece_{2,i}^\top)_i$. Roughly, the evasive \LWE assumption in \cite{WWW22} states that if \begin{align*}
&\text{if}\quad (\{\matA_i,\vecs_i^\top\matA_i+\vece_{1,i}^\top\}_i,\{\vecs_i^\top\matQ_i+\vece_{2,i}^\top\}_{i},\aux)   \approx  (\{\matA_i,\$_{1,i}\}_i,\{\$_{2,i}\}_{i},\aux),\\
&\text{then}\quad (\{\matA_i,\vecs_i^\top\matA_i+\vece_{1,i}^\top\}_i,\{\matA_i^{-1}(\matQ_i)\}_{i},\aux)   \approx  (\{\matA_i,\$_{1,i}\}_i,\{\matA_i^{-1}(\matQ_i)\}_{i},\aux),
\end{align*}
where $\vecs^\top=[\vecs_1^\top\mid\cdots\mid\vecs_\ell^\top]\rand\Z_q^{n\ell}$ is an $\LWE$ secret.

\iitem{Evasive \IPFE} \emph{Evasive Inner-Product Functional Encryption} (Evasive $\IPFE$), introduced by Yao-Ching Hsieh, Huijia Lin, and Ji Luo in \cite{HLL24}, extends the standard \IPFE framework~\cite{ABDP15}. Similar to the \NIPFE scheme~\cite{Agr19}, evasive $\IPFE$ encrypts an $\ell$-dimensional vector $\vecu\in \mathcal{R}^\ell$ over some ring $\mathcal{R}$ into a ciphertext $\ct_\vecu$, and allows the generation of secret keys $\{\sk_{\vecv_j}\}_j$ for vectors $\{
\vecv_j\}_j$. Decryption using $\sk_{\vecv_j}$ yields noisy inner products of the form  $\langle\vecu, \vecv_j\rangle + e_j$, where each $e_j$ is a small-norm noise term, while revealing no additional information about $\vecu$. The key distinction lies in the underlying security model, which reflects a \emph{generic-model view} analogous to that of the evasive \LWE assumption. Specifically,
\begin{align*}
    &\text{ if}\quad \{\vecv_j,\langle\vecu,\vecv_j\rangle+e_j\}_j\approx \{\vecv_j,\$_j\}_j,\\
    &\text{ then} \quad (\ct_{\vecu},\{\vecv_j,\sk_{\vecv_j}\}_j)\approx (\ct_{\$},\{\vecv_j,\sk_{\vecv_j}\}_j), 
\end{align*} where each $e_j$ is small-norm noise term, and each $\$_j$ and $\$$ is a uniformly random elements over the appropriate range. In other words, if the noisy inner products with $\vecu$ are pseudorandom, then the ciphertext encrypting $\vecu$ is also pseudorandom, even given the key-vector pairs $\{\vecv_j,\sk_{\vecv_j}\}_j$.

\subsection{Our Results}
In this work, we make the following conceptual and technical contributions:
We begin by proposing two new notions for attribute-based functional encryption in the multi-authority setting (Section~\ref{sec:5}).
\begin{itemize}
    \item \textbf{Multi-authority attribute-based (noisy) inner-product functional encryption ($\mannipfe$)}: We propose the notion of $\manipfe$, a natural generalization of the $\maipfe$ scheme~\cite{AGT21,DP23} by allowing the \emph{approximate} computation of inner products rather than \emph{exact} ones when decrypting using authorized secret keys. The $\mannipfe$ primitive integrates two core features: (i) the approximate inner-product functionality of $\NIPFE$, and (ii) the decentralized access control mechanism of $\maabe$, where secret keys are issued by multiple authorities based on their attribute sets. This combination enables fine-grained, decentralized access control over approximate inner-product computations in a multi-authority setting.

    We then formalize the security of the $\mannipfe$ scheme in the static setting, following the framework of \cite{RW15,WWW22}. In this model (also referred to as the \emph{selective} model)
 the adversary must commit to its challenge plaintexts, secret-key queries, and authority corruption choices immediately after the global parameters are initialized. Unlike the standard $\maabe$ setting, where secret-key queries must be unauthorized to decrypt the challenge ciphertext, our static security model permits the adversary to request even \emph{authorized} secret keys. However, such secret keys yield sufficiently \emph{close} inner products when evaluated over the two challenge plaintexts. This relaxed notion is tailored to the noisy functional nature of our scheme. It ensures that no efficient adversary can distinguish between the encryptions of two plaintexts unless it can produce keys that yield meaningfully different decryption results.

 \item \textbf{Multi-authority attribute-based evasive inner-product functional encryption ($\maev$)}: We further introduce the notion of $\maev$, which serves as a relaxed variant of $\mannipfe$ in terms of its underlying security definition. While both primitives enable approximate inner-product computation in a multi-authority setting, the $\maev$ scheme adopts a ``generic-model view'' inspired by recent works on evasive $\IPFE$ in \cite{HLL24}. Instead of requiring the adversary to distinguish between two challenge ciphertexts, the security model is formulated via a pseudorandomness-based game: the goal of the adversary is to distinguish the ciphertext of a structured plaintext generated by a public-coin sampler from that of a uniformly random plaintext. 

The main difference between the adversarial capabilities in the $\mannipfe$ and $\maev$ games lies in the secret-key queries. For unauthorized queries, both models handle them identically. However, for authorized queries, the adversary in $\maev$ is more restricted. Specifically, it does not have full control over both the attribute set and the key vector $\vecv$ that together define the secret-key queries. Instead, the adversary specifies only the attribute set, while the key vector $\vecv$ is sampled by a public-coin sampler $\samp_{\vecv}$. This restriction mirrors the structure of evasive $\IPFE$ games. In addition, the challenge plaintext $\vecu$ is sampled by another public sampling algorithm $\samp_{\vecu}$ with private randomness (i.e., the adversary does not have access to $\vecu$). The security definition guarantees that if the sampler pair $(\samp_{\vecv},\samp_{\vecu})$ produces noisy pseudorandom inner products, i.e., $$(\vecr_{\pub},\langle\vecu,\vecv\rangle+e)\approx(\vecr_{\pub},\$),$$ then no efficient adversary can distinguish the ciphertext of $\vecu$ and that of a uniformly random plaintext with non-negligible advantage. Here $\vecr_{\pub}$ denotes the randomness used by $\samp_{\vecv}$, $e$ is a noise term, and $\$$ is a uniformly random element.
\end{itemize}

To instantiate the above notions, we provide three concrete lattice-based constructions, supporting approximate inner-product computation under subset policies  with decentralized access control, offering different trade-offs between noise management and security guarantees.

\iitem{Construction of $\maev$} We construct an $\maev$ scheme for subset policies, following the framework of \cite{WWW22}. Our construction is proven to be statically secure in the random oracle model under the standard \LWE assumption and a new evasive $\IPFE$ assumption ($\EVIPFE$) introduced in this work, which generalizes the evasive $\IPFE$ assumption proposed in~\cite{HLL24}. The idea of evasive $\IPFE$ assumption is conceptually motivated by the evasive $\LWE$ assumption and admits a reduction from the evasive $\LWE$ assumption from~\cite{WWW22} (cf. Appendix~\ref{sec:app}). Full details of the construction, security proof, and parameter selection are provided in Section~\ref{sec:6}.

\begin{theorem}[Informal]
    Suppose that the $\LWE$ assumption and the $\evlwe$ assumption (cf. Section~\ref{Evasiveipfe}) hold with sub-exponential modulus-to-noise ratio, then there exists a statically secure $\maev$ scheme for subset policies of arbitrary polynomial size in the random oracle model. 
\end{theorem}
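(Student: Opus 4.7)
The plan is to construct the $\maev$ scheme by lifting the subset-policy $\maabe$ construction of Waters, Wichs, and Wu~\cite{WWW22} to the inner-product setting, and then to prove static security via a sequence of hybrid games that invoke, in turn, the newly introduced $\EVIPFE$ assumption and the standard $\LWE$ assumption. At a high level, the global setup publishes a uniform matrix $\matA$ together with a random oracle $\H$; each authority $\aid$ samples a matrix $\matB_{\aid}$; a secret key for $(\gid,\aid)$ with key-vector $\vecv$ is a Gaussian preimage of a target derived from $\vech_\gid=\H(\gid)$ and $\vecv$ with respect to the matrix $[\matA \mid \matB_{\aid}+\matH_{\aid}\matG]$; and encryption of a plaintext vector $\vecu$ under a subset policy $P$ emits $\LWE$-type samples $\vecs^\top[\matA \mid \matB_{\aid}+\matH_{\aid}\matG]+\vece^\top$ for every $\aid\in P$, together with a functional component that binds $\vecu$ to the shared $\LWE$ secret. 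Any authorized attribute set $A\supseteq P$ then lets a user aggregate the per-authority preimages and recover $\langle\vecu,\vecv\rangle+e$ by performing the $\IPFE$-style decryption of \cite{HLL24} on top of the $\maabe$ decryption of \cite{WWW22}.

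For static security, after the adversary commits to its corrupted authorities, attribute-set and key-vector queries, and the challenge subset policy, I would proceed through three main hybrids. In the first hybrid, I program the matrices $\matB_{\aid}$ for honest authorities using the $\matG$-trapdoor technique of Micciancio and Peikert, so that per-authority preimages can be simulated on unauthorized attribute sets via the trapdoor of $\matG$, exactly as in the WWW22 framework. In the second hybrid, I invoke the $\EVIPFE$ assumption: its precondition on the sampler pair $(\samp_{\vecu},\samp_{\vecv})$ is precisely the pseudorandomness-of-noisy-inner-products condition baked into the $\maev$ security game, so its conclusion lets me replace the functional component of the challenge ciphertext by uniform, jointly with the simulated key preimages on authorized queries. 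In the third hybrid, I appeal to the plain $\LWE$ assumption with respect to the matrices of the honestly sampled authorities to argue that the remaining $\vecs^\top[\matA \mid \matB_{\aid}+\matH_{\aid}\matG]+\vece^\top$ blocks are pseudorandom, with Gaussian smudging at a sub-exponential modulus-to-noise ratio absorbing the accumulated noise.

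The main obstacle will be properly casting the multi-authority preimage structure into the single-authority $\EVIPFE$ template. Unlike the $\IPFE$-only game of \cite{HLL24}, each authorized secret key here is a collection of Gaussian preimages produced by several authorities against distinct matrices $\{\matB_{\aid}+\matH_{\aid}\matG\}_{\aid\in A}$, tied together only by the common $\vech_\gid$, and the adversary may make many $\gid$ queries. To apply $\EVIPFE$ I must first rearrange these multi-authority preimages into the ``matrix plus target'' template required by the assumption, so that the noisy inner products $\{\langle\vecu,\vecv_j\rangle+e_j\}_j$ arising in its precondition correspond exactly to the approximate products unlocked by each admissible authorized query. This reindexing, combined with a careful authority-by-authority noise accounting, is what translates the static admissibility condition of $\maev$ into the precondition demanded by $\EVIPFE$.

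The remaining ingredients are routine: parameters are chosen with a sub-exponential modulus $q$ so that the Gaussian preimage widths, the smudging noise, and the $\IPFE$ noise compose without breaking decryption correctness on authorized sets, and the random oracle is programmed lazily on each fresh $\gid$ query so that $\vech_\gid$ is sampled consistently with the trapdoor simulation. Assembling these pieces yields the claimed statically secure $\maev$ scheme for subset policies of arbitrary polynomial size under $\LWE$ and $\EVIPFE$ with sub-exponential modulus-to-noise ratio.
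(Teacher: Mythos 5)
Your overall strategy --- lift the random-oracle $\maabe$ of \cite{WWW22} to the inner-product setting and reduce static security to $\EVIPFE$ plus $\LWE$ --- matches the paper's, but two of your three hybrids do not go through as described. First, you propose to answer unauthorized (Type I) key queries by puncturing a Micciancio--Peikert $\matG$-trapdoor in $\matB_{\aid}+\matH_{\aid}\matG$. In the multi-authority setting an unauthorized query $(\gid,A,\vecv)$ is one with $(A\cup\mathcal{C})\cap A^*\subsetneqq A^*$, yet $A$ may still contain several challenge authorities, and the adversary is entitled to the per-authority key components for exactly those authorities; but for the challenge authorities the tag $\matH_{\aid}$ must vanish, so the $\matG$-trapdoor cannot produce these preimages. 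The paper avoids this by giving every authority its own $\matA_{\aid}\leftarrow\trapgen$ and folding \emph{all} preimages for challenge authorities --- Type I and Type II alike --- into the matrices $\matQ_i$ of the $\EVIPFE$ instance, so that they are supplied by the postcondition as $\matK_i\leftarrow\matA_i^{-1}(\matQ_i)$. Correspondingly, your claim that the $\EVIPFE$ precondition ``is precisely'' the pseudorandom-noisy-inner-products property of the sampler is too optimistic: that property accounts only for the Type II columns of $\vecs_i^\top\matQ_i$. The precondition also demands pseudorandomness of the Type I columns and of the masked component $\sum_i\vecs_i^\top\matP_i+\vecu^\top\matG$, and proving it is the bulk of the paper's work (Claim~\ref{evpre}): an authority-by-authority secret-sharing of the $\LWE$ secret exploiting the authority missing from each Type I query, a flipped-$\LWE$ step over the hash outputs $\vecr$, and repeated noise smudging.

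Second, your final hybrid --- invoking plain $\LWE$ to randomize the blocks $\vecs^\top[\matA\mid\matB_{\aid}+\matH_{\aid}\matG]+\vece^\top$ \emph{after} the preimages have been handed out --- is unsound. Once the adversary holds low-norm $\matK_i$ with $\matA_i\matK_i=\matQ_i$, the matrix $\matA_i$ is no longer a fresh $\LWE$ matrix: the pair $(\matA_i,\matK_i)$ is exactly the correlated auxiliary input that evasive assumptions exist to handle, and the pseudorandomness of $\vecs_i^\top\matA_i+\vece^\top$ in the presence of $\matK_i$ is what the $\EVIPFE$ \emph{postcondition} asserts; it cannot be recovered from plain $\LWE$ in a separate later hybrid. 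In the paper, $\LWE$ appears only inside the proof of the precondition. A smaller but consequential point: you hash only $\gid$, whereas the paper hashes $(\gid,\vecv)$; without $\vecv$ in the hash input, two queries with the same $\gid$ and different key vectors share the same $\vecr$, which breaks the freshness of the $\vecr$'s needed in the flipped-$\LWE$ step. To repair your argument you would essentially have to restructure it into the paper's shape: a single application of $\EVIPFE$ covering all challenge-authority preimages and the entire ciphertext at once, preceded by a self-contained proof of its precondition from $\LWE$.
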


\iitem{Construction of $\manipfe$} We then construct an $\manipfe$ scheme for subset policies. This construction follows the same syntax as the $\maev$ scheme but achieves a stronger notion of security under different lattice-based assumptions. Specifically, it is proven to be statically secure in the random oracle model under the $\LWE$ assumption and the new $\INDIPFE$ assumption (cf. Section~\ref{Evasiveipfe}) introduced in this work. The $\INDIPFE$ assumption is an indistinguishability-based variant of the $\EVIPFE$ discussed earlier. Full details of the construction, security analysis, and parameter selection are provided in Section~\ref{sec:7}.

\begin{theorem}[Informal]
    Suppose that the $\LWE$ assumption and the $\INDIPFE$ assumption (cf. Section~\ref{Evasiveipfe}) hold with sub-exponential modulus-to-noise ratio, then there exists a statically secure $\manipfe$ scheme for subset policies of arbitrary polynomial size in the random oracle model. 
\end{theorem}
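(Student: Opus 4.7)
The plan is to establish the theorem by a hybrid argument that closely parallels the security analysis of our $\maev$ scheme in Section~\ref{sec:6}, but substitutes the pseudorandomness-style reduction with one that leverages the indistinguishability structure of the $\INDIPFE$ assumption. First I would exhibit the construction, which re-uses the algorithms of the $\maev$ scheme essentially verbatim: the global setup, the per-authority setup, the key generation built from $\trapgen$/$\samplepre$, the subset-policy encryption, and the Gaussian-based decryption that recovers $\langle\vecu,\vecv\rangle+e$ for some small-norm noise $e$. Correctness follows from the same noise accounting as in the $\maev$ analysis, since both schemes share the same syntax and differ only in the assumption used for security.

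Given the static adversary's commitment to two challenge plaintexts $\vecu_0,\vecu_1$, a list of secret-key queries (partitioned into authorized and unauthorized ones), and a set of corrupted authorities, the proof proceeds through two groups of hybrid games. The first group handles the programming of the random oracle and the answering of unauthorized queries. Borrowing the technique of \cite{WWW22}, I would program $\H(\gid)$ for each global identifier so as to align with the attribute structure dictated by the adversary's commitment, letting the simulator embed fresh $\LWE$ samples into the view of the adversary while still producing syntactically correct keys for corrupted authorities. The reduction to the standard $\LWE$ assumption follows from the usual lattice trapdoor simulation together with noise flooding, and is essentially inherited from the subset-policy analysis of the $\maev$ construction.

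The second and more substantive group of hybrids addresses the authorized secret-key queries jointly with the challenge ciphertext. Here the adversary may request keys $\sk$ for attribute sets $A$ satisfying the ciphertext policy and vectors $\vecv$ for which $|\langle\vecu_0-\vecu_1,\vecv\rangle|$ is small. The $\INDIPFE$ assumption is tailored precisely for this scenario: it asserts computational indistinguishability of ciphertexts encrypting $\vecu_0$ and $\vecu_1$ in the presence of $\IPFE$-style key material whenever the inner-product differences are within the admissible noise window. I would therefore construct a reduction that, given an $\INDIPFE$ challenge together with the simulated unauthorized material from the first group of hybrids, reconstructs the full view of the $\manipfe$ adversary by stitching the $\INDIPFE$ challenge ciphertext onto the honest-authority components of the $\manipfe$ ciphertext. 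The key bookkeeping task is to verify that the authorized keys, once the honest-authority trapdoors are replaced by $\INDIPFE$ oracle answers, are distributed identically to the genuine keys and that the admissibility predicate of $\INDIPFE$ is satisfied by the adversary's declared authorized queries.

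The main obstacle I anticipate is interleaving the random-oracle programming needed to deal with corrupted authorities and unauthorized queries with the $\INDIPFE$ reduction used for authorized queries, while keeping the two noise budgets consistent. Unauthorized queries require noise flooding proportional to the $\LWE$ noise, whereas the $\INDIPFE$ reduction imposes its own bound driven by the admissible inner-product gap across the two challenge plaintexts. Reconciling the two inside a single parameter regime is what forces the sub-exponential modulus-to-noise ratio in the theorem statement, and carefully carrying out this parameter balancing, especially when an authorized key draws attributes from both honest and corrupted authorities, is the most delicate part of the argument. The framework of \cite{WWW22} together with our tailored formulation of $\INDIPFE$ in Section~\ref{Evasiveipfe} provides the template to make these choices coherent and to conclude static security.
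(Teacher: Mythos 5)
Your proposal matches the paper's construction (Construction~\ref{con1} reused verbatim) and its correctness argument, and it correctly identifies the $\INDIPFE$ assumption as the engine for challenge-ciphertext indistinguishability. However, it misstates what the $\INDIPFE$ assumption provides, and this leaves a genuine gap. The assumption is not a direct statement that ciphertexts of $\vecu_0$ and $\vecu_1$ are indistinguishable in the presence of admissible key material; it is a conditional implication: \emph{if} the precondition holds (indistinguishability when the preimages $\matK_i$ are replaced by the noisy products $\vecs_i^\top\matQ_i+\vece_{3,i}^\top$), \emph{then} the postcondition holds (indistinguishability given the actual preimages). Embedding the postcondition challenge into the adversary's view is the easy half; the bulk of the paper's proof is Claim~\ref{indpre}, which shows that the precondition advantage of the sampler $\mathcal{S}_{\mathcal{A}}$ induced by the adversary is negligible. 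That claim is where standard $\LWE$, the flipped-$\LWE$ secret-splitting argument exploiting the Type I condition $(A\cup\mathcal{C})\cap A^*\subsetneqq A^*$, and the smudging of $(\vecu_0-\vecu_1)^\top\vecv$ under the bound $B_1$ for Type II queries all enter. Your proposal reduces this to ``verifying that the admissibility predicate of $\INDIPFE$ is satisfied,'' but there is no predicate to check --- there is a multi-hybrid indistinguishability statement to prove, and omitting it removes the core of the argument.

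Relatedly, your decomposition --- first dispose of unauthorized queries via standard $\LWE$ and noise flooding in the main game, then hand only the authorized queries to $\INDIPFE$ --- does not go through as stated. The Type I keys issued by honest challenge authorities $\aid\in A^*\cap\mathcal{N}$ are Gaussian preimages under the very matrices $\matA_{\aid}$ appearing in the ciphertext components $\vecs_{\aid}^\top\matA_{\aid}+\vece_{1,\aid}^\top$, and their joint distribution with the ciphertext cannot be simulated from standard $\LWE$ alone, precisely because preimages leak trapdoor-like information. The paper therefore feeds \emph{all} preimages for challenge authorities --- both the Type I columns $\matK_i^{(1)}$ and the Type II columns $\matK_i^{(2)}$ --- into a single $\INDIPFE$ reduction, programming the random oracle at $(\gid,\vecv)$ (not just $\gid$) with the sampler's vectors $\vecr_{i,j}$, and exploits the distinction between the two query types only inside the precondition proof. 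If you keep the Type I keys outside the $\INDIPFE$ instance, the reduction cannot answer them.
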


\iitem{Construction of Noiseless $\maipfe$} Finally, we modify our $\manipfe$ scheme into a noiseless $\maipfe$ scheme, which enables \emph{exact} computation of the inner products upon successful decryption.
This construction aligns with the definition of $\maipfe$ from \cite{AGT21,DP23}. The construction proposed in this work is the first such scheme based on lattice-related assumptions. Our modification relies on a standard modulus switching technique that removes the noise introduced during decryption. The resulting scheme retains a similar overall structure but achieves exact correctness. The scheme is proven to be statically secure in the random oracle model, under the $\LWE$ assumption and the $\INDIPFE$ assumption. Full details of the construction, security analysis, and parameter selection are provided in Section~\ref{sec:8}.

\begin{theorem}[Informal]
   Suppose that the $\LWE$ assumption and the $\INDIPFE$ assumption (cf. Section~\ref{Evasiveipfe}) hold with sub-exponential modulus-to-noise ratio, then there exists a statically secure \emph{noiseless} $\maipfe$ scheme for subset policies of arbitrary polynomial size in the random oracle model. 
\end{theorem}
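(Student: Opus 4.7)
The plan is to start from the statically secure $\manipfe$ scheme produced by the preceding theorem and lift it to a noiseless $\maipfe$ scheme via a standard modulus-switching / message-scaling transformation. Recall that decryption in the $\manipfe$ scheme recovers, upon satisfaction of the subset policy, a value of the form $\langle\vecu,\vecv\rangle+e$ in which the noise term $e$ admits a polynomial magnitude bound $B_{\text{noise}}$ determined by the $\LWE$ error width, the Gaussian widths used in preimage sampling, and the arithmetic depth of the subset-policy evaluation. To obtain exact correctness, I would introduce a scaling factor $\Delta\gg 2B_{\text{noise}}$ and modify the encryption algorithm to feed the rescaled plaintext $\Delta\cdot\vecu$ into the $\manipfe$ encryption, while the global setup, authority setup, key generation, and decryption routines remain structurally identical. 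Decryption then outputs $\Delta\langle\vecu,\vecv\rangle+e\pmod q$; rounding to the nearest multiple of $\Delta$ and dividing by $\Delta$ recovers $\langle\vecu,\vecv\rangle$ exactly, provided the honest inner product still fits within the effective plaintext modulus $\lfloor q/\Delta\rfloor$.

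\textbf{Correctness and parameter selection.} I would fix the ciphertext modulus $q$, the $\LWE$ Gaussian width, and the scaling factor $\Delta$ so that (i) $\Delta>2B_{\text{noise}}$, guaranteeing that the rounding step succeeds; (ii) $q/\Delta$ strictly exceeds the maximum absolute honest inner product permitted by the target message space, guaranteeing no wraparound; and (iii) the modulus-to-noise ratio remains sub-exponential, so that the $\LWE$ and $\INDIPFE$ assumptions continue to hold at the chosen parameters. This is precisely the regime covered by the hypothesis of the theorem and by the instantiation used in the $\manipfe$ construction. Correctness then follows from a routine noise-tracking calculation through the subset-policy decryption circuit.

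\textbf{Security reduction.} For static security I would give a tight black-box reduction to the static security of the $\manipfe$ scheme guaranteed by the previous theorem. Given any static $\maipfe$ adversary $\A$ that commits upfront to the challenge plaintext pair $(\vecu^{(0)},\vecu^{(1)})$, an authority-corruption set, and its secret-key queries, the reduction forwards the scaled pair $(\Delta\vecu^{(0)},\Delta\vecu^{(1)})$ to its own $\manipfe$ challenger and relays every other query verbatim. The $\maipfe$ admissibility condition forces each \emph{authorized} query $\vecv$ to satisfy $\langle\vecu^{(0)},\vecv\rangle=\langle\vecu^{(1)},\vecv\rangle$; multiplying both sides by $\Delta$ shows that the scaled plaintexts meet the closeness precondition of the $\manipfe$ game with gap zero, trivially complying with any admissible noise threshold, while unauthorized queries are unaffected. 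Any non-negligible distinguishing advantage of $\A$ therefore transports directly to the $\manipfe$ game, and by the previous theorem this contradicts the $\LWE$ and $\INDIPFE$ assumptions.

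\textbf{Main obstacle.} The substantive technical work is parameter tightness: the scaling factor $\Delta$ must dominate \emph{all} sources of decryption noise propagated through the subset-policy evaluation (Gaussian preimages, inner-product reconstruction, and any noise amplification arising in the $\INDIPFE$-based hybrid arguments), while simultaneously $q/\Delta$ must be large enough to accommodate the full honest inner-product range, and $q$ itself must remain compatible with sub-exponential modulus-to-noise ratios so that the underlying lattice assumptions remain hard. Simultaneously satisfying all three inequalities, and verifying that the resulting plaintext space is large enough to be meaningful yet still polynomially bounded so that the final rounding/division step is efficient, is the one non-trivial calculation; the rest of the proof is essentially a mechanical inheritance from the $\manipfe$ construction and its security analysis.
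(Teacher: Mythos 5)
Your construction is the same modulus-switching transformation the paper uses in Section~\ref{sec:8} (Construction~\ref{con2} encrypts $\lceil\vecu\rfloor_{p\rightarrow q}$ under the machinery of Construction~\ref{con1} and rounds with $\lceil\cdot\rfloor_{q\rightarrow p}$ at decryption), and your correctness analysis matches Theorem~\ref{correct3}. Where you genuinely diverge is the security argument: the paper proves Theorem~\ref{security3} by re-running the direct reduction to the $\INDIPFE$ assumption (it states that the proof "follows essentially the same structure as Theorem~\ref{security2}" and omits it), whereas you give a black-box reduction from the noiseless $\maipfe$ game to the static security of the already-proven $(B_0,B_1)$-$\manipfe$ scheme. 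Your route is cleaner and more modular --- it avoids repeating the entire hybrid argument and makes explicit that the noiseless scheme is literally the noisy scheme evaluated on transformed inputs --- at the cost of having to verify that the transformation preserves admissibility. The paper's route avoids that verification but duplicates the proof machinery.

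One point in your admissibility argument deserves more care than "multiplying both sides by $\Delta$ gives gap zero." The noiseless admissibility condition is $(\vecu_0-\vecu_1)^\top\vecv=0$ \emph{over $\Z_p$}, so over the integers the gap is some multiple $kp$ with $|k|$ up to roughly $np/4$, not necessarily $0$. After scaling, the quantity the $\manipfe$ challenger checks is $\Delta\cdot kp \bmod q$ (or, with per-coordinate rounding as in the paper, $kq+(\bm{\delta}_0-\bm{\delta}_1)^\top\vecv$ with $\|\bm{\delta}_b\|\le 1/2$). This vanishes modulo $q$ only when $\Delta p=q$ exactly; otherwise it leaves a residue bounded by roughly $np$ (per-coordinate rounding) or $np^2/8$ (global integer scaling with $p\nmid q$). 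So your reduction needs either $p\mid q$ and $\Delta=q/p$, or a noisy-scheme instantiation with $B_1$ at least this residue bound --- both achievable within the stated sub-exponential parameter regime, but the "trivially gap zero" step as written is not quite right and is exactly where the reduction could silently fail if the parameters are not matched. With that repaired, your argument goes through.
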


\subsection{Paper Organization}
The paper is organized as follows. Section~\ref{technique} provides a technical overview of our main constructions and ideas. Section~\ref{sec:2} introduces necessary notations, basic concepts of lattices, the \LWE assumption, and its variants. Section~\ref{Evasiveipfe}, presents two variants of the Evasive \IPFE assumptions, building on the framework of~\cite{HLL24}. In Section~\ref{sec:5}, we formalize monotone access structures and give the definitions of multi-authority attribute-based evasive inner-product functional encryption ($\maev$) and multi-authority attribute-based (noisy) inner-product functional encryption ($\mannipfe$). Section~\ref{sec:6} presents our construction of an $\maev$ scheme. Section~\ref{sec:7} introduces a construction of an $\manipfe$ scheme. Section~\ref{sec:8} describes a construction of a noiseless $\MAABIPFE$ scheme using the modulus-switching technique. Sections~\ref{sec:6} through~\ref{sec:8} include detailed analyses of correctness, security, and parameter selection.

\section{Technical Overviews}\label{technique}
In this section, we demonstrate a high-level overview of our construction of multi-authority attribute-based (noisy/evasive) inner-product functional encryption with subset policies in the random oracle model. Our construction is inspired by the frameworks of \cite{WWW22} and \cite{HLL24}.

\iitem{Preliminaries}
We introduce some notations used throughout this section. To simplify expressions involving noise, we adopt the wavy underline to indicate that a term is perturbed by some noise term. For example, we use the notation $\uwave{\vecs^\top\matA}$ to denote the term $\vecs^\top\matA+\vece$ for an unspecified noise vector $\vece$. For a discrete set $S$, we denote $x\rand S$ as $\vecx$ is uniformly sampled from the set $S$. The term $\matA^{-1}(\vecy)$ means sampling a short preimage $\vecx$ such that $\matA\vecx=\vecy$. For a positive integer $n\in \N$, denote $[n]$ as the set $\{k\in\N:1\leq k\leq n\}$. Let $\$$ denote a uniformly random vector of appropriate dimension. Let $\matG_n=\matI_n\otimes\vecg^{\top}\in \Z_q^{n\times n\lceil\log q\rceil}$ denote the gadget matrix, where $\vecg^{\top}=(1,2,\ldots,2^{\lceil\log q\rceil-1})$. The inverse function $\matG_n^{-1}$ maps the vectors in $\Z_q^n$ to their binary expansions in $\{0,1\}^{n\lceil\log q\rceil}$. We will omit the subscript when the dimension $n$ is clear. We use $\overset{c}{\approx}$ as the abbreviation for computationally indistinguishable.

\subsection{Our Schemes}\label{scheme}
 Let $[L]$ denote the index set of all authorities. In the subset policy setting, each ciphertext is associated with a subset $X\subseteq [L]$, while each secret key is associated with a subset $Y\subseteq [L]$ and a key vector $\vecv\in\Z_q^n$. Decryption of a ciphertext encrypting $\vecu\in\Z_q^n$ associated with $X$ using the secret key corresponding to $Y$ and $\vecv\in \Z_q^n$ yields an approximate value of the Euclidean inner product $\langle \vecu,\vecv\rangle=\vecu^\top\vecv$ whenever $X\subseteq Y$.

We next provide an informal description of our construction. The scheme consists of the following components:
\begin{itemize}
    \item The master public key for the authority indexed by $i$ is given by $(\matA_i,\matB_i,\matP_i)\rand\Z_q^{n\times m}\times \Z_q^{n\times m'}\times \Z_q^{n\times m}$.
    
    \item The master secret key for the authority indexed by $i$ is the trapdoor $\td_{\matA_i}$ for the matrix $\matA_i$. This trapdoor allows efficient sampling of short preimages for $\matA$.
    
    \item Given a subset $Y\subseteq [L]$, a key vector $\vecv\in\Z_q^n$, and a global identifier $\gid$, each authority $i \in Y$ uses its trapdoor to generate the secret key component corresponding to $i$. The corresponding secret key is generated as $$\sk\leftarrow\{\matA_i^{-1}(\matP_i\matG^{-1}(\vecv)+\matB_i\H(\gid,\vecv))\}_{i\in Y}.$$ Here $\H$ is a hash function modeled as a random oracle that outputs vectors in $\Z_q^{m'}$ with small norm.
    
    \item To encrypt a message $\vecu\in\Z_q^n$ under an attribute set $X\subseteq [L]$, the ciphertext is generated as $$\ct\leftarrow\left(\{\uwave{\vecs_i^\top\matA_i}\}_{i\in X},\uwave{\sum_{i\in X}\vecs_i^\top\matB_i},\uwave{\sum_{i\in X}\vecs_i^\top\matP_i}+\vecu^\top\matG\right),$$ where $\vecs_i\rand\Z_q^n$ for each $i\in X$.

\item The decryption operation using $\sk$ is performed as follows. It first computes $\vecr\leftarrow \H(\gid,\vecv)$, and then computes $$-\sum_{i\in X}\uwave{\vecs_i^\top\matA_i}\cdot\matA_i^{-1}(\matP_i\matG^{-1}(\vecv)+\matB_i\vecr)+\uwave{\left(\sum_{i\in X}\vecs_i^\top\matB_i\right)}\cdot\vecr+\left(\uwave{\sum_{i\in X}\vecs_i^\top\matP_i}+\vecu^\top\matG\right)\cdot\matG^{-1}(\vecv)\approx\vecu^\top\vecv$$ whenever $Y\supseteq X$.
\end{itemize}

\iitem{The randomization of $\gid$ and $\vecv$}
To prevent collusion among users in the system who might attempt to combine their individual secret keys to decrypt ciphertexts, which contradicts the intended security requirements of our setting, we adopt the \emph{global identifier model} (\gid), which is a standard technique widely used in multi-authority cryptographic schemes. 

In the \gid model, each user is assigned a unique and efficiently verifiable global identifier during setup, which remains fixed once global parameters are established. Upon verifying a user's global identifier, the authority validates his access rights and issues secret keys only for attributes authorized for that user. When instantiated in the random oracle model, the hash value of the global identifier $\H(\gid)$ provides consistent randomness across secret keys issued by different authorities, thereby preventing unauthorized key combination. In our $\MAABIPFE$ setting, we further extend this model by additionally incorporating the key
vector $\vecv$ as part of the input to the hash function. This ensures that secret keys are randomized with respect to both the user's identity and their functional target, further mitigating potential collusion attacks. We elaborate on this approach and propose our constructions in Sections \ref{sec:6}$\sim$\ref{sec:8}.

\subsection{Static Security for Our Scheme}
In the following, we provide a high-level overview demonstrating that the proposed scheme is statically secure under certain lattice-based assumptions. The static security model of $\mannipfe$ follows a structure similar to that of $\maabe$. Specifically, the security game involves a set of corrupt authorities $\mathcal{C}\subseteq [L]$. The adversary is allowed to arbitrarily assign the public keys and master secret keys for these corrupt authorities. Let $X\subseteq [L]$ denote the subset of authorities associated with the challenge ciphertexts. The adversary's goal is to distinguish between the encryptions of two selected vectors $\vecu_0$ and $\vecu_1$ under the authority set $X$.

Within this setting, the adversary is also allowed to submit secret-key queries of the form $\{(\gid,Y,\vecv)\}$, where $Y$ is a subset of non-corrupt (honest) authorities, $\gid$ is a global user identifier, and $\vecv$ is a key vector. Each secret-key query must satisfy at least one of the two following conditions:
\begin{enumerate}
    \item [(a)] $(Y\cup\mathcal{C})\cap X\subsetneqq X$: This condition reflects the multi-authority aspect of the setting. The security requirement states that even if the adversary has access to secret keys with authorities that do not satisfy the policy (unauthorized keys), the ciphertexts remain indistinguishable, aligning with the static security notion in the $\maabe$ scheme.
    \item [(b)] $(\vecu_0-\vecu_1)^\top\vecv\approx 0$: This condition enforces the inner-product functional encryption scheme requirement. It guarantees that secret keys satisfying the policy reveal only an approximate value of the inner-product $\vecu_i^\top\vecv$ ($i=0,1$), and no additional information beyond that.
\end{enumerate}

\iitem{Weakened Static Security: $\maev$\ Scheme}
Inspired by the evasive $\IPFE$ framework introduced in \cite{HLL24}, we begin our analysis with a weaker notion of static security, under which we define the $\maev$ scheme. While sharing the same syntax as the $\mannipfe$, this primitive imposes weaker adversarial
control. In particular, the queries mentioned above are no longer entirely determined by the adversary. More precisely, the secret-key queries satisfying condition (b) are jointly generated by both the adversary and the challenger, while the challenge plaintext $\vecu_0$ is generated by a public-coin sampler and $\vecu_1$ is uniformly sampled. This setting reflects a ``generic-model view'' of the scheme, analogous to evasive $\LWE$~\cite{Wee22} and evasive $\IPFE$~\cite{HLL24}. The challenge plaintext $\vecu_0$ and the secret key queries $\{\sk_{\vecv_j}\}_j$ (associated with key vector $\vecv_j$) satisfying condition (b) are generated such that $$\{\vecv_j,\{\uwave{\vecu_0^\top\vecv_j}\}_j\}\overset{c}{\approx}\{\vecv_j,\{\$_j\}_j\}.$$
The security requirement in this model is that the adversary cannot distinguish between the ciphertext of $\vecu_0$ and that of uniformly generated $\vecu_1$, even when it has access to the corresponding secret keys of its queries.

\iitem{Our Assumption: Generalized Evasive $\IPFE$}
We prove the static security of our construction as an $\maev$ scheme under the standard $\LWE$ assumption and the $\EVIPFE$ assumption, which we formally introduce in Section~\ref{Evasiveipfe}. This assumption can be viewed as an extension of the evasive $\IPFE$ assumption proposed by~\cite{HLL24}. We begin by informally recalling the assumption from~\cite{HLL24}. Let $\vecv_1,\ldots,\vecv_k$ be vectors generated by a public-coin sampler with public randomness $\vecr_{\pub}$, and let $\vecu$ be generated by a private-coin sampler. Let $\matA,\matP$ be uniformly random matrices, and define $$\matQ:=\matP[\matG^{-1}(\vecv_1)\mid\cdots\mid\matG^{-1}(\vecv_k)].$$ The assumption states that if the following \emph{precondition} holds: \begin{align*}
    (\vecr_{\pub},\matA,\uwave{\vecs^\top\matA},\matP,\uwave{\vecs^\top\matP}+\vecu^\top\matG,\uwave{\vecs^\top\matQ})\overset{c}{\approx}(\vecr_{\pub},\matA,\$_1,\matP,\$_2',\$_3).
\end{align*}
then the following \emph{postcondition} also holds:
\begin{align*}
(\vecr_{\pub},\matA,\uwave{\vecs^\top\matA},\matP,\uwave{\vecs^\top\matP}+\vecu^\top\matG,\matK)\overset{c}{\approx}(\vecr_{\pub},\matA,\$_1,\matP,\$_2',\matK),
\end{align*}
where $\matK\leftarrow\matA^{-1}(\matQ)$ and $\vecr_{\pub}$ consists of all public randomness used by the samplers. The intuition behind this assumption is that the only way to use the matrix $\matK$ is via computing \begin{align*}
\uwave{\vecs^\top\matA}\matK\approx\uwave{\vecs^\top\matQ},
\end{align*} which is pseudorandom by the  precondition, and hence rules out potential zeroizing attacks.

We now generalize this assumption in two stages:
\begin{enumerate}
    \item \textbf{Introducing an auxiliary matrix $\matB$.} Let $\vecr_{\pub},\matA,\matP,\vecu,\vecv_1,\ldots,\vecv_k$ be the same as in the original assumption. Let $\matB$ be an additional uniformly random matrix, and let $\vecr_1,\cdots,\vecr_{k}$ be generated by a public-coin algorithm. Let $\matQ$ be modified as $$\matQ=[\matB\mid\matP]\left[\begin{array}{c|c|c}
            \vecr_{1} & \cdots &\vecr_{k}  \\
            \matG^{-1}(\vecv_{1}) & \cdots &\matG^{-1}(\vecv_{k})
        \end{array}\right].$$
    The corresponding modified \emph{precondition} becomes \begin{align*}
(\vecr_{\pub},\matA,\uwave{\vecs^\top\matA},\matB,\uwave{\vecs^\top\matB},\matP,\uwave{\vecs^\top\matP}+\vecu^\top\matG,\uwave{\vecs^\top\matQ})\overset{c}{\approx}(\vecr_{\pub},\matA,\$_1,\matB,\$_2,\matP,\$_2’,\$_3).
\end{align*}
 and the \emph{postcondition} is given as
\begin{align*}
(\vecr_{\pub},\matA,\uwave{\vecs^\top\matA},\matB,\uwave{\vecs^\top\matB},\matP,\uwave{\vecs^\top\matP}+\vecu^\top\matG,\matK)\overset{c}{\approx}(\vecr_{\pub},\matA,\$_1,\matB,\$_2,\matP,\$_2',\matK),
\end{align*}where $\matK$ is also sampled as $\matK\leftarrow \matA^{-1}(\matQ)$. The intuition behind this modified assumption remains similar to the original one.

\item \textbf{Generalized to the multi-instance setting.} In analogy to \cite{WWW22}, which modified the evasive \LWE assumption of \cite{Wee22}, we present a tailored variant of the evasive $\IPFE$ assumption~\cite{HLL24} for our multi-authority setting. We can also extend the variant from Step 1) into our final $\EVIPFE$ assumption. Specifically, for each index $i\in [\ell]$, let $\matA_i,\matB_i,\matP_i$ be uniformly and independently random matrices. Let $\{\vecv_{i,j}\}_{i\in[\ell],j\in[k_i]},\{\vecr_{i,j}\}_{i\in[\ell],j\in[k_i]}$ be vectors generated by a public-coin sampler with public randomness $\vecr_{\pub}$, and let $\vecu$ be generated by a private-coin sampler. Define the matrices $\matQ_i$ for each $i\in [\ell]$ as: $$\matQ_i\leftarrow [\matB_i\mid\matP_i]\left[\begin{array}{c|c|c}
            \vecr_{i,1} & \cdots &\vecr_{i,k_i}  \\
            \matG^{-1}(\vecv_{i,1}) & \cdots &\matG^{-1}(\vecv_{i,k_i})
        \end{array}\right].$$ The $\EVIPFE$ assumption states that if the following \emph{precondition} holds:
        \begin{align}\label{tecevpre}
        \begin{split}
             &(\vecr_{\pub},\{\matA_i,\uwave{\vecs_i^\top\matA_i}\},\{\matB_i\},\uwave{\sum\vecs_i^\top\matB_i},\{\matP_i\},\uwave{\sum\vecs_i^\top\matP_i}+\vecu^\top\matG,\{\uwave{\vecs_i^\top\matQ_i}\})\\
            \overset{c}{\approx}\ &(\vecr_{\pub},\{\matA_i,\$_{1,i}\},\hspace{0.8em} \{\matB_i\},\hspace{1.8em}\$_{2},
             \hspace{1.1em}\{\matP_i\}, \hspace{3em}\$_2', \hspace{3em}\{\$_{3,i}\}),
        \end{split}
        \end{align}  then the \emph{postcondition}
        \begin{align}\label{tecevpost}
        \begin{split}
            &(\vecr_{\pub},\{\matA_i,\uwave{\vecs_i^\top\matA_i}\},\{\matB_i\},\uwave{\sum\vecs_i^\top\matB_i},\{\matP_i\},\uwave{\sum\vecs_i^\top\matP_i}+\vecu^\top\matG,\{\matK_i\})\\
            \overset{c}{\approx}\ &(\vecr_{\pub},\{\matA_i,\$_{1,i}\},\hspace{0.8em} \{\matB_i\},\hspace{1.8em}\$_{2},
             \hspace{1.1em}\{\matP_i\}, \hspace{3em}\$_2', \hspace{3em}\{\matK_i\})
        \end{split}
        \end{align} also holds, where $\matK_i\leftarrow\matA_i^{-1}(\matQ_i)$. Informally, the $\EVIPFE$ assumption is no stronger than the evasive $\LWE$ assumption of~\cite{WWW22}, as it admits a reduction from the latter. Details of the reduction are provided in Appendix~\ref{sec:app}.
\end{enumerate}

Based on this $\EVIPFE$ assumption, we can establish the (weakened) static security as an $\maev$ scheme. The matrices $\matA_i,\matB_i,\matP_i$ in the assumption correspond to the public keys for the authorities. Each column of $\matQ_i$ is of the form $\matP_{i}\matG^{-1}(\vecv_{i,j})+\matB_{i}\vecr_{i,j}$. Here,  $\vecv_{i,j}$ corresponds to the key vector in a secret-key query in the game, and $\vecr_{i,j}$ is interpreted as the hash output of some tuple $(\gid,\vecv)$ in the scheme. By carefully designing the matrix, the columns of $\matK_i$ in the postcondition instance serve as the secret-key responses to the secret-key queries. Noting that each secret-key query satisfies either condition (a) or condition (b), we can show the constructed matrices satisfy the precondition \eqref{tecevpre}. By the $\EVIPFE$ assumption, we obtain that postcondition \eqref{tecevpost} holds, implying that the resulting ciphertext is pseudorandom. A complete formal proof is provided in Section~\ref{sec:6}.

\iitem{Desired Static Security: $\manipfe$ Scheme}
In the static security model, the adversary is required to commit to its full set of secret-key queries and challenge plaintexts in advance. A statically secure $\manipfe$ scheme requires that no efficient adversary can distinguish the ciphertexts encrypting the challenge plaintexts with non-negligible advantage, even in the presence of corrupted authorities and access to the secret keys corresponding to its queries.

\iitem{Our Assumption: Indistinguishability-Based Evasive $\IPFE$}
We base the security of our construction on the \emph{Indistinguishability-based Evasive} $\IPFE$ ($\INDIPFE$) assumption, which is a variant of the $\EVIPFE$ assumption discussed earlier.

The $\INDIPFE$ assumption follows a structure similar to the  $\EVIPFE$ assumption introduced in this work. Specifically, for each index $i\in [\ell]$, let $\matA_i,\matB_i,\matP_i$ be independent and uniformly random matrices. Let $\{\vecv_{i,j}\}_{i\in[\ell],j\in[k_i]},\{\vecr_{i,j}\}_{i\in[\ell],j\in[k_i]}$ and $\vecu_0,\vecu_1$ be vectors generated by a public-coin sampler with public randomness $\vecr_{\pub}$. Define the matrices $\matQ_i$ for each $i\in [\ell]$ as: $$\matQ_i\leftarrow [\matB_i\mid\matP_i]\left[\begin{array}{c|c|c}
            \vecr_{i,1} & \cdots &\vecr_{i,k_i}  \\
            \matG^{-1}(\vecv_{i,1}) & \cdots &\matG^{-1}(\vecv_{i,k_i})
        \end{array}\right].$$ The $\INDIPFE$ assumption states that if the following \emph{precondition} holds:
        \begin{align}\label{tecindpre}
        \begin{split}
            &(\vecr_{\pub},\{\matA_i,\uwave{\vecs_i^\top\matA_i}\},\{\matB_i\},\uwave{\sum\vecs_i^\top\matB_i},\{\matP_i\},\uwave{\sum\vecs_i^\top\matP_i}+\vecu_0^\top\matG,\{\uwave{\vecs_i^\top\matQ_i}\})\\
            \overset{c}{\approx}\ &(\vecr_{\pub},\{\matA_i,\uwave{\vecs_i^\top\matA_i}\},\{\matB_i\},\uwave{\sum\vecs_i^\top\matB_i},\{\matP_i\},\uwave{\sum\vecs_i^\top\matP_i}+\vecu_1^\top\matG,\{\uwave{\vecs_i^\top\matQ_i}\}),
            \end{split}
        \end{align}  then the following \emph{postcondition}
        \begin{align}\label{tecindpost}
        \begin{split}
            &(\vecr_{\pub},\{\matA_i,\uwave{\vecs_i^\top\matA_i}\},\{\matB_i\},\uwave{\sum\vecs_i^\top\matB_i},\{\matP_i\},\uwave{\sum\vecs_i^\top\matP_i}+\vecu_0^\top\matG,\{\matK_i\})\\
            \overset{c}{\approx}\ &(\vecr_{\pub},\{\matA_i,\uwave{\vecs_i^\top\matA_i}\},\{\matB_i\},\uwave{\sum\vecs_i^\top\matB_i},\{\matP_i\},\uwave{\sum\vecs_i^\top\matP_i}+\vecu_1^\top\matG,\{\matK_i\}),
             \end{split}
        \end{align} where $\matK_i\leftarrow\matA_i^{-1}(\matQ_i)$.

However, we need to make some additional assumptions about the sampler for vectors $\vecr_{i,j}$ and $\vecv_{i,j}$. In this assumption, each pair $(\vecr_{i,j},\vecv_{i,j})$ is required to be nonzero to avoid the obvious attack against the assumption. For the justification of this setting, we refer to Section~\ref{INDIPFE} for details.

Following a strategy similar to the $\maev$ case, we can prove that our construction satisfies the desired $\manipfe$ static security notion. In this setting, all secret-key queries are fully determined by the adversary and the challenger plaintexts $\vecu_0,\vecu_1$ are chosen by the adversary. Using the same method as for the $\maev$ scheme, we construct each matrix $\matQ_i$ based on the submitted secret-key queries. By carefully designing $\matQ_i$, the resulting columns of each $\matK_i$ serve as valid responses to the corresponding secret-key queries. The restriction on secret-key queries to satisfy either condition (a) or (b) ensures that the parameters $\matA_i,\matB_i,\matP_i,\matQ_i$ meet the precondition \eqref{tecindpre} of the assumption. Then by the postcondition \eqref{tecindpost} of the $\INDIPFE$ assumption, we obtain that the ciphertexts of $\vecu_0$ and $\vecu_1$ are indistinguishable, thereby establishing the desired security.

\iitem{Removing the Noise: Noiseless $\maipfe$}
Finally, we construct a noiseless $\maipfe$ scheme enabling computation of \emph{exact} inner products rather than approximate values, by slightly modifying the scheme proposed in Section~\ref{scheme}. Specifically, we eliminate the additive error term in the approximate inner products during decryption. Unlike the scheme in Section~\ref{scheme} that operates over plaintext vectors in $\Z_q^n$, this modified construction operates over input vectors in the smaller domain $\Z_p^n$ with $p\ll q$.

The global setup, authority setup, and key generation algorithms remain unchanged. To support computation over $\Z_p^n$ within a $\Z_q$-based lattice setting, we adopt the \emph{modulus-switching function} $\lceil\cdot\rfloor_{p\rightarrow q}$ that maps elements from $\Z_p$ to $\Z_q$: For a vector $\vecu\in\Z_p^n$, we define the encoding function: $$\lceil\vecu\rfloor_{p\rightarrow q}:\vecu\mapsto\left\lceil\frac{q}{p}\cdot \vecu\right\rfloor\in \Z_q^n.$$ Similarly, we can define the decoding function $\lceil\cdot\rfloor_{q\rightarrow p}$ as the inverse mapping of $\lceil\cdot\rfloor_{p\rightarrow q}$. With $\lceil\cdot\rfloor_{p\rightarrow q}$, we can embed $\vecu\in\Z_p^n$ into $\lceil\vecu\rfloor_{p\rightarrow q}\in\Z_q^n$. The encryption algorithm proceeds as 
$$\ct\leftarrow\left(\{\uwave{\vecs_i^\top\matA_i}\}_{i\in X},\uwave{\sum_{i\in X}\vecs_i^\top\matB_i},\uwave{\sum_{i\in X}\vecs_i^\top\matP_i}+\lceil\vecu^\top\rfloor_{p\rightarrow q}\matG\right),$$ where $\vecs_i\rand\Z_q^n$ for each $i\in X$. On the decryption side, we can use the same decryption process over $\Z_q$ as in Section~\ref{scheme}, followed by a final application of $\lceil\cdot\rfloor_{q\rightarrow p}$ to recover the inner-product result over $\Z_p$.

This modification to the encryption and decryption procedures removes the approximation gap in the decryption procedure while still enabling fine-grained multi-authority access control. The resulting scheme, formally given in Construction \ref{con2}, is particularly well-suited for applications requiring deterministic, noise-free computation of inner products, and its security is established under the standard \LWE assumption and the $\INDIPFE$ assumption tailored for noiseless settings. A detailed description of this noiseless scheme and its security analysis can be found in Section~\ref{sec:8}.

\section{Preliminaries}
\label{sec:2}
\subsection{Notations}
Let $\lambda\in\N$ denote the security parameter used throughout this paper. For a positive integer $n\in \N$, denote $[n]$ as the set $\{k\in\N:1\leq k\leq n\}$. For a positive integer $q\in\N$, let $\Z_q$ denote the ring of integers modulo $q$. Throughout this paper, vectors are assumed to be column vectors by default. We use bold lower-case letters (e.g., $\vecu,\vecv$) for vectors and bold upper-case letters (e.g., $\matA,\matB$) for matrices. Let $\vecv[i]$ denote the $i$-th entry of the vector $\vecv$, and let $\matU[i,j]$ denote the $(i,j)$-entry of the matrix $\matU$. 
 We write $\veczero_n$ for the all-zero vector of length $n$, and $\veczero_{n\times m}$ for the all-zero matrix of dimension $n\times m$. The \emph{infinity norm} of a vector $\vecv$ and the corresponding operator norm of a matrix $\matU$ are defined as: $$\|\vecv\|=\max_i|\vecv[i]|,\ \|\matU\|=\max_{i,j}|\matU[i,j]|.$$ For two matrices
$\matA,\matB$ of dimensions $n_1\times m_1$ and $n_2\times m_2$, respectively, their \emph{Kronecker product} is an $n_1n_2\times m_1m_2$ matrix given by $$\matA\otimes\matB=\left[\begin{array}{ccc}
     \matA[1,1]\matB&\cdots&\matA[1,m_1]\matB  \\
     \vdots&\ddots&\vdots \\
     \matA[n_1,1]\matB&\cdots&\matA[n_1,m_1]\matB
\end{array}\right].$$ 

A function $f(\lambda)$ is called \emph{negligible} if $f(\lambda) = O(\lambda^{-c})$ for every constant $c > 0$. We denote a negligible function of $\lambda$ by $\negl(\lambda)$. A function $f(\lambda)$ is called \emph{polynomial} if $f(\lambda) = O(\lambda^c)$ for some constant $c > 0$. We denote a polynomial function of $\lambda$ by $\poly(\lambda)$. A function $f(\lambda)$ is called \emph{super-polynomial}, if $f(\lambda)=\omega(\lambda^{c})$ for every constant $c>0$. We denote a super-polynomial function of $\lambda$ by $\superpoly(\lambda)$. We say an event occurs with \emph{overwhelming probability} if its probability is $1-\negl(\lambda)$. An algorithm is called \emph{efficient} if it runs in probabilistic polynomial time in its input length, typically parameterized by the security parameter $\lambda$ in this work.

We denote the sampling of an element $x$ from a distribution $D$ by $x\leftarrow D$, and the uniform sampling of an element $x$ from a set $S$ by $x\xleftarrow{\$}S$. For a distribution $D$ and a positive integer $m$, we write $\vecx \leftarrow D^m$ to denote an $m$-dimensional vector whose entries are independently sampled from $D$. More generally, for integers $m,n \in \N$, we write $\matX \leftarrow D^{m \times n}$ to denote an $m \times n$ matrix with independently sampled entries from $D$. For two distributions $D_1$ and $D_2$, We use the notation $D_1 \equiv D_2$ to denote that the two distributions are identical.

We also define the \emph{modulus-switching function}. For an element $x\in\Z_q$ and some positive integer $p$, we define a function mapping from $\Z_q$ to $\Z_p$ as $$\lceil \cdot\rfloor_{q\rightarrow p}:x\mapsto \left\lceil \frac{p}{q}\cdot x\right\rfloor\in\Z_p.$$ This map is applied coefficient-wise when extended to vectors or matrices. Such functions are commonly used in cryptographic schemes for modulus compression and message encoding.

\iitem{Indistinguishability}
 For two distributions $D_1$ and $D_2$ over a discrete domain $\Omega$, the \emph{statistical distance} between them is defined as $$\SD(D_1,D_2)=(1/2)\cdot \sum_{\omega\in\Omega}|D_1(\omega)-D_2(\omega)|.$$ Let $D_1 = \{D_{1,\lambda}\}_{\lambda \in \N}$ and $D_2 = \{D_{2,\lambda}\}_{\lambda \in \N}$ be two ensembles of distributions parameterized by the security parameter $\lambda$. We say $D_1$ and $D_2$ are \emph{statistically indistinguishable} if there exists a negligible function $\negl(\cdot)$ such that for all $\lambda\in \N$, $$\SD(D_{1,\lambda},D_{2,\lambda})\leq\negl(\lambda).$$ We say $D_1$ and $D_2$ are \emph{computationally indistinguishable} if for all $\lambda\in \N$, for all \emph{efficient} algorithms $\A$, there exists a negligible function $\negl(\cdot)$ such that $$\Pr[\A(1^\lambda,x_{\lambda})=1:x_{\lambda}\leftarrow D_{1,\lambda}]-\Pr[\A(1^\lambda,y_{\lambda})=1:y_{\lambda}\leftarrow D_{2,\lambda}]\leq \negl(\lambda).$$ We use the notations $\overset{s}{\approx}$ and $\overset{c}{\approx}$ to denote statistical and computational indistinguishability, respectively.

\subsection{Lattice Preliminaries}
\noindent We recall some basic concepts related to lattices. 

\iitem{Discrete Gaussians} Let $D_{\Z,\sigma}$ represent the centered \emph{discrete Gaussian distribution} over $\Z$ with standard deviation $\sigma\in\R^{+}$ (e.g.,  \cite{banaszczyk1993new}). For a matrix $\matA\in\Z_q^{n\times m}$ and a vector $\vecv\in\Z_q^n$, let $\matA_\sigma^{-1}(\vecv)$ denote the distribution of a random variable $\vecu\leftarrow D_{\Z,\sigma}^m$ conditioned on $\matA\vecu=\vecv\mod q$. 
When $\matA_{\sigma}^{-1}$ is applied to a matrix, it is understood as being applied independently to each column of the matrix.

The following lemma (e.g.,  \cite{micciancio2007worst}) shows that for a given Gaussian width parameter $\sigma=\sigma(\lambda)$, the probability of a vector drawn from the discrete Gaussian distribution having norm greater than $\sqrt{\lambda}\sigma$ is negligible.

\begin{lemma}\label{truncated}
    Let $\lambda$ be a security parameter and $\sigma=\sigma(\lambda)$ be a Gaussian width parameter. Then for all polynomials $n=n(\lambda)$, there exists a negligible function $\negl(\lambda)$ such that for all $\lambda\in\N$,
$$\Pr\left[\|\vecv\|>\sqrt{\lambda}\sigma:\vecv\leftarrow D_{\Z,\sigma}^n\right]=\negl(\lambda). $$
\end{lemma}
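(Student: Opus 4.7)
The plan is to reduce the multivariate tail estimate to a one-dimensional one and conclude via a union bound. First I would invoke the classical Banaszczyk-type tail bound for the one-dimensional discrete Gaussian: for any width parameter $\sigma$ above the smoothing parameter $\eta_\varepsilon(\Z)$ (which, for negligible $\varepsilon$, is a fixed $\omega(\sqrt{\log\lambda})$ quantity absorbed into $\sigma$) and any $t \geq 1$,
\begin{equation*}
\Pr_{x \leftarrow D_{\Z,\sigma}}[\,|x| > t\sigma\,] \;\leq\; 2 e^{-\pi t^2}.
\end{equation*}
This is the standard tail bound for discrete Gaussians over $\Z$ and is exactly the form stated in \cite{banaszczyk1993new} / \cite{micciancio2007worst}, which the lemma references.

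Next I would instantiate the bound with $t = \sqrt{\lambda}$, which yields $\Pr[|x| > \sqrt{\lambda}\,\sigma] \leq 2 e^{-\pi \lambda}$ for a single coordinate. Since the coordinates of $\vecv \leftarrow D_{\Z,\sigma}^n$ are drawn independently, and since $\|\vecv\|$ denotes the infinity norm (as fixed in the notation paragraph of this section, $\|\vecv\| = \max_i |\vecv[i]|$), I would apply the union bound:
\begin{equation*}
\Pr\bigl[\,\|\vecv\| > \sqrt{\lambda}\,\sigma \,:\, \vecv \leftarrow D_{\Z,\sigma}^n\bigr]
\;\leq\; \sum_{i=1}^{n} \Pr_{x \leftarrow D_{\Z,\sigma}}[\,|x| > \sqrt{\lambda}\,\sigma\,]
\;\leq\; 2n\, e^{-\pi \lambda}.
\end{equation*}

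Finally, because the hypothesis of the lemma asserts $n = n(\lambda)$ is polynomial in $\lambda$, the quantity $2n(\lambda)\, e^{-\pi \lambda}$ decays faster than any inverse polynomial; hence it is bounded by some $\negl(\lambda)$, completing the proof. There is essentially no obstacle here: the only subtlety is ensuring the Banaszczyk bound applies, which is standard provided $\sigma$ exceeds the (fixed, small) smoothing parameter of $\Z$. This is a tacit condition in every use of $D_{\Z,\sigma}$ in the paper, so it is safe to assume; otherwise the one-dimensional tail bound would have to be stated with an additional $e^{\varepsilon}$ factor from smoothing, which does not affect the asymptotic conclusion.
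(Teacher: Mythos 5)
The paper does not prove this lemma; it is stated as a standard fact with a citation to the literature, so there is no in-paper argument to compare against. Your proof is correct and is exactly the standard derivation: the one-dimensional Banaszczyk-type tail bound $\Pr_{x\leftarrow D_{\Z,\sigma}}[|x|>t\sigma]\leq 2e^{-\pi t^2}$ with $t=\sqrt{\lambda}$, combined with a union bound over the $n$ coordinates, which is the right decomposition precisely because the paper's $\|\cdot\|$ is the infinity norm. One small remark: the smoothing-parameter caveat you flag at the end is not actually needed here, since the tail bound for the \emph{centered} discrete Gaussian over $\Z$ is a statement about ratios of Gaussian masses on the same lattice and holds for every $\sigma>0$; including the caveat is harmless but could be dropped.
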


\iitem{Smudging Lemma} In the following, we present the standard smudging lemma, which formalizes the intuition that sufficiently large standard deviation can ``smudge out'' small perturbations, making the resulting distributions  statistically indistinguishable.

\begin{lemma}[\cite{BDE18}]\label{smudge}
    Let $\lambda$ be a security parameter, and let $e\in\Z$ satisfy $|e|\leq B$. Suppose that $\sigma\geq B\cdot\lambda^{\omega(1)}$. Then, the following two distributions are statistically indistinguishable:
    $$ \{z: z\leftarrow D_{\Z,\sigma}\}\, \text{ and }\,
    \{z+e: z\leftarrow D_{\Z,\sigma}\}.  $$
\end{lemma}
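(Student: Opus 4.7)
The plan is to bound the statistical distance $\SD(D_{\Z,\sigma},\, D_{\Z,\sigma}+e)$ directly by expanding it as a sum over $\Z$ and splitting the index set into a ``bulk'' region where the two densities are pointwise close, and a ``tail'' region where both densities have vanishing mass. Concretely, I would start from
\[
\SD(D_{\Z,\sigma},\, D_{\Z,\sigma}+e) \;=\; \tfrac{1}{2}\sum_{z\in\Z}\bigl|D_{\Z,\sigma}(z) - D_{\Z,\sigma}(z-e)\bigr|,
\]
and use the explicit density $D_{\Z,\sigma}(z)\propto \exp(-\pi z^2/\sigma^2)$ to compute the pointwise ratio
\[
\frac{D_{\Z,\sigma}(z-e)}{D_{\Z,\sigma}(z)} \;=\; \exp\!\bigl(\pi(2ze - e^2)/\sigma^2\bigr).
\]

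Next I would set a threshold $T := \sqrt{\lambda}\,\sigma$ and partition the sum into $S := \{z : |z|\le T\}$ and its complement $S^c$. On $S$, combining $|e|\le B$ with the hypothesis $\sigma \ge B\cdot \lambda^{\omega(1)}$ gives
\[
\bigl|\pi(2ze-e^2)/\sigma^2\bigr| \;\le\; \pi\bigl(2TB + B^2\bigr)/\sigma^2 \;\le\; O\!\bigl(\sqrt{\lambda}\,B/\sigma\bigr) \;=\; \lambda^{-\omega(1)},
\]
so a first-order expansion of $\exp(\cdot)$ yields $|D_{\Z,\sigma}(z-e)/D_{\Z,\sigma}(z) - 1|\le \negl(\lambda)$ uniformly over $z\in S$. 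Consequently,
\[
\sum_{z\in S}\bigl|D_{\Z,\sigma}(z) - D_{\Z,\sigma}(z-e)\bigr| \;\le\; \negl(\lambda)\cdot \sum_{z\in S} D_{\Z,\sigma}(z) \;\le\; \negl(\lambda).
\]
For the tail $S^c$, Lemma~\ref{truncated} implies $\sum_{z\in S^c} D_{\Z,\sigma}(z) \le \negl(\lambda)$, and the same bound applies to the shifted distribution (after the linear shift by $e$, the tail threshold only moves by the polynomially-bounded $B$, which does not affect negligibility since $T = \sqrt{\lambda}\sigma \gg B$). Adding the bulk and tail contributions produces $\SD(D_{\Z,\sigma},\, D_{\Z,\sigma}+e)\le \negl(\lambda)$, which gives statistical indistinguishability.

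The main obstacle will be verifying that the pointwise ratio bound on $S$ is genuinely uniform in $z$: the exponent $\pi(2ze-e^2)/\sigma^2$ is linear in $z$, and naive bounds would only control a fixed window, so I must use precisely the super-polynomial gap $\sigma/B \ge \lambda^{\omega(1)}$ to absorb the factor $\sqrt{\lambda}$ coming from the threshold $T = \sqrt{\lambda}\sigma$. A minor technical point is to formalize ``$\omega(1)$'' cleanly, namely that $\lambda^{\omega(1)}$ dominates any fixed polynomial, so $\sqrt{\lambda}\,B/\sigma$ is indeed $\negl(\lambda)$ and the exponential remains within $1\pm\negl(\lambda)$ on $S$; this same slack ensures that the shifted tail contribution is negligible as well.
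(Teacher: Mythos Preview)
The paper does not prove this lemma; it is stated as a cited result from~\cite{BDE18} with no argument given. Your proposal is a correct and standard direct proof: the bulk/tail split with threshold $T=\sqrt{\lambda}\sigma$, the pointwise ratio bound via the explicit Gaussian exponent, and the appeal to Lemma~\ref{truncated} for the tails all go through exactly as you outline, and the super-polynomial gap $\sigma/B\ge\lambda^{\omega(1)}$ is precisely what absorbs the $\sqrt{\lambda}$ factor from the threshold. There is nothing to compare against in the paper itself.
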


\iitem{Lattice Trapdoors} Lattices with trapdoors are structured lattices that are computationally indistinguishable from randomly chosen lattices (without auxiliary information). However, they possess specific ``trapdoors''---short bases or auxiliary structures---that enable efficient solutions to otherwise hard lattice problems such as preimage sampling for the \emph{Short Integer Solution} (\SIS) function. In this work, we adopt the trapdoor frameworks outlined in~\cite{GPV08,MP12}, following the formalization in \cite{BTVW17}.

\begin{lemma}[\cite{GPV08,MP12}]\label{trapdoor}
    Let $n,m,q$ be lattice parameters. Then there exist two efficient algorithms (\trapgen, \samplepre) with the following syntax:
    \begin{itemize}
        \item $\trapgen(1^n,1^m,q)\rightarrow (\matA,\td_\matA)$: On input the lattice dimension $n$,  number of samples $m$, modulus $q$, this randomized algorithm outputs a matrix $\matA\rand\Z_q^{n\times m}$ together with a trapdoor $\td_\matA$.
        \item $\samplepre(\matA,\td_\matA,\vecy,\sigma)\rightarrow \vecx$: On input ($\matA,\td_\matA$) from \trapgen, a target vector $\vecy\in\Z_q^n$ and a Gaussian width parameter $\sigma$, this randomized algorithm outputs a vector $\vecx\in\Z^m$.
    \end{itemize}
    
    Moreover, there exists a polynomial $m_0=m_0(n,q)=O(\sqrt{n\log q})$ such that for all $m\geq m_0$, the above algorithms satisfy the following properties:
    \begin{itemize}
        \item \textbf{\emph{Trapdoor distribution:}} The matrix $\matA$ output by $\trapgen(1^n,1^m,q)$ is statistically close to a uniformly random matrix. Specifically, if $(\matA,\td_\matA)\leftarrow \trapgen(1^n,1^m,q)$ and $\matA'\rand \Z_q^{n\times m}$, then the statistical distance between the distributions of $\matA$ and $\matA'$ is at most $2^{-n}$.
        
        \item \textbf{\emph{Preimage sampling:}} Suppose  $\tau=O(\sqrt{n\log q})$. Then for all $\sigma\geq \tau\cdot \omega(\sqrt{\log n})$ and all target vectors $\vecy\in\Z_q^n$, the statistical distance between the following two distributions is at most $2^{-n}$:
        $$\{\vecx \leftarrow \samplepre(\matA,\td_\matA,\vecy,\sigma) \}\, \text{ and }\, 
       \{\vecx \leftarrow \matA_\sigma^{-1}(\vecy)\}.   $$
    \end{itemize}
\end{lemma}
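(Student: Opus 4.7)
The plan is to instantiate the pair $(\trapgen,\samplepre)$ via the gadget-trapdoor framework of Micciancio--Peikert, together with Gaussian preimage sampling as developed by Gentry--Peikert--Vaikuntanathan and refined by Peikert. First, I would set \trapgen\ to sample a uniform auxiliary block $\bar{\matA}\rand \Z_q^{n\times \bar m}$ with $\bar m\geq 2n\lceil\log q\rceil$, sample a short matrix $\matR\leftarrow D_{\Z,O(1)}^{\bar m\times n\lceil\log q\rceil}$, and output $\matA=[\bar{\matA}\mid \matG-\bar{\matA}\matR]$ with $\td_\matA:=\matR$. Any larger $m\geq m_0$ is handled by appending additional uniform columns, which only improves the trapdoor quality. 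This reduces the lemma to (i) showing that $\matA$ is statistically uniform, and (ii) showing that the short preimages obtained by routing through the public gadget sampler for $\matG$ can be reshaped into the correct discrete Gaussian.

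For the trapdoor distribution, I would invoke the leftover hash / regularity lemma: because $\matR$ has sufficient discrete-Gaussian min-entropy per column and $\bar m$ exceeds the standard regularity threshold, the product $\bar{\matA}\matR$ is within $2^{-\Omega(n)}$ statistical distance of uniform over $\Z_q^{n\times n\lceil\log q\rceil}$. Consequently $\matG-\bar{\matA}\matR$, and hence the concatenation $\matA=[\bar{\matA}\mid \matG-\bar{\matA}\matR]$, is within statistical distance $2^{-n}$ of uniform, which delivers the first bullet.

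For the preimage sampler, the key observation is that the gadget matrix $\matG=\matI_n\otimes\vecg^\top$ admits a public, efficient Gaussian preimage sampler $\matG_{\sigma'}^{-1}$ for any $\sigma'$ above a small universal constant, obtained by independent bit-by-bit Gaussian rounding along $\vecg=(1,2,\ldots,2^{\lceil\log q\rceil-1})$. On input $(\matA,\matR,\vecy,\sigma)$, \samplepre\ would sample a perturbation $\vecp$ from a discrete Gaussian whose covariance is chosen to complement that of the trapdoor, then compute $\vecz\leftarrow \matG_{\sigma'}^{-1}(\vecy-\matA\vecp)$ via the public gadget sampler, and finally output $\vecx=\vecp+\begin{pmatrix}\matR\\ \matI\end{pmatrix}\vecz$. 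A direct calculation gives $\matA\vecx=\bar{\matA}\matR\vecz+(\matG-\bar{\matA}\matR)\vecz+\matA\vecp=\matG\vecz+\matA\vecp=\vecy$, and Peikert's Gaussian convolution theorem shows that the overall distribution of $\vecx$ lies within $2^{-n}$ of $\matA_\sigma^{-1}(\vecy)$, as required by the second bullet.

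The main obstacle is the convolution step: the perturbation covariance must be chosen precisely so that, after summing with the trapdoor-scaled gadget preimage, the marginal on $\vecx$ is a spherical discrete Gaussian of width $\sigma$ whose shape is independent of $\matR$; otherwise the output would leak information about the trapdoor. This forces $\sigma$ to dominate the largest singular value of $\begin{pmatrix}\matR\\ \matI\end{pmatrix}$ times the smoothing parameter of $\Z^m$. The former is bounded by $\tau=O(\sqrt{n\log q})$ via subgaussian concentration for random $\matR$ (combined with Lemma \ref{truncated}), while the latter is $\omega(\sqrt{\log n})$---exactly matching the stated hypothesis $\sigma\geq \tau\cdot\omega(\sqrt{\log n})$. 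All remaining steps reduce to routine bookkeeping via the leftover hash lemma, smoothing-parameter bounds for $\Lambda_q^\perp(\matG)$, and standard Gaussian tail estimates.
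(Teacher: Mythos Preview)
The paper does not supply its own proof of this lemma; it is stated as a known result imported from \cite{GPV08,MP12} and used as a black box throughout. Your proposal is a faithful sketch of the Micciancio--Peikert gadget-trapdoor construction together with the Peikert convolution sampler, which is precisely what those citations point to, so in that sense you have reconstructed the intended argument rather than deviated from it.

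One small mismatch worth flagging: your instantiation takes $\bar m\geq 2n\lceil\log q\rceil$, which yields $m_0=O(n\log q)$, whereas the lemma as stated in the paper claims $m_0=O(\sqrt{n\log q})$. This appears to be a typo in the paper's statement rather than a flaw in your reasoning---the standard trapdoor constructions do require $m=\Omega(n\log q)$ columns for the leftover-hash step to go through, and nothing in the paper's later parameter settings actually relies on the smaller bound.
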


\iitem{The Gadget Matrix}
Here we recall the definition of the \emph{gadget matrix} introduced in \cite{MP12}. For positive integers $n,q\in\N$, let $\vecg^{\top}=(1,2,\ldots,2^{\lceil\log q\rceil-1})$ denote the \emph{gadget vector}. Define the \emph{gadget matrix} $\matG_n=\matI_n\otimes \vecg^{\top}\in\Z_q^{n\times m}$, where $m=n\lceil \log q\rceil$. The inverse function $\matG_n^{-1}$ maps vectors in $\Z_q^n$ to their binary expansions in $\{0,1\}^m$. More precisely, for a vector $\vecx\in\Z_q^n$, the output $\vecy=\matG_n^{-1}(\vecx)$ is given by $$\vecy=(y_{1,0},\ldots,y_{1,\grow-1},\ldots,y_{n,0},\ldots,y_{n,\grow-1}),$$ where the sequence $(y_{i,0},\ldots,y_{i,\grow})$ corresponds to the $\grow$ bits of the binary representation of $\vecx[i]$. The function $\matG_n^{-1}(\cdot)$ extends naturally to matrices by applying it column-wise. It is straightforward to verify that $\matG_n\cdot \matG_n^{-1}(\matA)=\matA \mod q$ for any matrix $\matA\in\Z_q^{n\times t}$. For simplicity, we will omit the subscript when the parameter $n$ is clear from context.

\iitem{Preimage Sampling} The following lemma describes a useful property of discrete Gaussian distributions. Specifically, it basically states that, given a uniformly random matrix $\matA$ with sufficiently many columns, two methods of generating an input-output pair $(\vecx,\vecy=\matA\vecx)$ produce statistically indistinguishable distributions.

\begin{lemma}[Preimage Sampling~\cite{GPV08}]\label{preimage}
   Let $n,m,q\in \N$, where $q>2$ is a prime. Then there exists polynomials $m_0(n,q)=O(\sqrt{n\log q})$ and $\chi_0(n,q)=\sqrt{n\log q}\cdot\omega(\sqrt{\log n})$ such that for all $m\geq m_0(n,q)$ and $\chi\geq \chi_0(n,q)$, the following two distributions are statistically indistinguishable:
    $$
\{(\matA,\vecx,\matA\vecx):\matA\xleftarrow{\$}\Z_q^{n\times m},\vecx\leftarrow D_{\Z,\chi}^m\}\,\text{ and } \,\{(\matA,\vecx,\vecy):\matA\xleftarrow{\$}\Z_q^{n\times m},\vecy\xleftarrow{\$} \Z_q^n,\vecx\leftarrow\matA_{\chi}^{-1}(\vecy)\}. $$
\end{lemma}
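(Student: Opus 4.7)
The plan is to follow the classical Gentry--Peikert--Vaikuntanathan argument, decomposing the claim into a \emph{smoothing} step and a \emph{Bayes' rule} step. Concretely, I would first show that when $\chi \geq \chi_0(n,q)$ exceeds the smoothing parameter of the random $q$-ary lattice $\Lambda_q^\perp(\matA) = \{\vecx \in \Z^m : \matA\vecx \equiv \veczero \bmod q\}$, the marginal distribution of $\matA\vecx \bmod q$ induced by $\vecx \leftarrow D_{\Z,\chi}^m$ is statistically close to uniform on $\Z_q^n$. Combined with the observation that the conditional law of $\vecx$ given $\matA\vecx=\vecy$ is, by definition, exactly $\matA_\chi^{-1}(\vecy)$ in both ensembles, Bayes' rule then yields the stated indistinguishability.

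First I would verify, for uniform $\matA\rand\Z_q^{n\times m}$ with $m \geq m_0(n,q)$, that $\eta_\epsilon(\Lambda_q^\perp(\matA)) \leq \chi_0(n,q)$ for some negligible $\epsilon = \epsilon(n)$ with overwhelming probability over $\matA$. This is the standard regularity estimate for $q$-ary lattices: one bounds $\rho_{1/\chi}((\Lambda_q^\perp(\matA))^{*}\setminus\{\veczero\})$ by a union bound over short dual vectors, together with the fact that $\Pr_{\matA}[\vecw^\top\matA \equiv \veczero \bmod q] = q^{-n}$ for any fixed nonzero $\vecw \in \Z^n$. Call this good event $E$. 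Conditioned on $E$, invoking Micciancio--Regev's smoothing lemma gives that $\matA\vecx \bmod q$ lies within statistical distance $2\epsilon$ of uniform on $\Z_q^n$.

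Next I would apply Bayes' rule. The first distribution can be rewritten as: sample $\matA$ uniformly, sample $\vecy$ from the marginal law of $\matA\vecx$, then output $(\matA, \vecx, \vecy)$ with $\vecx$ drawn from the conditional law of $D_{\Z,\chi}^m$ on the coset $\{\vecx : \matA\vecx = \vecy\}$. By the definition of $\matA_\chi^{-1}$ recalled in the preliminaries, this conditional law is precisely $\matA_\chi^{-1}(\vecy)$. By the previous paragraph, the marginal of $\vecy$ is within statistical distance $\negl(\lambda)$ of uniform on $\Z_q^n$ (using that $E$ itself holds with probability $1 - \negl(\lambda)$), while in the second distribution $\vecy$ is sampled uniformly and $\vecx$ from the same conditional law. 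The data-processing inequality then bounds the joint statistical distance by $\negl(\lambda)$.

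The main technical obstacle is the regularity step---the tail/union-bound argument controlling the smoothing parameter of a random $q$-ary lattice, and matching it to the particular polynomial bounds $m_0(n,q)$ and $\chi_0(n,q)$ stated in the lemma. In particular one must verify that the quoted parameter regime suffices: $m$ must be large enough that $q^{m-n}$ dominates the enumeration of short dual vectors, and $\chi$ must be large enough that the Gaussian mass on nonzero dual vectors is negligible. Once that regularity estimate is in hand, the remainder is purely a change-of-variables argument combined with the smoothing lemma, both of which amount to bookkeeping.
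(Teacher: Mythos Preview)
Your proposal is sound and follows the standard Gentry--Peikert--Vaikuntanathan argument, but note that the paper does not actually prove this lemma: it is stated in the preliminaries as a known result cited directly from \cite{GPV08}, with no accompanying proof. So there is nothing in the paper to compare against beyond the citation itself, and your sketch is precisely the argument one finds in the original reference.

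One small quibble with your regularity step: the dual of $\Lambda_q^\perp(\matA)$ consists of vectors of the form $\tfrac{1}{q}\matA^\top\vecs + \Z^m$ for $\vecs \in \Z_q^n$, so the relevant probabilistic event in the union bound is over the columns of $\matA$ rather than the expression $\vecw^\top\matA \equiv \veczero$ with $\vecw \in \Z^n$ that you wrote. This is a bookkeeping slip in an otherwise correct outline; the substance of the smoothing-plus-Bayes argument is exactly right.
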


\subsection{\LWE Assumption and Its Variants}
\iitem{Learning with Errors Assumption} We review the Learning With Errors (\LWE) assumption and its variants used throughout this work.

\begin{assumption}[Learning with Errors (\LWE)~\cite{regev2009lattices}]
    Let $\lambda$ be a security parameter and let $n=n(\lambda),m=m(\lambda),q=q(\lambda),\sigma=\sigma(\lambda)$. The decisional learning with errors assumption, denoted $\LWE_{n,m,q,\sigma}$, states that for $\matA\xleftarrow{\$}\Z_q^{n\times m},\vecs\xleftarrow{\$}\Z_q^n,\vece\leftarrow D_{\Z,\sigma}^m$, and $\vecdelta\xleftarrow{\$}\Z_q^m$, the following two distributions are computationally indistinguishable: $$(\matA,\vecs^{\top}\matA+\vece^{\top})\,\text{ and }\,(\matA,\vecdelta).$$
\end{assumption}

\begin{assumption}[Flipped \LWE ~\cite{BLMR13}]
    Let $\lambda$ be a security parameter and let $k=k(\lambda),m=m(\lambda),q=q(\lambda),\chi=\chi(\lambda),\sigma=\sigma(\lambda)$. The flipped \LWE assumption, denoted $\flipLWE_{k,m,q,\chi,\sigma}$,  states that for $\matA\xleftarrow{\$} D_{\Z,\chi}^{k\times m},\vecs\xleftarrow{\$}\Z_q^k,\vece\leftarrow D_{\Z,\sigma}^m$, and $\vecdelta\xleftarrow{\$}\Z_q^m$, the following two distributions are computationally indistinguishable: $$(\matA,\vecs^{\top}\matA+\vece^{\top})\,\text{ and }\,(\matA,\vecdelta^\top).$$
\end{assumption}

The following theorem establishes the hardness of the flipped $\LWE$ assumption based on the standard $\LWE$ assumption.

\begin{theorem}[\cite{BLMR13}]\label{flipped}
    Let $q$ be an integer, and suppose $k\geq 6n\log q$ and $\sigma=\Omega(\sqrt{n\log q})$. Then, if the standard $\LWE_{n,m,q,\chi}$ assumption holds, we have that the assumption $\flipLWE_{k,m,q,\sigma,\chi}$ holds.
\end{theorem}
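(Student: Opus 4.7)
The plan is to reduce $\LWE_{n,m,q,\chi}$ to $\flipLWE_{k,m,q,\sigma,\chi}$ by embedding an LWE challenge into a simulated FlipLWE challenge, with the leftover hash lemma (LHL) for discrete Gaussians serving as the central technical tool. Specifically, under the hypotheses $k \geq 6n \log q$ and $\sigma = \Omega(\sqrt{n \log q})$, a Gaussian LHL asserts that for $\matR \leftarrow D_{\Z,\sigma}^{k \times n}$ and uniform $\vecs \in \Z_q^k$, the pair $(\matR, \matR^\top \vecs \bmod q)$ is statistically close to $(\matR, \vecu)$ for uniform $\vecu \in \Z_q^n$. This lemma provides a ``dimension-lifting'' bridge between the $n$-dimensional LWE secret space and the $k$-dimensional FlipLWE secret space; the stated parameter conditions are precisely what make the Gaussian LHL go through with negligible statistical error.

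First, I would structure the reduction as follows. Given an $\LWE_{n,m,q,\chi}$ instance $(\matB, \vecv)$, where $\matB \in \Z_q^{n \times m}$ is uniform and $\vecv^\top$ is either $\vecs'^\top \matB + \vece'^\top$ (for uniform $\vecs' \in \Z_q^n$ and Gaussian $\vece'$ of width $\chi$) or uniform, I would sample an auxiliary $\matR \leftarrow D_{\Z,\sigma}^{k \times n}$ and produce a simulated FlipLWE pair $(\matA, \vecc)$ so that (i) $\matA$ follows the correct Gaussian distribution $D_{\Z,\sigma}^{k \times m}$, (ii) the induced FlipLWE secret $\vecs \in \Z_q^k$ is uniform via the coupling $\vecs' = \matR^\top \vecs \bmod q$ furnished by the LHL, and (iii) $\vecc$ inherits its pseudorandomness from $\vecv$. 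Tentatively setting $\matA := \matR \matB \bmod q$ and $\vecc := \vecv$ gives $\vecc^\top = \vecs'^\top \matB + \vece'^\top = \vecs^\top (\matR \matB) + \vece'^\top = \vecs^\top \matA + \vece'^\top$ whenever $\vecv$ is a genuine LWE sample, and $\vecc$ is uniform otherwise. The width of $\vece'$ matches the FlipLWE noise width $\chi$ in the theorem statement, so no additional noise flooding is needed on the sample side.

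The main obstacle, and the most delicate step of the proof, is that the na\"ive construction $\matA := \matR \matB \bmod q$ produces a matrix whose distribution is close to uniform over $\Z_q^{k \times m}$ rather than to the integer Gaussian $D_{\Z,\sigma}^{k \times m}$ required by the $\flipLWE$ assumption; a FlipLWE adversary could in principle detect the simulation by inspecting the distribution of $\matA$ alone. Resolving this mismatch is the crux of the argument, and I expect it to proceed via one of the following strategies: (a) a noise-flooding argument through the smudging lemma that absorbs any residual discrepancy between $\matR\matB \bmod q$ and $D_{\Z,\sigma}^{k \times m}$; (b) a refined regularity lemma yielding a genuinely Gaussian-distributed $\matA$ directly from $\matB$; or (c) a trapdoor-based preimage-sampling argument \`a la Micciancio--Peikert in which the Gaussian distribution of $\matA$ is enforced by construction, with $\matR$ then chosen consistently with this preimage. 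Once this distributional issue is handled, the pseudorandomness of $\vecv$ under $\LWE$ transfers directly to the pseudorandomness of $\vecc$, completing the reduction to the $\flipLWE$ assumption.
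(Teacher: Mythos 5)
First, a remark on scope: the paper does not prove Theorem~\ref{flipped} at all — it is imported from \cite{BLMR13} — so your proposal can only be measured against the standard proof of that result. Your high-level plan (embed an $\LWE_{n,m,q,\chi}$ challenge $(\matB,\vecv)$ into a simulated $\flipLWE$ instance, with a Gaussian regularity statement bridging the two secret spaces) is the right shape, and you correctly identify where the difficulty lies. But the proposal does not resolve that difficulty, and the fallback options you list do not all work. The tentative construction $\matA:=\matR\matB \bmod q$ is irreparably far from $D_{\Z,\sigma}^{k\times m}$: each column $\matR\vecb_i \bmod q$ lies in the image of $\matR$, an $n$-dimensional subgroup of $\Z_q^k$, and its entries are essentially uniform over $\Z_q$, i.e., of magnitude up to $q$, whereas genuine $\flipLWE$ matrices have entries of size $O(\sqrt{\lambda}\sigma)\ll q$. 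Your option (a) — smudging — cannot fix this: the smudging lemma hides a \emph{small additive perturbation} of a sample, it cannot convert a near-uniform-mod-$q$ matrix supported on a low-dimensional subspace into a short Gaussian matrix. Option (c) taken literally in your setup ("enforce $\matA$ Gaussian by construction, then choose $\matR$ consistently") is also infeasible: solving $\matR\matB=\matA$ for a short $\matR\in\Z^{k\times n}$ imposes $m\gg n$ constraints per row of $\matR$ and generically has no solution. A further, smaller issue: the regularity statement you invoke, $(\matR,\matR^\top\vecs)\overset{s}{\approx}(\matR,\vecu)$ for Gaussian $\matR$ and uniform $\vecs$, holds essentially for rank reasons and is not what the hypotheses $k\geq 6n\log q$, $\sigma=\Omega(\sqrt{n\log q})$ are calibrated for; the relevant statement is the one in Lemma~\ref{preimage}, with a \emph{uniform} seed matrix and a \emph{Gaussian} source.

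The correct reduction reverses your factorization. Generate $(\matC,\td_\matC)\leftarrow\trapgen(1^n,1^k,q)$ with $\matC\in\Z_q^{n\times k}$ statistically close to uniform, and set $\matA\leftarrow\samplepre(\matC,\td_\matC,\matB,\sigma)$ column by column, i.e., $\matA\leftarrow\matC_\sigma^{-1}(\matB)$. By Lemma~\ref{preimage} (which is exactly what $k\geq 6n\log q$ and $\sigma=\Omega(\sqrt{n\log q})$ are for), the pair $(\matA,\matB=\matC\matA)$ with $\matA$ a preimage of a \emph{uniform} $\matB$ is statistically close to $\matA\leftarrow D_{\Z,\sigma}^{k\times m}$ together with $\matB:=\matC\matA$; in particular $\matA$ has the correct $\flipLWE$ distribution. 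Then sample $\vecs_0\rand\Z_q^k$ and output $\bigl(\matA,\ \vecc^\top:=\vecv^\top+\vecs_0^\top\matA\bigr)$. If $\vecv^\top=\vecs'^\top\matB+\vece'^\top$ then $\vecc^\top=(\matC^\top\vecs'+\vecs_0)^\top\matA+\vece'^\top$, a $\flipLWE$ sample under the secret $\vecs:=\matC^\top\vecs'+\vecs_0$, which is uniform over $\Z_q^k$ precisely because of the re-randomizer $\vecs_0$ (without it the induced secret would be confined to the $n$-dimensional row space of $\matC$ — a step your write-up omits entirely); if $\vecv$ is uniform then so is $\vecc$. Your options (b)/(c) gesture toward this preimage-sampling mechanism, but as written the proposal neither commits to it nor supplies the two ingredients that make it work (the trapdoor for $\matC$ enabling efficient preimage sampling, and the secret re-randomization), so the reduction is not complete.
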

 
 \section{Evasive $\IPFE$ Assumptions}\label{Evasiveipfe}
\subsection{The Evasive $\IPFE$ Assumption}
 We first present a variant of evasive $\IPFE$ assumption introduced by \cite{HLL24}, with slight modifications tailored to our setting.
 
\begin{assumption}[Evasive \IPFE (\EVIPFE) Assumption, adapted, \cite{WWW22,HLL24}]\label{evIPFE}
    Let $\lambda\in\N$ be a security parameter, and let $n,q,m=n\lceil\log q\rceil,\chi,\chi'$ be lattice parameters specified by $\lambda$. Denote $\gp=(1^\lambda,q,1^n,1^m,1^{m'},1^\chi,1^{\chi'})$. Let $\mathcal{S}=(\mathcal{S}_{\vecv},\mathcal{S}_{\vecu})$ be a pair of sampling algorithms with the following syntax: 
    \begin{itemize}
        \item $\mathcal{S}_{\vecv}(\gp;\vecr_{\pub})$: Given the global parameter $\gp$ and randomness $\vecr_{\pub}\in\{0,1\}^*$, the algorithm outputs parameters $$1^{\ell}, \{1^{k_i}\}_{i\in[\ell]},$$ as well as  $k_1+k_2+\cdots+k_{\ell}$ vector tuples $$(\vecr_{1,1},\vecv_{1,1}),(\vecr_{1,2},\vecv_{1,2}),\ldots,(\vecr_{1,k_1},\vecv_{1,k_1});\ldots;(\vecr_{\ell,1},\vecv_{\ell,1}),\ldots,(\vecr_{\ell,k_\ell},\vecv_{\ell,k_\ell}),$$ where $\vecr_{i,j}\in\Z_q^{m'}, \vecv_{i,j}\in\Z_q^n$ for all $i\in [\ell],j\in [k_i]$.
        
        \item $\vecu\leftarrow \mathcal{S}_{\vecu}(\gp,\vecr_{\pub};\vecr_{\pri})$: Given the global parameter $\gp$, public randomness $\vecr_{\pub}$ used in $\mathcal{S}_{\vecv}$, and private randomness $\vecr_{\pri}\in\{0,1\}^*$. It outputs a vector $\vecu\in\Z_q^n$.
    \end{itemize}
 For two adversaries $\mathcal{A}_0$ and $\mathcal{A}_1$, we define their advantage functions as follows:
    \begin{align*}
        \Adv_{\mathcal{S},\mathcal{A}_0}^{\pre}(\lambda)&:=\left|\pr\left[\mathcal{A}_0(1^\lambda,\vecr_{\pub},\{\matA_i,\vecz_{1,i}^\top\}_{i\in[\ell]},[\matB\mid\matP],\vecz_2^\top,\{\vecz_{3,i}^\top\}_{i\in [\ell]})=1\right]\right.\\
        &-\left.\pr\left[\mathcal{A}_0(1^\lambda,\vecr_{\pub},\{\matA_i,\vecdelta_{1,i}^\top\}_{i\in[\ell]},[\matB\mid\matP],\vecdelta_2^\top,\{\vecdelta_{3,i}^\top\}_{i\in 
        [\ell]})=1\right]\right|;\\
        \Adv_{\mathcal{S},\mathcal{A}_1}^{\post}(\lambda)&:=\left|\pr\left[\mathcal{A}_1(1^\lambda,\vecr_{\pub},\{\matA_i,\vecz_{1,i}^\top\}_{i\in[\ell]},[\matB\mid\matP],\vecz_2^\top,\{\matK_i\}_{i\in [\ell]})=1\right]\right.\\
        &-\left.\pr\left[\mathcal{A}_1(1^\lambda,\vecr_{\pub},\{\matA_i,\vecdelta_{1,i}^\top\}_{i\in[\ell]},[\matB\mid\matP],\vecdelta_2^\top,\{\matK_i\}_{i\in [\ell]})=1\right]\right|;
    \end{align*}
    where the parameters are sampled as follows:
    \begin{itemize}
        \item $\left(\begin{array}{l}
          1^{\ell}, \{1^{k_i}\}_{i\in[\ell]};\\
        (\vecr_{1,1},\vecv_{1,1}),\ldots,(\vecr_{1,k_1},\vecv_{1,k_1});\\
            \cdots\\
            (\vecr_{\ell,1},\vecv_{\ell,1}),\ldots,(\vecr_{\ell,k_\ell},\vecv_{\ell,k_\ell})      \end{array}\right)\leftarrow\mathcal{S}_{\vecv}(\gp;\vecr_{\pub}), 
       \quad \vecu\leftarrow\mathcal{S}_{\vecu}(\gp,\vecr_{\pub};\vecr_{\pri}),$
       \item $\matB_1,\ldots,\matB_{\ell}\rand\Z_q^{n\times m'},\matP_1,\ldots,\matP_{\ell}\rand\Z_q^{n\times m},
    \matB^\top\leftarrow[\matB_1^\top\mid\cdots\mid\matB_{\ell}^\top],\matP^\top\leftarrow[\matP_1^\top\mid\cdots\mid\matP_{\ell}^\top],$
    \item $\matQ_i\leftarrow [\matB_i\mid\matP_i]\left[\begin{array}{c|c|c}
            \vecr_{i,1} & \cdots &\vecr_{i,k_i}  \\
            \matG^{-1}(\vecv_{i,1}) & \cdots &\matG^{-1}(\vecv_{i,k_i})
        \end{array}\right]\in \Z_q^{n\times k_i},$
        \item $(\matA_1,\td_1),\ldots,(\matA_{\ell},\td_\ell)\leftarrow\trapgen(1^n,1^m,q),$
        \item $\vecs_1,\ldots,\vecs_{\ell}\rand\Z_q^n, \vecs^\top\leftarrow[\vecs_1^\top\mid\ldots\mid\vecs_{\ell}^\top]\in\Z_q^{n\ell},$
        \item $\vece_{1,i}\leftarrow D_{\Z,\chi}^m,\vece_{3,i}\leftarrow D_{\Z,\chi}^{k_i} \text{ for each } i\in [\ell], \vece_2\leftarrow D_{\Z,\chi}^{m'+m},$
        \item $\vecdelta_{1,i}\rand \Z_q^m,\vecdelta_{3,i}\rand \Z_q^{k_i} \text{ for each } i\in [\ell], \vecdelta_2\rand \Z_q^{m'+m},$
        \item $\vecz_{1,i}^\top\leftarrow \vecs_i^\top\matA_i+\vece_{1,i}^\top,\vecz_{3,i}^\top\leftarrow\vecs_i^\top\matQ_i+\vece_{3,i}^\top\text{ for each } i\in [\ell], \quad\vecz_{2}^\top\leftarrow\vecs^\top[\matB\mid\matP]+\vece_2^\top+[\veczero\mid\vecu^\top\matG]$,
        \item $\matK_i\leftarrow\samplepre(\matA_{i},\td_{i},\matQ_i,\chi') \text{ for each } i\in [\ell].$
    \end{itemize}
    
     We say that the $\EVIPFE_{n,m,m',q,\chi,\chi'}$ assumption holds, if for all efficient samplers $\mathcal{S}$, the following implication holds: If there exists an efficient adversary $\A_1$ with a non-negligible advantage function $\Adv_{\mathcal{S},\mathcal{A}_1}^\post(\lambda)$, then there exists another efficient adversary $\A_0$ with a non-negligible advantage function $\Adv_{\mathcal{S},\mathcal{A}_0}^\pre(\lambda)$. 
\end{assumption}

\begin{remark}[Relation to the Evasive \IPFE Assumption in~\cite{HLL24}]
    Assumption \ref{evIPFE} defined above originates from the evasive $\IPFE$ assumption introduced by \cite{HLL24}. Following the approach of~\cite{WWW22}, which defines a variant of public-coin evasive \LWE based on the formulation in~\cite{Wee22} by introducing multiple (i.e., $\ell \geq 1$) independently and uniformly sampled matrices $\matA_i \in \Z_q^{n \times m}$, we apply an analogous modification to the original evasive $\IPFE$ assumption proposed in~\cite{HLL24}.
    
    In the following, we focus on the connection between Assumption \ref{evIPFE} and the evasive \IPFE assumption from \cite{HLL24}. Informally, the latter basically states if \begin{align}\label{evprecon}
    (1^\lambda,\vecr_{\pub},\matA,\vecz_1^\top,\matP,\vecz_2^\top+\vecu^\top\matG,\vecz_{3}^\top)\overset{c}{\approx} (1^\lambda,\vecr_{\pub},\matA,\vecdelta_1^\top,\matP,\vecdelta_2^\top,\vecdelta_{3}^\top), 
    \end{align} where $\vecu,\vecv_1,\ldots,\vecv_{k}\in \Z_q^{d}$ are generated by some sampler $\mathcal{S}'=(\mathcal{S}'_{\vecv},\mathcal{S}'_{\vecu})$, and the other parameters are sampled as follows: \begin{align*}
        &\matA\rand\Z_q^{n\times m},\matP\rand\Z_q^{n\times D},\matQ\leftarrow\matP[\matG^{-1}(\vecv_1)\mid\cdots\mid\matG^{-1}(\vecv_{k})],\\
        &\vecs\rand\Z_q^n,\vece_1\leftarrow D_{\Z,\chi}^m,\vece_2\leftarrow D_{\Z,\chi}^D,\vece_3\leftarrow D_{\Z,\chi}^k,\\
        &\vecz_{1}^\top\leftarrow\vecs^\top\matA+\vece_1^\top,\vecz_2^\top\leftarrow\vecs^\top\matP+\vece_2^\top,\vecz_3^\top\leftarrow \vecs^\top\matQ+\vece_3^\top,\\
        &\vecdelta_1\rand\Z_q^m,\vecdelta_2\rand\Z_q^D,\vecdelta_3\rand\Z_q^{k},
    \end{align*}
    then the following postcondition holds $$(1^\lambda,\vecr_{\pub},\matA,\vecz_1^\top,\matP,\vecz_2^\top+\vecu^\top\matG,\matK)\overset{c}{\approx} (1^\lambda,\vecr_{\pub},\matA,\vecdelta_1^\top,\matP,\vecdelta_2^\top,\matK),$$ where $\matK\leftarrow \matA_{\chi'}^{-1}(\matQ)$. 

    We now compare Assumption \ref{evIPFE} with the version in \cite{HLL24}. On the one hand, when $\ell=1$, Assumption \ref{evIPFE} generalizes original setting in \cite{HLL24} by incorporating an auxiliary matrix $\matB\in\Z_q^{n\times m'}$ and additional vectors $\vecr_{1,j}$. Specifically, if we set $m'=0, d=n$ and $D=m$, then Assumption \ref{evIPFE} collapses to the formulation used in \cite{HLL24}.
    
    On the other hand, when $\ell=1$, Assumption \ref{evIPFE} can also be roughly viewed as a special case of \cite{HLL24}, instantiated with a restricted sampler $\mathcal{S}'$. Concretely, suppose $D=m'+m$ and we write the matrix $\matP$ as $[\matB\mid\matP']$ with $\matB\in\Z_q^{n\times m'},\matP'\in \Z_q^{n\times m}$. If the sampler $\mathcal{S}'_{\vecu}$ ensures that $\vecu$ is always of the form $[\veczero_{m'}^\top\mid \vecu']$ for some $\vecu'\in \Z_q^m$, then $$\vecz_2^\top+\vecu^\top\matG=\vecs^\top[\matB\mid\matP]+\vece_2^\top+[\veczero_{m'}^\top\mid\vecu'^\top\matG],$$ which aligns with the corresponding component in Assumption \ref{evIPFE}. Finally, we explain the role of the $\vecr_{1,j}$ components (note $\ell=1$) in Assumption \ref{evIPFE}. 
    Roughly, we consider the term  
    \begin{align*}
        \vecz_3^\top&=\vecs^\top\matP[\matG^{-1}(\vecv_1)\mid\cdots\mid\matG^{-1}(\vecv_{k})]+\vece_3^\top\\&=(\vecs^\top\matP+\vecu^\top\matG)[\matG^{-1}(\vecv_1)\mid\cdots\mid\matG^{-1}(\vecv_{k})]-\vecu^\top[\vecv_1\mid\cdots\mid\vecv_{k}]+\vece_3^\top\\
        &\overset{c}{\approx}\vecdelta_2^\top[\matG^{-1}(\vecv_1)\mid\cdots\mid\matG^{-1}(\vecv_{k})]-\vecu^\top[\vecv_1\mid\cdots\mid\vecv_{k}]+\vece_3^\top.
    \end{align*}
    Since $\vecv_1,\vecv_2,\ldots,\vecv_k$ are public, and $\vecz_3^\top$ includes the term $\vecu^\top[\vecv_1 \mid\cdots\mid \vecv_k]$ masked with the noise $\vece_3$, its distribution—conditioned on all other public values—is effectively determined by the inner products $\vecu^\top \vecv_j$ and the added noise. In particular, if $\vecu$ is constrained to have zero entries in its first $m'$ coordinates, then the first $m'$ entries of $\vecv_j$ (corresponding to the $\vecr_{1,j}$ components in Assumption \ref{evIPFE}) have no influence on the distribution of $\vecz_3$. This explains why such components can be included in the assumption without affecting its security implications.
\end{remark}

\begin{remark}[Public-coin and private-coin sampler]
As justified by \cite{HLL24}, it is necessary for the sampling procedure of $\vecu$ to be \emph{private-coin}, in order to prevent trivial attacks where the adversary directly computes the inner-products $\vecu^\top\vecv_j$. In our setting, all other components—including the global parameters and the tuples $(\vecr_{i,j},\vecv_{i,j})$—are sampled using a \emph{public-coin} process, thereby avoiding the obfuscation-based counterexamples (cf.~\cite{BUW24}). Moreover, the public-coin nature of $(\vecr_{i,j},\vecv_{i,j})$ ensures that these vectors can be reused across multiple experiments.
\end{remark}

\begin{remark}[Reduction from Evasive $\LWE$ in~\cite{WWW22}]
    In Appendix~\ref{sec:app}, we provide a reduction from the evasive $\LWE$ assumption in~\cite{WWW22} to our $\EVIPFE$ assumption. Consequently, the security of our $\maev$ construction (cf. Section~\ref{sec:6}) can be based on the standard $\LWE$ assumption and the evasive $\LWE$ assumption from~\cite{WWW22}.
\end{remark}

\subsection{The Indistinguishability-Based Evasive $\IPFE$ Assumption}
In this section, we present an indistinguishability-based variant of the $\EVIPFE$ assumption introduced in the previous subsection. This version refines the original assumption by replacing distributional indistinguishability with a two-challenge indistinguishability game, where an adversary is challenged to distinguish between two inner-product encodings corresponding to $\vecu_0$ and $\vecu_1$. 

\begin{assumption}[Indistinguishability-Based Evasive $\IPFE$ ($\INDIPFE$) Assumption]\label{INDIPFE}
    Let $\lambda\in\N$ be a security parameter, and let $q,n,m=n\lceil \log q\rceil,m',\chi,\chi'$ be lattice parameters specified by $\lambda$. Denote $\gp=(1^\lambda,q,1^n,1^m,1^{m'},1^\chi,1^{\chi'})$. Let $\mathcal{S}$ be an algorithm defined as follows: 
    \begin{itemize}
        \item $\mathcal{S}(\gp;\vecr_{\pub})$: Given the security parameter $\lambda$ and public randomness $\vecr_{\pub}\in\{0,1\}^*$, the algorithm outputs parameters $$1^{\ell}, \{1^{k_i}\}_{i\in[\ell]},$$ followed by $k_1+\cdots+k_{\ell}$ vector tuples $$(\vecr_{1,1},\vecv_{1,1}),(\vecr_{1,2},\vecv_{1,2}),\ldots,(\vecr_{1,k_1},\vecv_{1,k_1});\ldots;(\vecr_{\ell,1},\vecv_{\ell,1}),\ldots,(\vecr_{\ell,k_\ell},\vecv_{\ell,k_\ell}),$$ where $\vecr_{i,j}\in\Z_q^{m'},\vecv_{i,j}\in\Z_q^n$ and $(\vecr_{i,j},\vecv_{i,j})\neq (\veczero^{m'},\veczero^n)$ for all $i\in [\ell],j\in [k_i]$. The output also includes two additional vectors $\vecu_0,\vecu_1\in\Z_q^n$.
    \end{itemize}
 For two adversaries $\mathcal{A}_0$ and $\mathcal{A}_1$, we define their respective advantage functions as follows:
    \begin{align*}
        \Adv_{\mathcal{A}_0}^{\pre}(\lambda)&:=\left|\pr\left[\mathcal{A}_0(1^\lambda,\vecr_{\pub},\{\matA_i,\vecz_{1,i}^\top\}_{i\in[\ell]},[\matB\mid\matP],\vecz_2^\top+[\veczero\mid\vecu_0^\top\matG],\vecz_{3}^\top)=1\right]\right.\\
        &-\left.\pr\left[\mathcal{A}_0(1^\lambda,\vecr_{\pub},\{\matA_i,\vecz_{1,i}^\top\}_{i\in[\ell]},[\matB\mid\matP],\vecz_2^\top+[\veczero\mid\vecu_1^\top\matG],\vecz_{3}^\top)=1\right]\right|;\\
        \Adv_{\mathcal{A}_1}^{\post}(\lambda)&:=\left|\pr\left[\mathcal{A}_1(1^\lambda,\vecr_{\pub},\{\matA_i,\vecz_{1,i}^\top\}_{i\in[\ell]},[\matB\mid\matP],\vecz_{2}^\top+[\veczero\mid\vecu_0^\top\matG],\{\matK_i\}_{i\in [\ell]})=1\right]\right.\\
        &-\left.\pr\left[\mathcal{A}_1(1^\lambda,\vecr_{\pub},\{\matA_i,\vecz_{1,i}^\top\}_{i\in[\ell]},[\matB\mid\matP],\vecz_{2}^\top+[\veczero\mid\vecu_1^\top\matG],\{\matK_i\}_{i\in [\ell]})=1\right]\right|;
    \end{align*}
    where the parameters are sampled as follows:
    \begin{itemize}
    \item $\left(\begin{array}{l}
         1^{\ell}, \{1^{k_i}\}_{i\in[\ell]};\\
    (\vecr_{1,1},\vecv_{1,1}),\ldots,(\vecr_{1,k_1},\vecv_{1,k_1});\\
            \cdots\\(\vecr_{\ell,1},\vecv_{\ell,1}),\ldots,(\vecr_{\ell,k_\ell},\vecv_{\ell,k_\ell});\\ \vecu_0,\vecu_1\end{array}\right)\leftarrow\mathcal{S}(\gp;\vecr_{\pub}),$
       \item $\matB_1,\ldots,\matB_{\ell}\rand\Z_q^{n\times m'},\matP_1,\ldots,\matP_{\ell}\rand\Z_q^{n\times m},
    \matB^\top\leftarrow[\matB_1^\top\mid\cdots\mid\matB_{\ell}^\top],\matP^\top\leftarrow[\matP_1^\top\mid\cdots\mid\matP_{\ell}^\top],$
    \item $\matQ_i\leftarrow [\matB_i\mid\matP_i]\left[\begin{array}{c|c|c}
            \vecr_{i,1} & \cdots &\vecr_{i,k_i}  \\
            \matG^{-1}(\vecv_{i,1}) & \cdots &\matG^{-1}(\vecv_{i,k_i})
        \end{array}\right]\in \Z_q^{n\times k_i},$
        \item $(\matA_1,\td_1),\ldots,(\matA_{\ell},\td_\ell)\leftarrow\trapgen(1^n,1^m,q),$
        \item $\vecs_1,\ldots,\vecs_{\ell}\rand\Z_q^n, \vecs^\top\leftarrow[\vecs_1^\top\mid\ldots\mid\vecs_{\ell}^\top]\in\Z_q^{n\ell},$
        \item $\vece_{1,i}\leftarrow D_{\Z,\chi}^m,\vece_{3,i}\leftarrow D_{\Z,\chi}^{k_i} \text{ for each } i\in [\ell], \vece_2\leftarrow D_{\Z,\chi}^{m'+m},$
        \item $\vecz_{1,i}^\top\leftarrow \vecs_i^\top\matA_i+\vece_{1,i}^\top,\vecz_{3,i}^\top\leftarrow\vecs_i^\top\matQ_i+\vece_{3,i}^\top\text{ for each } i\in [\ell], \quad\vecz_{2}^\top\leftarrow\vecs^\top[\matB\mid\matP]+\vece_2^\top$,
        \item $\matK_i\leftarrow\samplepre(\matA_{i},\td_{i},\matQ_i,\chi') \text{ for each } i\in [\ell].$
    \end{itemize}
     We say that the $\INDIPFE_{n,m,m',q,\chi,\chi'}$ assumption holds if, for every efficient sampler $\mathcal{S}$, the following implication holds: If there exists an efficient adversary $\A_1$ with non-negligible advantage function $\Adv_{\mathcal{S},\mathcal{A}_1}^\post(\lambda)$, then there exists another efficient adversary $\A_0$ with non-negligible advantage function $\Adv_{\mathcal{S},\mathcal{A}_0}^\pre(\lambda)$. 
\end{assumption}

\begin{remark}[Public-coin Sampler]
In contrast to
Assumption \ref{evIPFE}, Assumption \ref{INDIPFE} adopts a  public-coin sampler. This design is justified by the fact that, in the indistinguishability-based setting, there is no need to keep the challenge vectors $\vecu_0,\vecu_1$ private, since computing $\vecu_0^\top\vecv_j$ and $\vecu_1^\top\vecv_j$ does not help the adversary distinguish between the two challenge instances. Consequently, the two separate algorithms $\mathcal{S}_{\vecu}$ and $\mathcal{S}_{\vecv}$ in Assumption \ref{evIPFE} can be unified into a single public-coin algorithm in this context. The adoption of the public-coin setting also helps to exclude obfuscation-based counterexamples, as discussed in \cite{BUW24}. 
\end{remark}

\begin{remark}
   Assumption \ref{INDIPFE} can be viewed as an indistinguishability-based variant of Assumption \ref{evIPFE}. However, it is important to emphasize that the evasive \IPFE assumption introduced in \cite{HLL24} cannot be directly reformulated as an indistinguishability-style assumption. Specifically, even if the precondition
   \begin{align}\label{precondition2}
   \begin{split}
   &(1^\lambda,\vecr_{\pub},\matA,\vecz_1^\top,\matP,\vecz_2^\top+\vecu_0^\top\matG,\vecz_3^\top)\overset{c}{\approx}(1^\lambda,\vecr_{\pub},\matA,\vecz_1^\top,\matP,\vecz_2^\top+\vecu_1^\top\matG,\vecz_3^\top),     
   \end{split} 
    \end{align} holds, where $\vecu_0,\vecu_1\in \Z_q^d,\matV=[\vecv_1\mid\cdots\mid\vecv_{k}]\in \Z_q^{d\times k}$ are sampled by some algorithm $\mathcal{S}'=(\mathcal{S}'_{\vecv},\mathcal{S}'_{\vecu})$, and the remaining parameters are sampled as follows: \begin{align*}
        &\matA\rand\Z_q^{n\times m},\matP\rand\Z_q^{n\times D},\matQ=\matP\matG^{-1}(\matV)=\matP[\matG^{-1}(\vecv_1)\mid\cdots\mid\matG^{-1}(\vecv_{k})],\\
        &\vecs\rand\Z_q^n,\vece_1\leftarrow D_{\Z,\chi}^m,\vece_2\leftarrow D_{\Z,\chi}^D,\vece_3\leftarrow D_{\Z,\chi}^k,\\
        &\vecz_{1}^\top\leftarrow\vecs^\top\matA+\vece_1^\top,\vecz_2^\top\leftarrow\vecs^\top\matP+\vece_2^\top,\vecz_3^\top\leftarrow \vecs^\top\matQ+\vece_3^\top,
        \end{align*}
    we cannot in general conclude the corresponding postcondition as in the pseudorandomness-version assumption, namely, \begin{align}\label{postcondition2}
    \begin{split}
(1^\lambda,\vecr_{\pub},\matA,\vecz_1^\top,\matP,\vecz_2^\top+\vecu_0^\top\matG,\matK)
    \overset{c}{\approx}(1^\lambda,\vecr_{\pub},\matA,\vecz_1^\top,\matP,\vecz_2^\top+\vecu_1^\top\matG,\matK),
    \end{split} 
    \end{align}
    where $\matK\leftarrow\matA_{\chi'}^{-1}(\matQ)$. To illustrate this, we consider a sampler $\mathcal{S}'$ that always outputs the all-zero matrix $\matV = \veczero_{d\times k}$. Informally, we then observe the following chain of distributions: \begin{align*}
    \begin{split}      &(1^\lambda,\vecr_{\pub},\matA,\vecz_1^\top,\matP,\vecz_2^\top+\vecu_0^\top\matG,\vecz_3^\top)\\
    \equiv\ &  (1^\lambda,\vecr_{\pub},\matA,\vecz_1^\top,\matP,\vecz_2^\top+\vecu_0^\top\matG,(\vecz_2^\top+\vecu_0^\top\matG)\matG^{-1}(\matV)+\vece_3^\top-\vece_2^\top\matG^{-1}(\matV)-\vecu_0^\top\matV)\\
    \overset{c}{\approx}\ & (1^\lambda,\vecr_{\pub},\matA,\vecdelta_1^\top,\matP,\vecdelta_2^\top,\vecdelta_2^\top\matG^{-1}(\matV)+\vece_3^\top-\vece_2^\top\matG^{-1}(\matV)-\vecu_0^\top\matV)\\
    \approx\ & (1^\lambda,\vecr_{\pub},\matA,\vecdelta_1^\top,\matP,\vecdelta_2^\top,\vecdelta_2^\top\matG^{-1}(\matV)+\vece_3^\top-\vece_2^\top\matG^{-1}(\matV)-\vecu_1^\top\matV)\\
    \overset{c}{\approx}\ &(1^\lambda,\vecr_{\pub},\matA,\vecz_1^\top,\matP,\vecz_2^\top+\vecu_1^\top\matG,(\vecz_2^\top+\vecu_1^\top\matG)\matG^{-1}(\matV)+\vece_3^\top-\vece_2^\top\matG^{-1}(\matV)-\vecu_1^\top\matV)\\
    \equiv\ &(1^\lambda,\vecr_{\pub},\matA,\vecz_1^\top,\matP,\vecz_2^\top+\vecu_1^\top\matG,\vecz_3^\top),
 \end{split}
    \end{align*} which implies the precondition \eqref{precondition2}. However, in the postcondition, the presence of the low-norm matrix $\matK$ satisfying $\matA\matK=\matP\matG^{-1}(\matV)$ yields a short basis of the $q$-ary lattice $\{\vecx\in \Z_q^m: \matA\vecx=\veczero\bmod{q}\}$ with high probability---a trapdoor for the matrix $\matA$. This trapdoor enables the adversary to distinguish between $\vecz_2^\top+\vecu_0^\top\matG$ and $\vecz_2^\top+\vecu_1^\top\matG$, in the postcondition, thereby breaking indistinguishability.

    In contrast, Assumption \ref{INDIPFE} avoids this issue by defining the matrix $\matQ_i$ as $$\matQ_i\leftarrow [\matB_i\mid\matP_i]\left[\begin{array}{c|c|c}
            \vecr_{i,1} & \cdots &\vecr_{i,k_i}  \\
            \matG^{-1}(\vecv_{i,1}) & \cdots &\matG^{-1}(\vecv_{i,k_i})
        \end{array}\right]\in \Z_q^{n\times k_i},$$
    Each column vector $\left[\begin{smallmatrix}
        \vecr_{i,j}\\
        \matG^{-1}(\vecv_{i,j})
    \end{smallmatrix}\right]$ is selected to be a nonzero vector and independent of the uniformly random matrix $[\matB_i\mid\matP_i]$. As a result, the marginal distribution of each column of $\matQ_i$ is uniform. Thus, the zeroizing attack, for example, by computing $$(\vecs_i^\top\matA_i+\vece_{1,i}^\top)\matK_i\approx\vecs_i^\top\matQ_i$$ does not yield a valid equation over the integers. This design aligns with the idea of \cite{Wee22} when introducing the evasive \LWE assumption.  
\end{remark}

\section{Multi-Authority Attribute-Based Inner Product Functional Encryption}
\label{sec:5}

In this section, we present and formalize the core definitions underlying our work. We begin by introducing the notion of multi-authority attribute-based (noisy) inner-product functional encryption ($\mannipfe$), which generalizes previous \NIPFE~\cite{Agr19,AP20} frameworks to support noise-tolerant decryption and multi-authority access control. We then introduce a relaxed ``evasive'' variant, called multi-authority attribute-based evasive inner-product functional encryption ($\maev$), inspired by the evasive \IPFE model of~\cite{HLL24}, which captures a weaker security notion more suitable for certain lattice-based instantiations. Both schemes are formalized in the static security model, with a focus on subset policies for simplicity and clarity.

\subsection[$\mannipfe$]{Multi-authority Attribute-based (Noisy) Inner-Product Functional Encryption ($\mannipfe$)}
We first introduce the concept of \emph{monotone access structures}~\cite{Bei96}, which play a fundamental role in access control mechanisms and secret-sharing schemes. The monotonicity property captures the idea that adding participants to an authorized subset preserves its authorization status. In this context, subset policies can be regarded as a specific instance of monotone access structures.

\begin{definition}[Access Structure,~\cite{Bei96}]
Let $S$ be a set and $2^S$ be the power set of $S$, i.e., the collection of all subsets of $S$. An \emph{access structure} on $S$ is a set $\mathbb{A}\subseteq 2^S\setminus\{\varnothing\}$, consisting of some non-empty subsets of $S$. A subset $A\in 2^{S}$ is called \emph{authorized} if $A\in\mathbb{A}$, and \emph{unauthorized} otherwise. 

An access structure is called \emph{monotone} if it satisfies the following condition: for all subsets $B,C\in 2^{S}$, if $B\in\mathbb{A}$ and $B\subseteq C$, then $C\in\mathbb{A}$. In other words, adding more elements to an authorized subset does not invalidate its authorization.
\end{definition}

In the following, we begin by presenting the definition of multi-authority attribute-based (noisy) inner-product functional encryption ($\mannipfe$). For simplicity, we adopt the \emph{small universe} setting, as described in \cite{DP23}, where each authority controls a single attribute, following the convention of \cite{RW15,DKW21a}. In this setting, an authority and its corresponding attribute can be considered equivalent to some extent. This definition naturally extends to a more general setting in which each authority can manage an arbitrary polynomial number of attributes (with respect to the security parameter), as outlined in \cite{RW15}. Our definition extends the standard $\maipfe$ model~(e.g., \cite{DP23}) by allowing the decryption to yield a noisy approximation of the inner product. This reflects more realistic applications where exact reconstruction of the inner product is not required and allows for a broader range of lattice-based constructions. Formally, we define the following.

\begin{definition}[$\mannipfe$,~\cite{AGT21,DP23}]\label{def1}
    Let $\lambda$ be the security parameter, and $n=n(\lambda),q=q(\lambda)$ be lattice parameters.  Let $\AU$ denote the universe of authority identifiers, and $\GID$ be the universe of global identifiers for users. Let $B_0=B_0(\lambda),B_1=B_1(\lambda)$ be bounding values. A $(B_0,B_1)$-\emph{multi-authority attribute-based (noisy) inner-product functional encryption scheme} for a class of policies $\P$ (described by a monotone access structure on a subset of $\AU$) over the vector space $\Z_q^n$ is defined as a tuple of efficient algorithms $\Pi_{\mannipfe}=(\globalset, \authset,\keygen,\enc,\dec)$. These algorithms proceed as follows: 
\begin{itemize}
    \item $\globalset(1^\lambda)\rightarrow \gp$: The global setup algorithm takes as input the security parameter $\lambda$, and outputs the global parameters $\gp$. We assume that $\gp$ specifies the description of $\AU$, $\GID$, $n$, and $q$.
    \item $\authset(\gp,\aid)\rightarrow (\pk_{\aid},\msk_{\aid})$: The authority (local) setup algorithm takes as input the global parameters $\gp$ and an authority identifier $\aid\in\AU$. It outputs a public key $\pk_{\aid}$ and a master secret key $\msk_{\aid}$.
    \item $\keygen(\gp,\pk_{\aid},\msk_{\aid},\gid,\vecv)\rightarrow \sk_{\aid,\gid,\vecv}$: The key generation algorithm takes as input the global parameters $\gp$, the public key $\pk_{\aid}$, the authority's master secret key $\msk_{\aid}$, a user identifier $\gid\in \GID$, and a key vector $\vecv\in\Z_q^n$. It outputs a secret key $\sk_{\aid,\gid,\vecv}$ associated with attribute $\aid$, user identifier $\gid$, and key vector $\vecv$.
    \item $\enc(\gp,\mathbb{A},\{\pk_{\aid}\}_{\aid\in A},\vecu)\rightarrow\ct$: The encryption algorithm takes as input the global parameters $\gp$, an access structure $\mathbb{A}\in \P$ on a set of authorities $A\subseteq\AU$, the set of public keys $\{\pk_{\aid}\}$ associated with authorities $\aid\in A$, and a plaintext vector $\vecu\in \Z_q^n$. It outputs a ciphertext $\ct$.
    \item $\dec(\gp,\{\sk_{\aid,\gid,\vecv}\}_{\aid\in A},\gid,\vecv,\ct)\rightarrow \Gamma$: The decryption algorithm takes as input the global parameters $\gp$, a collection of secret keys $\{\sk_{\aid,\gid,\vecv}\}$ associated with authorities $\aid\in A$, a user identifier $\gid\in \GID$, and a key vector $\vecv\in \Z_q^n$, and a ciphertext $\ct$. It outputs a value $\Gamma\in\Z_q$, corresponding to the (approximate) inner product of $\vecu$ and $\vecv$ or a symbol $\perp$ to indicate the failure of decryption. 
\end{itemize}
\end{definition}

\iitem{Approximate Correctness} A  $(B_0(\lambda),B_1(\lambda))$-$\manipfe$ scheme is said to be \emph{correct} if, for every security parameter $\lambda\in\N$, every plaintext vector $\vecu\in\Z_q^n$, every key vector $\vecv\in\Z_q^n$, every global identifier $\gid\in\GID$, every set of authorities $A\subseteq \AU$, every access structure $\mathbb{A}\in\mathcal{P}$ defined over $A$, and every subset of authorized authorities $
A'\in \mathbb{A}$, it holds that
$$\pr\left[\Gamma=\vecu^\top\vecv+e_0\left|
\begin{array}{c}\gp\leftarrow\globalset(1^{\lambda});\\
\forall \aid\in A,(\pk_{\aid},\msk_{\aid})\leftarrow\authset(\gp,\aid);\\
\forall\aid\in A', \sk_{\aid,\gid,\vecv}\leftarrow\keygen(\gp,\pk_{\aid},\msk_{\aid},\gid,\vecv);\\
\ct\leftarrow\enc(\gp,\mathbb{A},\{\pk_{\aid}\}_{\aid\in A},\vecu);\\
\Gamma\leftarrow\dec(\gp,\{\sk_{\aid,\gid,\vecv}\}_{\aid\in A'},\gid,\vecv,\ct);\\
e_0\in [-B_0(\lambda),B_0(\lambda)]
\end{array}\right.\right]\geq 1-\negl(\lambda).$$
Intuitively, we adopt an approximate correctness guarantee: if the decryption is authorized, the decryption result $\Gamma$ should recover the inner product $\vecu^\top \vecv$ up to a small additive noise term $e_0$, bounded in absolute value by $B_0(\lambda)$.

\iitem{Static Security} In this paper, we extend the static security model of \cite{AGT21,DP23} by incorporating a \emph{noisy} setting. This model is adapted from the security model for $\maabe$ in \cite{RW15} and for noisy $\IPFE$ in \cite{AP20}. This model is formalized as a game between a challenger and an adversary. The term \emph{static} refers to the restriction that the adversary must specify all of its queries at the beginning of the game.

For a security parameter $\lambda\in\N$, we define the static security game between a challenger and an adversary $\mathcal{A}$ for an $\mannipfe$ scheme as follows:

\dotitem{Global Setup} The challenger runs $\gp\leftarrow\globalset(1^\lambda)$ to generate the global parameters and sends $\gp$ to the adversary $\A$.

\dotitem{Adversary's Queries} The adversary specifies the following queries:
\begin{itemize}
    \item A set of corrupt authorities $\mathcal{C}\subseteq \AU$, along with their respective public keys $\pk_{\aid}$ for each corrupt authority $\aid\in \mathcal{C}$. 
    \item A set of non-corrupt authorities $\mathcal{N}\subseteq \AU$, where $\mathcal{N}\cap\mathcal{C}=\varnothing$.
    
    \item A set of secret-key queries $\mathcal{Q}=\{(\gid,A,\vecv)\}$  where each query specifies a global identifier $\gid\in\GID$, a subset of non-corrupt authorities $A\subseteq \mathcal{N}$, and a key vector $\vecv\in\Z_q^n$. We assume without loss of generality that each pair 
     $(\gid,\vecv)$ appears at most once in the query set $\mathcal{Q}$, even if associated with different authority subsets $A$. This avoids the adversary combining keys from multiple key responses to enable decryption that would otherwise be unauthorized.
    
    \item A pair of plaintext vectors $\vecu_0,\vecu_1\in\Z_q^n$, a set of authorities $A^*\subseteq \mathcal{C}\cup\mathcal{N}$, and an access structure $\mathbb{A}\in\mathcal{P}$ defined over $A^*$.
\end{itemize}

\dotitem{Challenger's Replies} The challenger first flips a fair coin $\beta\rand\{0,1\}$ and generates key pairs $(\pk_{\aid},\msk_{\aid})\leftarrow \authset(\gp,\aid)$ for each $\aid\in \mathcal{N}$. It then responds to adversary $\mathcal{A}$ as follows:
\begin{itemize}
    \item The public keys $\pk_{\aid}$ for each non-corrupt authority $\aid\in\mathcal{N}$.
    \item The secret keys $\sk_{\aid,\gid,\vecv}\leftarrow\keygen(\gp,\pk_{\aid},\msk_{\aid},\gid,\vecv)$ for each secret-key query $(\gid,A,\vecv)$ and each  $\aid\in A$.
    \item The challenge ciphertext $\ct_{\beta}\leftarrow\enc(\gp,\mathbb{A},\{\pk_{\aid}\}_{\aid\in A^*},\vecu_\beta)$.
\end{itemize}

\dotitem{Guess Phase} The adversary outputs a guess $\beta'\in\{0,1\}$ for the value of $\beta$.

The \emph{advantage} of the adversary $\mathcal{A}$ in this game is defined by $$\Adv_{\mathcal{A}}(\lambda):=|\pr[\beta=\beta']-1/2|.$$

Unlike the static security model of $\maabe$ (e.g., \cite{DKW21a,WWW22}), where the adversary is restricted to making only \emph{unauthorized} secret-key queries, the static security model for $\manipfe$ schemes---as considered in \cite{AGT21,DP23}---allows the adversary to ask secret-key queries that may potentially decrypt the challenge ciphertext and derive an approximate inner product. To account for this broader adversarial capability, we impose the following limitations on the adversary in the game of a $(B_0(\lambda),B_1(\lambda))$-$\manipfe$ scheme:

\begin{definition}[Admissible Adversary]\label{admiss}
   An adversary $\mathcal{A}$ is said to be admissible for the $(B_0,B_1)$-$\manipfe$ security game defined above if, the set $A^*\cap\mathcal{C}$ is not authorized under the access structure $\mathbb{A}$, i.e., $A^*\cap\mathcal{C}\notin \mathbb{A}$, and for every secret-key query $(\gid,A,\vecv)\in\mathcal{Q}$, at least one of the following conditions holds, ensuring either unauthorized access or functional indistinguishability:
    \begin{itemize}
        \item \emph{\textbf{Unauthorized access}}: $(A\cup\mathcal{C})\cap A^*\notin \mathbb{A}$.
        \item \emph{\textbf{Functional indistinguishability}}: The vector $\vecv$ satisfies $\|(\vecu_0-\vecu_1)^\top\vecv\|\leq B_1$. (This ensures that the adversary cannot gain useful distinguishing information from the inner product of the challenge plaintexts.)
    \end{itemize}
\end{definition}

\begin{definition}[Static Security for an $\manipfe$ scheme]
An $(B_0,B_1)$-$\manipfe$ scheme is said to be \emph{statically secure} if, for every efficient admissible adversary $\mathcal{A}$, there exists a negligible function $\negl(\cdot)$ such that, for all $\lambda\in\N$, the advantage of $\mathcal{A}$ in the above security game is at most $\negl(\lambda)$.
\end{definition}

\begin{remark}[Noiseless $\maipfe$]
    For a correct and statically secure $(B_0,B_1)$-$\manipfe$, it is natural to require that $B_0\geq B_1$ holds. Otherwise, an admissible adversary could distinguish between the two challenge ciphertexts by querying an authorized secret key associated with a key vector $\vecv$, for which $(\vecu_0-\vecu_1)^\top\vecv=B_1>B_0$. In this scenario, the difference between the two inner products would not be sufficiently obscured by the lower noise level guaranteed by the decryption process, thereby compromising security.
    
    In particular, when $B_0=B_1=0$, the $(B_0,B_1)$-$\manipfe$ degenerates into a \emph{noiseless} $\maipfe$ scheme, or simply an $\MAABIPFE$ scheme for brevity. Specifically, in this case, the decryption process using authorized secret keys is expected to return the \emph{exact} inner product (except with negligible probability). Meanwhile, in the security game, an \emph{admissible} adversary is restricted to submitting secret-key queries that are either unauthorized or guaranteed to yield the \emph{same} inner product when decrypting both challenge ciphertexts. In this case, this definition coincides with the standard definition of $\maipfe$ as formulated in~\cite{AGT21,DP23}. 
\end{remark}

\begin{remark}[Static Security in the Random Oracle Model]
   In this work, we analyze the static security of our $\mannipfe$ construction in the \emph{random oracle model}~\cite{BR93}, following the approach of \cite{RW15,DKW21a,DKW21b,WWW22,DP23}. In this setting, the hash function $\H$ (modeled as a random oracle programmed by the challenger) is specified as part of the global parameters, and all parties in the scheme have access to this function. 
   
   During the static security game in the random oracle model, the adversary is allowed to submit the random oracle queries as part of its initial queries. It is further allowed to make additional random oracle queries during the phase in which the challenger provides its responses.
The challenger must answer all such queries consistently throughout the interaction.
\end{remark}

\iitem{$\mannipfe$ for subset policies} 
In this paper, we primarily focus on constructing an $\mannipfe$ scheme for the class of subset policies. In this setting, the ciphertext is encrypted under a set of authorities (attributes) $A\subseteq\AU$, which uniquely determines the access structure under the subset policy setting. Consequently, the access structure 
$\mathbb{A}$ does not need to be explicitly specified in the encryption algorithm, as the policy is implicitly defined by the set $A$. The decryption algorithm succeeds only if a user possesses secret keys associated with a set of authorities $B$ such that $A\subseteq B$. 

The flexibility of the subset policy model provides an efficient framework for expressing access control rules, making it well-suited for a wide range of applications that require dynamic access restrictions based on user roles or permissions. By combining the definition of the $\maabe$ scheme for subset policies, as introduced in \cite{WWW22}, with the definition of $\maipfe$ scheme in \cite{AGT21,DP23}, we formalize the following definition.
 
\begin{definition}[$\mannipfe$ for subset policies]
Let $\lambda$ be a security parameter and $n=n(\lambda),q=q(\lambda)$ be lattice parameters. Let $\AU$ be the universe of authority identifiers. Define the class of subset policies $\P$ as $$\P=\{\mathbb{A}: \mathbb{A}=\{B: B\supseteq A\} \text{ where } A\subseteq \AU\}.$$ An access structure $\mathbb{A}$ for a subset policy is determined by the set $A\subseteq \AU$. For an $(B_0(\lambda),B_1(\lambda))$-$\mannipfe$ scheme $\Pi_{\mannipfe}=\{\globalset,\authset,\keygen,\enc,\dec\}$ for subset policies over the vector space $\Z_q^n$, we can omit the explicit specification of the access structure $\mathbb{A}$ in the encryption algorithm, since it is implicitly defined by the set of public keys $\{\pk_{\aid}\}_{\aid\in A}$. Precisely, the encryption algorithm $\enc$ in Definition \ref{def1} can be simplified as follows (with other algorithms unchanged):
\begin{itemize}
    \item $\enc(\gp,\{\pk_{\aid}\}_{\aid\in A},\vecu)\rightarrow \ct$: The encryption algorithm takes as input the global parameters $\gp$, the set of public keys $\{\pk_{\aid}\}$ corresponding to authorities $\aid\in A$, and a plaintext vector $\vecu\in\Z_q^n$. It outputs a ciphertext $\ct$.
\end{itemize}
\end{definition}

In the subset policy setting, admissibility (Definition~\ref{admiss}) of an adversary imposes specific structural constraints on secret-key queries. We now make this more explicit by classifying such queries into two types. Let $\mathcal{Q}=\{
(\gid,A,\vecv)\}$ denote the set of secret-key queries and let $A^*$ denote the set of attributes related to the challenging ciphertext. According to Definition \ref{admiss}, an admissible adversary is additionally required to satisfy $A^*\cap\mathcal{C}\subsetneqq A^*$, and each query $(\gid,A,\vecv)\in\mathcal{Q}$ must satisfy at least one of the following conditions:
\begin{itemize}
     \item  \textbf{Unauthorized access}: $(A\cup \mathcal{C})\cap A^*\subsetneqq A^*$.
    \item \textbf{Functional indishtinguishability}: $\|(\vecu_1-\vecu_0)^\top\vecv\|\leq B_1$.
\end{itemize}
To facilitate the analysis in our security proof, we classify secret-key queries into the following types:
\begin{itemize}
    \item \textbf{Type I secret-key query}: A query satisfying $(A\cup \mathcal{C})\cap A^*\subsetneqq A^*$.
    \item \textbf{Type II secret-key query}: A query satisfying $(A\cup\mathcal{C})\cap A^*=A^*$ and $\|(\vecu_1-\vecu_0)^\top\vecv\|\leq B_1$.
\end{itemize}
It follows directly that each secret-key query must belong to one of these two types.

\begin{remark}
    As mentioned in \cite{WWW22}, the $\maabe$ scheme for subset policies implies an $\maabe$ scheme for access structure supporting access structures defined by a polynomial-size conjunction or a \DNF formula. Similarly, our construction of the $\mannipfe$ scheme for subset policies can also be extended to an $\mannipfe$ scheme for access structure decided by conjunction or a \DNF formula by following an analogous argument.
\end{remark}

\subsection[$\maev$]{Multi-authority Attribute-based Evasive Inner-product Functional Encryption ($\maev$)}
Inspired by the evasive \IPFE scheme introduced in~\cite{HLL24}, we define the notion of a \emph{multi-authority attribute-based evasive inner-product functional encryption} ($\maev$) scheme by relaxing the security requirement of the $\mannipfe$ scheme described above. In \cite{HLL24}, the evasive \IPFE captures a ``generic-model view'' analogous to that of evasive \LWE. The authors consider the \emph{static} (also known as \emph{very selective}~\cite{HLL24}) security model with respect to samplers producing pseudorandom noisy inner-product outputs. 

Following this approach, we consider a weakened static security game where the adversary is no longer required to fully specify the entire Type II secret-key queries or the challenge plaintexts. Instead, these components are jointly determined by a predefined sampler and the adversary. We formalize the definition of an $(B_0,\chi_s)$-$\maev$ scheme accordingly, adopting most of the notations following the convention of \cite{HLL24}. The syntax of the scheme and its approximate correctness guarantee follow the definition of the $(B_0, B_1)$-$\mannipfe$ scheme given in Definition~\ref{def1}, where the parameter $B_0$ retains the same role of bounding the decryption noise. We omit restating this part in the $\maev$ definition below for clarity. We formalize the static security model for $\maev$ as follows.

\iitem{Static security} Let $\samp=(\samp_{\vecv},\samp_{\vecu})$ be two algorithms with the following syntax.

\begin{itemize}
    \item $\samp_{\vecv}(1^\lambda;\vecr_{\pub})$: The algorithm takes as input the security parameter $\lambda$ and public randomness $\vecr_{\pub}$, and outputs a sequence of vectors $\vecv_1',\vecv_2',\ldots,\vecv_{Q'}'\in\Z_q^n$.

    \item $\samp_{\vecu}(1^\lambda,\vecr_{\pub};\vecr_{\pri})$: The algorithm takes as input the security parameter $\lambda$, public randomness $\vecr_{\pub}$, and private randomness $\vecr_{\pri}$,  and outputs a single vector $\vecu\in\Z_q^n$.
\end{itemize}

For a security parameter $\lambda\in\N$, we define the static security game between a challenger and an (admissible) adversary $\mathcal{A}$ for an $\maev$ scheme, parameterized by the sampler $\samp=(\samp_{\vecv},\samp_{\vecu})$, as follows:

\dotitem{Global Setup} The challenger runs $\gp\leftarrow\globalset(1^\lambda)$ to generate the global parameters and sends $\gp$ to the adversary $\A$.

\dotitem{Adversary's Queries} The adversary specifies the following queries:
\begin{itemize}
    \item A set $\mathcal{C}\subseteq \AU$ of corrupt authorities, along with their respective public keys $\pk_{\aid}$ for each corrupt authority $\aid\in \mathcal{C}$. 
    \item A set $\mathcal{N}\subseteq \AU$ of non-corrupt authorities, where $\mathcal{N}\cap\mathcal{C}=\varnothing$.

    \item A set of authorities $A^*\subseteq \mathcal{C}\cup\mathcal{N}$ such that $A^*\cap\mathcal{C}\subsetneqq A^*$.
    
    \item A set of Type I secret-key queries $\mathcal{Q}=\{(\gid,A,\vecv)\}$, where $A\subseteq \mathcal{N}$ and $(A\cup \mathcal{C})\cap A^*\subsetneqq A^*$. 

    \item A partial set of Type II secret-key queries $\mathcal{Q}_{\partset}'=\{(\gid',A')\}$, where $A'\subseteq \mathcal{N}$ and $(A'\cup\mathcal{C})\cap A^*=A^*$.
    
\end{itemize}

\dotitem{Challenger's Replies}
    Then the challenger proceeds as follows:
    \begin{enumerate}
    \item The challenger first flips a fair coin $\beta\rand\{0,1\}$ and generates key pairs $(\pk_{\aid},\msk_{\aid})\leftarrow \authset(\gp,\aid)$ for each $\aid\in \mathcal{N}$.
    \item The challenger samples $\vecr_{\pub}\rand\{0,1\}^{\kappa_1},\vecr_{\pri}\rand\{0,1\}^{\kappa_2}$, where $\kappa_1,\kappa_2$ denotes the upper bounds of the random bits used by $\samp_{\vecv}$ and $\samp_{\vecu}$, respectively.  Then it computes
    \begin{align*}
(\vecv_1',\vecv_2',\ldots,\vecv_{Q'}')\leftarrow\samp_{\vecv}(1^\lambda;\vecr_{\pub}),\\
\vecu_0\leftarrow \samp_{\vecu}(1^{\lambda};\vecr_{\pub},\vecr_{\pri}),\vecu_1\rand\Z_q^n. 
    \end{align*}
Next, the challenger builds the full set of Type II queries
$$\mathcal{Q}'=\{(\gid_1',A_1',\vecv_1'),\ldots,(\gid_{Q'}',A_{Q'}',\vecv_{Q'}')\}.$$

Before detailing the responses, we make the following assumptions without loss of generality:
\begin{itemize}
    \item Each pair $(\gid,\vecv)$ appears at most once in $\mathcal{Q}$, for the same reason as in $\mannipfe$. Likewise, each pair $(\gid',\vecv')$ appears at most once in $\mathcal{Q}'$, since two queries $(\gid',A_1',\vecv')$ and $(\gid',A_2',\vecv')$ can, without loss of generality, be replaced by a single query $(\gid',A_1'\cup A_2',\vecv')$.
     \item Each pair $(\gid_j',\vecv_j')$ in $\mathcal{Q}_{\partset}'$ does not appear in the Type I secret-key queries set $\mathcal{Q}$. If a pair $(\gid_j',\vecv_j')$ appears in both query sets, it can be safely removed from $\mathcal{Q}$. This is because the challenger will already return the secret keys associated with all authorities in $A^*\cap\mathcal{N}$---these cover all the authorities in any Type I query involving $(\gid_j',\vecv_j')$. As such, the Type I secret-key would be redundant. 
    
    \item The partial set of Type II queries $\mathcal{Q}_{\partset}'$  is assumed to contain exactly $Q'$ elements, matching the number of vectors output by the sampler $\samp_{\vecv}$. If necessary, the set $\mathcal{Q}_{\partset}'$ can be padded or truncated to ensure that its size matches $Q'$, without affecting the adversary’s view or the underlying distribution.
    \end{itemize}

\item It then responds to the adversary $\mathcal{A}$ as follows:
\begin{itemize}
    \item The public keys $\pk_{\aid}$ for each non-corrupt authorities $\aid\in\mathcal{N}$.
    \item The secret keys $\sk_{\aid,\gid,\vecv}\leftarrow\keygen(\gp,\pk_{\aid},\msk_{\aid},\gid,\vecv)$ for each Type I secret-key query $(\gid, A,\vecv)\in \mathcal{Q}$ and each $\aid\in A$.
    \item The secret keys $\sk_{\aid,\gid',\vecv'}\leftarrow\keygen(\gp,\pk_{\aid},\msk_{\aid},\gid',\vecv')$ for each Type II secret-key query $(\gid',A',\vecv')\in\mathcal{Q}'$ and each $\aid\in A'$.
    \item The challenge ciphertext $\ct_{\beta}\leftarrow\enc(\gp,\{\pk_{\aid}\}_{\aid\in A^*},\vecu_\beta)$.
    \item The public randomness $\vecr_{\pub}$.
\end{itemize}
\end{enumerate}

\dotitem{Guess phase} The adversary outputs a guess $\beta'\in\{0,1\}$ for the value of $\beta$.

The \emph{advantage} of the adversary $\mathcal{A}$ in distinguishing $\beta$ is defined as $$\Adv_{\mathcal{A}}^{\samp}(\lambda):=|\pr[\beta=\beta']-1/2|.$$ 

\begin{definition}[Static Security for an $\maev$ scheme]
    We say that the sampling algorithms $\samp=(\samp_{\vecv},\samp_{\vecu})$ in the game above produces \emph{pseudorandom noisy inner products} with noise parameter $\sigma$, if $$(1^\lambda,\vecr_{\pub},\{\vecu_0^\top\vecv_i'+e_i\}_{i\in[Q']})\overset{c}{\approx}(1^\lambda,\vecr_{\pub},\{\delta_i\}_{i\in[Q']}),$$ where $e_i\leftarrow D_{\Z,\sigma},\delta_i\rand\Z_q$ for each $i\in [Q']$. An $\maev$ scheme is said to be $\samp$-\emph{secure} if for every efficient admissible adversary $\mathcal{A}$, the advantage of $\Adv_{\mathcal{A}}^{\samp}(\lambda)$ is negligible in $\lambda$. Furthermore, an $(B_0,\chi_s)$-$\maev$ scheme is said to be \emph{statically secure} if it is $\samp$-secure for all efficient $\samp$ that produce pseudorandom noisy inner products with noise parameter $\sigma\leq \chi_s$. 
\end{definition}

\begin{remark}
    As justified in \cite{HLL24}, to ensure indistinguishability in the static security game, it is necessary to make the sampling process of $\vecu_0$ private-coin, meaning the distinguisher cannot trivially use randomness to obtain $\vecu_0$. Otherwise, the distinguisher may obtain $\vecu_0$ and $\vecv_i'$ for $i\in [Q']$, and compute the corresponding inner products $\vecu_0^\top\vecv_i'$. This would allow distinguishing the challenge ciphertext and thus break the security of the scheme.
\end{remark}

\section{$\maev$ Scheme from $\EVIPFE$ Assumption (in the Random Oracle Model)}\label{sec:6}

In this section, we present our construction of the $\maev$ scheme for subset policies in the random oracle model. The construction follows a structure similar to that in \cite{DKW21a,WWW22,HLL24}. Later, we prove the static security of this scheme based on the \LWE assumption and $\EVIPFE$ assumption, analogous to the approach introduced in \cite{WWW22, HLL24}.

\begin{construction}[$\maev$ for subset policies in the random oracle model]\label{con1}
    Let $\lambda$ be the security parameter. Let $n$ and $q$ be lattice parameters, and define $m = n\lceil\log q\rceil$. Let $m'$, $\chi$, and $\chi'$ be additional lattice parameters. 
    Let $\AU=\{0,1\}^{\lambda}$ denote the universe of authority identifiers, and let $\GID=\{0,1\}^{\lambda}$ denote the universe of global identifiers for users. Let $\H: \GID\times\Z_q^n\rightarrow\Z_q^{m'}$ be a hash function, modeled as a random oracle, similarly to Construction 5 in \cite{WWW22}. More precisely, 
    \begin{itemize}
        \item The outputs of the random oracle $\H$ are modeled as following a discrete Gaussian distribution with parameter $\chi'$. Specifically, on each input query $(\gid,\vecv)\in \GID\times\Z_q^n$, the output $\H(\gid,\vecv)$ is distributed according to discrete Gaussian $D_{\Z,\chi'}^{m'}$.
        
        \item As noted in \cite{WWW22}, such a hash function $\H$ can be instantiated using a standard random oracle $\H^{\prime}: \GID\times \Z_q^n\rightarrow \{0,1\}^{\lambda m'}$, where the outputs of $\H'(\gid,\vecv)$ are uniformly distributed over $\{0,1\}^{\lambda m'}$. According to Lemma 5.7 and Remark 5.9 of \cite{WWW22}, the outputs of $\H'$ can then be efficiently transformed into the desired discrete Gaussian distribution above using inversion sampling techniques.
    \end{itemize}
    A $(B_0,\chi_s)$-$\maev$ scheme over $\Z_q^n$ for subset policies consists of a tuple of efficient algorithms $\Pi_{\maev}=(\globalset, \authset,\keygen,\enc,\dec)$. These algorithms proceed as follows: 
\begin{itemize}
    \item $\globalset(1^\lambda)\rightarrow \gp$: The global setup algorithm takes as input the security parameter $\lambda$, and outputs the global parameters $\gp=(\lambda,n,m,m',q,\chi,\chi',\H)$.
    
    \item $\authset(\gp,\aid)\rightarrow (\pk_{\aid},\msk_{\aid})$: The authority setup algorithm takes as input the global parameters $\gp$ and an authority identifier $\aid\in\AU$. It samples $(\matA_{\aid},\td_{\aid})\leftarrow\trapgen(1^n,1^m,q),\matB_{\aid}\rand\Z_q^{n\times m'},\matP_{\aid}\rand\Z_q^{n\times m}$. 
    It outputs a public key $\pk_{\aid}=(\matA_{\aid},\matB_{\aid},\matP_{\aid})$ and a master secret key $\msk_{\aid}=\td_{\aid}$.
    
    \item $\keygen(\gp,\pk_\aid,\msk_\aid,\gid,\vecv)\rightarrow \sk_{\aid,\gid,\vecv}$: The key generation algorithm takes as input the global parameters $\gp$, the public key $\pk_{\aid}=(\matA_\aid,\matB_{\aid},\matP_{\aid})$, the authority's master secret key $\msk_{\aid}=\td_\aid$, the user identifier $\gid\in \GID$, the key vector $\vecv\in\Z_q^n$, the key generation algorithm computes $\vecr\leftarrow \H(\gid,\vecv)$ and uses the trapdoor $\td_\aid$ for $\matA_\aid$ to sample $\veck\leftarrow\samplepre(\matA_{\aid},\td_{\aid},\matP_\aid\matG^{-1}(\vecv)+\matB_\aid\vecr,\chi)$. It outputs a secret key $\sk_{\aid,\gid,\vecv}=\veck$.
    
    \item $\enc(\gp,\{\pk_{\aid}\}_{\aid\in A},\vecu)\rightarrow\ct$: The encryption algorithm takes as input the global parameters $\gp$, a set of public keys $\pk_{\aid}=(\matA_{\aid},\matB_{\aid},\matP_{\aid})$ associated with authorities $A\subseteq \AU$, a plaintext vector $\vecu\in \Z_q^n$. The encryption algorithm samples $\vecs_{\aid}\rand\Z_q^n, \vece_{1,\aid}\leftarrow D_{\Z,\chi}^{m}$ for each $\aid\in A$, $\vece_2\leftarrow D_{\Z,\chi}^{m'}$, and $\vece_3\leftarrow D_{\Z,\chi}^{m}$. It outputs a ciphertext $\ct\in\Z_q^m\times\Z_q^{m'}\times\Z_q^{m}$, where $$\ct=\left(\left\{\vecs_{\aid}^\top\matA_{\aid}+\vece_{1,\aid}^\top\right\}_{\aid\in A},\sum_{\aid\in A}\vecs_{\aid}^\top\matB_{\aid}+\vece_2^{\top},\sum_{\aid\in A}\vecs_{\aid}^{\top}\matP_{\aid}+\vece_3^{\top}+\vecu^\top\matG\right).$$
    
    \item $\dec(\gp,\{\sk_{\aid,\gid,\vecv}\}_{\aid\in A},\gid,\vecv,\ct)\rightarrow \Gamma$: The decryption algorithm takes as input the global parameters $\gp$, a collection of secret keys $\sk_{\aid,\gid,\vecv}=\veck_{\aid,\gid,\vecv}$ associated with 
    authorities $\aid\in A$, a user identifier $\gid$, a key vector $\vecv$, and a ciphertext $\ct=(\{\vecc_{1,\aid}^\top\}_{\aid\in A},\vecc_2^\top,\vecc_3^\top)$. The decryption algorithm first computes $\vecr\leftarrow \H(\gid, \vecv)$ and outputs 
$$\Gamma=\vecc_3^{\top}\matG^{-1}(\vecv)+\vecc_2^\top\vecr-\sum_{\aid\in A}\vecc_{1,\aid}^\top\veck_{\aid,\gid,\vecv}.$$
\end{itemize}
\end{construction}

\subsection{Correctness}
\begin{theorem}[Correctness]\label{correct1}
    Let $L$ be an upper bound on the number of attributes associated with a ciphertext.  Let $\chi_0=\sqrt{n\log q}\cdot \omega(\sqrt{\log n})$ be a polynomial such that Lemma \ref{preimage} holds. Suppose that the lattice parameters $n,q,\chi$ are such that $\chi\geq \chi_0(n,q)$. Then the scheme $\Pi_{\maev}$ in Construction \ref{con1} is \emph{correct} as a $(B_0,\chi_s)$-$\maev$ scheme, where the parameter $B_0=\sqrt{\lambda}\chi m+\lambda \chi\chi' m'+\lambda \chi^2 mL.$
\end{theorem}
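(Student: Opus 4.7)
The plan is to verify correctness by direct algebraic expansion of the decryption expression and then bounding the residual noise term using standard discrete Gaussian tail bounds. Fix a user $\gid$, a key vector $\vecv$, a ciphertext attribute set $A$ with $|A| \leq L$, and an authorized key set $A' \supseteq A$. Let $\vecr = \H(\gid,\vecv)$, and write the ciphertext as $(\{\vecc_{1,\aid}^\top\}_{\aid \in A}, \vecc_2^\top, \vecc_3^\top)$ and the keys as $\veck_{\aid,\gid,\vecv}$ for $\aid \in A' \supseteq A$. Since the key generation uses $\samplepre$ with trapdoor $\td_\aid$, by Lemma~\ref{trapdoor} each $\veck_{\aid,\gid,\vecv}$ is statistically close to $(\matA_\aid)_\chi^{-1}\bigl(\matP_\aid\matG^{-1}(\vecv)+\matB_\aid\vecr\bigr)$, so in particular $\matA_\aid \veck_{\aid,\gid,\vecv} = \matP_\aid\matG^{-1}(\vecv)+\matB_\aid\vecr \pmod q$.

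Substituting the definitions of $\vecc_{1,\aid}, \vecc_2, \vecc_3$ into the decryption formula and using $\matG\matG^{-1}(\vecv) = \vecv$, all terms involving the $\vecs_\aid$'s telescope away via the identity above. Specifically,
\begin{align*}
\Gamma &= \Bigl(\sum_{\aid \in A}\vecs_\aid^\top \matP_\aid + \vece_3^\top + \vecu^\top\matG\Bigr)\matG^{-1}(\vecv) + \Bigl(\sum_{\aid \in A}\vecs_\aid^\top\matB_\aid + \vece_2^\top\Bigr)\vecr \\
 &\quad - \sum_{\aid \in A}\bigl(\vecs_\aid^\top\matA_\aid + \vece_{1,\aid}^\top\bigr)\veck_{\aid,\gid,\vecv} \\
 &= \vecu^\top \vecv + \underbrace{\vece_3^\top\matG^{-1}(\vecv) + \vece_2^\top\vecr - \sum_{\aid \in A}\vece_{1,\aid}^\top \veck_{\aid,\gid,\vecv}}_{=: e_0}.
\end{align*}

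It remains to bound $|e_0|$ by $B_0$ with overwhelming probability. By Lemma~\ref{truncated}, with overwhelming probability we have $\|\vece_3\|, \|\vece_2\|, \|\vece_{1,\aid}\| \leq \sqrt{\lambda}\chi$ and $\|\vecr\| \leq \sqrt{\lambda}\chi'$; moreover, by Lemma~\ref{trapdoor} together with Lemma~\ref{truncated} applied to $D_{\Z,\chi}^m$, we also have $\|\veck_{\aid,\gid,\vecv}\| \leq \sqrt{\lambda}\chi$ for each $\aid \in A$. Since $\matG^{-1}(\vecv) \in \{0,1\}^m$, we obtain
\[
|\vece_3^\top\matG^{-1}(\vecv)| \leq \sqrt{\lambda}\chi \cdot m, \quad |\vece_2^\top\vecr| \leq \lambda\chi\chi' \cdot m', \quad \Bigl|\sum_{\aid\in A}\vece_{1,\aid}^\top\veck_{\aid,\gid,\vecv}\Bigr| \leq L \cdot \lambda\chi^2 \cdot m,
\]
and summing these yields $|e_0| \leq \sqrt{\lambda}\chi m + \lambda\chi\chi' m' + \lambda\chi^2 m L = B_0$, as required.

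There is no real obstacle here; the argument is essentially bookkeeping. The only subtle points are (i) invoking Lemma~\ref{trapdoor} to replace the $\samplepre$ output by a genuine discrete Gaussian preimage (which is needed to apply the tail bound of Lemma~\ref{truncated} to $\veck_{\aid,\gid,\vecv}$), and (ii) ensuring that the union bound over the $O(L)$ Gaussian samples preserves the overwhelming probability guarantee, both of which are routine since $L$ is polynomial in $\lambda$. The parameter condition $\chi \geq \chi_0(n,q)$ ensures that the preimage-sampling guarantee of Lemma~\ref{trapdoor} applies with width $\chi$.
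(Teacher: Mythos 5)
Your proof is correct and follows essentially the same approach as the paper: expand the decryption formula so the $\vecs_\aid$ terms cancel via the preimage identity, then bound the residual error $\vece_3^\top\matG^{-1}(\vecv)+\vece_2^\top\vecr-\sum_{\aid}\vece_{1,\aid}^\top\veck_{\aid,\gid,\vecv}$ term by term using Lemma~\ref{trapdoor} and Lemma~\ref{truncated}, arriving at the same bound $B_0$.
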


\begin{proof}
    Let $\vecu \in \Z_q^n$ be any plaintext, $\vecv \in \Z_q^n$ a key vector, $\gid \in \GID$ a user identifier, and $A$ a set of authorities. First, sample the global parameters $\gp\leftarrow\globalset(1^{\lambda})$, the authority key pairs $(\pk_{\aid},\msk_{\aid})\leftarrow\authset(\gp,\aid)$ for each $\aid\in A$, the secret keys $\sk_{\aid,\gid,\vecv}\leftarrow\keygen(\gp,\pk_{\aid},\msk_{\aid},\gid,\vecv)$ for each $\aid\in A$, and the ciphertext $\ct\leftarrow\enc(\gp,\{\pk_{\aid}\}_{\aid\in A},\vecu)$. 
    
    In the following, we verify the correctness by expanding the computation of the decryption process $\dec(\gp,\{\sk_{\aid,\gid,\vecv}\}_{\aid\in A},\gid,\vecv,\ct)$ in detail:
    \begin{itemize}
        \item The global parameters $\gp=(\lambda,n,m,m',q,\chi,\chi',\H)$ consist of the security parameter, the lattice parameters, and the description of a hash function $\H:\GID\times\Z_q^n\rightarrow\Z_q^{m'}$.
        \item For the plaintext vector $\vecu\in \Z_q^n$, the ciphertext  associated with authorities $\{\aid\}_{\aid\in A}$ is constructed as $\left(\left\{\vecc_{1,\aid}^\top\right\}_{\aid\in A},\vecc_2^\top,\vecc_3^{\top}\right)$, where $$\vecc_{1,\aid}^\top=\vecs_{\aid}^\top\matA_{\aid}+\vece_{1,\aid}^\top,\vecc_{2}^\top=\sum_{\aid\in A}\vecs_{\aid}^\top\matB_{\aid}+\vece_2^\top,\vecc_{3}^\top=\sum_{\aid\in A}\vecs_{\aid}^\top\matP_{\aid}+\vece_3^\top+\vecu^\top\matG,$$
        where $(\matA_{\aid},\matB_{\aid},\matP_{\aid})$ is the public key associated with the authority $\aid\in A$.
        
        \item Each secret key is generated as $\sk_{\aid,\gid,\vecv}=\veck_{\aid,\gid,\vecv}\leftarrow\samplepre(\matA_{\aid},\td_{\aid},\matP_\aid\matG^{-1}(\vecv)+\matB_\aid\vecr,\chi)$, where $\vecr\leftarrow\H(\gid,\vecv)$.    By Lemma \ref{trapdoor}, $\veck_{\aid,\gid,\vecv}$ is distributed according to $D_{\Z,\chi}^m$ conditioned on $$\matA_{\aid}\veck_{\aid,\gid,\vecv}=\matP_{\aid}\matG^{-1}(\vecv)+\matB_{\aid}\vecr.$$ By Lemma \ref{truncated}, it holds with overwhelming probability that $\|\veck_{\aid,\gid,\vecv}\|\leq \sqrt{\lambda}\chi$ and $\|\vecr\|\leq \sqrt{\lambda}\chi'$.
       
        \item Using the secret key $\veck_{\aid,\gid,\vecv}$, the decryption algorithm computes $$\vecc_{1,\aid}^\top\veck_{\aid,\gid,\vecv}=\vecs_{\aid}^\top\matA_{\aid}\veck_{\aid,\gid,\vecv}+\vece_{1,\aid}^\top\veck_{\aid,\gid,\vecv}=\vecs_{\aid}^\top\matP_{\aid}\matG^{-1}(\vecv)+\vecs_{\aid}^{\top}\matB_{\aid}\vecr+\vece_{1,\aid}^\top\veck_{\aid,\gid,\vecv}$$ for each $\aid\in A$.  Substituting the above expression into the decryption formula, we obtain: $$\vecc_3^{\top}\matG^{-1}(\vecv)+\vecc_2^\top\vecr-\sum_{\aid\in A}\vecc_{1,\aid}^\top\veck_{\aid,\gid,\vecv}=\vecu^\top\vecv+\vece_3^\top\matG^{-1}(\vecv)+\vece_2^\top\vecr-\sum_{\aid\in A}\vece_{1,\aid}^\top\veck_{\aid,\gid,\vecv},$$ where the error term is $$\tilde{e}:=\vece_3^\top\matG^{-1}(\vecv)+\vece_2^\top\vecr-\sum_{\aid\in A}\vece_{1,\aid}^\top\veck_{\aid,\gid,\vecv}.$$ We conclude that the total error term $\tilde{e}$ satisfies $|\tilde{e}|\leq B_0$, as required.

        \item To bound $|\tilde{e}|$, we analyze each of its components: By Lemma \ref{truncated}, the following bounds hold with overwhelming probability: \begin{align*}
            &\|\vece_{1,\aid}\|\leq \sqrt{\lambda}\chi, \quad \text{for each } \aid\in A, \\
            &\|\vece_2\|\leq  \sqrt{\lambda}\chi,\quad \|\vece_3\|\leq  \sqrt{\lambda}\chi.
        \end{align*} Using these bounds, we have
        \begin{align*}
        \|\vece_3^\top\matG^{-1}(\vecv)\|&\leq m \sqrt{\lambda}\chi,\\
        \|\vece_2^\top\vecr\|&\leq m'\lambda\chi\chi',\\
\|\vece_{1,\aid}^\top\veck_{\aid,\gid,\vecv}\|&\leq m\lambda\chi^2.
        \end{align*}
Combining these results, we obtain the bound for the error term $\tilde{e}$:$$\|\tilde{e}\|\leq \sqrt{\lambda}\chi m+\lambda \chi\chi' m'+\lambda \chi^2 m\ell.$$ Thus, the scheme satisfies correctness, as claimed.
\end{itemize}
\end{proof}

\subsection{Security}
\begin{theorem}\label{security1}
 Let $\chi_0(n,q)=\sqrt{n\log q}\cdot\omega(\sqrt{\log n})$ be a polynomial such that Lemma \ref{preimage} holds. Let $Q_0$ be the upper bound on the total number of secret-key queries (including those in the partial set) submitted by the adversary. Suppose that the following parameter conditions hold:
\begin{itemize}
    \item $\chi'=\Omega(\sqrt{n\log q})$.
    \item Let $\chi$ be an error distribution parameter such that $\chi>\chi_0$ and $\chi\geq \lambda^{\omega(1)}\cdot (\sqrt{\lambda}(m+\ell)\chi_s+\lambda m\chi'\chi_s)$ for all $\ell=\poly(\lambda)$, where $\chi_s$ is a noise parameter such that  $\LWE_{n,m_1,q,\chi_s}$ assumption holds for some $m_1=\poly(m,m',Q_0)$.
    
    \item The assumption $\EVIPFE_{n,m,m',q,\chi,\chi}$ holds.
\end{itemize}
   
   Then Construction \ref{con1} is \emph{statically secure} as a $(B_0,\chi_s)$-$\EVIPFE$ scheme. 
\end{theorem}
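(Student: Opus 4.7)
Proof plan. The plan is to reduce the static security of Construction~\ref{con1} to the $\EVIPFE$ assumption of Assumption~\ref{evIPFE} via two symmetric invocations, one for each value of the challenge bit $\beta$, each of which shows that the challenge ciphertext is computationally indistinguishable from a uniformly random string. Fix an efficient admissible adversary $\mathcal{A}$ and an efficient sampler $\samp=(\samp_\vecv,\samp_\vecu)$ that produces pseudorandom noisy inner products with parameter at most $\chi_s$. I construct a reduction $\A_1$ against the postcondition of Assumption~\ref{evIPFE} using a sampler $\mathcal{S}=(\mathcal{S}_\vecv,\mathcal{S}_\vecu)$ defined as follows. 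The algorithm $\mathcal{S}_\vecv$ parses its public randomness $\vecr_{\pub}$ into disjoint slices and uses them to (i) deterministically re-run $\mathcal{A}$'s initial query phase, extracting $\mathcal{C}$, the corrupt public keys, $\mathcal{N}$, the challenge set $A^*$ (with $A^*_N:=A^*\cap\mathcal{N}\neq\varnothing$ by admissibility), the Type~I set $\mathcal{Q}$, and the partial Type~II set $\mathcal{Q}_{\partset}'$; (ii) run $\samp_\vecv$ on its own slice of $\vecr_{\pub}$ to obtain the Type~II key vectors that complete $\mathcal{Q}_{\partset}'$ into $\mathcal{Q}'$; and (iii) sample an independent Gaussian vector $\vecr_{\gid,\vecv}\leftarrow D_{\Z,\chi'}^{m'}$, to serve as the programmed random-oracle reply $\H(\gid,\vecv)$, for each distinct pair $(\gid,\vecv)$ appearing in some secret-key query. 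It outputs the tuples $(\vecr_{\gid,\vecv},\vecv)$ indexed by $i\in A^*_N$ and by queries whose authority set contains $i$, so that the resulting $\matQ_i$ of Assumption~\ref{evIPFE} has columns exactly of the form $\matP_i\matG^{-1}(\vecv)+\matB_i\H(\gid,\vecv)$ required by the $\keygen$ algorithm. Finally, $\mathcal{S}_\vecu$ is set equal to $\samp_\vecu$, so that the private-coin output $\vecu$ of Assumption~\ref{evIPFE} plays the role of the $\maev$ challenge plaintext $\vecu_0$.

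Given the postcondition challenge $(\vecr_{\pub},\{\matA_i,\vecz_{1,i}\}_{i\in A^*_N},[\matB\mid\matP],\vecz_2,\{\matK_i\}_{i\in A^*_N})$, the reduction $\A_1$ simulates the $\maev$ game by installing $(\matA_i,\matB_i,\matP_i)$ as the public keys for authorities in $A^*_N$, generating the remaining non-corrupt public keys honestly via $\trapgen$ and retaining their trapdoors, answering random-oracle queries consistently with the sampled $\vecr_{\gid,\vecv}$'s (resampling fresh Gaussian replies for queries outside the precommitted list), responding to each secret-key query using either the retained trapdoor (for authorities in $\mathcal{N}\setminus A^*_N$) or by reading the appropriate column of $\matK_i$ (for authorities in $A^*_N$), and assembling the challenge ciphertext by adjoining to $\{\vecz_{1,i}\}_{i\in A^*_N}$ and $\vecz_2$ the self-generated corrupt contributions $\vecs_\aid^\top\matA_\aid+\vece_{1,\aid}$ and $\vecs_\aid^\top[\matB_\aid\mid\matP_\aid]$ for $\aid\in A^*_C$, which the reduction can compute because it samples $\vecs_\aid$ itself and holds the adversary-chosen $(\matA_\aid,\matB_\aid,\matP_\aid)$. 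In the real postcondition case this is a perfect simulation of the $\maev$ game with $\vecu=\vecu_0$, and in the uniform postcondition case the challenge ciphertext is statistically uniform and independent of $\beta$. A symmetric invocation of Assumption~\ref{evIPFE} with a trivial $\mathcal{S}_\vecu$ that outputs a uniform $\vecu$ handles the $\vecu_1=\$$ branch identically, and chaining the two pseudorandomness transitions bounds $\Adv_{\mathcal{A}}^{\samp}(\lambda)$ by the sum of the two $\EVIPFE$ postcondition advantages plus a statistical term.

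The main remaining task, and the technical heart of the proof, is to verify that the precondition of Assumption~\ref{evIPFE} holds for the sampler $\mathcal{S}$ above (the second sampler is strictly easier because $\vecu$ is already uniform). I plan a hybrid argument that applies $\LWE_{n,m_1,q,\chi_s}$ to each authority $i\in A^*_N$ in turn, with secret $\vecs_i$ and matrix $[\matA_i\mid\matB_i\mid\matP_i]$, to replace the row $\vecs_i^\top[\matA_i\mid\matB_i\mid\matP_i]+\vece_i$ by a fresh uniform row $(\vecu_{1,i}^*,\vecu_{BP,i}^*)$. Since $\matQ_i$ is a deterministic linear function of $[\matB_i\mid\matP_i]$ whose coefficients $\vecr_{\gid,\vecv}$ and $\matG^{-1}(\vecv)$ have norm $O(\sqrt{\lambda}\chi')$ and $1$ respectively, the parameter hypothesis $\chi\geq\lambda^{\omega(1)}\cdot(\sqrt{\lambda}(m+\ell)\chi_s+\lambda m\chi'\chi_s)$ together with Lemma~\ref{smudge} absorbs the induced cross-noise into $\vece_{3,i}$ with statistically negligible loss. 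After all $|A^*_N|$ hybrids the view contains uniform rows $\{\vecu_{1,i}^*\}$ replacing $\{\vecz_{1,i}\}$; the vector $\vecz_2=\sum_i\vecu_{BP,i}^*+\vece_2+[\veczero\mid\vecu^\top\matG]$, which is manifestly uniform because the first summand already is; and rows of the form $\vecu_{BP,i}^*\matM_i+\vece_{3,i}$ for each $i\in A^*_N$, where $\matM_i$ collects the columns $\bigl[\begin{smallmatrix}\vecr_{\gid,\vecv}\\ \matG^{-1}(\vecv)\end{smallmatrix}\bigr]$ indexed by the queries touching authority $i$.

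The hardest step, and the one where the $\samp$ hypothesis is actually consumed, is showing pseudorandomness of $\{\vecu_{BP,i}^*\matM_i+\vece_{3,i}\}_{i\in A^*_N}$ jointly with $\vecz_2$ (which exposes the aggregate $\sum_i\vecu_{BP,i}^*$). I will split columns by query type: for a Type~II column indexed by a key vector $\vecv_j'$, admissibility forces $A^*_N\subseteq A'$, so the sum over $i\in A^*_N$ of $\vecu_{BP,i}^*\matM_{i,j}$ plus the known corrupt-side contribution equals a linear functional of $\vecz_2$ up to an additive ``decryption residual'' $\vecu^\top\vecv_j'$ plus noise of width $O(\chi'\chi)$, which is computationally uniform by the hypothesis that $\samp$ produces pseudorandom noisy inner products of width at most $\chi_s$ (absorbed by a final smudging step enabled by the same parameter condition); for a Type~I column, admissibility supplies some $i^*\in A^*_N\setminus(A\cup\mathcal{C})$, and the summand $\vecu_{BP,i^*}^*\matM_{i^*,j}$ is never referenced elsewhere in the view (its only other appearance would be inside $\vecz_2$, which already exposes only the aggregate over $i$), so it information-theoretically masks that column. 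Assembling these column-by-column arguments yields the precondition, and Assumption~\ref{evIPFE} then delivers the postcondition indistinguishability, contradicting the assumed non-negligible $\maev$-advantage of $\mathcal{A}$. The principal obstacle is cleanly juggling these three independent sources of pseudorandomness---individual $\LWE$-based replacement, Type~I information-theoretic masking through the missing authority, and Type~II pseudorandomness from $\samp$---while carrying a noise budget that respects both the smudging constraint on $\chi$ and the hardness threshold of the underlying $\LWE_{n,m_1,q,\chi_s}$ assumption.
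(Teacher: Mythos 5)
Your overall architecture matches the paper's: a reduction to the $\EVIPFE$ postcondition via a sampler that re-runs $\mathcal{A}$ from public coins, embeds the challenge matrices as the public keys of $A^*\cap\mathcal{N}$, programs the random oracle with the Gaussian $\vecr$'s, and serves the columns of $\matK_i$ as secret keys; and a precondition argument combining smudging, per-authority $\LWE$, and a case split on query types. However, two steps as you describe them would fail.

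First, the second invocation with a ``trivial $\mathcal{S}_{\vecu}$ that outputs a uniform $\vecu$'' does not satisfy the $\EVIPFE$ precondition, and it is not ``strictly easier.'' After the $\LWE$ substitutions, the Type II components force the precondition to contain (up to noise) the values $\vecu_1^\top\vecv_j'$ for all $j\in[Q']$ alongside the public $\vecv_j'$; indeed an adversary can compute $\sum_i y_{i,j}'$ minus the appropriate linear functional of $\vecz_2$ to recover $\vecu_1^\top\vecv_j'$ plus small noise. For a uniform $\vecu_1$ these values are \emph{not} jointly pseudorandom whenever the $\vecv_j'$ are linearly dependent or $Q'>n$ (e.g.\ $\vecv_1'=\vecv_2'$ makes two of them agree up to $\chi_s$-noise), and nothing in the hypotheses rules this out. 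The paper avoids this by never invoking the assumption on a raw uniform plaintext: it shows $\ct(\vecu)\approx{}$uniform${}\equiv{}$uniform${}+\vecu_{\delta}^\top\matG\approx\ct(\vecu+\vecu_{\delta})\equiv\ct(\vecu_1)$ in distribution, where the second invocation is for the plaintext $\vecu+\vecu_{\delta}$, whose noisy inner products inherit pseudorandomness from the $\samp$ hypothesis because $\vecu_{\delta}$ is an independent additive shift. You need this (or an equivalent) rerandomization step.

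Second, the Type I columns are not masked \emph{information-theoretically} by the uniformity of the row replacing $\vecs_{i^*}^\top[\matB_{i^*}\mid\matP_{i^*}]$. That row does not even appear in the Type I column in question (the query does not touch $i^*$), and the rows that do appear, e.g.\ the one for authority $1$, simultaneously generate all of that authority's Type I \emph{and} Type II columns; if their number exceeds $m'+m$ (which $Q_0=\poly(\lambda)$ permits), a uniform row cannot make all these products jointly uniform, and if it could, the $\samp$ hypothesis would never be needed for Type II. The correct mechanism is computational: split off a fresh secret $\vecs_0\in\Z_q^{m'}$ as $\tilde{\vecz}_{2,1}=\tilde{\tilde{\vecz}}_{2,1}+\vecs_0$, $\tilde{\vecz}_{2,i^*}=\tilde{\tilde{\vecz}}_{2,i^*}-\vecs_0$ (preserving the aggregate exposed in $\vecz_2$), and apply $\flipLWE$ with secret $\vecs_0$ against the Gaussian random-oracle outputs $\vecr$; since the targeted query's $\vecr$ never reaches authority $i^*$, the resulting pseudorandom value $\vecs_0^\top\vecr$ appears in exactly one place and randomizes that single column. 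This is also where the hypotheses $m'>6n\log q$ and $\chi'=\Omega(\sqrt{n\log q})$ are consumed, which your information-theoretic sketch leaves unused.
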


\begin{proof}
    We prove the \emph{static security} of Construction \ref{con1} by defining a sequence of hybrid experiments. We begin by fixing a public sampling algorithm tuple $\samp=(\samp_{\vecv},\samp_{\vecu})$ producing pseudorandom noisy inner products with noise parameter $\chi_s$.

 \iitem{Game $\H_0$} This experiment corresponds to the real static security game, in which the challenger encrypts the plaintext $\vecu$ sampled from $\samp_{\vecu}$, exactly as specified in Construction \ref{con1}. The experiment proceeds as follows. 
 
 At the beginning of the experiment, the adversary specifies the following queries:
\begin{itemize}
    \item A set of corrupt authorities $\mathcal{C}\subseteq\AU$, along with their public keys: $\pk_{\aid}=(\matA_{\aid},\matB_{\aid},\matP_{\aid})$ for each $\aid\in \mathcal{C}$.
    
    \item A set of non-corrupt authorities $\mathcal{N}\subseteq \AU$, satisfying $\mathcal{N}\cap\mathcal{C}=\varnothing$.
    
    \item A set of authorities $A^*\subseteq \mathcal{C}\cup\mathcal{N}$, satisfying $(A^*\cap\mathcal{C})\subsetneqq A^*$.
   
    \item A set of Type I secret-key queries $\mathcal{Q}=\{(\gid,A,\vecv)\}$, where $A\subseteq \mathcal{N}$ and $(A\cup \mathcal{C})\cap A^*\subsetneqq A^*$.

    \item A partial set of Type II secret-key queries $\mathcal{Q}_{\partset}'=\{(\gid',A')\}$, where $A'\subseteq \mathcal{N}$ and $(A'\cup\mathcal{C})\cap A^*=A^*$.
\end{itemize}

To simulate the random oracle, the challenger initializes an empty table $\T:\GID\times\Z_q^n\rightarrow\Z_q^{m'}$, which will be used to consistently store and respond to all random oracle queries throughout the experiment. Then the challenger with respect to samplers $\samp=(\samp_{\vecv},\samp_{\vecu})$ processes the adversary's queries as follows.
\begin{itemize}
 \item \textbf{Public keys for non-corrupt authorities:} For each non-corrupt authority $\aid\in\mathcal{N}$, the challenger samples $(\matA_{\aid},\td_{\aid})\leftarrow \trapgen(1^n,1^m,q),\matB_{\aid}\rand\Z_q^{n\times m'}$, and $\matP_{\aid}\rand\Z_q^{n\times m}$. The public key associated with $\aid$ is then set as $\pk_{\aid}\leftarrow(\matA_{\aid},\matB_{\aid},\matP_{\aid})$.
    
\item \textbf{Secret-key queries:} 
The challenger handles secret-key queries in two types as follows:
    \begin{itemize}
\item \textbf{Type I}: For each Type I secret-key query $(\gid,A,\vecv)\in\mathcal{Q}$, the challenger first computes $\vecr_{\gid,\vecv}\leftarrow\H(\gid,\vecv)$, and samples $$\veck_{\aid,\gid,\vecv}\leftarrow\samplepre(\matA_{\aid},\td_{\aid},\matP_\aid\matG^{-1}(\vecv)+\matB_\aid\vecr_{\gid,\vecv},\chi)$$ for each $\aid\in A$. Then it sets the secret key as $\sk_{\aid,\gid,\vecv}\leftarrow\veck_{\aid,\gid,\vecv}$.
        
\item \textbf{Type II}: The Type II secret-key queries are determined jointly by the partial set $\mathcal{Q}_{\partset}'$ and by the vectors $\{\vecv_j'\}$ output by the sampler $\samp_{\vecv}$. The challenger samples public randomness  $\vecr_{\pub}\rand\{0,1\}^{\kappa_1},\vecr_{\pri}\rand\{0,1\}^{\kappa_2}$, where $\kappa_1,\kappa_2$ are the upper bounds of the random bits used by $\samp_{\vecv}$ and $\samp_{\vecu}$, respectively. Then it samples $$\vecv_1',\ldots,\vecv_{Q'}'\leftarrow \samp_{\vecv}(1^\lambda;\vecr_{\pub})\text{ and } \vecu\leftarrow \samp_{\vecu}(1^\lambda,\vecr_{\pub};\vecr_{\pri}).$$ The full set of Type II secret-key queries is then given by $$\mathcal{Q}'=\{(\gid_1',A_1',\vecv_1'),\ldots,(\gid_{Q'}',A_{Q'}',\vecv_{Q'}')\}.$$ 
For each $(\gid',A',\vecv')\in \mathcal{Q}'$, the challenger computes $\vecr_{\gid',\vecv'}\leftarrow \H(\gid',\vecv')$, and samples $$\veck_{\aid,\gid',\vecv'}\leftarrow\samplepre(\matA_{\aid},\td_{\aid},\matP_\aid\matG^{-1}(\vecv')+\matB_\aid\vecr_{\gid',\vecv'},\chi)$$ for each $\aid\in A'$. It then sets the secret key as $\sk_{\aid,\gid',\vecv'}\leftarrow\veck_{\aid,\gid',\vecv'}$. 
        
\item The challenger sends the public randomness $\vecr_{\pub}$ and all the secret keys generated above to the adversary.
\end{itemize}
    
\item \textbf{Challenge ciphertext:} For each $\aid\in A^*$, the challenger samples $\vecs_{\aid}\rand\Z_{q}^n$ and $\vece_{1,\aid}\leftarrow D_{\Z,\chi}^m$. Then, it samples $\vece_2\leftarrow D_{\Z,\chi}^{m'},\vece_3\leftarrow D_{\Z,\chi}^{m}$. Finally, the challenge ciphertext is constructed as $\ct=\left(\{\vecc_{1,\aid}^\top\}_{\aid\in A^*},\vecc_2^\top,\vecc_3^\top\right)$, where    $$\vecc_{1,\aid}^\top=\vecs_{\aid}^\top\matA_{\aid}+\vece_{1,\aid}^\top,\vecc_{2}^\top=\sum_{\aid\in A^*}\vecs_{\aid}^\top\matB_{\aid}+\vece_2^\top,\vecc_{3}^\top=\sum_{\aid\in A^*}\vecs_{\aid}^\top\matP_{\aid}+\vece_3^\top+\vecu^\top\matG.$$
    
\item \textbf{Random oracle queries:} Upon receiving a query $(\gid,\vecv)\in\GID\times\Z_q^n$, the challenger checks whether the input $(\gid,\vecv)$ has been queried before---either during the adversary's direct random oracle queries or implicitly through processing the secret-key queries. If so, the challenger retrieves and responds with the stored value $\vecr_{\gid,\vecv}$ from the table $\T$. If the input is new, the challenger samples $\vecr_{\gid,\vecv}\leftarrow D_{\Z,\chi}^{m'}$, records the mapping $(\gid,\vecv)\mapsto \vecr_{\gid,\vecv}$ in the table $\T$, and then replies with $\vecr_{\gid,\vecv}$.
\end{itemize}
At the end of the experiment, the adversary outputs a bit $b'\in\{0,1\}$, which is taken as the output of the experiment. 

\iitem{Game $\H_{1}$} This experiment is identical to $\H_0$, except for how the challenger generates the challenge ciphertext.
\begin{itemize}
    \item  For each $\aid\in A^*\cap\mathcal{C}$, the challenger samples $\vecs_{\aid}\rand\Z_q^n$ and $\vece_{1,\aid}\leftarrow D_{\Z,\chi}^m$, and computes $\vecc_{1,\aid}\leftarrow\vecs_{\aid}^\top\matA_{\aid}+\vece_{1,\aid}^\top$. Then for each $\aid^*\in A^*\cap\mathcal{N}$, the challenger samples $\boxed{\vecc_{1,\aid^*}\rand\Z_q^m}$. It also samples $\boxed{\vecc_2\rand\Z_q^{m'},\vecc_3\rand\Z_q^m}$. Finally, it outputs the challenge ciphertext $$(\{\vecc_{1,\aid}^\top\}_{\aid\in A^*},\vecc_2^\top,\vecc_3^\top).$$
\end{itemize}

\iitem{Game $\H_{2}$} This experiment is identical to $\H_1$, except for how the challenger generates the third component of the challenge ciphertext. Specifically, 
\begin{itemize}
    \item The challenger samples $\vecu_{\delta}\rand\Z_q^n$, and computes the challenge ciphertext: $$(\{\vecc_{1,\aid}^\top\}_{\aid\in A^*},\vecc_2^\top,\boxed{\vecc_3^\top+\vecu_\delta^\top\matG}).$$
\end{itemize}

\iitem{Game $\H_{3}$} This experiment corresponds to the real static security game, in which the challenger encrypts the plaintext $\vecu+\vecu_{\delta}\in\Z_q^n$, where $\vecu$ is sampled from $\samp_{\vecu}$ as in $\H_1$ and $\vecu_{\delta}\rand\Z_q^n$. Specifically, this experiment is identical to $\H_0$, except that the embedded plaintext is shifted by $\vecu_{\delta}$. 
In particular, the third component of the challenge ciphertext is computed as $$\vecc_{3}^\top=\sum_{\aid\in A^*}\vecs_{\aid}^\top\matP_{\aid}+\vece_3^\top+(\vecu^\top+\vecu_{\delta}^\top)\matG.$$

\iitem{Game $\H_{4}$} This experiment corresponds to the real static security game, in which the challenger encrypts the uniformly random plaintext $\vecu_{\delta}$. Specifically, this experiment is identical to $\H_0$, except that the embedded plaintext is $\vecu_{\delta}$. 
In particular, $$\vecc_{3}^\top=\sum_{\aid\in A^*}\vecs_{\aid}^\top\matP_{\aid}+\vece_3^\top+\boxed{\vecu_{\delta}^\top\matG}.$$

\begin{lemma}\label{eva0}
Let $Q_0$ be the upper bound on the total number of secret-key queries (including those in the partial set $\mathcal{Q}_{\partset}'$) submitted by the adversary. Suppose that the following parameter conditions hold:
\begin{itemize}
    \item $m'>6n\log q$.
    \item $\chi'=\Omega(\sqrt{n\log q})$.
    \item Let $\chi$ be an error distribution parameter such that  $\chi\geq \lambda^{\omega(1)}\cdot (\sqrt{\lambda}(m+\ell)\chi_s+\lambda m\chi'\chi_s)$, where $\chi_s$ is a noise parameter such that  $\LWE_{n,m_1,q,\chi_s}$ assumption holds for some $m_1=\poly(m,m',Q_0)$.
    
    \item The assumption $\EVIPFE_{n,m,m',q,\chi,\chi}$ holds.
\end{itemize}
   Then we have that $\H_{0}\overset{c}{\approx}\H_1$.
\end{lemma}

\begin{proof}
    Suppose that there exists an \emph{efficient} adversary $\mathcal{A}$ that distinguishes $\H_{0}$ from $\H_{1}$ with non-negligible advantage. Based on adversary $\mathcal{A}$, we define a pair of sampling algorithms $\mathcal{S}_{\mathcal{A}}=(\mathcal{S}_{\mathcal{A},\vecu},\mathcal{S}_{\mathcal{A},\vecv})$ for global parameters $\gp=(1^\lambda,q,1^n,1^m,1^{m'},1^\chi,1^{\chi})$, with respect to the $\EVIPFE$ assumption, as follows.
    
    \begin{itemize}
        \item $\mathcal{S}_{\mathcal{A},\vecv}(\gp;(\vecr_{\pub_1},\vecr_{\pub_2}))$: This algorithm takes as inputs the global parameter $\gp$ and public randomness $\vecr_{\pub_1},\vecr_{\pub_2}\in\{0,1\}^*$, and proceeds as follows.
        \begin{itemize}
            \item Run adversary $\mathcal{A}(1^\lambda;\vecr_{\pub_1})$ with randomness $\vecr_{\pub_1}$, and extract from its output: a set of Type I secret-key queries $\mathcal{Q}=\{(\gid,A,\vecv)\}$, a partial set of Type II secret-key queries $\mathcal{Q}_{\partset}'=\{(\gid_1',A_1'),\ldots,(\gid_{Q'}', A_{Q'}')\}$, and a set of ciphertext authorities $A^*=\{\aid_1^*,\ldots,\aid_{\ell}^*\}$.
 \item Denote $|\mathcal{Q}|=Q$, and index the queries as $(\gid_1,A_1,\vecv_1),\ldots,(\gid_Q,A_Q,\vecv_Q)$.  Then for each $i\in [\ell]$, let $N_i\in[Q]$ denote the number of Type I secret-key queries in which the challenge authority $\aid_{i}^*$ appears. Suppose that authority $\aid_i^*$ is contained in the set $A_{j_1},\ldots, A_{j_{N_i}}$ for some indices $j_1,\ldots,j_{N_i}\in [Q]$, listed in increasing order. Define the mapping $\rho_i:[N_i]\rightarrow [Q]$ by setting $\rho_{i}(k)=j_k$. That is,  $\aid_{i}^*$ appears exactly in the set $A_{\rho_i(1)},\ldots, A_{\rho_i(N_i)}$ with the indices ordered increasingly.

\item Run the sampler $\samp_{\vecv}(1^\lambda;\vecr_{\pub_2})$ with randomness $\vecr_{\pub_2}$, which outputs the vectors $\vecv_1',\ldots,\vecv_{Q'}'$.

 \item For each $i\in [Q]$, sample $\vecr_i\leftarrow D_{\Z,\chi'}^{m'}$, and for each $j\in [Q']$, sample $\vecr_{j}'\leftarrow D_{\Z,\chi'}^{m'}$.

\item Finally, $\mathcal{S}_{\mathcal{A},{\vecv}}$ outputs 
\begin{align*}
&1^{\ell},\{1^{N_i+Q'}\}_{i\in [\ell]};\\
             &(\vecr_{\rho_1(1)},\vecv_{\rho_1(1)}),\ldots,(\vecr_{\rho_1(N_1)},\vecv_{\rho_{1}(N_1)}),(\vecr_1',\vecv_1'),\ldots,(\vecr_{Q'}',\vecv_{Q'}');\\
            &(\vecr_{\rho_2(1)},\vecv_{\rho_2(1)}),\ldots,(\vecr_{\rho_2(N_2)},\vecv_{\rho_{2}(N_2)}),(\vecr_1',\vecv_1'),\ldots,(\vecr_{Q'}',\vecv_{Q'}'); \\
            &\cdots;\\
            &(\vecr_{\rho_\ell(1)},\vecv_{\rho_\ell(1)}),\ldots,(\vecr_{\rho_\ell(N_\ell)},\vecv_{\rho_{\ell}(N_\ell)}),(\vecr_1',\vecv_1'),\ldots,(\vecr_{Q'}',\vecv_{Q'}').
            \end{align*}
         The output is structured as shown above, where each row (excluding the first) corresponds to a challenge authority.
        \end{itemize}
\item $\mathcal{S}_{\mathcal{A},\vecu}(\gp,(\vecr_{\pub_1},\vecr_{\pub_2});\vecr_{\pri})$: takes as input the global parameter $\gp$, public randomness $\vecr_{\pub_1},\vecr_{\pub_2}$ and private randomness $\vecr_{\pri}$. It computes $\vecu\leftarrow\samp_{\vecu}(1^\lambda,\vecr_{\pub_2};\vecr_{\pri})$ and outputs $\vecu$.
\end{itemize}

With respect to the sampling algorithm $\mathcal{S}_{\mathcal{A}}$, We invoke Claim~\ref{evpre} below, and defer its proof to the end of this section.
\begin{claim}\label{evpre}
Let $Q_0$ be the upper bound on the total number of secret-key queries (including those in the partial set $\mathcal{Q}_{\partset}'$) submitted by the adversary. Suppose that the following parameter conditions hold:
\begin{itemize}
    \item $m'>6n\log q$.
    \item $\chi'=\Omega(\sqrt{n\log q})$.
    \item Let $\chi$ be an error distribution parameter such that  $\chi\geq \lambda^{\omega(1)}\cdot (\sqrt{\lambda}(m+\ell)\chi_s+\lambda m\chi'\chi_s)$, where $\chi_s$ is a noise parameter such that  $\LWE_{n,m_1,q,\chi_s}$ assumption holds for some $m_1=\poly(m,m',Q_0)$.
    
    \item The sampling algorithm $\samp=(\samp_{\vecv},\samp_{\vecu})$ produces pseudorandom noisy inner products with noise parameter $\chi_s$.
\end{itemize}
 Then, for every efficient distinguisher $\mathcal{D}$, there exists a negligible function $\negl(\cdot)$ such that for all $\lambda\in \N^*$, we have $\Adv_{\mathcal{D},\mathcal{S}_{\mathcal{A}}}^\pre(\lambda)=\negl(\lambda)$, where $\Adv_{\mathcal{D},\mathcal{S}_{\mathcal{A}}}^\pre$ denotes the advantage of the distinguisher $\mathcal{D}$ (as defined in Assumption \ref{Evasiveipfe}) in the $\EVIPFE$ assumption with respect to the sampling algorithm $\mathcal{S}_{\mathcal{A}}=(\mathcal{S}_{\mathcal{A},\vecv},\mathcal{S}_{\mathcal{A},\vecu})$.
\end{claim}

\noindent\emph{Proof of Lemma \ref{eva0} (Continued)}. 
To complete the proof, we prove that if there exists an \emph{efficient} adversary $\mathcal{A}$ that can distinguish between $\H_{0}$ and $\H_1$ with non-negligible advantage, then we can construct an \emph{efficient} algorithm  $\mathcal{B}$ such that $\Adv_{\mathcal{S}_{\mathcal{A}},\mathcal{B}}^\post$ is non-negligible in the evasive \IPFE assumption with respect to the sampling algorithm $\mathcal{S}_{\mathcal{A}}=(\mathcal{S}_{\mathcal{A},\vecu},\mathcal{S}_{\mathcal{A},\vecv})$. The algorithm $\mathcal{B}$ proceeds as follows:

\begin{enumerate}
    \item Algorithm $\mathcal{B}$ begins by receiving an $\EVIPFE$ challenge $$(1^\lambda,(\vecr_{\pub_1},\vecr_{\pub_2}), \{\matA_{i},\vecy_{1,i}^\top\}_{i\in[\ell]},[\matB\mid \matP],\vecy_2^\top,\{\matK_i\}_{i\in[\ell]}),$$ where $\vecr_{\pub_1},\vecr_{\pub_2}\in\{0,1\}^*,\matA_{i}\in\Z_q^{n\times m}, \vecy_{1,i}\in\Z_q^m, \matK_i\in\Z_q^{m\times (N_i+Q')}$ for each $i\in [\ell]$, $\matB\in\Z_q^{n\ell\times m'},\matP\in\Z_q^{n\ell\times m},\vecy_2\in\Z_q^{m'+m}$.
    
    \item For each $i\in [\ell]$, algorithm $\mathcal{B}$ parses the matrix $\matK_i$ as $$\matK_i=[\matK_i^{(1)}\mid \matK_i^{(2)}],$$ where $\matK_{i}^{(1)}\in \Z_q^{m\times N_i}$ and $\matK_{i}^{(2)}\in \Z_q^{m\times Q'}$. Let $\veck_{i,j},\veck_{i,j}'$ denote the $j$-th column vectors of $\matK_i^{(1)}$ and $\matK_{i}^{(2)}$, respectively.
    
    \item Algorithm $\mathcal{B}$ runs algorithm $\mathcal{A}(1^\lambda;\vecr_{\pub_1})$, feeding in the randomness $\vecr_{\pub_1}$. The adversary $\mathcal{A}$ outputs the following queries:
        \begin{itemize}
     \item A set of corrupt authorities $\mathcal{C}\subseteq \mathcal{AU}$, along with their  public keys $\pk_{\aid}=(\matA_{\aid},\matB_{\aid},\matP_{\aid})$ for all $\aid\in\mathcal{C}$.
    
    \item A set of non-corrupt authorities $\mathcal{N}\subseteq \AU$, satisfying $\mathcal{N}\cap\mathcal{C}=\varnothing$.

    \item A ciphertext authority set $A^*\subseteq \mathcal{C}\cup \mathcal{N}$, satisfying $(A^*\cap\mathcal{C})\subsetneqq A^*$.
             
 \item A set of Type I secret-key queries $\mathcal{Q}=\{(\gid,A,\vecv)\}$, where $A\subseteq \mathcal{N}$ and $(A\cup \mathcal{C})\cap A^*\subsetneqq A^*$.

\item A partial set of Type II secret-key queries $\mathcal{Q}_{\partset}'=\{(\gid',A')\}$, where $A'\subseteq \mathcal{N}$ and $(A'\cup\mathcal{C})\cap A^*=A^*$.
\end{itemize}

 \item Algorithm $\mathcal{B}$ runs sampling algorithm $\mathcal{S}_{\mathcal{A},\vecv}(\gp;(\vecr_{\pub_1},\vecr_{\pub_2}))$, which outputs \begin{align*}
    & 1^{\ell},\{1^{N_i+Q'}\}_{i\in [\ell]}; \\
 & (\vecr_{\rho_1(1)},\vecv_{\rho_1(1)}),\ldots,(\vecr_{\rho_1(N_1)},\vecv_{\rho_{1}(N_1)}),(\vecr_1',\vecv_1'),\ldots,(\vecr_{Q'}',\vecv_{Q'}');                   \\
 & (\vecr_{\rho_2(1)},\vecv_{\rho_2(1)}),\ldots,(\vecr_{\rho_2(N_2)},\vecv_{\rho_{2}(N_2)}),(\vecr_1',\vecv_1'),\ldots,(\vecr_{Q'}',\vecv_{Q'}');                   \\
 & \cdots; \\
& (\vecr_{\rho_\ell(1)},\vecv_{\rho_\ell(1)}),\ldots,(\vecr_{\rho_\ell(N_\ell)},\vecv_{\rho_{\ell}(N_\ell)}),(\vecr_1',\vecv_1'),\ldots,(\vecr_{Q'}',\vecv_{Q'}').
\end{align*} 
The full set of Type II secret-key queries is then given by $\mathcal{Q}'=\{(\gid_1',A_1',\vecv_1'),\ldots,(\gid_{Q'}',A_{Q'}',\vecv_{Q'}')\}.$

Since the randomness used to simulate $\mathcal{A}$ in Step 3) comes from $\vecr_{\pub_1}$, the values produced here---namely, the vectors $\vecv_1,\ldots,\vecv_Q$---align exactly with those generated by $\mathcal{A}$'s simulation under the public randomness in Step 3). This ensures that algorithm $\mathcal{B}$'s internal simulation of $\mathcal{A}$ is consistent with the actual input instance of the $\EVIPFE$ challenge. 
        
     \item From the construction of $\mathcal{S}_{\mathcal{A}}$, we have that $\ell=|A^*\cap \mathcal{N}|$, i.e., the number of non-corrupt authorities appearing in the challenge ciphertext. Let $A^*\cap \mathcal{N}=\{\aid_{1}^*,\ldots,\aid_{\ell}^*\}$. Algorithm $\mathcal{B}$ first sets $\matA_{\aid_i^*}\leftarrow \matA_{i}$ for each $i\in [\ell]$, and parses $[\matB\mid \matP]$ as $$[\matB\mid\matP]=\left[\begin{array}{c|c}
        \matB_{\aid_1^*}&\matP_{\aid_1^*}\\
        \vdots&\vdots\\
        \matB_{\aid_{\ell}^*} &\matP_{\aid_{\ell}^*}
    \end{array}\right]\in\Z_q^{n\ell\times (m'+m)},$$ where $\matB_{\aid_i^*}\in\Z_q^{n\times m'}$ and $\matP_{\aid_i^*}\in\Z_{q}^{n\times m}$ for each $\aid_{i}^*\in A^*\cap\mathcal{N}$. 

    \item Let $|\mathcal{Q}|=Q$ and $|\mathcal{Q}'|=Q'$. For each $i\in [Q]$, algorithm $\mathcal{B}$ partitions $A_i=A_{i,\chal}\cup\bar{A}_{i,\chal}\subseteq \mathcal{N}$, where $A_{i,\chal}$ consists of the authorities in $A_i$ that appear in the ciphertext, i.e., $A_{i,\chal}=A_i\cap A^*$.

    \item Algorithm $\mathcal{B}$ initializes an empty table $\T:\GID\times\Z_q^n\rightarrow\Z_q^{m'}$. This table will be used to store and  consistently respond to all queries made to the random oracle during the experiment.
    
    \item The algorithm $\mathcal{B}$ responds to the queries as follows:
    \begin{itemize}
        \item \textbf{Public keys for non-corrupt authorities}: 
        \begin{itemize}
    \item For each $\aid_i^*\in  A^*\cap\mathcal{N}$, set $\pk_{\aid_i^*}\leftarrow(\matA_{\aid_i^*},\matB_{\aid_i^*},\matP_{\aid_i^*})$.
    
    \item For authorities $\aid\in\mathcal{N}\setminus A^*$, algorithm $\mathcal{B}$ samples $(\matA_{\aid},\td_{\aid})\leftarrow \trapgen(1^n,1^m,q)$, $\matB_{\aid}\rand\Z_q^{n\times m'},\matP_{\aid}\rand \Z_{q}^{n\times m}$ and sets the public key $\pk_{\aid}\leftarrow (\matA_{\aid},\matB_{\aid},\matP_{\aid})$.
            \end{itemize}
            \item \textbf{Secret keys}: The algorithm $\mathcal{B}$ responds to each secret-key query depending on its type:
\begin{itemize}
\item \textbf{Type I}: For a Type I secret-key query $(\gid_k,A_k,\vecv_k)$, recall that $A_k$ is partitioned as  $A_k=A_{k,\chal}\cup \bar{A}_{k,\chal}$, where $A_{k,\chal}=A_k\cap A^*$.

\begin{itemize}
\item For each $\aid_{i^*}\in A^*\cap \mathcal{N}$, recall that the number of secret-key queries involving $\aid_{i}^*$ is the parameter $N_i$ given in the $\EVIPFE$ challenge. Let $\rho(\cdot)$ be the index mapping previously defined in the proof of Lemma \ref{eva0}. For each $j\in [N_i]$, set $\sk_{\aid_i^*,\gid_{\rho_i(j)},\vecv_{\rho_i(j)}}\leftarrow\veck_{i,j}$. Then the algorithm $\mathcal{B}$ checks if the table $\T$ has ever recorded the image of $(\gid_{\rho_i(j)},\vecv_{\rho_i(j)})$, if not, store the mapping $(\gid_{\rho_i(j)},\vecv_{\rho_i(j)})\mapsto\vecr_{\rho_i(j)}$ to the table.

\item For each $k\in [Q]$, if $A_{k,\chal}=\varnothing$, then sample $\vecr_{\gid_k,\vecv_k}\leftarrow D_{\Z,\chi'}^{m'}$, and add the mapping $(\gid_{k},\vecv_{k})\mapsto\vecr_{\gid_k,\vecv_k}$ to the table. At this point, the table contains the image of all pairs $(\gid_k,\vecv_k)$ for each $k\in [Q]$.

\item For each $k\in [Q]$, for each $\aid\in \bar{A}_{k,\chal}$, compute $$\sk_{\aid,\gid_k,\vecv_k}\leftarrow\samplepre(\matA_{\aid},\td_{\aid},\matP_{\aid}\matG^{-1}(\vecv_k)+\matB_{\aid}\H(\gid_k,\vecv_k)),$$ where the value $\H(\gid_k,\vecv_k)$ is retrieved from the table $\T$.
\end{itemize}

\item \textbf{Type II}: For a Type II query $(\gid_j', A_j', \vecv_j')$,  recall that $A^*\cap\mathcal{N}=\{\aid_{1}^*,\ldots,\aid_{\ell}^{*}\}\subseteq A_{j}'$. 
\begin{itemize}
	\item For each $i\in [\ell],j\in [Q']$, set $\sk_{\aid_{i}^*,\gid_j',\vecv_j'}\leftarrow\veck_{i,j}'\in \Z_{q}^m$. Next, algorithm $\mathcal{B}$ adds the mapping $(\gid_j',\vecv_j')\mapsto \vecr_{j}'$ to the table $\T$.
	
	\item For each $\aid\in A_j'\setminus A^*$, algorithm $\mathcal{B}$ computes  $$\sk_{\aid,\gid_j',\vecv_j'}\leftarrow \samplepre(\matA_{\aid},\td_{\aid},\matP_{\aid}\matG^{-1}(\vecv_j')+\matB_{\aid}\vecr_{j}',\chi)$$ \emph{efficiently}.
\end{itemize}
\end{itemize}   

\item \textbf{Challenge ciphertext}: Algorithm $\mathcal{B}$ parses $\vecy_2$ as $\vecy_2^\top=[\hat{\vecy}_2^\top\mid \hat{\vecy}_{3}^\top]$ where $\hat{\vecy}_2\in\Z_q^{m'},\hat{\vecy}_{3}\in\Z_q^{m}$. It constructs the ciphertext as follows:
            For each $\aid\in A^*\cap\mathcal{C}$, sample  $\vecs_{\aid}\rand\Z_q^n$ and $\vece_{1,\aid}\leftarrow D_{\Z,\chi}^m$, then set $\vecc_{1,\aid}^\top\leftarrow \vecs_{\aid}^\top\matA_{\aid}+\vece_{1,\aid}^\top\in\Z_q^m$. For $\aid_i^*\in A^*\cap\mathcal{N}$, set $\vecc_{1,\aid_i^*}\leftarrow \vecy_{1,i}$. Finally, algorithm $\mathcal{B}$ outputs the ciphertext $$\ct=\left(\{\vecc_{1,\aid}^\top\}_{\aid\in A^*},\sum_{\aid\in A^*\cap\mathcal{C}}\vecs_{\aid}^\top\matB_{\aid}+\hat{\vecy}_2^\top,\sum_{\aid\in A^*\cap\mathcal{C}}\vecs_{\aid}^\top\matP_{\aid}+\hat{\vecy}_{3}^\top\right).$$
        \item \textbf{Random oracle queries:} Upon receiving a query $(\gid,\vecv)\in\GID\times\Z_q^n$, algorithm $\mathcal{B}$ checks whether the input $(\gid,\vecv)$ has been queried before---either during the adversary's direct random oracle queries or implicitly through processing the secret-key queries. If so, algorithm $\mathcal{B}$ retrieves and responds with the stored value $\vecr_{\gid,\vecv}$ from the table $\T$. If the input is new, the challenger samples $\vecr_{\gid,\vecv}\leftarrow D_{\Z,\chi}^{m'}$, records the mapping $(\gid,\vecv)\mapsto \vecr_{\gid,\vecv}$ in the table $\T$, and then replies with $\vecr_{\gid,\vecv}$.
\end{itemize}
    \item Algorithm $\mathcal{B}$ outputs whatever algorithm $\mathcal{A}$ outputs.
\end{enumerate}

The distributions of the public keys for non-corrupt authorities are exactly the same as those in $\H_{0}$ and $\H_1$, as they are uniformly generated. We now analyze the responses to the secret-key queries:
\begin{itemize}
    \item For each $\aid_i^*\in A^*\cap\mathcal{N}$, we have \begin{align*}
\veck_{i,j}&\leftarrow (\matA_{\aid_i^*})_{\chi}^{-1}(\matP_{\aid_i^*}\matG^{-1}(\vecv_{\rho_{i}(j)})+\matB_{\aid_i^*}\vecr_{\rho_i(j)}),\quad \text{for all } j\in [N_{i}],\\
        \veck_{i,j}'&\leftarrow (\matA_{\aid_i^*})_{\chi}^{-1}(\matP_{\aid_i^*}\matG^{-1}(\vecv_j')+\matB_{\aid_i^*}\vecr_{j}'),\quad \text{for all } j\in [Q'].
    \end{align*} 
    
    This exactly matches the distribution of $\sk_{\aid_i^*,\gid_{\rho_i(j)},\vecv_{\rho_i(j)}}$ and $\sk_{\aid_i^*,\gid_j',\vecv_j'}$ in the actual game, respectively. 
    
    \item For each $\aid\in A_j\setminus A^*$ for some $j\in [Q]$, the secret key $\sk_{\aid,\gid_j,\vecv_j}$ is generated using $\samplepre$, identical to the procedure in $\H_0$ and $\H_1$. The same argument also applies to $\aid\in A_j'\setminus A^*$ for $j\in [Q']$. 
\end{itemize}

Finally, we analyze the distribution of the challenge ciphertext. We consider the following two cases:
\begin{itemize}
    \item If for each $i\in[\ell]$,  $\vecy_{1,i}^\top=\vecs_{i}^\top\matA_{i}+\vece_{1,i}^\top$ and $\vecy_{2}^\top=\vecs^\top[\matB\mid\matP]+\vece_{2}^\top+[\veczero_{m'}^\top\mid \vecu^\top\matG]$ for some $\vecs^\top=[\vecs_1^\top\mid\cdots\mid\vecs_{\ell}^\top]\rand\Z_q^{n\ell},\vece_{1,i}\leftarrow D_{\Z,\chi}^m,\vece_2\leftarrow D_{\Z,\chi}^{m'+m}$. We parse $\vece_2^\top$ into two components as $[\hat{\vece}_2^\top\mid \hat{\vece}_3^\top]$ where $\hat{\vece}_2\in\Z_q^{m'}, \hat{\vece}_3\in\Z_q^{m}$. Then 
\begin{align*}
&\vecc_{1,\aid_i^*}^\top=\vecy_{1,i}^\top=\vecs_{i}^\top\matA_{\aid_i^*}+\vece_{1,i}^\top, \quad  \text{ for each }\aid_{i}^*\in A^*\cap\mathcal{N},\\
&\hat{\vecy}_{2}^\top=\sum_{\aid_i^*\in A^{*}\cap \mathcal{N}}\vecs_i^\top\matB_{\aid_i^*}+\hat{\vece}_{2}^\top,\\
&\hat{\vecy}_{3}^\top= \sum_{\aid_i^*\in A^{*}\cap \mathcal{N}}\vecs_i^\top\matP_{\aid_i^*}+\hat{\vece}_{3}^\top+\vecu^\top\matG.
\end{align*}
Then the ciphertext is constructed as
\begin{align*}
&\vecc_{1,\aid_{i}^*}^\top=\vecs_{i}^\top\matA_{\aid_i^*}+\vece_{1,i}^\top, \quad \text{ for each }\aid_{i}^*\in A^*\cap\mathcal{N},\\
&\vecc_{1,\aid}^\top=\vecs_{\aid}^\top\matA_{\aid}+\vece_{1,\aid}^\top, \quad \text{ for each }\aid \in A^*\cap\mathcal{C},\\
    &\vecc_{2}^\top=\sum_{\aid\in A^*\cap\mathcal{C}}\vecs_{\aid}^\top\matB_{\aid}+\sum_{\aid_i^*\in A^{*}\cap \mathcal{N}}\vecs_i^\top\matB_{\aid_i^*}+\hat{\vece}_{2}^\top,\\
 &\vecc_{3}^\top=\sum_{\aid\in A^*\cap\mathcal{C}}\vecs_{\aid}^\top\matP_{\aid}+ \sum_{\aid_i^*\in A^{*}\cap \mathcal{N}}\vecs_i^\top\matP_{\aid_i^*}+\hat{\vece}_{3}^\top+ \vecu^\top\matG,
\end{align*} where $\vecs_{\aid}\rand\Z_q^n$ and $\vece_{1,\aid}\leftarrow D_{\Z,\chi}^m$ for each $\aid\in A^*\cap \mathcal{C}$.
Since all randomness is sampled exactly as in $\H_0$, the resulting ciphertext generated by $\mathcal{B}$ is identically distributed to that in the experiment $\H_0$.

\item If for each $i\in [\ell]$, $\vecy_{1,i}\rand\Z_q^m$, and $\vecy_2\rand\Z_q^{m'+m}$. Then from the construction, the ciphertext is constructed as \begin{align*}
&\vecc_{1,\aid_i^*}^\top=
\vecy_{1,i}^\top, \quad \text{ for each }\aid_{i}^*\in A^*\cap\mathcal{N},\\
&\vecc_{1,\aid}^\top=\vecs_{\aid}^\top\matA_{\aid}+\vece_{1,\aid}^\top, \quad \text{ for each }\aid\in A^*\cap\mathcal{C},\\
    &\vecc_{2}^\top=\sum_{\aid\in A^*\cap\mathcal{C}}\vecs_{\aid}^\top\matB_{\aid}+\hat{\vecy}_2^\top,\\
 &\vecc_{3}^\top=\sum_{\aid\in A^*\cap\mathcal{C}}\vecs_{\aid}^\top\matP_{\aid}+\hat{\vecy}_3^\top,
\end{align*} where $\vecs_{\aid}\rand\Z_q^n$ and $\vece_{1,\aid}\leftarrow D_{\Z,\chi}^m$ for each $\aid\in A^*\cap\mathcal{C}$. Since $\{\vecy_{1,i}\}_{i\in [\ell]},\hat{\vecy}_2,\hat{\vecy}_3$ are all uniform and independent of each other, the overall distribution of $\{\vecc_{1,\aid}\}_{\aid\in A^*\cap\mathcal{N}},\vecc_2,\vecc_3$ is also uniform, which matches the distribution in experiment $\H_1$.
\end{itemize}

In either case, the algorithm $\mathcal{B}$ constructed perfectly simulates experiments $\H_0$ and $\H_1$. The advantage $\Adv_{\mathcal{S}_{\mathcal{A}},\mathcal{B}}^\post$ in the $\EVIPFE$ assumption with respect to the sampling algorithm $\mathcal{S}_{\mathcal{A}}$ is the same as the non-negligible advantage of $\mathcal{A}$ distinguishing between $\H_0$ and $\H_1$. By the definition of the $\EVIPFE$ assumption, the existence of such an efficient post-challenge adversary $\mathcal{B}$ with non-negligible advantage would imply the existence of another efficient adversary $\mathcal{B}'$ breaking the pre-challenge security, i.e., $\Adv^\pre_{\mathcal{S}_{\mathcal{A}},\mathcal{B}'}$ is non-negligible, contradicting Claim \ref{evpre}. Hence, under the $\EVIPFE$ assumption, no efficient adversary can distinguish between $\H_0$ and $\H_1$ with non-negligible advantage. This completes the proof.
\end{proof}

\begin{lemma}\label{eva1}
    We have that $\H_1\equiv\H_2$.
\end{lemma}

\begin{proof}
    The proof follows directly from the fact that in $\H_2$, the third component of the ciphertext is generated as $\vecc_3^\top + \vecu_\delta^\top \matG$, where $\vecu_\delta$ is sampled uniformly from $\Z_q^n$ and is independent of all other components. The resulting component remains uniform and thus identical to that in $\H_1$.
\end{proof}

\begin{lemma}\label{eva2}
    Under the same assumptions as in Lemma \ref{eva0}, we have that $\H_2 \overset{c}{\approx} \H_3$.
\end{lemma}

\begin{proof}
    The proof follows essentially the same argument as in Lemma \ref{eva0}.
\end{proof}

\begin{lemma}\label{eva3}
    We have that $\H_{3}\equiv \H_4$.
\end{lemma}

\begin{proof}
    The result follows directly from the fact that both $\vecu + \vecu_\delta$ in $\H_3$ and $\vecu_\delta$ in $\H_4$ are uniformly distributed over $\Z_q^n$, as $\vecu_\delta$ is sampled uniformly and independently of $\vecu$. Hence, $\vecc_3^\top$ is identically distributed in both experiments.
\end{proof}

\noindent\emph{Proof of Theorem \ref{security1} (Continued).} By Lemmas \ref{eva0}, \ref{eva1}, \ref{eva2}, and \ref{eva3}, and a standard hybrid argument, we conclude that under the given parameter constraints, $\H_0\overset{c}{\approx}\H_4$. This implies that ciphertexts encrypting the vector $\vecu$ generated by $\samp_{\vecv}$ are computationally indistinguishable from those encrypting a uniformly random vector $\vecu_{\delta}$. Therefore, the theorem follows.
\end{proof}

To complete the proof of Theorem~\ref{security1}, it remains to establish the validity of Claim~\ref{evpre}.
\begin{proof}[Proof of Claim \ref{evpre}]
We prove the claim by defining a sequence of hybrid experiments.

\iitem{Game $\H_0^\pre$} 
    On input the security parameter $\lambda$, the challenger proceeds as follows: 
    \begin{enumerate}
        \item Let $\kappa=\kappa(\lambda)$ be an upper bound on the number of random bits used by the adversary $\mathcal{A}$, and the sampling algorithms $\samp_{\vecu}$, $\samp_{\vecv}$. The challenger samples $\vecr_{\pub_1},\vecr_{\pub_2},\vecr_{\pri}\rand\{0,1\}^{\kappa}$ and run the sampling algorithms $\mathcal{S}_{\mathcal{A}}=(\mathcal{S}_{\mathcal{A},\vecv},\mathcal{S}_{\mathcal{A},\vecu})$ as follows.
        \begin{itemize}
            \item Run $\mathcal{S}_{\mathcal{A},\vecv}(\gp;(\vecr_{\pub_1},\vecr_{\pub_2}))$, which outputs \begin{align*}
            &1^{\ell},\{1^{N_i+Q'}\}_{i\in [\ell]};\\
             &(\vecr_{\rho_1(1)},\vecv_{\rho_1(1)}),\ldots,(\vecr_{\rho_1(N_1)},\vecv_{\rho_{1}(N_1)}),(\vecr_1',\vecv_1'),\ldots,(\vecr_{Q'}',\vecv_{Q'}');\\
            &(\vecr_{\rho_2(1)},\vecv_{\rho_2(1)}),\ldots,(\vecr_{\rho_2(N_2)},\vecv_{\rho_{2}(N_2)}),(\vecr_1',\vecv_1'),\ldots,(\vecr_{Q'}',\vecv_{Q'}'); \\
            &\cdots;\\
            &(\vecr_{\rho_\ell(1)},\vecv_{\rho_\ell(1)}),\ldots,(\vecr_{\rho_\ell(N_\ell)},\vecv_{\rho_{\ell}(N_\ell)}),(\vecr_1',\vecv_1'),\ldots,(\vecr_{Q'}',\vecv_{Q'}').
            \end{align*}
            \item Run $\vecu\leftarrow \mathcal{S}_{\mathcal{A},\vecu}(\gp,(\vecr_{\pub_1},\vecr_{\pub_2});\vecr_{\pri})$.
        \end{itemize}
        
        \item Sample $\matB\rand\Z_q^{n\ell\times m'},\matP\rand\Z_q^{n\ell\times m}$, and parse the matrices $$\matB=\left[\begin{array}{c}
        \matB_{1}\\
        \vdots\\
        \matB_{\ell} 
\end{array}\right]\in\Z_q^{n\ell\times m'},\matP=\left[\begin{array}{c}
        \matP_{1}\\
        \vdots\\
        \matP_{\ell}
\end{array}\right]\in\Z_q^{n\ell\times m},$$ with $\matB_{i}\in \Z_q^{n\times m'},\matP_{i}\in\Z_q^{n\times m}$ for all $i\in [\ell]$. 

\item For each $i\in [\ell]$, define the matrix:
    \begin{align*}
\matQ_i^{(1)}\leftarrow\left[\matP_{i}\matG^{-1}(\vecv_{\rho_i(1)})+\matB_{i}\vecr_{\rho_i(1)}\mid\cdots\mid\matP_{i}\matG^{-1}(\vecv_{\rho_i(N_i)})+\matB_{i}\vecr_{\rho_i(N_i)}\right]\in \Z_q^{n\times N_i}, \end{align*}

and similarly define:  
$$\matQ_i^{(2)}=\left[\matP_{i}\matG^{-1}(\vecv_{1}')+\matB_{i}\vecr_{1}'\mid \cdots\mid \matP_{i}\matG^{-1}(\vecv_{Q'}')+\matB_{i}\vecr_{Q'}'\right]\in\Z_q^{n\times Q'}.$$
 Finally, set $\matQ_i=[\matQ_i^{(1)}\mid \matQ_i^{(2)}]\in\Z_{q}^{n\times (N_i+Q')}$. The construction of each matrix $\matQ_i$ follows exactly the format specified in Assumption~\ref{evIPFE}, ensuring consistency with the \EVIPFE input distribution.            

\item Then the challenger samples $\vecs_1,\ldots,\vecs_\ell\rand\Z_q^n$ and sets $\vecs^\top=\left[\vecs_1^\top\mid\cdots\mid \vecs_{\ell}^\top\right]\in \Z_q^{n\ell}$. For each $i\in [\ell]$, it samples $\vece_{1,i}\leftarrow D_{\Z,\chi}^m,\vece_{3,i}\leftarrow D_{\Z,\chi}^{N_i+Q'}$. Then it samples $\vece_2\leftarrow D_{\Z,\chi}^{m'+m}$.
            
\item The challenger samples $(\matA_{1},\td_{1}),\ldots,(\matA_{\ell},\td_{\ell})\leftarrow \trapgen(1^n,1^m,q)$. For each $i\in [
            \ell
            ]$, it computes the following values : 
            \begin{itemize}
                \item  $\vecz_{1,i}^\top\leftarrow \vecs_i^\top\matA_i+\vece_{1,i}^\top\in\Z_q^m$  for each $i\in [\ell]$.

        \item                 $\vecz_{2}^\top= \vecs^\top[\matB\mid\matP]+\vece_2^\top+[\veczero_{m'}^\top\mid \vecu^\top\matG]=\left[\sum_{i\in[\ell]}\vecs_i^\top\matB_i \,\Bigg|\,\sum_{i\in[\ell]}\vecs_i^\top\matP_i+\vecu^\top\matG\right]+\vece_2^\top\in\Z_q^{m'+m}.$
                
                \item $\vecz_{3,i}^\top\leftarrow \vecs_{i}^\top\matQ_i+\vece_{3,i}^\top\in \Z_q^{N_i+Q'}$ for each $i\in [\ell]$.
\end{itemize}  
The component $\vecz_{3,i}$ and $\vece_{3,i}$ are parsed as $$\vecz_{3,i}^\top=[\vecz_{3,i}^{(1)\top}\mid \vecz_{3,i}^{(2)\top}]\in \Z_q^{N_i+Q'},\quad \vece_{3,i}^{\top}=[\vece_{3,i}^{(1)\top}\mid \vece_{3,i}^{(2)\top}]\in \Z_q^{N_i+Q'},$$  where $\vecz_{3,i}^{(1)},\vece_{3,i}^{(1)}\in\Z_q^{N_i}$ and $\vecz_{3,i}^{(2)},\vece_{3,i}^{(2)}\in\Z_q^{Q'}$. For clarity, define $\vect_{i,j}=\matP_{i}\matG^{-1}(\vecv_{\rho_i(j)})+\matB_{i}\vecr_{\rho_i(j)}\in \Z_q^n$ and $y_{i,j}=\vecs_i^\top\vect_{i,j}\in\Z_q$ for each $i\in [\ell]$ and $j\in[N_i]$. Similarly define $\vect_{i,j}'=\matP_{i}\matG^{-1}(\vecv_{j}')+\matB_{i}\vecr_{j}'\in \Z_q^n$ and $y_{i,j}'=\vecs_i^\top\vect_{i,j}'\in\Z_q$ for each $i\in [\ell]$ and $j\in[Q']$.

In summary, we obtain \begin{align*}
\vecz_{3,i}^{(1)\top} & =\vecs_i^\top\matQ_i^{(1)}+\vece_{3,i}^{(1)\top}                =\left[\vecs_i^\top\vect_{i,1}\mid\cdots\mid\vecs_i^\top\vect_{i,N_i}\right]+\vece_{3,i}^{(1)\top}=\left[y_{i,1}\mid\cdots\mid y_{i,N_i}\right]+\vece_{3,i}^{(1)\top}\in\Z_q^{N_i}, \\
\vecz_{3,i}^{(2)\top} & =\vecs_i^\top\matQ_i^{(2)}+\vece_{3,i}^{(2)\top}=\left[\vecs_i^\top\vect_{i,1}'\mid\cdots\mid\vecs_i^\top\vect_{i,Q'}'\right]+\vece_{3,i}^{(2)\top}=\left[y_{i,1}'\mid\cdots\mid y_{i,Q'}'\right]+\vece_{3,i}^{(2)\top}\in\Z_q^{Q'}.
\end{align*}

\item The challenger outputs the tuple $(1^\lambda,(\vecr_{\pub_1},\vecr_{\pub_2}),\{(\matA_i,\vecz_{1,i}^\top)\}_{i\in [\ell]},[\matB\mid\matP],\vecz_2^\top,\{\vecz_{3,i}^\top\}_{i\in[\ell]})$ and sends it to the distinguisher $\mathcal{D}$.

\item The distinguisher $\mathcal{D}$ outputs a bit $\hat{b}\in\{0,1\}$, which is taken as the output of the experiment.
\end{enumerate}
    
\iitem{Game $\H_1^{\pre}$} The experiment is identical to $\H_0^\pre$, except for the procedure how the challenger samples $\vecz_{1,i},\vecz_2,\vecz_{3,i}$, which are now smudged with additional noise.
    \begin{itemize}
        \item $\vecz_{1,i}$: For each $i\in [\ell]$, sample $\tilde{\vece}_{1,i}\leftarrow D_{\Z,\chi_s}^m$, and set $\boxed{\vecz_{1,i}^\top\leftarrow \vecs_i^\top\matA_i+\tilde{\vece}_{1,i}^\top+\vece_{1,i}^\top}$.
        
        \item $\vecz_2$: For each $i\in [\ell]$, sample $\tilde{\vece}_{2,i}\leftarrow D_{\Z,\chi_s}^{m'}$ and $\tilde{\vece}_{2,i}'\leftarrow D_{\Z,\chi_s}^m$, and compute $\tilde{\vecz}_{2,i}^\top\leftarrow\vecs_i^\top\matB_i+\tilde{\vece}_{2,i}^\top$ and $\tilde{\vecz}_{2,i}'^\top\leftarrow\vecs_i^\top\matP_i+\tilde{\vece}_{2,i}'^\top$. Then set $$\boxed{\vecz_2^\top\leftarrow \left[\sum_{i\in [\ell]}\tilde{\vecz}_{2,i}^\top\,\Bigg|\,\sum_{i\in [\ell]}\tilde{\vecz}_{2,i}'^\top+\vecu^\top\matG\right]+\vece_2^\top=\left[\sum_{i\in [\ell]}(\vecs_i^\top\matB_i+\tilde{\vece}_{2,i}^\top)\,\Bigg|\,\sum_{i\in [\ell]}(\vecs_i^\top\matP_i+\tilde{\vece}_{2,i}'^\top)+\vecu^\top\matG\right]+\vece_2^\top}.$$
        
        \item $\vecz_{3,i}^{(1)}$: For each $i\in [\ell]$ and $j\in [N_i]$, compute 
        \begin{center}
             \fbox{
        \parbox{0pt}{\begin{align*}
            y_{i,j}&\leftarrow (\vecs_i^\top\matP_i+\tilde{\vece}_{2,i}'^\top)\matG^{-1}(\vecv_{\rho_i(j)})+(\vecs_i^\top\matB_i+\tilde{\vece}_{2,i}^\top)\vecr_{\rho_i(j)}\\
            &=\tilde{\vecz}_{2,i}'^\top\matG^{-1}(\vecv_{\rho_i(j)})+\tilde{\vecz}_{2,i}^\top\vecr_{\rho_i(j)},
        \end{align*} }} 
        \end{center}
        and set $$\vecz_{3,i}^{(1)\top}\leftarrow \left[y_{i,1}\mid\cdots\mid y_{i,N_i}\right]+\vece_{3,i}^{(1)\top}\in\Z_q^{N_i}. $$
        
         \item $\vecz_{3,i}^{(2)}$: For each $i\in [\ell]$ and $j\in [Q']$, compute 
         \begin{center}
             \fbox{
        \parbox{0pt}{\begin{align*}
         y_{i,j}'&\leftarrow (\vecs_i^\top\matP_i+\tilde{\vece}_{2,i}'^\top)\matG^{-1}(\vecv_{j}')+(\vecs_i^\top\matB_i+\tilde{\vece}_{2,i}^\top)\vecr_{j}'\\
            &=\tilde{\vecz}_{2,i}'^\top\matG^{-1}(\vecv_{j}')+\tilde{\vecz}_{2,i}^\top\vecr_{j}',
        \end{align*}}}
         \end{center}
         and set $$\vecz_{3,i}^{(2)\top}\leftarrow \left[y_{i,1}'\mid\cdots\mid y_{i,Q'}'\right]+\vece_{3,i}^{(2)\top}\in\Z_q^{Q'}. $$
    \end{itemize}
    
  \iitem{Game $\H_2^{\pre}$} The experiment is identical to $\H_1^\pre$, except for the procedure how the values $\vecz_{1,i},\vecz_2,\vecz_{3,i}$ are sampled.
    \begin{itemize}
        \item $\vecz_{1,i}$: For each $i\in [\ell]$, sample $\boxed{\vecz_{1,i}\rand\Z_q^m}$.
        
        \item $\vecz_2$: For each $i\in [\ell]$, sample $\boxed{\tilde{\vecz}_{2,i}\rand\Z_q^{m'}}$ and $\boxed{\tilde{\vecz}_{2,i}'\rand\Z_q^m}$. Then set $$\vecz_2^\top\leftarrow \left[\sum_{i\in [\ell]}\tilde{\vecz}_{2,i}^\top\,\Bigg|\,\sum_{i\in [\ell]}\tilde{\vecz}_{2,i}'^\top+\vecu^\top\matG\right]+\vece_2^\top.$$
        
        \item $\vecz_{3,i}^{(1)}$: For each $i\in [\ell]$ and $j\in [N_i]$, compute 
        $$y_{i,j}\leftarrow \tilde{\vecz}_{2,i}'^\top\matG^{-1}(\vecv_{\rho_i(j)})+\tilde{\vecz}_{2,i}^\top\vecr_{\rho_i(j)},$$ then set $$\vecz_{3,i}^{(1)\top}\leftarrow \left[y_{i,1}\mid\cdots\mid y_{i,N_i}\right]+\vece_{3,i}^{(1)\top}\in\Z_q^{N_i}. $$
        
         \item $\vecz_{3,i}^{(2)}$: For each $i\in [\ell]$ and $j\in [Q']$, compute 
        $$y_{i,j}'\leftarrow \tilde{\vecz}_{2,i}'^\top\matG^{-1}(\vecv_{j}')+\tilde{\vecz}_{2,i}^\top\vecr_{j}',$$ then set $$\vecz_{3,i}^{(2)\top}\leftarrow \left[y_{i,1}'\mid\cdots\mid y_{i,Q'}'\right]+\vece_{3,i}^{(2)\top}\in\Z_q^{Q'}. $$
        \end{itemize}
        
    \iitem{Game $\H_3^\pre$} The experiment is identical to $\H_2^\pre$, except it samples $\boxed{\vecz_{3,1}^{(1)}\rand\Z_q^{N_1}}$.

    \iitem{Game $\H_4^\pre$} The experiment is identical to $\H_3^\pre$, except for how the components $\vecz_2$ and $\vecz_{3,1}^{(2)}$ are sampled.
        \begin{itemize}
            \item $\vecz_2$: For each $i\in [\ell]$, sample $\tilde{\vecz}_{2,i}\rand\Z_q^{m'},\tilde{\vecz}_{2,i}'\rand\Z_q^{m}$. Then set $$\boxed{\vecz_2^\top\leftarrow \left[\sum_{i\in [\ell]}\tilde{\vecz}_{2,i}^\top\,\Bigg|\,\sum_{i\in [\ell]}\tilde{\vecz}_{2,i}'^\top\right]+\vece_2^\top},$$ which omits the additive term $[\veczero_{m'} \mid \vecu^\top\matG]$ compared to the construction in $\H_3^\pre$.
            \item $\vecz_{3,i}^{(2)}$: For $i=1$, sample $\boxed{\vecz_{3,1}^{(2)}\rand\Z_q^{Q'}}$. For $i\neq 1$, sample $\vecz_{3,i}^{(2)}$ in the same manner as in $\H_{3}^\pre$.
        \end{itemize}

        \iitem{Game $\H_5^\pre$} The experiment is identical to $\H_4^\pre$, except it samples $\vecz_{2}\rand\Z_q^{m'+m}$, $\vecz_{3,i}^{(1)}\rand\Z_q^{N_i},\vecz_{3,i}^{(2)}\rand\Z_q^{Q'}$ for each $i\in [\ell]$.

\begin{lemma}\label{evpre0}
    Suppose that $\chi\geq\lambda^{\omega(1)}\cdot(\sqrt{\lambda}(m+\ell)\chi_s+\lambda m'\chi'\chi_s)$. Then we have $\H_{0}^{\pre}\overset{s}{\approx}\H_1^\pre$.
\end{lemma}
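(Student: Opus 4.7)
The plan is to show that the added smudging noises $\tilde{\vece}_{1,i}, \tilde{\vece}_{2,i}, \tilde{\vece}_{2,i}'$ introduced in $\H_1^\pre$ perturb each output component of $\H_0^\pre$ by a vector of small infinity norm, and that this perturbation is absorbed by the native Gaussian noise $\vece_{1,i}, \vece_2, \vece_{3,i}$ of width $\chi$ via the smudging lemma (Lemma~\ref{smudge}).

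First I would rewrite each output of $\H_1^\pre$ explicitly as the corresponding output of $\H_0^\pre$ plus a correction depending only on $\tilde{\vece}_{1,i}, \tilde{\vece}_{2,i}, \tilde{\vece}_{2,i}'$ and the already-sampled public vectors. Concretely:
\begin{itemize}
    \item For $\vecz_{1,i}$, the correction is $\tilde{\vece}_{1,i} \leftarrow D_{\Z,\chi_s}^m$, which by Lemma~\ref{truncated} satisfies $\|\tilde{\vece}_{1,i}\| \le \sqrt{\lambda}\chi_s$ with overwhelming probability.
    \item For $\vecz_2$, the correction is $\bigl[\sum_{i\in[\ell]} \tilde{\vece}_{2,i}^\top \,\bigm|\, \sum_{i\in[\ell]} \tilde{\vece}_{2,i}'^\top\bigr]$, whose infinity norm is bounded by $\ell\sqrt{\lambda}\chi_s$ w.o.p.
    \item For $\vecz_{3,i}$, unfolding the definitions shows that the $j$-th entry of $\vecz_{3,i}^{(1)}$ (resp.\ $\vecz_{3,i}^{(2)}$) picks up the extra summand $\tilde{\vece}_{2,i}'^\top\matG^{-1}(\vecv) + \tilde{\vece}_{2,i}^\top\vecr$, where $\vecv$ is the appropriate key vector and $\vecr$ is the corresponding random-oracle output. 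Since $\matG^{-1}(\vecv)\in\{0,1\}^m$ and, by Lemma~\ref{truncated}, $\|\vecr\|\le\sqrt{\lambda}\chi'$ w.o.p., this correction is bounded in magnitude by $m\sqrt{\lambda}\chi_s + m'\lambda\chi'\chi_s$.
\end{itemize}

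Next I would condition on the high-probability event that all of the above norm bounds hold simultaneously. Under this conditioning, every coordinate of the overall correction is a fixed value of magnitude at most $B := \sqrt{\lambda}(m+\ell)\chi_s + \lambda m'\chi'\chi_s$, while the ambient noise entries of $\vece_{1,i}$, $\vece_2$, $\vece_{3,i}$ are independent samples from $D_{\Z,\chi}$ that are independent of the correction. Since the hypothesis $\chi \ge \lambda^{\omega(1)}\cdot B$ is precisely the one required by Lemma~\ref{smudge}, applying the smudging lemma coordinate-wise and union-bounding over the polynomially many coordinates yields $\H_0^\pre \overset{s}{\approx} \H_1^\pre$.

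The main technical point to verify is the bound on the correction appearing in $\vecz_{3,i}$: this is the only place where a smudging noise of width $\chi_s$ is multiplied by a random-oracle output of width $\chi'$, producing the cross term $\lambda m'\chi'\chi_s$ that dominates when $\chi' \gg \chi_s$. Nevertheless, the dimension $m'$ is polynomial and each coordinate of $\vecr$ is bounded by $\sqrt{\lambda}\chi'$ w.o.p., so this quantity remains polynomial in $\lambda$, and the super-polynomial gap between $\chi$ and $B$ built into the parameter constraint is exactly what ensures Lemma~\ref{smudge} applies.
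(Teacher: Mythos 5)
Your proposal is correct and follows essentially the same route as the paper's proof: decompose each component of $\H_1^\pre$ as the corresponding $\H_0^\pre$ component plus an additive correction, bound the corrections by $\sqrt{\lambda}\chi_s$, $\ell\sqrt{\lambda}\chi_s$, and $\sqrt{\lambda}m\chi_s+\lambda m'\chi'\chi_s$ respectively via Lemma~\ref{truncated}, and absorb them into the width-$\chi$ noise via Lemma~\ref{smudge}. Your explicit conditioning on the smudging noises (which appear in several components simultaneously) and the coordinate-wise union bound make the argument, if anything, slightly more careful than the paper's.
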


\begin{proof}
     The only difference between $\H_{0}^{\pre}$ and $\H_{1}^{\pre}$ lies in the way the error terms in $\vecz_{1,i},\vecz_{2},\vecz_{3,i}$ are generated. We analyze each of these components individually below:
    \begin{itemize}
        \item \textbf{Error term in $\vecz_{1,i}$}: In experiment $\H_{1}^{\pre}$, the error term in $\vecz_{1,i}$ is given by $\vece_{1,i}+\tilde{\vece}_{1,i}$, where $\vece_{1,i}\leftarrow D_{\Z,\chi}^m$ and $\tilde{\vece}_{1,i}\leftarrow D_{\Z,\chi_s}^m$. By Lemma \ref{truncated}, we have $\|\tilde{\vece}_{1,i}\|\leq \sqrt{\lambda}\chi_s$ with overwhelming probability. Given that $\chi>\lambda^{\omega(1)}\cdot\sqrt{\lambda}\chi_s$, Lemma \ref{smudge} implies that $\vece_{1,i}+\tilde{\vece}_{1,i}\overset{s}{\approx}\vece_{1,i}$. 
        
        \item \textbf{Error term in $\vecz_{2}$}:  In $\H_{1}^\pre$, the error term in $\vecz_2$ is given by $$\vece_2^\top+\tilde{\vece}_2^\top:=\vece_2^\top+\left[\sum_{i\in[\ell]}\tilde{\vece}_{2,i}^\top\,\Bigg|\,\sum_{i\in [\ell]}\tilde{\vece}_{2,i}'^\top\right],$$ where $\tilde{\vece}_{2,i}\leftarrow D_{\Z,\chi_s}^{m'}$ and $\tilde{\vece}_{2,i}'\leftarrow D_{\Z,\chi_s}^{m}$. Lemma \ref{truncated} implies that $\|\tilde{\vece}_2\|\leq \ell\cdot\sqrt{\lambda}\chi_s$ with overwhelming probability. Given that $\chi>\lambda^{\omega(1)}\cdot\ell\sqrt{\lambda}\chi_s$, Lemma \ref{smudge} ensures that $\vece_2+\tilde{\vece}_2\overset{s}{\approx}\vece_2$. Hence, the distributions of $\vecz_{2}$ in $\H_{0}^{\pre}$ and $\H_{1}^{\pre}$ are statistically indistinguishable. 
        
        \item \textbf{Error term in $\vecz_{3,i}$}: In experiment $\H_{1}^{\pre}$, the error term of $y_{i,j}$ is given by 
     $$\tilde{\vece}_{2,i}'^\top\matG^{-1}(\vecv_{\rho_i(j)})+\tilde{\vece}_{2,i}^\top\vecr_{\rho_i(j)}.$$ By Lemma \ref{truncated}, the infinity norm of this term is at most $\sqrt{\lambda}m\chi_s+\lambda m'\chi'\chi_s$ with overwhelming probability. Since $\chi>\lambda^{\omega(1)}\cdot(\sqrt{\lambda}m\chi_s+\lambda m'\chi'\chi_s)$,  Lemma \ref{smudge} implies that the distributions of $\vecz_{3,i}^{(1)}$ in the two experiments are statistically indistinguishable. The same argument applies to the component $\vecz_{3,i}^{(2)}$ as well.
    \end{itemize}

    In conclusion, under the constraints of the stated parameters, all error terms in $\vecz_{1,i},\vecz_2,\vecz_{3,i}$ 
are statistically close in the two experiments. Hence, we have
$\H_0^{\pre} \overset{s}{\approx} \H_1^{\pre}$,
as claimed.
\end{proof}

\begin{lemma}\label{evpre1}
    Suppose that the assumption $\LWE_{n,m_1,q,\chi_s}$ holds for $m_1=\poly(m,m')$. Then we have $\H_{1}^\pre\overset{c}{\approx}\H_2^\pre$.
\end{lemma}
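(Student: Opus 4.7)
The plan is to reduce to the standard $\LWE_{n,m_1,q,\chi_s}$ assumption via a standard hybrid argument across the $\ell$ authorities. For each $i \in \{0, 1, \ldots, \ell\}$ I introduce an intermediate hybrid $\H_{1,i}^\pre$ in which, for every $j \leq i$, the triple $(\vecs_j^\top\matA_j + \tilde{\vece}_{1,j}^\top,\ \vecs_j^\top\matB_j + \tilde{\vece}_{2,j}^\top,\ \vecs_j^\top\matP_j + \tilde{\vece}_{2,j}'^\top)$ is replaced by uniformly random vectors of matching dimensions, while for every $j > i$ the triple is generated exactly as in $\H_1^\pre$. In all cases, the derived quantities $\vecz_2$, $\vecz_{3,j}^{(1)}$, $\vecz_{3,j}^{(2)}$ are then computed from the simulated $\tilde{\vecz}_{2,j}, \tilde{\vecz}_{2,j}'$ and the (public) noise and sampler outputs in the same deterministic way prescribed by $\H_1^\pre$. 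Clearly $\H_{1,0}^\pre \equiv \H_1^\pre$ and $\H_{1,\ell}^\pre \equiv \H_2^\pre$, so it suffices to prove $\H_{1,i-1}^\pre \overset{c}{\approx} \H_{1,i}^\pre$ for each $i \in [\ell]$, after which a union bound over $\ell = \poly(\lambda)$ yields the claim.

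Before invoking \LWE I would note that by Lemma~\ref{trapdoor} the matrices output by $\trapgen(1^n,1^m,q)$ are statistically close to uniform, and the trapdoors $\td_i$ are never consulted in the pre-challenge experiment (they are needed only to answer the post-challenge secret-key queries, which do not appear here). Thus I may treat each $\matA_i$ as uniform in $\Z_q^{n\times m}$ at statistically negligible cost. Next I build a reduction $\B$ against $\LWE_{n,m_1,q,\chi_s}$ with $m_1 = 2m + m'$: given a challenge $(\matC,\vecy^\top)\in\Z_q^{n\times m_1}\times\Z_q^{m_1}$, the reduction parses $\matC = [\matA_i \mid \matB_i \mid \matP_i]$ and $\vecy^\top = [\vecy_A^\top \mid \vecy_B^\top \mid \vecy_P^\top]$, and plants these as the $i$-th authority by setting $\tilde{\vecz}_{2,i}^\top \leftarrow \vecy_B^\top$, $\tilde{\vecz}_{2,i}'^\top \leftarrow \vecy_P^\top$, and $\vecz_{1,i}^\top \leftarrow \vecy_A^\top + \vece_{1,i}^\top$ with a freshly sampled $\vece_{1,i}\leftarrow D_{\Z,\chi}^m$. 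For $j<i$ the reduction samples the triple uniformly; for $j>i$ it samples a fresh secret $\vecs_j$ and generates the triple as in $\H_1^\pre$. All remaining components (the sampler randomness $\vecr_{\pub_1},\vecr_{\pub_2},\vecr_{\pri}$, the matrices $\matA_j,\matB_j,\matP_j$ for $j\neq i$, the outer noise $\vece_2$, and the noise vectors $\vece_{3,j}$) are sampled honestly. When the challenge is a real LWE instance the view of any distinguisher is exactly that of $\H_{1,i-1}^\pre$; when $\vecy$ is uniform it is exactly $\H_{1,i}^\pre$.

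The only nontrivial point to verify is that $\B$'s simulation of the derived terms $\vecz_{3,i}^{(1)}$ and $\vecz_{3,i}^{(2)}$ does not require further access to $\vecs_i$ beyond what is already encoded in $\tilde{\vecz}_{2,i}$ and $\tilde{\vecz}_{2,i}'$. This is immediate from the recipe in $\H_1^\pre$, where each coordinate of $\vecz_{3,i}$ is computed as $\tilde{\vecz}_{2,i}'^\top\matG^{-1}(\vecv) + \tilde{\vecz}_{2,i}^\top\vecr + e$ for publicly known $(\vecv,\vecr)$ and independently sampled Gaussian noise $e$. Hence $\B$ carries out a perfectly faithful simulation, so any non-negligible distinguishing advantage between $\H_{1,i-1}^\pre$ and $\H_{1,i}^\pre$ translates to the same advantage against $\LWE_{n,m_1,q,\chi_s}$, contradicting the assumed hardness. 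Summing over the $\ell$ hybrids completes the proof of $\H_1^\pre \overset{c}{\approx} \H_2^\pre$.
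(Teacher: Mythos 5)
Your proposal is correct and follows essentially the same route as the paper: a hybrid over the $\ell$ authorities in which the $d$-th triple $(\vecs_d^\top\matA_d,\vecs_d^\top\matB_d,\vecs_d^\top\matP_d)$ is switched to uniform via a single $\LWE_{n,2m+m',q,\chi_s}$ instance parsed as $[\matA_d\mid\matB_d\mid\matP_d]$, with the derived components $\vecz_2,\vecz_{3,i}$ recomputed from the planted values. Your explicit remark that the $\trapgen$ matrices are statistically close to uniform (and that the trapdoors are unused in the precondition experiment) is a point the paper leaves implicit, but otherwise the two arguments coincide.
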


\begin{proof}
    We prove this lemma by defining a sequence of intermediate hybrid experiments $\H_{1,d}^\pre$ for each $0\leq d\leq \ell$.
    
    \iitem{Game $\H_{1,d}^{\pre}$} This experiment is identical to $\H_1^\pre$, except for how the value $\vecz_{1,i},\vecz_2,\vecz_{3,i}$ are generated:
        \begin{itemize}
            \item $\vecz_{1,i}$:
            \begin{itemize}
                \item For each $i\leq d$, sample $\vecz_{1,i}\rand\Z_q^m$.
                \item For each $i>d$, sample $\vece_{1,i}\leftarrow D_{\Z,\chi}^m, \tilde{\vece}_{1,i}\leftarrow D_{\Z,\chi_s}^m$, and set  $\vecz_{1,i}^\top\leftarrow \vecs_i^\top\matA_i+\tilde{\vece}_{1,i}^\top+\vece_{1,i}^\top$.
            \end{itemize} 
            
            \item $\vecz_{2}$:
            \begin{itemize}
                \item For each $i\leq d$, sample $\tilde{\vecz}_{2,i}\rand\Z_q^{m'}$ and $\tilde{\vecz}_{2,i}'\rand\Z_q^m$.
                \item For each $i>d$, sample $\tilde{\vece}_{2,i}\leftarrow D_{\Z,\chi_s}^{m'}, \tilde{\vece}_{2,i}'\leftarrow D_{\Z,\chi_s}^m$, and set $\tilde{\vecz}_{2,i}^\top\leftarrow\vecs_i^\top\matB_i+\tilde{\vece}_{2,i}^\top$ and $\tilde{\vecz}_{2,i}'^\top\leftarrow\vecs_i^\top\matP_i+\tilde{\vece}_{2,i}'^\top$.
                \item 
               Then sample $\vece_2\leftarrow D_{\Z,\chi}^{m'+m}$, and set $$\vecz_2^\top\leftarrow \left[\sum_{i\in [\ell]}\tilde{\vecz}_{2,i}^\top\,\Bigg|\,\sum_{i\in [\ell]}\tilde{\vecz}_{2,i}'^\top+\vecu^\top\matG\right]+\vece_2^\top.$$
            \end{itemize}  
           
           \item $\vecz_{3,i}^{(1)}$: For each $i\in [\ell]$, sample $\vece_{3,i}^{(1)}\leftarrow D_{\Z,\chi}^{N_i}$. Then for each $j\in [N_i]$, compute 
        $$y_{i,j}\leftarrow \tilde{\vecz}_{2,i}'^\top\matG^{-1}(\vecv_{\rho_i(j)})+\tilde{\vecz}_{2,i}^\top\vecr_{\rho_i(j)},$$ and let $$\vecz_{3,i}^{(1)\top}\leftarrow \left[y_{i,1}\mid\cdots\mid y_{i,N_i}\right]+\vece_{3,i}^{(1)\top}\in\Z_q^{N_i}. $$
         \item $\vecz_{3,i}^{(2)}$: Similarly, for each $i\in [\ell]$, sample $\vece_{3,i}^{(2)}\leftarrow D_{\Z,\chi}^{Q'}$. Then for $j\in [Q']$, compute 
        $$y_{i,j}'\leftarrow \tilde{\vecz}_{2,i}'^\top\matG^{-1}(\vecv_{j}')+\tilde{\vecz}_{2,i}^\top\vecr_{j}',$$ and let $$\vecz_{3,i}^{(2)\top}\leftarrow \left[y_{i,1}'\mid\cdots\mid y_{i,Q'}'\right]+\vece_{3,i}^{(2)\top}\in\Z_q^{Q'}. $$ 
        \end{itemize}
    
    It follows naturally by the construction that $\H_{1,0}^\pre\equiv \H_1^\pre$, and $\H_{1,\ell}^{\pre}\equiv \H_2^\pre$. We now demonstrate that, for each $d\in [\ell]$, the hybrids $\H_{1,d-1}^\pre$ and $\H_{1,d}^\pre$ are computationally indistinguishable under the $\LWE_{n,m_1,q,\chi_s}$ assumption for $m_1=\poly(m,m')$. 
    
    Suppose, for contradiction, that there exists an efficient distinguisher $\mathcal{D}$ that can distinguish between $\H_{1,d-1}^\pre$ and $\H_{1,d}^\pre$ with non-negligible advantage. We use $\mathcal{D}$ to construct an adversary $\mathcal{D}'$ that breaks the \LWE assumption. The adversary $\mathcal{D}'$ proceeds as follows:
    \begin{enumerate}
        \item Algorithm $\mathcal{D}'$ begins by receiving an $\LWE$ challenge $(\matD,\vecy)$, where $\matD\in\Z_q^{n\times 
 (2m+m')},\vecy\in \Z_q^{2m+m'}$. Then it parses $\matD$ and $\vecy$ as $$\matD=[\matA_d\mid\matB_d\mid\matP_d], \quad\vecy^\top=[\vecy_1^\top\mid\vecy_2^\top\mid\vecy_2'^\top],$$ where $\matA_d,\matP_d\in \Z_q^{n\times m},\matB_d\in \Z_q^{n\times m'}$, and $\vecy_1,\vecy_2'\in\Z_q^m,\vecy_2\in\Z_q^{m'}$.
 
 \item Algorithm $\mathcal{D}'$ simulates $\mathcal{S}_{\mathcal{A}}=(\mathcal{S}_{\mathcal{A},\vecv},\mathcal{S}_{\mathcal{A},\vecu})$ as follows:
 \begin{itemize}
     \item It samples $\vecr_{\pub_1},\vecr_{\pub_2},\vecr_{\pri}\rand\{0,1\}^\kappa$ as the randomness for algorithm $\mathcal{S}_{\mathcal{A}}$. 
     \item It computes $$\left(\begin{array}{l}
          1^{\ell},1^{N_i+Q'};\\
    (\vecr_{\rho_1(1)},\vecv_{\rho_1(1)}),\ldots,(\vecr_{\rho_1(N_1)},\vecv_{\rho_{1}(N_1)}),(\vecr_1',\vecv_1'),\ldots,(\vecr_{Q'}',\vecv_{Q'}');                   \\
  (\vecr_{\rho_2(1)},\vecv_{\rho_2(1)}),\ldots,(\vecr_{\rho_2(N_2)},\vecv_{\rho_{2}(N_2)}),(\vecr_1',\vecv_1'),\ldots,(\vecr_{Q'}',\vecv_{Q'}');                   \\
 \cdots; \\
(\vecr_{\rho_\ell(1)},\vecv_{\rho_\ell(1)}),\ldots,(\vecr_{\rho_\ell(N_\ell)},\vecv_{\rho_{\ell}(N_\ell)}),(\vecr_1',\vecv_1'),\ldots,(\vecr_{Q'}',\vecv_{Q'}')       \end{array}\right)\leftarrow\mathcal{S}_{\mathcal{A},\vecv}(\gp;(\vecr_{\pub_1},\vecr_{\pub_2})).$$ 
     \item It computes $\vecu\leftarrow\mathcal{S}_{\mathcal{A},\vecu}(\gp,(\vecr_{\pub_1},\vecr_{\pub_2});\vecr_{\pri})$.
     \end{itemize}
     \item Algorithm $\mathcal{D}'$ samples $\matA_i,\matP_i\rand\Z_q^{n\times m}$ and $\matB_i\rand\Z_q^{n\times m'}$ for each $i\neq d$. It also sets $\matA_d$, $\matB_d$, and $\matP_d$ to be the respective components provided in the \LWE challenge instance. Note that the distribution of the matrix $\{\matA_i\}_{i\in [\ell]},[\matB\mid\matP]$ constructed here exactly matches the setup process in the $\EVIPFE$ assumption.
     
     \item Algorithm $\mathcal{D}'$ constructs the remaining components $\matQ_1,\ldots,\matQ_\ell$ as described in the $\EVIPFE$ assumption.
     
     \item For each $i\in [\ell]$ such that $i>d$, algorithm $\mathcal{D}'$ samples $\vecs_i\rand\Z_q^n$. It then proceeds as follows:
     \begin{itemize}
          \item $\vecz_{1,i}$:
          \begin{itemize}
              \item For each $i< d$, sample $\vecz_{1,i}\rand\Z_q^m$.
              \item  For $i=d$, sample $\vece_{1,d}\leftarrow D_{\Z,\chi}^m$ and compute $\vecz_{1,d}^\top\leftarrow \vecy_1^\top+\vece_{1,d}^\top$.
              \item For each $i>d$, sample $\vece_{1,i}\leftarrow D_{\Z,\chi}^m, \tilde{\vece}_{1,i}\leftarrow D_{\Z,\chi_s}^m$, and compute $\vecz_{1,i}^\top\leftarrow \vecs_i^\top\matA_i+\tilde{\vece}_{1,i}^\top+\vece_{1,i}^\top$.
          \end{itemize}  
        
        \item $\vecz_{2}$: 
        \begin{itemize}
            \item For each $i< d$, sample $\tilde{\vecz}_{2,i}\rand\Z_q^{m'}, \tilde{\vecz}_{2,i}'\rand\Z_q^m$.
            \item For $i=d$, set $\tilde{\vecz}_{2,d}\leftarrow\vecy_2$ and $\tilde{\vecz}_{2,d}'\leftarrow\vecy_2'$.
            
            \item For each $i>d$, sample $\tilde{\vece}_{2,i}\leftarrow D_{\Z,\chi_s}^{m'}$ and $\tilde{\vece}_{2,i}'\leftarrow D_{\Z,\chi_s}^m$, and compute $\tilde{\vecz}_{2,i}^\top\leftarrow\vecs_i^\top\matB_i+\tilde{\vece}_{2,i}^\top$ and $\tilde{\vecz}_{2,i}'^\top\leftarrow\vecs_i^\top\matP_i+\tilde{\vece}_{2,i}'^\top$. Then sample $\vece_2\leftarrow D_{\Z,\chi}^{m'+m}$ and set $$\vecz_2^\top\leftarrow \left[\sum_{i\in [\ell]}\tilde{\vecz}_{2,i}^\top\,\Bigg|\,\sum_{i\in [\ell]}\tilde{\vecz}_{2,i}'^\top+\vecu^\top\matG\right]+\vece_2^\top.$$
        \end{itemize}  
        \item $\vecz_{3,i}^{(1)},\vecz_{3,i}^{(2)}$: Sample $\vecz_{3,i}^{(1)},\vecz_{3,i}^{(2)}$ via the same procedure as in $\H_{1,d}^\pre$. 
        \end{itemize}
     \item Finally, algorithm $\mathcal{D}'$ sends the challenge $\{1^\lambda,(\vecr_{\pub_1},\vecr_{\pub_2}),
    \{\matA_i,\vecz_{1,i}^\top\}_{i\in [\ell]},[\matB\mid\matP],\vecz_2^\top,\{\vecz_{3,i}^\top\}_{i\in [\ell]}\}$ to distinguisher $\mathcal{D}$ and outputs whatever $\mathcal{D}$ outputs.
    \end{enumerate}
We now analyze the distribution of the challenge given to $\mathcal{D}$. Note that the distributions of matrices $\matA_d,\matB_d,\matP_d$ match exactly those in the original game. Therefore, it suffices to analyze the distribution of vectors $\vecz_{1,i},\vecz_2,\vecz_{3,i}$. Observe that for all $i\neq d$, the components $\vecz_{1,i},\vecz_{2},\vecz_{3,i}$ are constructed identically in both $\H_{1,d-1}^\pre$ and $\H_{1,d}^{\pre}$. Hence, we only need to analyze the case of $i=d$. We now distinguish between the two possible forms of the \LWE challenge, depending on whether the challenge is structured or uniformly random.
\begin{itemize}
    \item If $\vecy_1=\vecs^\top\matA_d+\vece_1^\top,\vecy_2=\vecs^\top\matB_d+\vece_2^\top,\vecy_2'=\vecs^\top\matP_d+\vece_2'^\top$ for some $\vecs\rand\Z_q^n,\vece_1,\vece_2'\leftarrow D_{\Z,\chi_s}^m,\vece_2\leftarrow D_{\Z,\chi_s}^{m'}$. In this case, the values $\vecz_{1,d},\tilde{\vecz}_{2,d},\tilde{\vecz}_{2,d}'$ are distributed identically to those in $\H_{1,d-1}^\pre$.
    \item If $\vecy_1,\vecy_2'\rand\Z_q^m,\vecy_2\rand\Z_q^{m'}$, then $\vecz_{1,d},\tilde{\vecz}_{2,d},\tilde{\vecz}_{2,d}'$ are sampled according to the same distributions as in $\H_{1,d}^\pre$.
\end{itemize}
Consequently, $\mathcal{D}'$ provides a perfect simulation of the distribution of the corresponding hybrid. As a result, algorithm $\mathcal{D}'$ can distinguish the $\LWE$ instance from the uniform instance with the same advantage as $\mathcal{D}$ distinguishes between $\H_{1,d-1}^\pre$ and $\H_{1,d}^\pre$. This leads to a contradiction with the $\LWE$ assumption. We can conclude $\H_{1,d-1}^\pre\overset{c}{\approx}\H_{1,d}^\pre$ for each $d\in [\ell]$. By a standard hybrid argument, it follows that $\H_{1}^\pre\overset{c}{\approx}\H_{2}^\pre$, as desired.
\end{proof}

\begin{lemma}\label{evpre2}
    Suppose $m'>6n\log q$ and $\chi'=\Omega(\sqrt{n\log q})$. Let $\chi_s$ be an error parameter such that $\chi\geq \lambda^{\omega(1)}\cdot\sqrt{\lambda}\chi_s$ and the $\LWE_{n,m_1,q,\chi_s}$ assumption holds for $m_1=\poly(Q_0)$, where $Q_0$ is the upper bound on the total number of secret-key queries (including those in the partial set $\mathcal{Q}_{\partset}'$) submitted by the adversary. Then we have that $\H_{2}^\pre\overset{c}{\approx}\H_3^\pre$.
\end{lemma}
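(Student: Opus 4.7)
The plan is to reduce $\H_2^\pre \overset{c}{\approx} \H_3^\pre$ to the flipped LWE assumption $\flipLWE_{m', N_1+Q', q, \chi', \chi_s}$, which by Theorem~\ref{flipped} (under $m' > 6n\log q$ and $\chi' = \Omega(\sqrt{n\log q})$) reduces to $\LWE_{n,m_1,q,\chi_s}$ with $m_1 \geq N_1 + Q' = \poly(Q_0)$. The only difference between the two hybrids lies in the $N_1$ entries of $\vecz_{3,1}^{(1)}$, whose $j$-th entry is $\tilde{\vecz}_{2,1}'^\top \matG^{-1}(\vecv_{\rho_1(j)}) + \tilde{\vecz}_{2,1}^\top \vecr_{\rho_1(j)} + e_{3,1,j}^{(1)}$ in $\H_2^\pre$. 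If $N_1 = 0$ the two hybrids coincide trivially, so assume $N_1 \geq 1$. Then static admissibility of any Type~I query $(\gid_k, A_k, \vecv_k)$ with $\aid_1^* \in A_k$ requires $(A_k \cup \mathcal{C}) \cap A^* \subsetneq A^*$, forcing some $\aid^{**} \in (A^* \cap \mathcal{N}) \setminus \{\aid_1^*\}$ and hence $\ell = |A^* \cap \mathcal{N}| \geq 2$. Consequently, the masks $\sum_{i \geq 2} \tilde{\vecz}_{2,i}$ and $\sum_{i \geq 2} \tilde{\vecz}_{2,i}'$ are uniform random vectors independent of $(\tilde{\vecz}_{2,1}, \tilde{\vecz}_{2,1}')$, so $\vecz_2$ is marginally uniform and statistically independent of every $\vecz_{3,i}^{(?)}$ in the adversary's view.

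I would then introduce an intermediate hybrid $\H_{2.5}^\pre$ in which both $\vecz_{3,1}^{(1)}$ and $\vecz_{3,1}^{(2)}$ are sampled uniformly (all other components identical to $\H_2^\pre$), and prove $\H_2^\pre \overset{c}{\approx} \H_{2.5}^\pre$ and $\H_3^\pre \overset{c}{\approx} \H_{2.5}^\pre$, concluding by the triangle inequality. The first indistinguishability is established by a reduction to flipped LWE with combined matrix $\matR^* := [\matR \mid \matR'] \in \Z_q^{m' \times (N_1+Q')}$, where $\matR = [\vecr_{\rho_1(j)}]_j$ and $\matR' = [\vecr_j']_j$, and secret $\tilde{\vecz}_{2,1}$. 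The reduction embeds the challenge columns as the Gaussian vectors $\vecr_{\rho_1(j)}, \vecr_j'$, samples all other components honestly (drawing a uniform $\tilde{\vecz}_{2,1}^{\mathrm{red}}$ for use in $\vecz_2$, which by the masking observation produces the correct marginal distribution), and assembles $\vecz_{3,1}^{(1)}, \vecz_{3,1}^{(2)}$ by adding the respective $\tilde{\vecz}_{2,1}'^\top \matG^{-1}(\cdot)$ contributions to the corresponding chunks of the challenge output, smudging the $\chi_s$-scale LWE noise into the $\chi$-scale $\vece_{3,1}^{(?)}$ via Lemma~\ref{smudge} (enabled by $\chi \geq \lambda^{\omega(1)}\sqrt{\lambda}\chi_s$). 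In the structured case both $\vecz_{3,1}^{(?)}$ blocks share the same implicit LWE secret, matching $\H_2^\pre$; in the uniform case they are both uniform, matching $\H_{2.5}^\pre$. A symmetric reduction using only $\matR'$, with $\vecz_{3,1}^{(1)}$ sampled uniformly in both endpoints, yields $\H_3^\pre \overset{c}{\approx} \H_{2.5}^\pre$.

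The principal obstacle is that $\tilde{\vecz}_{2,1}$ appears simultaneously in $\vecz_2$, $\vecz_{3,1}^{(1)}$, and $\vecz_{3,1}^{(2)}$, so a naive reduction that targets only the $\vecz_{3,1}^{(1)}$ block via flipped LWE would force the implicit LWE secret to differ from the $\tilde{\vecz}_{2,1}$ driving the still-structured $\vecz_{3,1}^{(2)}$, producing a simulation that deviates from $\H_2^\pre$ in a manner not obviously indistinguishable. Routing through $\H_{2.5}^\pre$ avoids this by committing the structured-case simulation to a single LWE secret for \emph{both} $\vecz_{3,1}^{(?)}$ blocks; the $\ell \geq 2$ masking observation then absorbs the remaining mismatch in $\vecz_2$ consistently across all three hybrids.
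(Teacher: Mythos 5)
Your structural instincts are sound: $N_1\geq 1$ does force $\ell\geq 2$ via Type-I admissibility, the block $\vecz_{3,1}^{(2)}$ must indeed be treated together with $\vecz_{3,1}^{(1)}$ because both are driven by the same $\tilde{\vecz}_{2,1}$, and flipped LWE with the Gaussian vectors $\vecr$ as the public matrix is the right tool. However, the step your entire reduction rests on is false. You claim that since $\sum_{i\geq 2}\tilde{\vecz}_{2,i}$ is independent of $(\tilde{\vecz}_{2,1},\tilde{\vecz}_{2,1}')$, the component $\vecz_2$ is statistically independent of \emph{every} $\vecz_{3,i}$, so the simulator may replace the unknown LWE secret $\tilde{\vecz}_{2,1}$ by a fresh $\tilde{\vecz}_{2,1}^{\mathrm{red}}$ inside $\vecz_2$. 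The mask $\sum_{i\geq2}\tilde{\vecz}_{2,i}$ is independent of $\tilde{\vecz}_{2,1}$, but it is \emph{not} independent of $\{\vecz_{3,i}\}_{i\geq2}$, which leak the individual $\tilde{\vecz}_{2,i}$ through the entries $\tilde{\vecz}_{2,i}'^\top\matG^{-1}(\vecv)+\tilde{\vecz}_{2,i}^\top\vecr+e$. Conditioned on the full view, the posterior of $\sum_{i=1}^{\ell}\tilde{\vecz}_{2,i}$ is the convolution of the per-index posteriors, none of which is uniform in general (e.g.\ when $N_i+Q'\geq m'$ and the $\vecr$'s span $\Z_q^{m'}$, each posterior is concentrated up to $\chi$-scale noise, which is far below $q$). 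Consequently, in the structured case your simulator produces a view in which $\vecz_2$ is decorrelated from $\vecz_{3,1}$, whereas in $\H_2^\pre$ the two are genuinely correlated through $\tilde{\vecz}_{2,1}$; no argument is given that this deviation is undetectable, and it cannot be absorbed into the flipped-LWE challenge because $\vecz_2$ depends on $\tilde{\vecz}_{2,1}$ through the identity matrix rather than through fresh Gaussian columns. The same defect recurs in your second leg $\H_{2.5}^\pre\overset{c}{\approx}\H_3^\pre$.

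The paper's proof is organized precisely to dodge this. For each entry $d\in[N_1]$ it uses Type-I admissibility to pick an index $i^*$ with $\aid_{i^*}^*\notin A_{\rho_1(d)}$ and re-randomizes $\tilde{\vecz}_{2,1}\leftarrow\tilde{\tilde{\vecz}}_{2,1}+\vecs_0$, $\tilde{\vecz}_{2,i^*}\leftarrow\tilde{\tilde{\vecz}}_{2,i^*}-\vecs_0$, so that $\vecs_0$ \emph{cancels} in $\sum_i\tilde{\vecz}_{2,i}$ and $\vecz_2$ can be simulated exactly without knowing the flipped-LWE secret $\vecs_0$. The price is that $\vecs_0$ now also contaminates the $i^*$ blocks and the shared Type-II columns, which is why the paper must randomize all affected entries, isolate the single entry $y_{1,d}$ (using $\rho_1(d)\notin\{\rho_{i^*}(j)\}_j$), and then undo the randomization, one $d$ at a time. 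If you want to retain your one-shot structure, you need some analogue of this cancellation for $\vecz_2$; substituting a fresh $\tilde{\vecz}_{2,1}^{\mathrm{red}}$ does not provide it.
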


\begin{proof}
    We prove this claim by defining a sequence of intermediate hybrid games between $\H_2^{\pre}$ and $\H_{3}^\pre$. For each $d\in [N_1]$, we define the following games:
    
    \iitem{Game $\H_{2,d,0}^\pre$} This experiment is identical to $\H_{2}^\pre$, except for how the components $y_{1,j}$ are sampled.
    \begin{itemize}
        \item For each $j<d$, the challenger samples $\boxed{y_{1,j}\rand\Z_q}$.
        \item For $j\geq d$, the challenger computes $y_{1,j}$ exactly as in $\H_{2}^\pre$, i.e., $y_{1,j}\leftarrow \tilde{\vecz}_{2,1}'^\top\matG^{-1}(\vecv_{\rho_1(j)})+\tilde{\vecz}_{2,1}^\top\vecr_{\rho_1(j)}$.
    \end{itemize}

    \iitem{Game $\H_{2,d,1}^\pre$} This experiment is identical to $\H_{2,d,0}^\pre$, except for how it samples the values $\tilde{\vecz}_{2,i}$. Most of the notation follows the convention used in \cite{WWW22}.
    \begin{itemize}
        \item Denote $\gamma=\rho_1(d)$. Precisely, $\gamma$ is the index of the $d$-th Type I secret-key query involving authority $\aid_1^*$.
        \item We consider the $\gamma$-th Type I secret-key query $(\gid_\gamma,A_\gamma,\vecv_\gamma)$ submitted by the adversary $\mathcal{A}$. Since the restriction of the Type I secret-key query guarantees that $A_{\gamma}\subsetneqq A^*\cap\mathcal{N}$, there exists some index $1<i^*\leq \ell$ such that $\aid_{i^*}^*\notin A_\gamma$. Let $i^*$ be the smallest such index. In particular, $\aid_{i^*}^*\notin A_{\gamma}$.
        
        \item The challenger samples $\tilde{\tilde{\vecz}}_{2,1}\rand\Z_q^{m'},\tilde{\tilde{\vecz}}_{2,i^*}\rand\Z_q^{m'}$. Then sample $\vecs_0\rand\Z_q^{m'}$, and compute $\boxed{\tilde{\vecz}_{2,1}\leftarrow \tilde{\tilde{\vecz}}_{2,1}+\vecs_0}$, $\boxed{\tilde{\vecz}_{2,i^*}\leftarrow\tilde{\tilde{\vecz}}_{2,i^*}-\vecs_0}$. For $i\notin\{1,i^*\}$, the challenger samples $\tilde{\vecz}_{2,i}\rand\Z_q^{m'}$, as in $\H_{2,d,0}^\pre$. The components are then constructed as follows:
        \begin{itemize}
            \item $\vecz_2$: Set $$\vecz_2^\top\leftarrow \left[\sum_{i\in [\ell]}\tilde{\vecz}_{2,i}^\top\,\Bigg|\,\sum_{i\in [\ell]}\tilde{\vecz}_{2,i}'^\top+\vecu^\top\matG\right]+\vece_2^\top.$$ 
            \item $y_{i,j}$:
            For $i=1$ and $j<d$, sample $y_{1,j}\rand\Z_q$. Otherwise (i.e., for $i\neq 1$ or $j\geq d$), compute $y_{i,j}\leftarrow\tilde{\vecz}_{2,i}'^\top\matG^{-1}(\vecv_{\rho_i(j)})+\tilde{\vecz}_{2,i}^\top\vecr_{\rho_i(j)}$.
            \item $y_{i,j}'$: For each $i\in [\ell]$ and $j\in [Q']$, compute $y_{i,j}'$ as in $\H_{2,d,0}^\pre$, i.e., $y_{i,j}'\leftarrow\tilde{\vecz}_{2,i}'^\top\matG^{-1}(\vecv_{j}')+\tilde{\vecz}_{2,i}^\top\vecr_{j}'$.
             \end{itemize}
            \item In particular, we have the affected components in the following:
            \begin{itemize}
                \item For $i=1$ and $d\leq j\leq N_1$, we have \begin{align*}
y_{1,j}\leftarrow\tilde{\vecz}_{2,1}'^\top\matG^{-1}(\vecv_{\rho_1(j)})+\tilde{\vecz}_{2,1}^\top\vecr_{\rho_1(j)}=\tilde{\vecz}_{2,1}'^\top\matG^{-1}(\vecv_{\rho_1(j)})+\boxed{\tilde{\tilde{\vecz}}_{2,1}^\top\vecr_{\rho_1(j)}+\vecs_0^\top\vecr_{\rho_1(j)}}.\end{align*}
\item For $i=1$ and $j\in [Q']$, we have $$y_{1,j}'=\tilde{\vecz}_{2,1}'^\top\matG^{-1}(\vecv_{j}')+\tilde{\vecz}_{2,1}^\top\vecr_{j}'=\tilde{\vecz}_{2,1}'^\top\matG^{-1}(\vecv_{j}')+\boxed{\tilde{\tilde{\vecz}}_{2,1}^\top\vecr_{j}'+\vecs_0^\top\vecr_{j}'}.$$

\item For $i=i^*$ and $j\in [N_{i^*}]$, we have $$y_{i^*,j}=\tilde{\vecz}_{2,i^*}'^\top\matG^{-1}(\vecv_{\rho_{i^*}(j)})+\tilde{\vecz}_{2,i^*}^\top\vecr_{\rho_{i^*}(j)}=\tilde{\vecz}_{2,i^*}'^\top\matG^{-1}(\vecv_{\rho_{i^*}(j)})+\boxed{\tilde{\tilde{\vecz}}_{2,i^*}^\top\vecr_{\rho_{i^*}(j)}-\vecs_0^\top\vecr_{\rho_{i^*}(j)}}.$$

\item For $i=i^*$ and $j\in [Q']$, we have that  $$y_{i^*,j}'=\tilde{\vecz}_{2,i^*}'^\top\matG^{-1}(\vecv_{j}')+\tilde{\vecz}_{2,i^*}^\top\vecr_{j}'=\tilde{\vecz}_{2,i^*}'^\top\matG^{-1}(\vecv_{j}')+\boxed{\tilde{\tilde{\vecz}}_{2,i^*}^\top\vecr_{j}'-\vecs_0^\top\vecr_{j}'}.$$
\end{itemize}       
\end{itemize}

 \iitem {Game $\H_{2,d,2}^\pre$} The experiment is identical to $\H_{2,d,1}^\pre$, except for how $y_{i,j},y_{i,j}'$ are sampled. Specifically,
    \begin{itemize}
        \item The challenger samples $\tilde{e}_i \leftarrow D_{\Z,\chi_s}$ for each $i \in [Q]$, and $\tilde{e}_j' \leftarrow D_{\Z,\chi_s}$ for each $j \in [Q']$.
        \item For $i=1$ and $j\geq d$, set $$y_{1,j}\leftarrow\tilde{\vecz}_{2,1}'^\top\matG^{-1}(\vecv_{\rho_1(j)})+\tilde{\tilde{\vecz}}_{2,1}^\top\vecr_{\rho_1(j)}+(\vecs_0^\top\vecr_{\rho_1(j)}+\boxed{\tilde{e}_{\rho_1(j)}}).$$
        \item For $i=1$ and each $j\in [Q']$, set $$y_{1,j}'\leftarrow\tilde{\vecz}_{2,1}'^\top\matG^{-1}(\vecv_{j}')+\tilde{\tilde{\vecz}}_{2,1}^\top\vecr_{j}'+(\vecs_0^\top\vecr_{j}'+\boxed{\tilde{e}_{j}'}).$$
        \item For $i=i^*$, set \begin{align*}
y_{i^*,j}&\leftarrow\tilde{\vecz}_{2,i^*}'^\top\matG^{-1}(\vecv_{\rho_{i^*}(j)})+\tilde{\tilde{\vecz}}_{2,i^*}^\top\vecr_{\rho_{i^*}(j)}-(\vecs_0^\top\vecr_{\rho_{i^*}(j)}+\boxed{\tilde{e}_{\rho_{i^*}(j)}}),\quad\text{for all } j\in [N_{i^*}]\\
y_{i^*,j}'&\leftarrow\tilde{\vecz}_{2,i^*}'^\top\matG^{-1}(\vecv_{j}')+\tilde{\tilde{\vecz}}_{2,i^*}^\top\vecr_{j}'-(\vecs_0^\top\vecr_{j}'+\boxed{\tilde{e}_j'}),\quad\text{for all } j\in [Q'].
            \end{align*}
    \item For all other pairs $(i,j)$, the components $y_{i,j},y_{i,j}'$ are computed in the same way as in  $\H_{2,d,1}^\pre$. 
    \end{itemize}

    \iitem{Game $\H_{2,d,3}^\pre$} The experiment is identical to $\H_{2,d,2}^\pre$, except for how  $y_{i,j},y_{i,j}'$ are sampled. Specifically
    \begin{itemize}
        \item The challenger samples $\boxed{\delta_{i}\rand\Z_q}$ for each $i\in [Q]$, and $\boxed{\delta_{j}'\rand\Z_q}$ for each $j\in [Q']$.
        
        \item For $i=1$ and $j\geq d$, set $$y_{1,j}\leftarrow\tilde{\vecz}_{2,1}'^\top\matG^{-1}(\vecv_{\rho_1(j)})+\tilde{\tilde{\vecz}}_{2,1}^\top\vecr_{\rho_1(j)}+\boxed{\delta_{\rho_1(j)}}.$$
        
        \item For $i=1$ and each $j\in [Q']$, set $$y_{1,j}'\leftarrow\tilde{\vecz}_{2,1}'^\top\matG^{-1}(\vecv_{j}')+\tilde{\tilde{\vecz}}_{2,1}^\top\vecr_{j}'+\boxed{\delta_j'}.$$
        \item For $i=i^*$, set \begin{align*}
y_{i^*,j}&\leftarrow\tilde{\vecz}_{2,i^*}'^\top\matG^{-1}(\vecv_{\rho_{i^*}(j)})+\tilde{\tilde{\vecz}}_{2,i^*}^\top\vecr_{\rho_{i^*}(j)}+\boxed{\delta_{\rho_{i^*}(j)}},\quad\text{for all } j\in [N_{i^*}],\\
y_{i^*,j}'&\leftarrow\tilde{\vecz}_{2,i^*}'^\top\matG^{-1}(\vecv_{j}')+\tilde{\tilde{\vecz}}_{2,i^*}^\top\vecr_{j}'+\boxed{\delta_{j}'},\quad\text{for all } j\in [Q'].
            \end{align*}
    \item For all other pairs $(i,j)$, the components $y_{i,j},y_{i,j}'$ are computed in the same way as in $\H_{2,d,2}^\pre$. 
    \end{itemize}

\iitem{Game $\H_{2,d,4}^\pre$} The experiment is identical to $\H_{2,d,3}^\pre$ except it samples $\boxed{y_{1,d}\rand\Z_q}$.

\iitem {Game $\H_{2,d,5}^\pre$} The experiment is identical to $\H_{2,d,4}^\pre$, except for how the values $y_{i,j},y_{i,j}'$ are sampled. Specifically,
    \begin{itemize}
        \item The challenger samples $\tilde{e}_i \leftarrow D_{\Z,\chi_s}$ for each $i \in [Q]$, and $\tilde{e}_j' \leftarrow D_{\Z,\chi_s}$ for each $j \in [Q']$.
        \item For $i=1$ and $j=d$, the challenger samples $y_{1,d}\rand\Z_q$.
        \item For $i=1$ and $j> d$, set $$y_{1,j}\leftarrow\tilde{\vecz}_{2,1}'^\top\matG^{-1}(\vecv_{\rho_1(j)})+\tilde{\tilde{\vecz}}_{2,1}^\top\vecr_{\rho_1(j)}+\boxed{(\vecs_0^\top\vecr_{\rho_1(j)}+\tilde{e}_{\rho_1(j)})}.$$
        
        \item For $i=1$ and each $j\in [Q']$, set $$y_{1,j}'\leftarrow\tilde{\vecz}_{2,1}'^\top\matG^{-1}(\vecv_{j}')+\tilde{\tilde{\vecz}}_{2,1}^\top\vecr_{j}'+\boxed{(\vecs_0^\top\vecr_{j}'+\tilde{e}_{j}')}.$$
        
        \item For $i=i^*$, set \begin{align*}
y_{i^*,j}&\leftarrow\tilde{\vecz}_{2,i^*}'^\top\matG^{-1}(\vecv_{\rho_{i^*}(j)})+\tilde{\tilde{\vecz}}_{2,i^*}^\top\vecr_{\rho_{i^*}(j)}-\boxed{(\vecs_0^\top\vecr_{\rho_{i^*}(j)}+\tilde{e}_{\rho_{i^*}(j)})},\quad\text{for all } j\in [N_{i^*}],\\
y_{i^*,j}'&\leftarrow\tilde{\vecz}_{2,i^*}'^\top\matG^{-1}(\vecv_{j}')+\tilde{\tilde{\vecz}}_{2,i^*}^\top\vecr_{j}'-\boxed{(\vecs_0^\top\vecr_{j}'+\tilde{e}_j')},\quad\text{for all } j\in [Q'].
            \end{align*}
    \item For all other pairs $(i,j)$, the components $y_{i,j},y_{i,j}'$ are computed in the same way as in $\H_{2,d,4}^\pre$. 
    \end{itemize}

    \iitem{Game $\H_{2,d,6}^\pre$} This experiment is identical to $\H_{2,d,5}^\pre$, except for how the values $y_{i,j},y_{i,j}'$ are sampled. 
    \begin{itemize}
     \item For $i=1$ and $j>d$, set \begin{align*}
y_{1,j}\leftarrow\tilde{\vecz}_{2,1}'^\top\matG^{-1}(\vecv_{\rho_1(j)})+\tilde{\tilde{\vecz}}_{2,1}^\top\vecr_{\rho_1(j)}+\boxed{\vecs_0^\top\vecr_{\rho_1(j)}}.\end{align*}
    \item For $i=1$ and each $j\in [Q']$, set
$$y_{1,j}'\leftarrow\tilde{\vecz}_{2,1}'^\top\matG^{-1}(\vecv_{j}')+\tilde{\tilde{\vecz}}_{2,1}^\top\vecr_{j}'+\boxed{\vecs_0^\top\vecr_{j}'}.$$
 \item For $i=i^*$, set \begin{align*}
y_{i^*,j}&=\tilde{\vecz}_{2,i^*}'^\top\matG^{-1}(\vecv_{\rho_{i^*}(j)})+\tilde{\tilde{\vecz}}_{2,i^*}^\top\vecr_{\rho_{i^*}(j)}+\boxed{\vecs_0^\top\vecr_{\rho_{i^*}(j)}},\quad\text{for all } j\in [N_{i^*}],\\
y_{i^*,j}'&=\tilde{\vecz}_{2,i^*}'^\top\matG^{-1}(\vecv_{j}')+\tilde{\tilde{\vecz}}_{2,i^*}^\top\vecr_{j}'+\boxed{\vecs_0^\top\vecr_{j}'},\quad\text{for all } j\in [Q'].
            \end{align*}
\item  For all other pairs $(i,j)$, the components $y_{i,j},y_{i,j}'$ are computed in the same way as in $\H_{2,d,5}^\pre$. 
    \end{itemize}

    In the following, we present the proof that each adjacent pair of these intermediate hybrid games defined above is indistinguishable for all $d\in [N_1]$. By the hybrid argument, this will imply that $\H_2^\pre \overset{c}{\approx} \H_3^\pre$.

    \begin{claim}\label{evpre20}
        We have $\H_{2,d,0}^\pre\equiv \H_{2,d,1}^\pre$, for all $d\in [N_1]$. 
    \end{claim}

    \begin{proof}
        The only difference between the two experiments lies in the way the vectors $\tilde{\vecz}_{2,1}$ and $\tilde{\vecz}_{2,i^*}$ are sampled. In $\H_{2,d,0}^\pre$, they are sampled independently and uniformly from $\Z_q^{m'}$. In $\H_{2,d,1}^\pre$, they are constructed as $$\tilde{\vecz}_{2,1}^\top\leftarrow\tilde{\tilde{\vecz}}_{2,1}+\vecs_0, \tilde{\vecz}_{2,i^*}^\top\leftarrow\tilde{\tilde{\vecz}}_{2,i^*}-\vecs_0,$$ where $\tilde{\tilde{\vecz}}_{2,1},\tilde{\tilde{\vecz}}_{2,i^*},\vecs_0\rand\Z_q^{m'}$. It follows that the joint distribution of the pair $(\tilde{\vecz}_{2,1},\tilde{\vecz}_{2,i^*})$ in experiment $\H_{2,d,1}^\pre$ is distributed identically to the pair sampled in $\H_{2,d,0}^\pre$. This completes the proof.
    \end{proof}

    \begin{claim}\label{evpre21}
        Suppose that $\chi\geq \lambda^{\omega(1)}\cdot\sqrt{\lambda}\chi_s$. We have that $\H_{2,d,1}^{\pre}\overset{s}{\approx}\H_{2,d,2}^\pre$ for all $d\in [N_1]$.
    \end{claim}

    \begin{proof}
        The only difference between $\H_{2,d,1}^\pre$ and $\H_{2,d,2}^\pre$ lies in the modification of the error terms added to certain components $y_{i,j}$ and $y_{i,j}'$. We focus first on the error terms appearing in $\vecz_{3,i}^{(1)}$, as the reasoning for $\vecz_{3,i}^{(2)}$ is analogous. Let $z_{3,i,j}^{(1)}$ and $e_{3,i,j}^{(1)}$ denote the $j$-th entries of $\vecz_{3,i}^{(1)}$ and $\vece_{3,i}^{(1)}$, respectively. Then 
        \begin{itemize}
            \item In $\H_{2,d,1}^\pre$, for the affected components, we have \begin{align*}
    z_{3,1,j}^{(1)}&=\tilde{\vecz}_{2,1}'^\top\matG^{-1}(\vecv_{\rho_1(j)})+\tilde{\vecz}_{2,1}^\top\vecr_{\rho_1(j)}+e_{3,1,j}^{(1)},\quad \text{for } j>d\\
    z_{3,i^*,j}^{(1)}&=\tilde{\vecz}_{2,i^*}'^\top\matG^{-1}(\vecv_{\rho_{i^*}(j)})+\tilde{\vecz}_{2,i^*}^\top\vecr_{\rho_{i^*}(j)}+e_{3,i^*,j}^{(1)},\quad\text{for all } j\in [N_{i^*}].
    \end{align*}
            
    \item In $\H_{2,d,2}^\pre$, the corresponding components are computed as:\begin{align*}
z_{3,1,j}^{(1)}&=\tilde{\vecz}_{2,1}'^\top\matG^{-1}(\vecv_{\rho_1(j)})+\tilde{\vecz}_{2,1}^\top\vecr_{\rho_1(j)}+e_{3,1,j}^{(1)}+\tilde{e}_{\rho_1(j)},\quad \text{for } j>d\\
    z_{3,i^*,j}^{(1)}&=\tilde{\vecz}_{2,i^*}'^\top\matG^{-1}(\vecv_{\rho_{i^*}(j)})+\tilde{\vecz}_{2,i^*}^\top\vecr_{\rho_{i^*}(j)}+e_{3,i^*,j}^{(1)}-\tilde{e}_{\rho_{i^*}(j)}\quad \text{for all } j\in [N_{i^*}].
    \end{align*} 
      \end{itemize}
    Since $\chi\geq\lambda^{\omega(1)}\cdot\sqrt{\lambda}\chi_s$, Lemma \ref{truncated} ensures that $e_{3,1,j}^{(1)}+\tilde{e}_{\rho_1(j)}\overset{s}{\approx}e_{3,1,j}^{(1)}$, and similarly for $e_{3,i^*,j}^{(1)}-\tilde{e}_{\rho_{i^*}(j)}\overset{s}{\approx}e_{3,i^*,j}^{(1)}$. 
      
    For all other $(i,j)$, the distribution of $z_{3,i,j}^{(1)}$ remains unchanged between the two experiments. Hence we can conclude that the distribution of $\vecz_{3,i}^{(1)}$ is identical in both experiments for all $i\in [\ell]$. The same argument also applies to $\vecz_{3,i}^{(2)}$,  as the additional error terms are smudged in an analogous manner. This completes the proof.
    \end{proof}

    \begin{claim}\label{evpre22}
        Suppose that $m'>6n\log q$ and $\chi'=\Omega(\sqrt{n\log q})$. Suppose the $\LWE_{n,m_1,q,\chi_s}$ assumption holds for some $m_1=\poly(Q_0)$. Then we have that $\H_{2,d,2}^\pre\overset{c}{\approx}\H_{2,d,3}^\pre$ for all $d\in [N_1]$.
    \end{claim}

    \begin{proof}
        Assume for contradiction that there exists an efficient distinguisher $\mathcal{D}$ that can distinguish between $\H_{2,d,2}^\pre$ and $\H_{2,d,3}^\pre$ with non-negligible advantage. We construct an algorithm $\mathcal{D}'$ for the $\flipLWE_{m',m_1,q,\chi',\chi_s}$ assumption. Algorithm $\mathcal{D}'$ proceeds as follows: 
        \begin{enumerate}
 \item Algorithm $\mathcal{D}'$ begins by receiving a \flipLWE challenge $(\matD,\vecdelta)$, where $\matD\in\Z_q^{m'\times (Q+Q')}$ and $\vecdelta\in \Z_q^{Q+Q'}$. Then algorithm $\mathcal{D}'$ parses $\matD$ and $\vecdelta$ as: $$\matD=[\matD_1\mid \matD_1'], \quad\vecdelta^\top=[\vecdelta_1^\top\mid\vecdelta_1'^\top],$$ where $\matD_1\in\Z_q^{m'\times Q},\matD_1'\in\Z_q^{m'\times Q'}$, and $\vecdelta_1\in\Z_q^Q,\vecdelta_1'\in\Z_q^{Q'}$. Let $\vecd_{j},\vecd_j'$ denote the $j$-th column vectors of $\matD_1,\matD_1'$, respectively. Let $\delta_{j},\delta_j'$ denote the $j$-th entries of $\vecdelta_1,\vecdelta_1'$, respectively.

    \item Algorithm $\mathcal{D}'$ simulates $\mathcal{S}_{\mathcal{A},\vecv}$ as follows:
 \begin{itemize}
     \item It samples $\vecr_{\pub_1},\vecr_{\pub_2}\rand\{0,1\}^\kappa$ as the randomness for algorithm $\mathcal{S}_{\mathcal{A},\vecv}$. 
     \item Run algorithm $\mathcal{A}(1^\lambda;\vecr_{\pub_1})$ with randomness $\vecr_{\pub_1}$, and extract from its output: a set of Type I secret-key queries $\mathcal{Q}=\{(\gid,A,\vecv)\}$ and a partial set of Type II secret-key queries $\mathcal{Q}_{\partset}'=\{(\gid', A')\}$.
     
     \item Run $\samp_{\vecv}(1^\lambda;\vecr_{\pub_2})$ with randomness $\vecr_{\pub_2}$. It outputs $\vecv_1',\ldots,\vecv_{Q'}'$.
            
    \item Finally, outputs 
    \begin{align*}
             &1^{\ell},\{1^{N_i+Q'}\}_{i\in [\ell]};\\
             &(\vecd_{\rho_1(1)},\vecv_{\rho_1(1)}),\ldots,(\vecd_{\rho_1(N_1)},\vecv_{\rho_{1}(N_1)}),(\vecd_1',\vecv_1'),\ldots,(\vecd_{Q'}',\vecv_{Q'}');\\
            &(\vecd_{\rho_2(1)},\vecv_{\rho_2(1)}),\ldots,(\vecd_{\rho_2(N_2)},\vecv_{\rho_{2}(N_2)}),(\vecd_1',\vecv_1'),\ldots,(\vecd_{Q'}',\vecv_{Q'}'); \\
            &\cdots;\\
            &(\vecd_{\rho_\ell(1)},\vecv_{\rho_\ell(1)}),\ldots,(\vecd_{\rho_\ell(N_\ell)},\vecv_{\rho_{\ell}(N_\ell)}),(\vecd_1',\vecv_1'),\ldots,(\vecd_{Q'}',\vecv_{Q'}').
            \end{align*}
    \end{itemize}
    The only difference between the simulation and the original specification of $\mathcal{S}_{\mathcal{A},\vecv}$ lies in how the vectors $\vecr_i$ and $\vecr_i'$ are generated. In the simulation, the challenger replaces these vectors with $\vecd_i,\vecd_i'$ which are the columns of the \flipLWE matrix $\matD$. However, since both $\vecd_i,\vecd_i'$ are sampled from the same distribution $D_{\Z,\chi'}^{m'}$ as the original $\vecr_i,\vecr_i'$, this substitution does not affect the validity of the simulation.
     
     \item Algorithm $\mathcal{D}'$ samples $\vecr_{\pri}\rand\{0,1\}^\kappa$ and computes 
$\vecu\leftarrow\mathcal{S}_{\mathcal{A},\vecu}(1^\lambda,(\vecr_{\pub_1},\vecr_{\pub_2});\vecr_{\pri})$.
     
\item For each $i\in [\ell]$, it samples $\matA_i,\matP_i\rand\Z_q^{n\times m},\matB_i\rand\Z_q^{n\times m'}$. It then sets $$[\matB\mid\matP]\leftarrow\left[\begin{array}{c|c}
         \matB_1 &\matP_1  \\
          \vdots&\vdots \\
          \matB_\ell&\matP_{\ell}
     \end{array}\right]\in\Z_q^{n\ell\times (m'+m)}.$$ 
     
     \item Algorithm $\mathcal{D}'$ samples $\tilde{\vecz}_{2,i}\rand\Z_q^{m'},\tilde{\vecz}_{2,i}'\rand\Z_q^{m}$ for each $i\in [\ell]$, and constructs the following components:
     \begin{itemize}
         \item For $i=1,j<d$, sample $y_{1,j},y_{1,j}'\rand\Z_q$.
         \item For $i=1,j\geq d$, compute $$y_{1,j}\leftarrow \tilde{\vecz}_{2,1}'^\top\matG^{-1}(\vecv_{\rho_1(j)})+\tilde{\vecz}_{2,1}^\top\vecd_{\rho_1(j)}+\delta_{\rho_1(j)}.$$
         \item For $i=1,j\in [Q']$, compute  $$y_{1,j}'=\tilde{\vecz}_{2,1}'^\top\matG^{-1}(\vecv_{j}')+\tilde{\vecz}_{2,1}^\top\vecd_{j}'+\delta_{j}'.$$
         \item For $i=i^*$, compute \begin{align*}
             y_{i^*,j}&=\tilde{\vecz}_{2,i^*}'^\top\matG^{-1}(\vecv_{\rho_{i^*}(j)})+\tilde{\vecz}_{2,i^*}^\top\vecd_{\rho_{i^*}(j)}+\delta_{\rho_{i^*}(j)},\quad\text{ for all } j\in [N_{i^*}],\\
y_{i^*,j}'&=\tilde{\vecz}_{2,i^*}'^\top\matG^{-1}(\vecv_{j}')+\tilde{\vecz}_{2,i^*}^\top\vecd_j'+\delta_j',\quad\text{ for all } j\in [Q'].
         \end{align*}
\item For $i\notin \{1,i^*\}$, compute $y_{i,j}\leftarrow \tilde{\vecz}_{2,i}'^\top\matG^{-1}(\vecv_{\rho_{i}(j)})+\tilde{\vecz}_{2,i}^\top\vecd_{\rho_{i}(j)}$ and $y_{i,j}'\leftarrow \tilde{\vecz}_{2,i}'^\top\matG^{-1}(\vecv_{j}')+\tilde{\vecz}_{2,i}^\top\vecd_{j}'$.
     \end{itemize}
     \item Algorithm $\mathcal{D}'$ constructs the components $\vecz_{1,i},\vecz_2,\vecz_{3,i}$ as follows:
     \begin{itemize}
         \item $\vecz_{1,i}$: For each $i\in [\ell]$, sample $\vecz_{1,i}\rand\Z_q^m$.

         \item $\vecz_2$: Sample $\vece_2\leftarrow D_{\Z,\chi}^{m'+m}$, and set $$\vecz_2^\top\leftarrow\left[\sum_{i\in [\ell]}\tilde{\vecz}_{2,i}^\top\,\Bigg|\,\sum_{i\in [\ell]}\tilde{\vecz}_{2,i}'^\top+\vecu^\top\matG\right]+\vece_2^\top.$$

          \item $\vecz_{3,i}^\top=[\vecz_{3,i}^{(1)\top}\mid \vecz_{3,i}^{(2)\top}]$: For each $i\in [\ell]$, sample $\vece_{3,i}^{(1)}\leftarrow D_{\Z,\chi}^{N_i},\vece_{3,i}^{(2)}\leftarrow
    D_{\Z,\chi}^{Q'}$, and set 
        \begin{align*}
    \vecz_{3,i}^{(1)\top}&\leftarrow [y_{i,1}\mid\cdots\mid y_{i,N_i}]+\vece_{3,i}^{(1)\top},\\
            \vecz_{3,i}^{(2)\top}&\leftarrow [y'_{i,1}\mid\cdots\mid y'_{i,Q'}]+\vece_{3,i}^{(2)\top}.
        \end{align*}
     \end{itemize} 
    
     \item Finally, algorithm $\mathcal{D}'$ sends the challenge $\{1^\lambda,(\vecr_{\pub_1},\vecr_{\pub_2}),(\matA_i,\vecz_{1,i}^\top)_{i\in [\ell]},[\matB\mid\matP],\vecz_2^\top,\{\vecz_{3,i}^\top\}_{i\in [\ell]}\}$ to distinguisher $\mathcal{D}$ and outputs whatever $\mathcal{D}$ outputs.
        \end{enumerate}
        
    By inspecting the construction of $\mathcal{D}'$, we can observe that the components $$\{1^\lambda,(\vecr_{\pub_1},\vecr_{\pub_2}),(\matA_i,\vecz_{1,i}^\top)_{i\in [\ell]},[\matB\mid\matP]\}$$ are identically distributed in both experiments $\H_{2,d,2}^\pre$ and $\H_{2,d,3}^\pre$. Therefore, it suffices to analyze the distributions of the components $\vecz_2$ and $\vecz_{3,i}$ in two experiments. We distinguish between the following two cases depending on the form of the flipped $\LWE$ challenge:
    \begin{itemize}
        \item If $\vecdelta^\top=\vecs_0^\top\matD+\vece^\top$ for some $\vecs_0\rand\Z_q^{m'},\vece\leftarrow D_{\Z,\chi_s}^{Q+Q'}$, then by the construction procedure of $\vecz_{2}$ and $\vecz_{3,i}$, the simulated distribution is identical to that in experiment $\H_{2,d,2}^\pre$.
        
        \item If $\vecdelta\rand\Z_q^{Q+Q'}$,  then the resulting distributions of $\vecz_2,\vecz_{3,i}$ are exactly the same as in experiment $\H_{2,d,3}^\pre$. Hence the simulation is exactly as in experiment $\H_{2,d,3}^\pre$. 
    \end{itemize}
    In conclusion, distinguisher $\mathcal{D}'$ constructed as above can distinguish flipped \LWE sample from uniform samples with the same advantage as $\mathcal{D}$ distinguishes between $\H_{2,d,2}^\pre$ and $\H_{2,d,3}^\pre$. However, by corollary \ref{flipped}, under the assumption $\LWE_{n,m_1,q,\chi_s}$ and given parameter constraints, the assumption $\flipLWE_{m',m_1,q,\chi',\chi_s}$ must hold. This leads to a contradiction, implying that $\H_{2,d,2}^\pre\overset{c}{\approx}\H_{2,d,3}^\pre$, which completes the proof.
    \end{proof}

    \begin{claim}\label{evpre23}
     We have that $\H_{2,d,3}^\pre\equiv \H_{2,d,4}^\pre$ for all $d\in [N_1]$.
    \end{claim}

    \begin{proof}
        The only difference between experiments $\H_{2,d,3}^\pre$ and $\H_{2,d,4}^\pre$ lies in the distribution of $y_{1,d}$. First, we argue that $\rho_1(d)\neq \rho_{i^*}(j)$ for all $j\in [N_{i^*}]$. This follows from the construction of $i^*$. Recall $i^*$ was defined as the smallest index such that $\aid_{i^*}\notin A_{\rho_{1}(d)}$. Since $\rho_{i^*}(j)$ indexes secret-key queries that contain $\aid_{i^*}$, it follows that $A_{\rho_{i^*}(j)}\neq A_{\rho_1(d)}$ for all $j\in [N_{i^*}]$. In particular, $\rho_{i^*}(j)\neq \rho_1(d)$ for all $j\in [N_{i^*}]$.  This implies that the only component that involves $\delta_{\rho_1(d)}$ in experiment $\H_{2,d,3}^\pre$ is $y_{1,d}$. Since $\delta_{\rho_1(d)}$ is sampled from the uniform distribution and is independent of other elements, the replacement of $y_{1,d}$ with a uniformly random value in $\H_{2,d,4}^\pre$ does not affect the overall distribution of the experiment, completing the proof.
    \end{proof}

    \begin{claim}\label{evpre24}
        Suppose that $m'>6n\log q$ and $\chi'=\Omega(\sqrt{n\log q})$. Suppose the $\LWE_{n,m_1,q,\chi_s}$ assumption holds for some $m_1=\poly(Q_0)$. Then we have that $\H_{2,d,4}^\pre\equiv \H_{2,d,5}^\pre$ for all $d\in [N_1]$.
    \end{claim}

    \begin{proof}\label{evpre25}
        The proof follows essentially the same argument as that of Claim \ref{evpre22}.
    \end{proof}

    \begin{claim}
        Suppose that $\chi\geq \lambda^{\omega(1)}\cdot\sqrt{\lambda}\chi_s$. We have that $\H_{2,d,5}^\pre\overset{s}{\approx}\H_{2,d,6}^\pre$ for all $d\in [N_1]$.
    \end{claim}

    \begin{proof}
        The proof follows essentially the same argument as that Claim \ref{evpre21}.
    \end{proof}

    \begin{claim}\label{pre6}
        For all $d\in [N_1]$, we have $\H_{2,d-1,6}^\pre\equiv \H_{2,d,0}^\pre$.
    \end{claim}

    \begin{proof}
        The proof follows essentially the same argument as that of Claim \ref{evpre20}.
    \end{proof}

    \noindent\emph{Proof of Lemma \ref{evpre2} (Continued)}. 
   By Claims \ref{evpre20} through \ref{pre6}, we obtain that for all $d \in [N_1]$, $\H_{2,d-1,0}^\pre\overset{c}{\approx}\H_{2,d,0}^\pre$ for all $d\in [N_1]$. Applying a standard hybrid argument over the sequence of hybrids, $\H_{2,1,0}^\pre\overset{c}{\approx}\H_{2,N_1,0}^\pre$.
    By definition of these hybrid experiments above, $\H_{2,1,0}^\pre\equiv\H_{2}^\pre$ and $\H_{2,N_1,0}^\pre\equiv \H_3^\pre$, which together yields $\H_2^\pre\overset{c}{\approx}\H_3^\pre$, as claimed.
\end{proof}

\begin{lemma}\label{evpre3}
    Suppose that $\chi\geq \lambda^{\omega(1)}\cdot\sqrt{\lambda}\chi_s$ and $\samp=(\samp_{\vecv},\samp_{\vecu})$ produces pseudorandom noisy inner product with noise parameter $\chi_s$. We have that $\H_{3}^\pre\overset{c}{\approx}\H_4^\pre$.
\end{lemma}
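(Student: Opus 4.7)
The plan is to construct a short chain of hybrids between $\H_3^\pre$ and $\H_4^\pre$ that isolates where the private-coin vector $\vecu$ enters the view, so that the pseudorandomness of $\samp$ can be invoked in a single clean step. Inspecting $\H_3^\pre$, one sees that $\vecu$ appears only as the additive term $\vecu^\top\matG$ inside the right-hand block of $\vecz_2$: the vectors $\vecz_{1,i}$ are uniform, $\vecz_{3,1}^{(1)}$ is uniform by definition of $\H_3^\pre$, and $\vecz_{3,i}^{(1)},\vecz_{3,i}^{(2)}$ for $i>1$ are computed from the independent uniform vectors $\tilde{\vecz}_{2,i},\tilde{\vecz}_{2,i}'$ and hence involve no $\vecu$. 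I would first introduce an intermediate game $\H_{3a}^\pre$ by the invertible change of variables $\hat{\vecz}_{2,1}':=\tilde{\vecz}_{2,1}'+\vecu^\top\matG$; since $\tilde{\vecz}_{2,1}'$ is uniform on $\Z_q^m$, so is $\hat{\vecz}_{2,1}'$, and therefore $\H_3^\pre\equiv\H_{3a}^\pre$. Under this substitution the right-hand block of $\vecz_2$ becomes $\hat{\vecz}_{2,1}'^\top+\sum_{i>1}\tilde{\vecz}_{2,i}'^\top$ plus noise and carries no explicit $\vecu$, whereas using the identity $\tilde{\vecz}_{2,1}'^\top\matG^{-1}(\vecv_j')=\hat{\vecz}_{2,1}'^\top\matG^{-1}(\vecv_j')-\vecu^\top\vecv_j'$, each entry of $\vecz_{3,1}^{(2)}$ acquires the additive summand $-\vecu^\top\vecv_j'$. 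After this step, the only $\vecu$-dependence in the whole view is through the $Q'$ inner products $\{\vecu^\top\vecv_j'\}_{j\in[Q']}$.

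I would then pass to $\H_{3b}^\pre$ by adding an independent smudging noise $\tilde{e}_j\leftarrow D_{\Z,\chi_s}$ into the $j$-th coordinate of $\vecz_{3,1}^{(2)}$, so that each such coordinate now contains the noisy inner product $-(\vecu^\top\vecv_j'+\tilde{e}_j)$ plus terms independent of $\vecu$. Because the ambient Gaussian noise $\vece_{3,1}^{(2)}[j]\leftarrow D_{\Z,\chi}$ satisfies $\chi\geq\lambda^{\omega(1)}\cdot\sqrt{\lambda}\chi_s$, applying Lemmas \ref{truncated} and \ref{smudge} coordinate-wise yields $\H_{3a}^\pre\overset{s}{\approx}\H_{3b}^\pre$. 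The hypothesis that $\samp$ produces pseudorandom noisy inner products with noise parameter $\chi_s$ then lets me pass to $\H_{3c}^\pre$ by replacing each $\vecu^\top\vecv_j'+\tilde{e}_j$ with an independent uniform $\delta_j\rand\Z_q$. The reduction is direct: given a pseudorandomness challenge $(\vecr_{\pub_2},\{z_j\}_j)$, the reducer picks $\vecr_{\pub_1}$, runs $\mathcal{A}(1^\lambda;\vecr_{\pub_1})$ and $\samp_{\vecv}(1^\lambda;\vecr_{\pub_2})$ to extract the queries and the $\vecv_j'$'s, samples every remaining component of the experiment (all of which are independent of $\vecu$ after Step~1), and inserts $z_j$ in place of $\vecu^\top\vecv_j'+\tilde{e}_j$; crucially, the reducer never needs $\vecu$ itself, which the private-coin sampler $\samp_{\vecu}$ does not provide.

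In $\H_{3c}^\pre$, each $\vecz_{3,1}^{(2)}[j]$ is shifted by an independent uniform $\delta_j$ and is therefore uniform on $\Z_q$, while $\vecz_2$ has no $\vecu$-contribution by construction; renaming $\hat{\vecz}_{2,1}'$ back to $\tilde{\vecz}_{2,1}'$ yields exactly the distribution of $\H_4^\pre$, so $\H_{3c}^\pre\equiv\H_4^\pre$. Chaining the four equivalences and indistinguishabilities gives $\H_3^\pre\overset{c}{\approx}\H_4^\pre$, as required. The main obstacle is precisely why Step~1 must come first: without re-parameterization, the dependence of $\vecz_{3,1}^{(2)}$ on $\tilde{\vecz}_{2,1}'$ and the dependence of $\vecz_2$ on $\tilde{\vecz}_{2,1}'+\vecu^\top\matG$ are coupled, so a reducer attempting to apply pseudorandomness directly would need to know $\vecu$ in order to simulate both pieces consistently. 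The substitution $\hat{\vecz}_{2,1}'=\tilde{\vecz}_{2,1}'+\vecu^\top\matG$ decouples them and funnels all $\vecu$-dependence through exactly the $Q'$ inner products that the pseudorandomness assumption is designed to handle.
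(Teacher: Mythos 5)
Your proposal is correct and follows essentially the same route as the paper: the change of variables $\hat{\vecz}_{2,1}'=\tilde{\vecz}_{2,1}'+\vecu^\top\matG$ is exactly the paper's hybrid $\H_{3,0}^\pre$ (there written as $\tilde{\vecz}_{2,1}'\leftarrow\tilde{\tilde{\vecz}}_{2,1}'-\vecu^\top\matG$), and your smudging step, pseudorandomness reduction (which indeed never needs $\vecu$ itself), and final uniformity argument match the paper's $\H_{3,1}^\pre$, $\H_{3,2}^\pre$, and $\H_{3,3}^\pre$ respectively.
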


\begin{proof}
    We prove the lemma by defining a sequence of hybrid experiments between $\H_{3}^\pre$ and $\H_4^\pre$.
    
    \iitem{Game $\H_{3,0}^\pre$}: The experiment is identical to $\H_3^\pre$, except for how it samples $\tilde{\vecz}_{2,1}'$. Specifically, the challenger first samples $\tilde{\tilde{\vecz}}_{2,1}'\rand\Z_q^m$, and sets $\boxed{\tilde{\vecz}_{2,1}'\leftarrow \tilde{\tilde{\vecz}}_{2,1}'-\vecu^\top\matG}$. This modification affects the computation of the following components:
    \begin{itemize}
        \item $\vecz_{2}$: We have $$\boxed{\vecz_2^\top\leftarrow\left[\sum_{i\in [\ell]}\tilde{\vecz}_{2,i}^\top\,\Bigg|\,\sum_{i\in [\ell]}\tilde{\vecz}_{2,i}'^\top+\vecu^\top\matG\right]=\left[\sum_{i\in [\ell]}\tilde{\vecz}_{2,i}^\top\,\Bigg|\,\ \tilde{\tilde{\vecz}}_{2,1}'^\top+\sum_{2\leq i\leq\ell}\tilde{\vecz}_{2,i}'^\top\right]}.$$
        \item $y_{1,j}'$: We have $$y_{1,j}'=\tilde{\vecz}_{2,1}'^\top\matG^{-1}(\vecv_j')+\tilde{\vecz}_{2,1}^\top\vecr_{j}'=\tilde{\tilde{\vecz}}_{2,1}'^\top\matG^{-1}(\vecv_j')+\tilde{\vecz}_{2,1}^\top\vecr_{j}'-\vecu^\top\vecv_j'$$ for each $j\in [Q']$. 
    \end{itemize}

    \iitem{Game $\H_{3,1}^\pre$} The experiment is identical to $\H_{3,0}^\pre$, except for the procedure how it samples $y_{1,j}'$. Specifically, the term $\vecu^\top\vecv_{j}$ is perturbed by an additional noise term.
    \begin{itemize}
        \item For each $j\in [Q']$, the challenger first samples $e_j\rand D_{\Z,\chi_s}$, and sets $$y_{1,j}'\leftarrow\tilde{\tilde{\vecz}}_{2,1}'^\top\matG^{-1}(\vecv_j')+\tilde{\vecz}_{2,1}^\top\vecr_{j}'-(\vecu^\top\vecv_j'+\boxed{e_j}).$$
    \end{itemize}
    \iitem {Game $\H_{3,2}^\pre$} The experiment is identical to $\H_{3,1}^\pre$ except for the procedure how it samples $y_{1,j}'$.
    \begin{itemize}
        \item For each $j\in [Q']$, the challenger first samples $\omega_j\rand\Z_q$, and sets $$y_{1,j}'\leftarrow\tilde{\tilde{\vecz}}_{2,1}'^\top\matG^{-1}(\vecv_j')+\tilde{\vecz}_{2,1}^\top\vecr_{j}'+\boxed{\omega_j}.$$
    \end{itemize}
    \iitem{Game $\H_{3,3}^\pre$} The experiment is identical to $\H_{3,2}^\pre$ except it samples $\boxed{y_{1,j}'\rand\Z_q}$ for each $j\in [Q']$.
\\

    From the constructions above, we first observe that $\H_{3,0}^\pre \equiv \H_{3}^\pre$. The only difference between these two experiments lies in the way $\tilde{\vecz}_{2,1}'$ is sampled, which is set as $\tilde{\tilde{\vecz}}_{2,1}' - \vecu^\top\matG$ in $\H_{3,0}^\pre$. However, since $\tilde{\tilde{\vecz}}_{2,1}'$ is sampled uniformly from $\Z_q^m$, it follows that $\tilde{\vecz}_{2,1}'$ remains uniformly distributed over $\Z_q^m$ and independent of all other components. Therefore, the joint distribution of all variables in $\H_{3,0}^\pre$ remains identical to that in $\H_{3}^\pre$. 
    
    Similarly, we have that $\H_{3,3}^\pre \equiv \H_{4}^\pre$. This equivalence follows directly from the fact that the only symbolic change between the two experiments is the replacement of $\tilde{\tilde{\vecz}}_{2,1}'$ with $\tilde{\vecz}_{2,1}'$. Since both variables are sampled uniformly from $\Z_q^m$, this substitution does not affect the distribution of any component in the experiment. Since all values $y_{1,j}'$ are sampled uniformly at random in experiments $\H_{3,3}^\pre$ and $\H_4^\pre$, the overall distributions of the two experiments are identical.

    In the following, we present the proof that each adjacent pair of these intermediate hybrids $\H_{3,i}^\pre$ and $\H_{3,i+1}^\pre$ are indistinguishable for $i=0,1,2$. By applying a standard hybrid argument, these results yield that $\H_{3}^\pre\overset{c}{\approx}\H_4^\pre$, as desired. 

    \begin{claim}\label{evpre30}
        Suppose that $\chi\geq\lambda^{\omega(1)}\cdot\chi_s$. Then we have that $\H_{3,0}^\pre\overset{s}{\approx}\H_{3,1}^\pre$.
    \end{claim}

    \begin{proof}
        The proof follows essentially the same noise smudging argument as that of Claim \ref{evpre21}.
    \end{proof}
    
    \begin{claim}\label{evpre31}
        Suppose that the sampler $\samp=(\samp_{\vecv},\samp_{\vecu})$ produces pseudorandom noisy inner products with noise parameter $\chi_s$. Then we have $\H_{3,1}^\pre\overset{c}{\approx}\H_{3,2}^\pre$.
    \end{claim}

    \begin{proof}
        Assume for contradiction that there exists an efficient distinguisher $\mathcal{D}$ that can distinguish $\H_{3,1}^\pre$ and $\H_{3,2}^\pre$ with non-negligible advantage. We use $\mathcal{D}$ to construct an algorithm $\mathcal{D}'$ that breaks the pseudorandomness of noisy inner products produced by the sampler $\samp$.
        Algorithm $\mathcal{D}'$ proceeds as follows:
        \begin{enumerate}
            \item Algorithm $\mathcal{D}'$ begins by receiving a challenge instance $(\vecr_{\pub},\{\beta_j\}_{j\in [Q']})$.

             \item Algorithm $\mathcal{D}'$ simulates the sampler $\mathcal{S}_{\mathcal{A},\vecv}$ as follows:
 \begin{itemize}
     \item It samples $\vecr_{\pub_1}\rand\{0,1\}^\kappa$ as part of the randomness for algorithm $\mathcal{S}_{\mathcal{A},\vecv}$. 
     \item It runs algorithm $\mathcal{A}(1^\lambda;\vecr_{\pub_1})$ with randomness $\vecr_{\pub_1}$, and extracts from its output: a set of Type I secret-key queries $\mathcal{Q}=\{(\gid,A,\vecv)\}$ and a partial set of Type II secret-key queries $\mathcal{Q}_{\partset}'=\{(\gid', A')\}$.

     \item For each $i\in [Q]$, sample $\vecr_{i}\leftarrow D_{\Z,\chi'}^{m'}$, and for each $j\in [Q']$, sample $\vecr_{j}'\leftarrow D_{\Z,\chi'}^{m'}$.
     
    \item Run $\samp_{\vecv}(1^\lambda;\vecr_{\pub})$ with randomness $\vecr_{\pub}$. It outputs $\vecv_1',\ldots,\vecv_{Q'}'$.
            
    \item Finally, outputs 
    \begin{align*}
             &1^{\ell},\{1^{N_i+Q'}\}_{i\in [\ell]};\\
             &(\vecr_{\rho_1(1)},\vecv_{\rho_1(1)}),\ldots,(\vecr_{\rho_1(N_1)},\vecv_{\rho_{1}(N_1)}),(\vecr_1',\vecv_1'),\ldots,(\vecr_{Q'}',\vecv_{Q'}');\\
            &(\vecr_{\rho_2(1)},\vecv_{\rho_2(1)}),\ldots,(\vecr_{\rho_2(N_2)},\vecv_{\rho_{2}(N_2)}),(\vecr_1',\vecv_1'),\ldots,(\vecr_{Q'}',\vecv_{Q'}'); \\
            &\cdots;\\
            &(\vecr_{\rho_\ell(1)},\vecv_{\rho_\ell(1)}),\ldots,(\vecr_{\rho_\ell(N_\ell)},\vecv_{\rho_{\ell}(N_\ell)}),(\vecr_1',\vecv_1'),\ldots,(\vecr_{Q'}',\vecv_{Q'}').
            \end{align*}
    \end{itemize}
    The only difference between this simulation and the original specification of $\mathcal{S}_{\mathcal{A},\vecv}$ lies in how the randomness $\vecr_{\pub}$ is generated. In the simulation, algorithm $\mathcal{D}'$ uses the challenge-provided $\vecr_{\pub}$ instead of a freshly sampled $\vecr_{\pub_2}$. However, since $\vecr_{\pub}$ is uniformly distributed, this substitution does not alter the distribution of any component. Thus, the overall output remains consistent with that of the original specification of $\mathcal{S}_{\mathcal{A},\vecv}$.

    \item Algorithm $\mathcal{D}'$ samples $\matA_i,\matP_i\rand\Z_q^{n\times m},\matB_i\rand\Z_q^{n\times m'}$  for all $i\in [\ell]$. It then sets $$[\matB\mid\matP]\leftarrow\left[\begin{array}{c|c}
         \matB_1 &\matP_1  \\
          \vdots&\vdots \\
          \matB_\ell&\matP_{\ell}
     \end{array}\right]\in\Z_q^{n\ell\times (m'+m)}.$$ 
    
    \item Then algorithm $\mathcal{D}'$ constructs the components as follows:
        \begin{itemize}
         \item $\vecz_{1,i}$: For each $i\in [\ell]$, sample $\vecz_{1,i}\rand\Z_q^m$.
        
        \item $\vecz_2$: For each $i\in [\ell]$, sample $\tilde{\vecz}_{2,i}\rand\Z_q^{m'}$. Then for each $2\leq i\leq \ell$, sample $\tilde{\vecz}_{2,i}'\rand\Z_q^m$, and sample $\tilde{\tilde{\vecz}}_{2,1}'\rand\Z_q^m$. Set $$\vecz_2^\top\leftarrow\left[\sum_{i\in [\ell]}\tilde{\vecz}_{2,i}^\top\,\Bigg|\,\ \tilde{\tilde{\vecz}}_{2,1}'^\top+\sum_{2\leq i\leq\ell}\tilde{\vecz}_{2,i}'^\top\right].$$
        
        \item $\vecz_{3,i}^{(1)}$: It is computed in the same way as in experiment $\H_{3,1}^\pre$.
        
        \item $\vecz_{3,i}^{(2)}$: 
        \begin{itemize}
            \item For $i=1$, compute $$y_{1,j}'\leftarrow\tilde{\tilde{\vecz}}_{2,1}'^\top\matG^{-1}(\vecv_j')+\tilde{\vecz}_{2,i}^\top\vecr_{j}'+\beta_j$$ for each $j\in [Q']$. 
            \item For $i>1$, the sampling procedure for $y_{i,j}'$ is the same as in experiment $\H_{3,1}^\pre$ and $\H_{3,2}^\pre$.
            \item  Finally, it samples $\vece_{3,i}^{(2)}\leftarrow D_{\Z,\chi}^{Q'}$ for each $i\in [\ell]$, and sets $$\vecz_{3,i}^{(2)\top}\leftarrow [y_{i,1}'\mid\cdots\mid y_{i,Q'}']+\vece_{3,i}^{(2)\top}.$$
        \end{itemize}
    \end{itemize}
        
    \item The algorithm $\mathcal{D}'$ sends the challenge $\{1^\lambda,(\vecr_{\pub_1},\vecr_{\pub}), (\matA_i,\vecz_{1,i}^\top)_{i\in [\ell]},[\matB\mid\matP],\vecz_2^\top,\{\vecz_{3,i}^\top\}_{i\in [\ell]}\}$ to the distinguisher $\mathcal{D}$ and outputs whatever $\mathcal{D}$ outputs.
        \end{enumerate}
     From the definition of the algorithm $\mathcal{D}$, it is clear that the components $$\{1^\lambda,(\vecr_{\pub_1},\vecr_{\pub}), (\matA_i,\vecz_{1,i}^\top)_{i\in [\ell]},[\matB\mid \matP],\vecz_{2}^\top,\{\vecz_{3,i}^{(1)\top}\}_{i\in[\ell]},\{\vecz_{3,i}^{(2)}\}_{2\leq i\leq\ell}\}$$ are exactly distributed as in $\H_{3,1}^\pre$ and $\H_{3,2}^\pre$. It remains to consider the distribution of $\vecz_{3,1}^{(2)}$ in two experiments. 
    \begin{itemize}
        \item If $\beta_j=\vecu^\top\vecv_j'+e_j$ for some $\vecu\leftarrow \mathcal{S}_{\vecu}(1^\lambda,(\vecr_{\pub_1},\vecr_{\pub});\vecr_{\pri}),e_j\leftarrow D_{\Z,\chi_s}$, then by the construction procedure, each $y_{1,j}'$ is computed exactly as in $\H_{3,1}^\pre$, and thus the resulting distribution of  $\vecz_{3,1}^{(2)}$ matches that in $\H_{3,1}^\pre$.
        
        \item If $\beta_j\rand\Z_q$, then the resulting distribution of  $\vecz_{3,1}^{(2)}$ is identical to that in experiment $\H_{3,2}^\pre$.
    \end{itemize}
    Therefore, algorithm $\mathcal{D}'$ breaks the pseudorandomness of the noisy inner product produced by $\samp$ with the same advantage as $\mathcal{D}$ distinguishes between $\H_{3,1}^\pre$ and $\H_{3,2}^\pre$. This leads to a contradiction with the assumption that $\samp$ produces pseudorandom noisy inner products. Hence $\H_{3,1}^\pre\overset{c}{\approx}\H_{3,2}^\pre$, as claimed.
    \end{proof}

    \begin{claim}\label{evpre32}
       We have that $\H_{3,2}^\pre\equiv \H_{3,3}^\pre$. 
    \end{claim}

    \begin{proof}
        The claim follows directly from the observation that in experiment $\H_{3,2}^\pre$, the only component depending on $\omega_j$ is $y_{1,j}'$. Since $\omega_j$ is sampled uniformly and independent of all other components, it follows that $y_{1,j}'$ is also uniform and independent of all other components. This matches precisely the distribution of $y_{1,j}'$ in $\H_{3,3}^\pre$.
    \end{proof}

     \noindent\emph{Proof of Lemma \ref{evpre3} (Continued)}. 
    By applying a sequence of hybrid arguments from Claim \ref{evpre30} to Claim \ref{evpre32}, we obtain that $\H_{3,0}^\pre\overset{c}{\approx}\H_{3,3}^\pre$. Since we have already established that $\H_{3,0}^\pre\equiv\H_3^\pre$ and $\H_{3,3}^\pre\equiv \H_4^\pre$, which completes the argument that $\H_3^\pre\overset{c}{\approx}\H_4^\pre$.
\end{proof}

\begin{lemma}\label{evhyb4}
     Suppose $m'>6n\log q$ and $\chi'=\Omega(\sqrt{n\log q})$. Let $\chi_s$ be an error parameter such that $\chi\geq \lambda^{\omega(1)}\cdot\sqrt{\lambda}\chi_s$ and suppose that the $\LWE_{n,m_1,q,\chi_s}$ assumption holds for $m_1=\poly(Q_0)$, where $Q_0$ is the upper bound on the total number of secret-key queries (including those in the partial set $\mathcal{Q}_{\partset}'$) submitted by the adversary. Then we have that $\H_{4}^\pre\overset{c}{\approx}\H_5^\pre$.
\end{lemma}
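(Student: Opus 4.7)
The plan is to iterate the hybrid strategy of Lemma~\ref{evpre2} over the remaining challenge-authority indices $i=2,3,\dots,\ell$, now replacing the structured components $\vecz_{3,i}^{(1)}$ \emph{and} $\vecz_{3,i}^{(2)}$ with uniform vectors. Crucially, since the term $\vecu^\top\matG$ has already been stripped from $\vecz_2$ in $\H_4^\pre$, for every $i\geq 2$ the quantities $y_{i,j}$ and $y_{i,j}'$ are linear functions of the uniform vectors $\tilde{\vecz}_{2,i},\tilde{\vecz}_{2,i}'$ and the publicly sampled short vectors $\vecr,\matG^{-1}(\vecv)$, with no $\vecu$-dependence. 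Consequently only \LWE-based tools---the smudging lemma (Lemma~\ref{smudge}) together with flipped \LWE via Theorem~\ref{flipped}---are required, and the pseudorandom noisy inner-product assumption on $\samp$ is not needed.

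Concretely, I would introduce intermediate hybrids $\H_{4,d}^\pre$ for $d\in\{1,\dots,\ell\}$ that agree with $\H_4^\pre$ except that, for every $i\leq d$, the vector $\vecz_{3,i}^\top=[\vecz_{3,i}^{(1)\top}\mid\vecz_{3,i}^{(2)\top}]$ is sampled uniformly from $\Z_q^{N_i+Q'}$. By construction $\H_{4,1}^\pre\equiv\H_4^\pre$, and $\H_{4,\ell}^\pre\equiv\H_5^\pre$: once no $\vecz_{3,i}$ exposes information about individual $\tilde{\vecz}_{2,i},\tilde{\vecz}_{2,i}'$, the aggregate $\vecz_2^\top=[\sum_i\tilde{\vecz}_{2,i}^\top\mid\sum_i\tilde{\vecz}_{2,i}'^\top]+\vece_2^\top$ is marginally uniform and jointly independent of everything else. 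For each $d\in\{2,\dots,\ell\}$, I would establish $\H_{4,d-1}^\pre\overset{c}{\approx}\H_{4,d}^\pre$ via a double hybrid exactly mirroring Claims~\ref{evpre20}--\ref{pre6}: cycle through $j\in[N_d]$; at each step rewrite $\tilde{\vecz}_{2,d}$ as $\tilde{\tilde{\vecz}}_{2,d}+\vecs_0$ and compensate on an auxiliary index $\tilde{\vecz}_{2,i^*}$ as $\tilde{\tilde{\vecz}}_{2,i^*}-\vecs_0$, keeping $\vecz_2$ distributionally intact while routing a fresh uniform $\vecs_0$ into the currently active $y_{d,j},y_{d,j}'$ terms; then invoke flipped \LWE on the short-Gaussian matrix whose columns consist of the relevant $\vecr_{\rho_d(\cdot)},\vecr_{\rho_{i^*}(\cdot)},\vecr'_{\cdot}$ to replace $\vecs_0^\top\matD+\vece$ with a uniform vector; conclude with a smudging step that absorbs the small extra noise into $\vece_{3,d}^{(1)},\vece_{3,d}^{(2)}$. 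The partner index $i^*\in[\ell]\setminus\{d\}$ is selected so that $\aid_{i^*}^*\notin A_{\rho_d(j)}$; such an $i^*$ always exists because the Type~I admissibility condition forces $(A_{\rho_d(j)}\cup\mathcal{C})\cap A^*\subsetneqq A^*$.

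The main obstacle is the bookkeeping for the partner index $i^*$ as $d$ varies across the outer loop. When $i^*<d$, the coordinate $\vecz_{3,i^*}$ is already uniform in the current hybrid, so the compensating $\pm\vecs_0^\top\vecr$ contributions on row $i^*$ are harmlessly absorbed and require no further argument. When $i^*>d$, the Type~I disjointness $\rho_d(j)\neq\rho_{i^*}(k)$---which follows from $\aid_{i^*}^*\notin A_{\rho_d(j)}$ while $\aid_{i^*}^*\in A_{\rho_{i^*}(k)}$, exactly as in Claim~\ref{evpre23}---ensures that the correction terms do not collide with any active coordinate of the still-structured $\vecz_{3,i^*}$. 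The Type~II columns $\vecr_j'$ are shared across all rows, but the corresponding $y_{i^*,j}'$ entries either belong to an already-uniform row or will be randomized at a later step of the outer hybrid, so no persistent consistency constraint arises. Granting these bookkeeping details, the parameter requirements are identical to those of Lemma~\ref{evpre2}, and $\H_4^\pre\overset{c}{\approx}\H_5^\pre$ follows from a standard hybrid argument over $d\in\{2,\dots,\ell\}$.
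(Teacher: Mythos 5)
Your high-level plan---an outer hybrid over the remaining rows $d=2,\dots,\ell$, relying only on smudging and flipped $\LWE$ rather than on the pseudorandomness of $\samp$---matches the paper's. But the per-row mechanism you propose does not work. You import the $\pm\vecs_0$ compensation trick from Lemma~\ref{evpre2}, writing $\tilde{\vecz}_{2,d}=\tilde{\tilde{\vecz}}_{2,d}+\vecs_0$ and $\tilde{\vecz}_{2,i^*}=\tilde{\tilde{\vecz}}_{2,i^*}-\vecs_0$ to keep $\vecz_2$ intact, and then want flipped $\LWE$ on $\vecs_0$ to randomize \emph{all} of $y_{d,j}$ and $y_{d,j}'$. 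The problem is that the injected quantities $\pm(\vecs_0^\top\vecr_j'+\tilde e_j')$ appear in \emph{both} $y_{d,j}'$ and $y_{i^*,j}'$ (the Type~II columns $\vecr_j'$ are shared by every row), and likewise $\pm(\vecs_0^\top\vecr_{\rho_d(k)}+\tilde e)$ can appear in both $y_{d,k}$ and $y_{i^*,j}$ whenever a Type~I query contains both $\aid_d^*$ and $\aid_{i^*}^*$. After the flipped-$\LWE$ step each such pair becomes a single uniform $\delta$ occurring in two coordinates with opposite signs, so $y_{d,j}'+y_{i^*,j}'$ remains deterministically equal to the sum of the two structured values: the pair is perfectly correlated, and you have not reached a hybrid in which row $d$ is uniform \emph{and independent} while row $i^*$ is still structured. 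This is precisely why Lemma~\ref{evpre2} randomizes only the single coordinate $y_{1,d}$ per inner iteration (the one whose index $\rho_1(d)$ provably does not occur in row $i^*$, Claim~\ref{evpre23}) and then \emph{undoes} every other correction in $\H_{2,d,5}^\pre,\H_{2,d,6}^\pre$; the Type~II entries are never made uniform by that trick. Your remark that the residual constraints ``will be randomized at a later step of the outer hybrid'' does not repair this: each intermediate hybrid must itself equal the claimed distribution, and deferring the correlation to a later step leaves the current step false.

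The missing idea is the one the paper uses to make the compensation trick unnecessary here: in $\H_4^\pre$ the entire first row ($\vecz_{3,1}^{(1)}$ and $\vecz_{3,1}^{(2)}$) is already uniform, so $\tilde{\vecz}_{2,1},\tilde{\vecz}_{2,1}'$ occur \emph{only} inside $\vecz_2$ and act as a one-time pad; hence one may replace $\vecz_2$ by a fresh uniform vector at the very outset ($\H_{4,2,0}^\pre\equiv\H_4^\pre$). Once $\vecz_2$ is decoupled, each $\tilde{\vecz}_{2,d}$ with $d\ge 2$ appears only in row $d$, so it can itself serve as the flipped-$\LWE$ secret against the Gaussian matrix whose columns are $\{\vecr_{\rho_d(j)}\}_{j\in[N_d]}\cup\{\vecr_j'\}_{j\in[Q']}$; this randomizes the whole row---Type~II entries included---in one shot, with no partner index, no inner loop over $j$, and no leftover correlations. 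With that substitution your outer hybrid structure and parameter accounting go through.
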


\begin{proof}
     We prove this lemma by defining a sequence of intermediate hybrid experiments between $\H_4^{\pre}$ and $\H_{5}^\pre$. For each $1<d\leq\ell+1$, we introduce the following sequence of games:
     
     \iitem{Game $\H_{4,d,0}^\pre$} The experiment is identical to $\H_4^\pre$, except for how it samples $\vecz_2$ and $y_{i,j},y_{i,j}'$. Specifically, the challenger first samples $\boxed{\vecz_2\rand\Z_q^{m'+m}}$, and then sets $y_{i,j}$ and $y_{i,j}'$ as follows:
     \begin{itemize}
         \item For each $i<d$, sample $\boxed{y_{i,j}\rand \Z_q}$ for each $j\in [N_i]$ and $\boxed{y_{i,j}'\rand\Z_q}$ for each $j\in [Q']$.
         \item For each $i\geq d$, set \begin{align*}
             y_{i,j}&\leftarrow \tilde{\vecz}_{2,i}'^\top\matG^{-1}(\vecv_{\rho_i(j)})+\tilde{\vecz}_{2,i}^\top\vecr_{\rho_{i}(j)},\quad \text{for all } j\in [N_i],\\
             y_{i,j}'&\leftarrow \tilde{\vecz}_{2,i}'^\top\matG^{-1}(\vecv_{j}')+\tilde{\vecz}_{2,i}^\top\vecr_{j}',\quad \text{for all } j\in [Q'].
         \end{align*}
     \end{itemize}

     \iitem{Game $\H_{4,d,1}^\pre$} The experiment is identical to $\H_{4,d,0}^\pre$, except for how it samples $y_{d,j},y_{d,j}'$. The challenger first samples $e_{i}\leftarrow D_{\Z,\chi_s}$ for each $i\in [Q]$ and $e_j'\leftarrow D_{\Z,\chi_s}$ for each $j\in [Q']$, then sets $y_{i,j}$ and $y_{i,j}'$ as follows:
     \begin{itemize}
         \item For each $i<d$, sample $y_{i,j}\rand \Z_q$ for each $j\in [N_i]$ and $y_{i,j}'\rand\Z_q$ for each $j\in [Q']$ as in $\H_{4,d,0}^\pre$.
         
         \item For $i=d$, set \begin{align*}
             y_{d,j}&\leftarrow \tilde{\vecz}_{2,d}'^\top\matG^{-1}(\vecv_{\rho_d(j)})+(\tilde{\vecz}_{2,d}^\top\vecr_{\rho_{d}(j)}+\boxed{e_{\rho_d(j)}}),\quad \text{for all } j\in [N_d],\\
         y_{d,j}'&\leftarrow \tilde{\vecz}_{2,d}'^\top\matG^{-1}(\vecv_{j}')+(\tilde{\vecz}_{2,d}^\top\vecr_{j}'+\boxed{e_j'}),\quad \text{for all } j\in [Q'].
         \end{align*}
         
         \item For each $i> d$, compute $y_{i,j},y_{i,j}'$ in the same way as in $\H_{4,d,0}^\pre$.
        \end{itemize}
        
         \iitem{Game $\H_{4,d,2}^\pre$} The experiment is identical to $\H_{4,d,1}^\pre$ except that it replaces the error terms in $y_{d,j},y_{d,j}'$ with uniformly random values. The challenger first samples $
         \delta_i\rand\Z_q$ for each $i\in [Q]$, and $\delta_j'\rand\Z_q$ for each $j\in [Q']$. Then it sets $y_{d,j}$ and $y_{d,j}'$ as follows:
         \begin{align*}
                 y_{d,j}&\leftarrow \tilde{\vecz}_{2,d}'^\top\matG^{-1}(\vecv_{\rho_d(j)})+\boxed{\delta_{\rho_d(j)}},\quad \text{for all } j\in [N_d],\\
                 y_{d,j}'&\leftarrow \tilde{\vecz}_{2,d}'^\top\matG^{-1}(\vecv_{j}')+\boxed{\delta_{j}'},\quad \text{for all } j\in [Q'].
             \end{align*}
             
    \iitem{Game $\H_{4,d,3}^\pre$} The experiment is identical to $\H_{4,d,2}^\pre$ except for the values $y_{d,j},y_{d,j}'$ are sampled uniformly at random. Specifically,
     \begin{itemize}
         \item The challenger samples $\boxed{y_{d,j}\rand \Z_q}$ for each $j\in [N_d]$ and $\boxed{y_{d,j}'\rand\Z_q}$ for each $j\in [Q']$.
     \end{itemize}
    
    From the construction above, it follows that $\H_{4,2,0}^\pre\equiv \H_{4}^\pre$. Since the only component dependent on $\tilde{\vecz}_{2,1}$ and $\tilde{\vecz}_{2,1}'$ is $\vecz_2$, directly sampling $\vecz_2 \rand \Z_q^{m'+m}$ does not affect the overall distribution. Moreover, it is straightforward to verify that $\H_{4,\ell+1,0}^\pre \equiv \H_5^\pre$. In the following, we prove that each adjacent pair of hybrids is computationally indistinguishable for all $2\leq d\leq \ell$.

    \begin{claim}\label{evpre40}
        Suppose that $\chi\geq \lambda^{\omega(1)}\cdot\sqrt{\lambda}\chi_s$. Then we have that $\H_{4,d,0}^\pre\overset{s}{\approx}\H_{4,d,1}^\pre$ for $2\leq d\leq\ell$.
    \end{claim}

    \begin{proof}
        The proof follows essentially the same argument as that of Claim \ref{evpre21}.
    \end{proof}

    \begin{claim}\label{evpre41}
        Suppose that $m'>6n\log q$ and $\chi'=\Omega(\sqrt{n\log q})$, and suppose the $\LWE_{n,m_1,q,\chi_s}$ assumption holds for some $m_1=\poly(Q_0)$. Then we have $\H_{4,d,1}^\pre\overset{c}{\approx}\H_{4,d,2}^\pre$ for $2\leq d\leq \ell$.
    \end{claim}

    \begin{proof}
        The proof follows essentially the same argument as that of Claim \ref{evpre22}. 
    \end{proof}

\begin{claim}\label{evpre42}
    For all $2\leq d\leq \ell$, we have $\H_{4,d,2}^\pre\equiv\H_{4,d,3}^\pre$.
\end{claim}

\begin{proof}
    This follows directly from the fact that $\delta_j,\delta_j'$ are sampled uniformly and independently over $\Z_q$. Thus, replacing them with fresh uniform values does not alter the distributions of $y_{d,j}$ and $y_{d,j}'$.
\end{proof}

\begin{claim}\label{evpre43}
    For all $2\leq d\leq \ell$, we have $\H_{4,d,3}^\pre\equiv\H_{4,d+1,0}^\pre$.
\end{claim}

\begin{proof}
    This follows directly by the construction of the two experiments. $\H_{4,d,3}^\pre$ ends with $y_{d,j},y_{d,j}'$ sampled uniformly at random, which matches the initial sampling procedure of $\H_{4,d+1,0}^\pre$ for index $d$.
\end{proof}

\noindent\emph{Proof of Lemma \ref{evhyb4} (Continued)}. 
    By Claims \ref{evpre40} through \ref{evpre43}, we obtain that $\H_{4,d,0}^\pre \overset{c}{\approx} \H_{4,d+1,0}^\pre$ for all $2 \leq d \leq \ell$.
    By applying a standard hybrid argument, we can conclude that $\H_4^\pre\equiv\H_{4,2,0}^\pre\overset{c}{\approx}\H_{4,\ell+1,0}^\pre\equiv\H_5^\pre$, completing the proof.   
\end{proof}

\noindent\emph{Proof of Claim \ref{evpre} (Continued)}. By combining Lemmas \ref{evpre0}, \ref{evpre1}, \ref{evpre2}, and \ref{evpre3}, and applying a standard hybrid argument, we conclude that $\H_0^\pre\overset{c}{\approx}\H_5^\pre$, thereby completing the proof.
\end{proof}

\subsection{Parameters}
Let $\lambda$ be the security parameter.
\begin{itemize}
    \item We set the lattice dimension $n=\lambda^{1/\epsilon}$ and the modulus $q=2^{\tilde{O}(n^{\epsilon})}=2^{\tilde{O}(\lambda)}$ for some constant $\epsilon>0$, where $\tilde{O}(\cdot)$ suppresses constant and logarithmic factors. Then $m=n\lceil\log q\rceil=\tilde{O}(\lambda^{1+1/\epsilon})$.
    \item We set $L=2^\lambda$ to support ciphertext policies of arbitrary polynomial size, i.e., $\ell=\poly(\lambda)$.         
    \item For the static security (Theorem~\ref{security1}), we set $m'=\tilde{O}(\lambda^{1+1/\epsilon}),Q_0=\poly(\lambda)$ and $\chi'=\Theta(\lambda^{1+1/\epsilon})$. The construction relies on the $\LWE_{n,m_1,q,\chi_s}$ assumption and the $\INDIPFE_{n,m,m',q,\chi,\chi}$ assumption, where the noise parameter $\chi_s=\chi_s(\lambda)$ is set polynomial-bounded and $\chi$ satisfies the bound $\chi\geq \lambda^{\omega(1)}\cdot (\sqrt{\lambda}(m+\ell)\chi_s+\lambda m\chi'\chi_s)$. This requirement is satisfied by setting $\chi=2^{\tilde{O}(n^{\epsilon})}$.  
    
    \item To ensure correctness (Theorem~\ref{correct1}), we require $\chi > \sqrt{n \log q} \cdot \omega(\sqrt{\log n})$, which is also satisfied by setting $\chi = 2^{\tilde{O}(n^{\epsilon})}$. In addition, we have the relationship $B_0=\sqrt{\lambda}\chi m+\lambda \chi\chi' m'+\lambda \chi^2 mL$, which implies $B_0=2^{\tilde{O}(n^{\epsilon})}$. Then we obtain $B_0/(\sqrt{n}\chi_s)=\mathrm{superpoly}(\lambda)$, which aligns with the convention in the $\NIPFE$ scheme as in \cite{AP20}.
     \end{itemize}
     
     We summarize with the following instantiation
     \begin{corollary}[$(B_0,\chi_s)$-$\maev$ for Subset Policies in the Random Oracle Model]\label{con1par}
    Let $\lambda$ be a security parameter.  Assuming the polynomial hardness of \LWE and of the $\EVIPFE$ (defined in Section~\ref{Evasiveipfe}), both holding under a sub-exponential modulus-to-noise ratio, there exists a statically secure $(B_0,\chi_s)$-$\maev$ scheme for subset policies in the random oracle model, with $B_0/\chi_s=\superpoly(\lambda)$.
\end{corollary}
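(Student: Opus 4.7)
The plan is to instantiate Construction~\ref{con1} with the parameter setting described in the preceding paragraph and then invoke Theorems~\ref{correct1} and~\ref{security1} to obtain both correctness and static security. Concretely, I would first fix a constant $\epsilon > 0$ and set $n = \lambda^{1/\epsilon}$, $q = 2^{\tilde{O}(n^{\epsilon})} = 2^{\tilde{O}(\lambda)}$, $m = n\lceil \log q\rceil = \tilde{O}(\lambda^{1+1/\epsilon})$, $m' = \tilde{O}(\lambda^{1+1/\epsilon})$, $\chi' = \Theta(\lambda^{1+1/\epsilon})$, and $L = 2^{\lambda}$, which supports subset policies of arbitrary polynomial size $\ell = \poly(\lambda)$. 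The noise parameter $\chi_s$ associated with the $\LWE$ secrets in the sampler outputs would be taken polynomially bounded in $\lambda$, while the main encryption/smudging parameter $\chi$ would be set to $\chi = 2^{\tilde{O}(n^{\epsilon})}$.

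Next, I would verify that each of the hypotheses of Theorem~\ref{security1} is satisfied under this choice. In particular, $\chi' = \Omega(\sqrt{n\log q})$ is immediate; the inequality $\chi \geq \lambda^{\omega(1)} \cdot (\sqrt{\lambda}(m+\ell)\chi_s + \lambda m\chi'\chi_s)$ holds because the right-hand side is $\poly(\lambda)$ whereas $\chi$ is super-polynomial (in fact sub-exponential) in $\lambda$; and $\LWE_{n,m_1,q,\chi_s}$ for any $m_1 = \poly(m,m',Q_0) = \poly(\lambda)$ follows from the polynomial hardness of $\LWE$ assumed under a sub-exponential modulus-to-noise ratio (since $q/\chi_s = 2^{\tilde{O}(\lambda)}/\poly(\lambda)$ is sub-exponential). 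The $\EVIPFE_{n,m,m',q,\chi,\chi}$ assumption is likewise supported by the stated sub-exponential modulus-to-noise hardness, and by Appendix~\ref{sec:app} it reduces to the evasive $\LWE$ assumption of~\cite{WWW22}. Theorem~\ref{security1} then yields static security of the instantiated scheme as a $(B_0,\chi_s)$-$\maev$ scheme.

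For correctness, Theorem~\ref{correct1} requires only $\chi \geq \chi_0(n,q) = \sqrt{n\log q}\cdot \omega(\sqrt{\log n})$, which is trivially satisfied. The theorem further gives the explicit bound
\[
B_0 = \sqrt{\lambda}\chi m + \lambda\chi\chi' m' + \lambda\chi^2 m L .
\]
Plugging in the chosen parameters shows $B_0 = 2^{\tilde{O}(n^{\epsilon})} = 2^{\tilde{O}(\lambda)}$, so in particular $B_0/\chi_s$ is super-polynomial in $\lambda$, as required by the statement of the corollary.

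The verifications above are essentially bookkeeping; the main (and only nontrivial) obstacle is checking that the three parameter constraints -- the smudging condition $\chi \gg \lambda^{\omega(1)} \cdot \poly(\lambda) \cdot \chi_s\chi'$, the $\LWE_{n,m_1,q,\chi_s}$ hardness requirement with $m_1$ polynomial in the number of queries, and the $\EVIPFE$ hardness with noise width $\chi$ -- are \emph{simultaneously} satisfiable. This is ensured precisely by assuming sub-exponential hardness of the underlying lattice assumptions, which allows a sub-exponential gap between $\chi$ and $\chi_s$ while still keeping $q/\chi$ sub-exponential. Collecting these facts completes the proof of the corollary.
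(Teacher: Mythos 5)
Your proposal is correct and follows essentially the same route as the paper: the corollary is established in the paper by the parameter-selection discussion immediately preceding it, which fixes exactly the same choices of $n,q,m,m',\chi,\chi',\chi_s,L$ and then checks the hypotheses of Theorems~\ref{correct1} and~\ref{security1} as you do. The only differences are cosmetic (you correctly write $\EVIPFE$ where the paper's parameter list has a typo referring to $\INDIPFE$), so nothing further is needed.
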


\section{$\manipfe$ Scheme from $\INDIPFE$ Assumption (in the Random Oracle Model)}\label{sec:7}
We observe that the $\maev$ construction presented in Construction \ref{con1} also naturally satisfies the syntax of a $(B_0,B_1)$-$\manipfe$ scheme over $\Z_q^n$ for subset policies. Accordingly, we reinterpret the scheme in Construction \ref{con1} as an $\manipfe$ scheme, and denote it by $$\Pi_{\manipfe}=(\globalset,\authset,\keygen,\enc,\dec),$$ where all algorithms are identical to those defined in Construction \ref{con1}.

In the following, we present the approximate correctness and static security properties of this scheme.

\subsection{Correctness}
\begin{theorem}[Correctness]\label{correct11}
    Let $\chi_0=\sqrt{n\log q}\cdot \omega(\sqrt{\log n})$ be a polynomial such that Lemma \ref{preimage} holds. Suppose that the lattice parameters $n,q,\chi$ are such that $\chi\geq \chi_0(n,q)$. Then the scheme $\Pi_{\manipfe}$ in Construction \ref{con1} is \emph{correct} as a $(B_0,B_1)$-$\manipfe$ scheme, where the parameter $B_0=\sqrt{\lambda}\chi m+\lambda \chi\chi' m'+\lambda \chi^2 m\ell.$
\end{theorem}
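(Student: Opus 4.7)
The plan is to observe that $\Pi_{\manipfe}$ is syntactically identical to $\Pi_{\maev}$ in Construction~\ref{con1}, so the correctness argument is essentially the one already carried out in the proof of Theorem~\ref{correct1}. In particular, the approximate-correctness requirement for a $(B_0,B_1)$-$\manipfe$ scheme only concerns the distribution of the residual noise term obtained from expanding the decryption expression, and this term is unaffected by whether the scheme is interpreted as an $\maev$ or an $\manipfe$ primitive.

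Concretely, I would begin by fixing $\vecu,\vecv\in\Z_q^n$, $\gid\in\GID$, an attribute set $A\subseteq\AU$, and an authorized subset $A'\supseteq A$ (under the subset policy determined by $A$). After sampling $\gp\leftarrow\globalset(1^\lambda)$, $(\pk_{\aid},\msk_{\aid})\leftarrow\authset(\gp,\aid)$ for $\aid\in A'$, $\sk_{\aid,\gid,\vecv}\leftarrow\keygen(\gp,\pk_{\aid},\msk_{\aid},\gid,\vecv)$, and $\ct\leftarrow\enc(\gp,\{\pk_{\aid}\}_{\aid\in A},\vecu)$, I would expand the decryption output
\[
\Gamma \;=\; \vecc_3^\top\matG^{-1}(\vecv) + \vecc_2^\top\vecr - \sum_{\aid\in A}\vecc_{1,\aid}^\top\veck_{\aid,\gid,\vecv},
\]
substitute the explicit forms of $\vecc_1,\vecc_2,\vecc_3$, and use the key-generation identity $\matA_{\aid}\veck_{\aid,\gid,\vecv}=\matP_{\aid}\matG^{-1}(\vecv)+\matB_{\aid}\vecr$ (with $\vecr=\H(\gid,\vecv)$) to cancel the $\vecs_{\aid}$-dependent cross terms. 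This collapses $\Gamma$ to $\vecu^\top\vecv+\tilde e$, where
\[
\tilde e \;=\; \vece_3^\top\matG^{-1}(\vecv) + \vece_2^\top\vecr - \sum_{\aid\in A}\vece_{1,\aid}^\top\veck_{\aid,\gid,\vecv}.
\]

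Next, I would bound $|\tilde e|$ by applying Lemma~\ref{truncated} to each Gaussian-distributed component. Since $\chi\geq\chi_0(n,q)$, Lemma~\ref{trapdoor} ensures $\veck_{\aid,\gid,\vecv}$ is distributed statistically close to $(\matA_{\aid})_\chi^{-1}(\cdot)$, and an application of Lemma~\ref{truncated} (together with the random-oracle assumption that $\vecr\leftarrow D_{\Z,\chi'}^{m'}$) yields, with overwhelming probability, $\|\vece_{1,\aid}\|,\|\vece_2\|,\|\vece_3\|\leq\sqrt{\lambda}\chi$, $\|\veck_{\aid,\gid,\vecv}\|\leq\sqrt{\lambda}\chi$ and $\|\vecr\|\leq\sqrt{\lambda}\chi'$. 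Summing the three contributions, one obtains
\[
|\tilde e| \;\leq\; \sqrt{\lambda}\chi m + \lambda\chi\chi' m' + \lambda\chi^2 m\ell,
\]
where $\ell=|A|$; this exactly matches the claimed $B_0$.

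There is no genuine obstacle here, as the argument is a direct reuse of the expansion and norm-bounding steps from the proof of Theorem~\ref{correct1}; the only minor bookkeeping point is to note that the universal constant $L$ in Theorem~\ref{correct1} is replaced by the per-ciphertext quantity $\ell$ corresponding to the size of the attribute set $A$ declared in the correctness experiment, and that the approximate-correctness predicate for $\manipfe$ (with error bound $B_0$) is satisfied by the same $\tilde e$ whose norm we have just bounded. This completes the proof sketch.
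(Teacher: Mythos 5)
Your proposal is correct and follows essentially the same route as the paper, which proves Theorem~\ref{correct11} simply by deferring to the proof of Theorem~\ref{correct1}: the same expansion of the decryption expression, cancellation via the identity $\matA_{\aid}\veck_{\aid,\gid,\vecv}=\matP_{\aid}\matG^{-1}(\vecv)+\matB_{\aid}\vecr$, and norm bounds from Lemmas~\ref{truncated} and~\ref{trapdoor} yielding $|\tilde e|\leq B_0$. Your remark about replacing the global bound $L$ by the per-ciphertext attribute count $\ell$ is a fair observation of a minor notational inconsistency already present in the paper's own statement of Theorem~\ref{correct1} versus its proof.
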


\begin{proof}
    The proof is essentially the same as Theorem \ref{correct1}.
\end{proof}

\subsection{Security}
\begin{theorem}[Static Security]\label{security2}
     Let $\chi_0(n,q)=\sqrt{n\log q}\cdot\omega(\sqrt{\log n})$ be a polynomial such that Lemma \ref{preimage} holds. Let $Q_0$ be an upper bound on the number of secret-key queries submitted by the adversary $\mathcal{A}$. Suppose that the following conditions hold:
\begin{itemize}
\item $m'>6n\log q$.
\item Let $\chi'$ be an error distribution parameter such that $\chi'=\Omega(\sqrt{n\log q})$.
\item Let $\chi$ be an error distribution parameter such that $\chi\geq \lambda^{\omega(1)}\cdot\max\{B_1,\sqrt{\lambda}\chi_s,\chi_0\}$, where $\chi_s$ is an noise parameter such that $\LWE_{n,m_1,q,\chi_s}$ assumption holds for some $m_1=\poly(m,m',Q_0)$.
 \item The assumption $\INDIPFE_{n,m,m',q,\chi,\chi}$ holds.
\end{itemize}
Then, Construction \ref{con1} $\Pi_{\manipfe}$ is \emph{statically secure} as a $(B_0,B_1)$-$\manipfe$ scheme.
\end{theorem}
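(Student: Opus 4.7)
The plan is to establish static security of $\Pi_{\manipfe}$ by a direct reduction to the $\INDIPFE$ assumption, closely paralleling the proof of Theorem~\ref{security1} but leveraging the indistinguishability-based (rather than pseudorandomness-based) form of the assumption. Given any efficient admissible adversary $\mathcal{A}$, I would construct a public-coin sampler $\mathcal{S}_{\mathcal{A}}$ as in Assumption~\ref{INDIPFE}: on input public randomness $\vecr_{\pub}$, the sampler runs $\mathcal{A}(1^\lambda;\vecr_{\pub})$ to extract the set $\mathcal{C}$ of corrupt authorities together with their public keys, the set $\mathcal{N}$ of non-corrupt authorities, the challenge attribute set $A^{*}$, the full set of (both Type~I and Type~II) secret-key queries $\mathcal{Q}$, and the two challenge plaintexts $\vecu_0,\vecu_1$; it then samples fresh Gaussian vectors $\vecr_{i,j}\leftarrow D_{\Z,\chi'}^{m'}$ (playing the role of the programmed hash values $\H(\gid,\vecv)$) and organizes the pairs $(\vecr_{i,j},\vecv_{i,j})$ by challenge authority $\aid_i^{*}\in A^{*}\cap\mathcal{N}$, exactly as in the construction of $\mathcal{S}_{\mathcal{A},\vecv}$ within the proof of Lemma~\ref{eva0}. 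Admissibility ensures that every pair $(\vecr_{i,j},\vecv_{i,j})$ arising from a Type~II query is nonzero (since the corresponding $\vecv$ is a key-vector query), so the nonzero restriction imposed by Assumption~\ref{INDIPFE} is met.

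Next, I would construct a post-challenge algorithm $\mathcal{B}$ that embeds an $\INDIPFE$ challenge tuple $(\{\matA_i,\vecz_{1,i}^\top\}_{i},[\matB\mid\matP],\vecz_2^\top+[\veczero\mid\vecu_\beta^\top\matG],\{\matK_i\}_{i})$ into a simulation of $\mathcal{A}$'s view, as in the proof of Lemma~\ref{eva0}: public keys of authorities in $A^{*}\cap\mathcal{N}$ are taken from the challenge, while those of $\mathcal{N}\setminus A^{*}$ are generated via \trapgen; secret keys for queries involving $\aid_i^{*}\in A^{*}\cap\mathcal{N}$ are read off the columns of $\matK_i$ (with the corresponding $\vecr_{i,j}$ programmed into the random oracle table), and secret keys for authorities outside $A^{*}$ are generated using the freshly sampled trapdoors. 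The challenge ciphertext is assembled by combining the $\INDIPFE$ challenge with fresh encryption components for $\aid\in A^{*}\cap\mathcal{C}$. A straightforward distributional check shows that $\mathcal{B}$'s simulation is perfect in either postcondition world, so $\mathcal{A}$'s advantage equals $\Adv_{\mathcal{S}_{\mathcal{A}},\mathcal{B}}^{\post}(\lambda)$.

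The hard part will be establishing the precondition for $\mathcal{S}_{\mathcal{A}}$, i.e., that the pre-challenge distributions under $\vecu_0$ and $\vecu_1$ are computationally indistinguishable. I would proceed via a hybrid sequence analogous to Claim~\ref{evpre}: first use $\LWE_{n,m_1,q,\chi_s}$ (via the flipped-$\LWE$ reformulation of Theorem~\ref{flipped}) to replace each $\vecz_{1,i}^\top=\vecs_i^\top\matA_i+\vece_{1,i}^\top$ and the two halves of $\vecz_2^\top=\vecs^\top[\matB\mid\matP]+\vece_2^\top$ by smudged uniform vectors, then propagate these uniform vectors into the expression $\vecs_i^\top\matQ_i=\tilde\vecz_{2,i}'^\top\matG^{-1}(\vecv_{i,j})+\tilde\vecz_{2,i}^\top\vecr_{i,j}$ appearing inside $\vecz_{3,i}$. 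The crucial step is the point at which the $\vecu_0$- and $\vecu_1$-dependent terms actually appear in $\mathcal{A}$'s view. For a Type~I query the argument from $\H_{2,d,1}^{\pre}$–$\H_{2,d,6}^{\pre}$ applies verbatim (there is an unused index $i^{*}$ with $\aid_{i^{*}}^{*}\notin A$, whose share $\vecs_0$ can absorb any inner-product difference). For a Type~II query, however, the key vector $\vecv_{i,j}$ is not paired with any free challenge authority; here I would exploit the admissibility bound $\|(\vecu_0-\vecu_1)^\top\vecv_{i,j}\|\le B_1$ directly: after the LWE step, the $\vecu_b^\top\vecv_{i,j}$ contribution to the corresponding entry of $\vecz_{3,i}$ is masked by a Gaussian of width $\chi$, and the hypothesis $\chi\geq\lambda^{\omega(1)}\cdot B_1$ lets Lemma~\ref{smudge} conclude that the $b=0$ and $b=1$ distributions are statistically close. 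Similarly, the third component of $\vecz_2^{\top}$, which differs by $(\vecu_0-\vecu_1)^\top\matG$ between the two sides of the precondition, is first replaced by uniform (via LWE) before the $\vecu_b^\top\matG$ offset is added, so this offset does not create any distinguishing signal.

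Combining the above, by a standard hybrid argument the precondition advantage $\Adv_{\mathcal{S}_{\mathcal{A}},\mathcal{D}}^{\pre}(\lambda)$ is negligible for every efficient $\mathcal{D}$, whence by Assumption~\ref{INDIPFE} the postcondition advantage $\Adv_{\mathcal{S}_{\mathcal{A}},\mathcal{B}}^{\post}(\lambda)$ is also negligible. This matches $\mathcal{A}$'s advantage in the $(B_0,B_1)$-$\manipfe$ static security game, completing the proof under the stated parameter regime.
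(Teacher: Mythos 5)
Your proposal follows essentially the same route as the paper's proof: the same sampler $\mathcal{S}_{\mathcal{A}}$ extracting all queries and both challenge plaintexts from the adversary's coins, the same post-challenge reduction $\mathcal{B}$ embedding the $\INDIPFE$ challenge into the public keys, secret keys, and ciphertext, and the same precondition argument (smudging, then $\LWE$/flipped-$\LWE$ to randomize $\vecz_{1,i}$ and $\vecz_2$, the free-authority-index argument for Type~I queries, and the $B_1$-bounded smudging via Lemma~\ref{smudge} for Type~II queries after re-randomizing $\tilde{\vecz}_{2,1}'$ by $-\vecu_b^\top\matG$).

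One step in your write-up is not justified as stated: you claim the nonzero requirement $(\vecr_{i,j},\vecv_{i,j})\neq(\veczero,\veczero)$ of Assumption~\ref{INDIPFE} is met ``since the corresponding $\vecv$ is a key-vector query.'' Admissibility does not forbid $\vecv=\veczero$ --- a zero key vector trivially satisfies $\|(\vecu_0-\vecu_1)^\top\vecv\|\leq B_1$, so the adversary may legitimately submit it, and the condition must in any case hold for the Type~I pairs as well, not only the Type~II ones. If a pair were actually all-zero, the assumption would not apply (the corresponding column of $\matQ_i$ would be $\veczero$ and the preimage in $\matK_i$ would leak a trapdoor for $\matA_i$, as the paper's remark following Assumption~\ref{INDIPFE} explains). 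The paper closes this by a different argument: each $\vecr_{i,j}$ is drawn from $D_{\Z,\chi'}^{m'}$, and a direct tail computation shows it equals $\veczero_{m'}$ only with probability $\mathrm{e}^{-\Omega(m')}$, so every pair is nonzero with overwhelming probability regardless of the adversary's choice of $\vecv$. With that substitution your argument goes through and matches the paper's proof.
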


\begin{proof}
We prove the static security of $\Pi_{\manipfe}$ under the standard static security model by defining two experiments $\H_0$ and $\H_1$, corresponding to challenge bits $0$ and $1$, respectively. The goal is to show that no efficient adversary can distinguish between these two games with non-negligible advantage. We begin by giving the definition of the structure of Game $\H_b$ for $b\in \{0,1\}$.

\iitem{Game $\H_b$} This experiment corresponds to the real static security game in which the challenger encrypts the plaintext $\vecu_{b}$ exactly as specified in Construction \ref{con1}, where $b\in\{0,1\}$. The detailed structure of the experiment is as follows. 

At the beginning of the experiment, the adversary $\mathcal{A}$ specifies the following parameters:
\begin{itemize}
    \item A set of corrupt authorities $\mathcal{C}\subseteq\AU$ along with their public keys: $\pk_{\aid}=(\matA_{\aid},\matB_{\aid},\matP_{\aid})$ for each $\aid\in \mathcal{C}$.
    \item A set of non-corrupt authorities $\mathcal{N}\subseteq \AU$, satisfying $\mathcal{N}\cap\mathcal{C}=\varnothing$.
    \item A pair of challenge messages $\vecu_0,\vecu_1\in\Z_q^n$ along with a designated authority set $A^*\subseteq \mathcal{C}\cup\mathcal{N}$ satisfying  $(A^*\cap\mathcal{C})\subsetneqq A^*$. 
    \item A set of secret-key queries $\mathcal{Q}=\{(\gid,A,\vecv)\}$, where $A\subseteq \mathcal{N}$. For each $(\gid,A,\vecv)\in\mathcal{Q}$, exactly one of the following conditions must hold:
\begin{itemize}
    \item \textbf{Type I secret-key queries}: $(A\cup \mathcal{C})\cap A^*\subsetneqq A^*$;
    \item \textbf{Type II secret-key queries}: $(A\cup \mathcal{C})\cap A^*= A^*$ and $\|(\vecu_0-\vecu_1)^\top\vecv\|\leq B_1$.
\end{itemize}
\end{itemize}
To simulate the random oracle, the challenger initializes an empty table $\T:\GID\times\Z_q^n\rightarrow\Z_q^{m'}$. This table will be used to store and  consistently respond to all queries made to the random oracle during the experiment. The challenger then processes the adversary's queries as follows.
\begin{itemize}
    \item \textbf{Public keys for non-corrupt authorities:} For each non-corrupt authority $\aid\in\mathcal{N}$, the challenger samples $(\matA_{\aid},\td_{\aid})\leftarrow \trapgen(1^n,1^m,q),\matB_{\aid}\xleftarrow{\$}\Z_q^{n\times m'}$, and $\matP_{\aid}\rand\Z_q^{n\times m}$. The public key associated with $\aid$ is then set as $\pk_{\aid}=(\matA_{\aid},\matB_{\aid},\matP_{\aid})$.
    \item \textbf{Secret-key queries:} For each secret-key query $(\gid,A,\vecv)\in\mathcal{Q}$, the challenger first computes $\vecr_{\gid,\vecv}\leftarrow\H(\gid,\vecv)$, then samples $\veck_{\aid,\gid,\vecv}\leftarrow\samplepre(\matA_{\aid},\td_{\aid},\matP_\aid\matG^{-1}(\vecv)+\matB_\aid\vecr_{\gid,\vecv},\chi)$ for each $\aid\in A$. The resulting secret key is set as $\sk_{\aid,\gid,\vecv}=\veck_{\aid,\gid,\vecv}$.
    \item \textbf{Challenge ciphertext:} The challenger first samples $\vecs_{\aid}\rand\Z_{q}^n,\vece_{1,\aid}\leftarrow D_{\Z,\chi}^m$ for each $\aid\in A^*$. Then it samples $\vece_2\leftarrow D_{\Z,\chi}^{m'}$, and $\vece_3\leftarrow D_{\Z,\chi}^{m}$. The challenge ciphertext is constructed as $\ct=\left(\{\vecc_{1,\aid}^\top\}_{\aid\in A^*},\vecc_2^\top,\vecc_3^\top\right)$, where    $$\vecc_{1,\aid}^\top=\vecs_{\aid}^\top\matA_{\aid}+\vece_{1,\aid}^\top,\vecc_{2}^\top=\sum_{\aid\in A^*}\vecs_{\aid}^\top\matB_{\aid}+\vece_2^\top,\vecc_{3}^\top=\sum_{\aid\in A^*}\vecs_{\aid}^\top\matP_{\aid}+\vece_3^\top+\vecu_b^\top\matG.$$
    \item \textbf{Random oracle queries:} Upon receiving a query $(\gid,\vecv)\in\GID\times\Z_q^n$, the challenger first checks whether the input $(\gid,\vecv)$ has been queried before, either during previous random oracle queries by the adversary or during processing the secret-key queries. If it has, the challenger retrieves and returns the stored value $\vecr_{\gid,\vecv}$ from the table $\T$. If the input is new, the challenger samples $\vecr_{\gid,\vecv}\leftarrow D_{\Z,\chi'}^{m'}$, stores the mapping $(\gid,\vecv)\mapsto \vecr_{\gid,\vecv}$ to the table $\T$, and then responds with $\vecr_{\gid,\vecv}$.
\end{itemize}
At the end of the game, the adversary outputs a bit $b'\in\{0,1\}$ as its guess for $b$, which is taken as the output of the experiment. 
\\

To establish the static security of Construction \ref{con1} as a $(B_0,B_1)$-$\manipfe$ scheme, it suffices to show that games $\H_0$ and $\H_1$ are computationally indistinguishable. Suppose, for contradiction, that there exists an \emph{efficient} adversary $\mathcal{A}$ that can distinguish between $\H_0$ and $\H_{1}$ with non-negligible advantage. We will show that such an adversary can be used to break the $\INDIPFE$ assumption, thereby contradicting its assumed hardness.

We construct a sampling algorithm $\mathcal{S}_{\mathcal{A}}$ that uses $\mathcal{A}$ as a subroutine for the $\INDIPFE$ assumption. On input the global parameter $\gp=(1^\lambda,q,1^m,1^{m'},1^\chi,1^\chi)$, the sampling algorithm $\mathcal{S}_{\mathcal{A}}(\gp)$ proceeds as follows:
\begin{enumerate}
\item Let $\kappa=\kappa(\lambda)$ be an upper bound on the number of random bits used by the adversary $\mathcal{A}$. Sample $\vecr\rand\{0,1\}^{\kappa}$ and run $\mathcal{A}(1^\lambda;\vecr)$.

\item The adversary $\mathcal{A}$ outputs a set of corrupt authorities $\mathcal{C}$, a set of non-corrupt authorities $\mathcal{N}$, a set of secret-key queries $\mathcal{Q}=\{(\gid,A,\vecv)\}$, a pair of challenge messages $\vecu_0,\vecu_1\in\Z_q^n$ and a set of authorities $A^*$ associated with the challenge ciphertext. These outputs correspond to the queries submitted by the adversary in the static security game for
$\manipfe$.

\item Let $\ell=|A^*\cap\mathcal{N}|$ and denote the set $A^*\cap\mathcal{N}=\{\aid_1^*,\ldots,\aid_{\ell}^*\}$.
			
\item Partition the set $\mathcal{Q}=\mathcal{Q}_I\cup\mathcal{Q}_{II}$, where $\mathcal{Q}_I,\mathcal{Q}_{II}$ contain all secret-key queries of Type I and Type II, respectively. Let $|\mathcal{Q}_{I}|=Q$, $|\mathcal{Q}_{II}|=Q'$ and write $$\mathcal{Q}_I=\{(\gid_1,A_1,\vecv_1),\ldots,(\gid_{Q},A_{Q},\vecv_{Q})\},\quad\mathcal{Q}_{II}=\{(\gid_1',A_1',\vecv_1'),\ldots,(\gid_{Q'}',\A_{Q'}',\vecv_{Q'}')\}.$$

\item For each $i\in [\ell]$, let $N_i\in[Q]$ denote the number of Type I secret-key queries in which the challenge authority $\aid_{i}^*$ appears. Suppose that authority $\aid_i^*$ is contained in the set $A_{j_1^{(i)}},\ldots, A_{j_{N_i}^{(i)}}$ for some indices $j_1^{(i)},\ldots,j_{N_i}^{(i)}\in [Q]$, listed in increasing order. Define the mapping $\rho_i:[N_i]\rightarrow [Q]$ by setting $\rho_{i}(k)=j_k^{(i)}$. That is,  $\aid_{i}^*$ appears exactly in the set $A_{\rho_i(1)},\ldots, A_{\rho_i(N_i)}$ with the indices ordered increasingly.

\item For each $i\in [Q]$, sample $\vecr_i\leftarrow D_{\Z,\chi'}^{m'}$. For each $j\in [Q']$, sample $\vecr_{j}'\leftarrow D_{\Z,\chi'}^{m'}$.

\item Finally, $\mathcal{S}_{\mathcal{A}}$ outputs
			\begin{align*}
			& 1^{\ell},\{1^{N_i+Q'}\}_{i\in [\ell]}; \\
			& (\vecr_{\rho_1(1)},\vecv_{\rho_1(1)}),\ldots,(\vecr_{\rho_1(N_1)},\vecv_{\rho_{1}(N_1)}),(\vecr_1',\vecv_1'),\ldots,(\vecr_{Q'}',\vecv_{Q'}');                   \\
				       & (\vecr_{\rho_2(1)},\vecv_{\rho_2(1)}),\ldots,(\vecr_{\rho_2(N_2)},\vecv_{\rho_{2}(N_2)}),(\vecr_1',\vecv_1'),\ldots,(\vecr_{Q'}',\vecv_{Q'}');                   \\
				       & \cdots      \\
				       & (\vecr_{\rho_\ell(1)},\vecv_{\rho_\ell(1)}),\ldots,(\vecr_{\rho_\ell(N_\ell)},\vecv_{\rho_{\ell}(N_\ell)}),(\vecr_1',\vecv_1'),\ldots,(\vecr_{Q'}',\vecv_{Q'}'); \\
				       & \vecu_0,\vecu_1.
			      \end{align*}
		\end{enumerate}

We remark that the sampling algorithm $\mathcal{S}_{\mathcal{A}}$ defined above does not explicitly guarantee that each tuple $(\vecr_i,\vecv_i)$ or $(\vecr_j',\vecv_j')$ is non-zero, as required in Assumption \ref{INDIPFE}. However, since each $\vecr_i,\vecr_j'$ component in the tuple is sampled from discrete Gaussian distribution $D_{\Z,\chi'}^{m'}$, the probability that any of them equals the all-zero vector is \emph{negligible}. Specifically, by a standard tail bound argument, we can upper bound the probability that a sample equals the all-zero vector as follows: \begin{align*}
\Pr[\vecr=\veczero_{m'}:\vecr\leftarrow D_{\Z,\chi'}^{m'}]&=\frac{1}{\sum_{\vecx\in\Z^{m'}}\mathrm{e}^{-\pi\|\vecx\|^2/\chi'^2}}   =\left(\frac{1}{\sum_{x\in \Z}\mathrm{e}^{-\pi|x|^2/\chi'^2}}\right)^{m'}\\
   & <\left(\frac{1}{1+2\mathrm{e}^{-\pi/\chi'^2}}\right)^{m'}\leq (1-c\mathrm{e}^{-\pi/\chi'^2})^{m'}\leq \exp\left(-c\cdot m'\mathrm{e}^{-\pi/\chi'^2}\right),
 \end{align*} for some constant $c>0$. When $\chi'$ is taken to be polynomial in $m'$, the probability is $\mathrm{e}^{-\Omega(m')}$, which is negligible in $m'$. Therefore, we may safely ignore the negligible probability that the sampling algorithm outputs an all-zero tuple $(\vecr_i,\vecv_i)$ or $(\vecr_{j}',\vecv_j')$.
		
We now claim that for all efficient distinguishers $\mathcal{D}$, the advantage $\Adv_{\mathcal{S}_{\mathcal{A}},\mathcal{D}}^\pre(\lambda)$ (cf. Assumption \ref{INDIPFE}) for sampling algorithm $\mathcal{S}_{\mathcal{A}}$ is negligible in $\lambda$. More precisely,
	
\begin{claim}\label{indpre}
Let $Q_0$ be an upper bound on the number of secret-key queries submitted by the adversary $\mathcal{A}$. Suppose that the following conditions hold:
\begin{itemize}
\item $m'>6n\log q$.
\item Let $\chi'$ be an error distribution such that $\chi'=\Omega(\sqrt{n\log q})$.
\item Let $\chi$ be an error distribution parameter such that $\chi\geq \lambda^{\omega(1)}\cdot\max\{B_1,\sqrt{\lambda}\chi_s\}$, where $\chi_s$ is an error parameter such that $\LWE_{n,m_1,q,\chi_s}$ assumption holds for some $m_1=\poly(m,m',Q_0)$.
\end{itemize}
Then, for every efficient distinguisher $\mathcal{D}$, there exists a negligible function $\negl(\cdot)$ such that for all $\lambda\in\N$, we have $\Adv_{\mathcal{S}_{\mathcal{A}},\mathcal{D}}^\pre(\lambda)=\negl(\lambda)$, where $\Adv_{\mathcal{S}_{\mathcal{A}},\mathcal{D}}^\pre$ denotes the advantage of the distinguisher $\mathcal{D}$ in the $\INDIPFE$ assumption described in Assumption \ref{INDIPFE}.
\end{claim}

Assuming Claim~\ref{evpre}, we proceed with the proof of Theorem~\ref{security2}. For clarity, the proof of Claim~\ref{evpre} is postponed to the end of this section.

\noindent\emph{Proof of Theorem \ref{security2} (Continued).}
To complete the proof, we show that if there exists an \emph{efficient} adversary $\mathcal{A}$ that can distinguish between $\H_{0}$ and $\H_1$ with non-negligible advantage, then we can construct an \emph{efficient} algorithm $\mathcal{B}$ that breaks the $\INDIPFE$ assumption with respect to the sampling algorithm $\mathcal{S}_{\mathcal{A}}$. The algorithm $\mathcal{B}$ proceeds as follows:
\begin{enumerate}
\item Algorithm $\mathcal{B}$ receives an $\INDIPFE$ challenge $$(1^\lambda,\vecr_{\pub},\{\matA_{i},\vecy_{1,i}^\top\}_{i\in[\ell]},[\matB\mid\matP],\vecy_2^\top,\{\matK_i\}_{i\in[\ell]}),$$ where $\vecr_{\pub}\in\{0,1\}^*,\matA_{i}\in\Z_q^{n\times m}, \vecy_{1,i}\in\Z_q^m,\matK_i\in\Z_q^{m\times (N_i+Q')}$ for each $i\in [\ell]$, $\matB\in\Z_q^{n\ell\times m'},\matP\in\Z_q^{n\ell\times m},\vecy_2\in\Z_q^{m'+m}$. Recall that in the public-coin model, $\vecr_{\pub}$ contains all the randomness used by sampling algorithm $\mathcal{S}_{\mathcal{A}}$.
			
\item For each $i\in [\ell]$, algorithm $\mathcal{B}$ parses the matrix $\matK_i$ as $$\matK_i=[\matK_i^{(1)}\mid \matK_i^{(2)}],$$ where $\matK_{i}^{(1)}\in \Z_q^{m\times N_i}$ and $\matK_{i}^{(2)}\in \Z_q^{m\times Q'}$. Let $\veck_{i,j},\veck_{i,j}'$ denote the $j$-th column vectors of $\matK_i^{(1)}$ and $\matK_{i}^{(2)}$, respectively.
			
\item Algorithm $\mathcal{B}$ runs algorithm $\mathcal{A}$, using the appropriate portion of the public randomness $\vecr_{\pub}$---specifically, the same portion used internally by $\mathcal{S}_{\mathcal{A}}$ to simulate $\mathcal{A}$. The adversary $\mathcal{A}$ outputs the following queries:
\begin{itemize}
\item A set of corrupt authorities $\mathcal{C}\subseteq \mathcal{AU}$,  along with their public keys $\pk_{\aid}=(\matA_{\aid},\matB_{\aid},\matP_{\aid})$ for all $\aid\in\mathcal{C}$.
				      
\item A set of non-corrupt authorities $\mathcal{N}\subseteq \AU$, satisfying $\mathcal{N}\cap\mathcal{C}=\varnothing$.

\item A pair of challenge messages $\vecu_0,\vecu_1\in \Z_q^n$, and a ciphertext authority set $A^*\subseteq \mathcal{C}\cup \mathcal{N}$ such that $(A^*\cap\mathcal{C})\subsetneqq A^*$.

\item A set of secret-key queries $\mathcal{Q}$, where each query $(\gid,A,\vecv)$ is such that $A\subseteq \mathcal{N}$ and satisfies exactly one of the following two conditions:

\begin{itemize}
\item Type I: $(A_i\cup\mathcal{C})\cap A^*\subsetneqq A^*$.
\item Type II: $(A_i\cup \mathcal{C})\cap A^*=A^*$ and $\|(\vecu_0-\vecu_1)^\top\vecv_i\|\leq B_1$.
\end{itemize}
 \end{itemize}

\item Algorithm $\mathcal{B}$ runs sampling algorithm $\mathcal{S}_{\mathcal{A}}(\gp;\vecr_{\pub})$, which outputs \begin{align*}
    & 1^{\ell},\{1^{N_i+Q'}\}_{i\in [\ell]}; \\
 & (\vecr_{\rho_1(1)},\vecv_{\rho_1(1)}),\ldots,(\vecr_{\rho_1(N_1)},\vecv_{\rho_{1}(N_1)}),(\vecr_1',\vecv_1'),\ldots,(\vecr_{Q'}',\vecv_{Q'}');                   \\
 & (\vecr_{\rho_2(1)},\vecv_{\rho_2(1)}),\ldots,(\vecr_{\rho_2(N_2)},\vecv_{\rho_{2}(N_2)}),(\vecr_1',\vecv_1'),\ldots,(\vecr_{Q'}',\vecv_{Q'}');                   \\
 & \cdots; \\
& (\vecr_{\rho_\ell(1)},\vecv_{\rho_\ell(1)}),\ldots,(\vecr_{\rho_\ell(N_\ell)},\vecv_{\rho_{\ell}(N_\ell)}),(\vecr_1',\vecv_1'),\ldots,(\vecr_{Q'}',\vecv_{Q'}').
\end{align*} Since the randomness used to simulate $\mathcal{A}$ in Step 3) comes from a specific portion of $\vecr_{\pub}$, the values produced here---namely, the vectors $\vecv_1,\ldots,\vecv_Q,\vecv_1',\ldots,\vecv_{Q'}'$---align exactly with those generated during the simulation of $\mathcal{A}$ in Step 3). This ensures that algorithm $\mathcal{B}$'s internal simulation of $\mathcal{A}$ is consistent with the actual input instance of the $\INDIPFE$ challenge.

\item From the construction of $\mathcal{S}_{\mathcal{A}}$, we have that $\ell=|A^*\cap \mathcal{N}|$, i.e., the number of non-corrupt authorities appearing in the challenge ciphertext. Let $A^*\cap \mathcal{N}=\{\aid_{1}^*,\ldots,\aid_{\ell}^*\}$. Algorithm $\mathcal{B}$ first sets $\matA_{\aid_i^*}\leftarrow \matA_{i}$ for each $i\in [\ell]$, and parses $[\matB\mid \matP]$ as $$[\matB\mid \matP]=\left[\begin{array}{c|c}
\matB_{\aid_1^*}      & \matP_{\aid_1^*}      \\
\vdots                & \vdots                \\
\matB_{\aid_{\ell}^*} & \matP_{\aid_{\ell}^*}
\end{array}\right]\in\Z_q^{n\ell\times (m'+m)},$$ where $\matB_{\aid_i^*}\in\Z_q^{n\times m'}$ and $\matP_{\aid_i^*}\in\Z_{q}^{n\times m}$ for each $\aid_i^*\in A^*\cap\mathcal{N}$.
			
\item Algorithm $\mathcal{B}$ partitions the set of secret-key queries $\mathcal{Q}=\mathcal{Q}_I\cup\mathcal{Q}_{II}$, where $\mathcal{Q}_I,\mathcal{Q}_{II}$ contain all secret-key queries of Type I and Type II, respectively. Let $|\mathcal{Q}_{I}|=Q$, $|\mathcal{Q}_{II}|=Q'$, and denote $$\mathcal{Q}_I=\{(\gid_1,A_1,\vecv_1),\ldots,(\gid_{Q},A_{Q},\vecv_{Q})\},\quad \mathcal{Q}_{II}=\{(\gid_1',A_1',\vecv_1'),\ldots,(\gid_{Q'}',A_{Q'}',\vecv_{Q'}')\}.$$ For each $i\in [Q]$, it partitions $A_i=A_{i,\chal}\cup \bar{A}_{i,\chal}$, where $A_{i,\chal}$ consists of the authorities in $A_i$ that appear in the ciphertext, i.e. $A_{i,\chal}=A_i\cap A^*$.

\item Algorithm $\mathcal{B}$ initializes an empty table $\T:\GID\times\Z_q^n\rightarrow\Z_q^{m'}$. This table will be used to store and  consistently respond to all queries made to the random oracle during the experiment.

\item The algorithm $\mathcal{B}$ responds to the queries as follows:
\begin{itemize}
\item \textbf{Public keys for non-corrupt authorities}:
\begin{itemize}
	\item For each $\aid_i^*\in A^*\cap\mathcal{N}$, algorithm $\mathcal{B}$ sets $\pk_{\aid_i^*}\leftarrow(\matA_{\aid_i^*},\matB_{\aid_i^*},\matP_{\aid_i^*})$.
	\item For authorities $\aid\in\mathcal{N}\setminus A^*$, algorithm $\mathcal{B}$ samples $(\matA_{\aid},\td_{\aid})\leftarrow \trapgen(1^n,1^m,q)$, $\matB_{\aid}\rand \Z_{q}^{n\times m'}$ and $\matP_{\aid}\rand\Z_q^{n\times m}$, and sets the public key $\pk_{\aid}\leftarrow (\matA_{\aid},\matB_{\aid},\matP_{\aid})$.
	 \end{itemize}
	
 \item \textbf{Secret keys}: The algorithm $\mathcal{B}$ responds to each secret-key query depending on its type:
\begin{itemize}
\item \textbf{Type I}: For a Type I secret-key query $(\gid_k,A_k,\vecv_k)$, recall that $A_k$ is partitioned as  $A_k=A_{k,\chal}\cup \bar{A}_{k,\chal}$, where $A_{k,\chal}=A_k\cap A^*$.

\begin{itemize}
\item For each $\aid_{i^*}\in A^*\cap \mathcal{N}$, recall that the number of secret-key queries involving $\aid_{i}^*$ is the parameter $N_i$. Let $\rho(\cdot)$ be the index mapping previously defined in the proof of Lemma \ref{security2}. For each $j\in [N_i]$, set $\sk_{\aid_i^*,\gid_{\rho_i(j)},\vecv_{\rho_i(j)}}\leftarrow\veck_{i,j}$. Then the algorithm $\mathcal{B}$ checks if the table $\T$ has ever recorded the image of $(\gid_{\rho_i(j)},\vecv_{\rho_i(j)})$, if not, store the mapping $(\gid_{\rho_i(j)},\vecv_{\rho_i(j)})\mapsto\vecr_{\rho_i(j)}$ to the table.

\item For each $k\in [Q]$, if $A_{k,\chal}=\varnothing$, then sample $\vecr_{\gid_k,\vecv_k}\leftarrow D_{\Z,\chi'}^{m'}$, and add the mapping $(\gid_{k},\vecv_{k})\mapsto\vecr_{\gid_k,\vecv_k}$ to the table. At this point, the table contains the image of all pairs $(\gid_k,\vecv_k)$ for each $k\in [Q]$.

\item For each $k\in [Q]$, for each $\aid\in \bar{A}_{k,\chal}$, compute $$\sk_{\aid,\gid_k,\vecv_k}\leftarrow\samplepre(\matA_{\aid},\td_{\aid},\matP_{\aid}\matG^{-1}(\vecv_k)+\matB_{\aid}\H(\gid_k,\vecv_k)),$$ where the value $\H(\gid_k,\vecv_k)$ is retrieved from the table $\T$.
\end{itemize}

\item \textbf{Type II}: For a Type II query $(\gid_j', A_j', \vecv_j')$,  recall that $A^*\cap\mathcal{N}=\{\aid_{1}^*,\ldots,\aid_{\ell}^{*}\}\subseteq A_{j}'$. 
\begin{itemize}
	\item For each $i\in [\ell],j\in [Q']$, set $\sk_{\aid_{i}^*,\gid_j',\vecv_j'}\leftarrow\veck_{i,j}'\in \Z_{q}^m$. Next, algorithm $\mathcal{B}$ adds the mapping $(\gid_j',\vecv_j')\mapsto \vecr_{j}'$ to the table $\T$.
	
	\item For each $\aid\in A_j'\setminus A^*$, algorithm $\mathcal{B}$ computes  $$\sk_{\aid,\gid_j',\vecv_j'}\leftarrow \samplepre(\matA_{\aid},\td_{\aid},\matP_{\aid}\matG^{-1}(\vecv_j')+\matB_{\aid}\vecr_{j}',\chi)$$ \emph{efficiently}.
\end{itemize}
\end{itemize}

\item \textbf{Challenge ciphertext}: Algorithm $\mathcal{B}$ parses $\vecy_2$ as $\vecy_2^\top=[\hat{\vecy}_2^\top\mid \hat{\vecy}_3^\top]$ where $\hat{\vecy}_2\in\Z_q^{m'},\hat{\vecy}_3\in\Z_q^{m}$. It constructs the ciphertext as follows:
For each $\aid\in A^*\cap\mathcal{C}$, sample $\vecs_{\aid}\rand\Z_q^n$ and $\vece_{1,\aid}\leftarrow D_{\Z,\chi}^m$, then set  $\vecc_{1,\aid}^\top\leftarrow \vecs_{\aid}^\top\matA_{\aid}+\vece_{1,\aid}^\top\in\Z_q^m$. For $\aid_i^*\in A^*\cap\mathcal{N}$, set $\vecc_{1,\aid_i^*}\leftarrow \vecy_{1,i}$. Finally, algorithm $\mathcal{B}$ constructs the ciphertext $$\ct\leftarrow\left(\{\vecc_{1,\aid}^\top\}_{\aid\in A^*},\sum_{\aid\in A^*\cap\mathcal{C}}\vecs_{\aid}^\top\matB_{\aid}+\hat{\vecy}_2^\top,\sum_{\aid\in A^*\cap\mathcal{C}}\vecs_{\aid}^\top\matP_{\aid}+\hat{\vecy}_3^\top\right).$$

 \item \textbf{Random oracle queries:} Upon receiving a query $(\gid,\vecv)\in\GID\times\Z_q^n$, algorithm $\mathcal{B}$ checks whether the input $(\gid,\vecv)$ has been queried before---either during the adversary's direct random oracle queries or implicitly through processing the secret-key queries. If so, algorithm $\mathcal{B}$ retrieves and responds with the stored value $\vecr_{\gid,\vecv}$ from the table $\T$. If the input is new, the challenger samples $\vecr_{\gid,\vecv}\mapsto D_{\Z,\chi}^{m'}$, records the mapping $(\gid,\vecv)\rightarrow \vecr_{\gid,\vecv}$ to the table $\T$, and then replies with $\vecr_{\gid,\vecv}$.
			      \end{itemize}
			\item Finally, algorithm $\mathcal{B}$ outputs whatever algorithm $\mathcal{A}$ outputs.
		\end{enumerate}

		The distributions of the public keys for non-corrupt authorities are exactly the same as those in $\H_{0}$ and $\H_1$, as they are uniformly generated. We now examine the responses to the secret-key queries:
\begin{itemize}
	\item For each $\aid_i^*\in A^*\cap\mathcal{N}$, we have \begin{align*}
        \veck_{i,j}&\leftarrow (\matA_{\aid_i^*})_{\chi}^{-1}(\matP_{\aid_i^*}\matG^{-1}(\vecv_{\rho_{i}(j)})+\matB_{\aid_i^*}\vecr_{\rho_i(j)}),\quad \text{for all } j\in [N_{i}],\\
        \veck_{i,j}'&\leftarrow (\matA_{\aid_i^*})_{\chi}^{-1}(\matP_{\aid_i^*}\matG^{-1}(\vecv_j')+\matB_{\aid_i^*}\vecr_{j}'),\quad \text{for all } j\in [Q'].
    \end{align*} 
    
    This perfectly matches the distribution of $\sk_{\aid_i^*,\gid_{\rho_i(j)},\vecv_{\rho_i(j)}}$ and $\sk_{\aid_i^*,\gid_j',\vecv_j'}$ in the actual game, respectively. 
    
    \item For each $\aid\in A_j\setminus A^*$ for some $j\in [Q]$, the secret key $\sk_{\aid,\gid_j,\vecv_j}$ is generated using $\samplepre$, identical to the procedure in $\H_0$ and $\H_1$. The same argument also applies to $\aid\in A_j'\setminus A^*$ for $j\in [Q']$. 
		\end{itemize}

    Finally, we analyze the distribution of the challenge ciphertext.
	Observe that for each $i\in[\ell]$,  $\vecy_{1,i}^\top=\vecs_{i}^\top\matA_{i}+\vece_{1,i}^\top$ and $\vecy_{2}^\top=\vecs^\top[\matB\mid\matP]+\vece_{2}^\top+[\veczero_{m'}^\top\mid \vecu_{b}^\top\matG]$ for some $\vecs^\top=[\vecs_1^\top\mid\cdots\mid\vecs_{\ell}^\top]\rand\Z_q^{n\ell}$, and we write $\vece_2^\top$ as $[\hat{\vece}_2^\top\mid \hat{\vece}_3^\top]$ where $\hat{\vece}_2\in\Z_q^{m'},\hat{\vece}_3\in\Z_q^{m}$. Then
		\begin{align*}
			 & \vecc_{1,\aid_i^*}^\top=\vecz_{1,i}^\top=\vecs_{i}^\top\matA_{\aid_i^*}+\vece_{1,i}^\top, \text{ for each }\aid_{i}^*\in A^*\cap\mathcal{N}, \\
			 & \hat{\vecy}_{2}^\top=\sum_{\aid_i^*\in A^{*}\cap \mathcal{N}}\vecs_i^\top\matB_{\aid_i^*}+\hat{\vece}_{2}^\top,                                               \\
			 & \hat{\vecy}_3^\top= \sum_{\aid_i^*\in A^{*}\cap \mathcal{N}}\vecs_i^\top\matP_{\aid_i^*}+\hat{\vece}_3^\top+\vecu_{b}^\top\matG.
		\end{align*}
	From the specification of algorithm $\mathcal{B}$, the resulting ciphertext is constructed as:
		\begin{align*}
&\vecc_{1,\aid_{i}^*}^\top=\vecs_{i}^\top\matA_{\aid_i^*}+\vece_{1,i}^\top, \quad \text{ for each }\aid_{i}^*\in A^*\cap\mathcal{N},\\
&\vecc_{1,\aid}^\top=\vecs_{\aid}^\top\matA_{\aid}+\vece_{1,\aid}^\top, \quad \text{ for each }\aid \in A^*\cap\mathcal{C},\\
    &\vecc_{2}^\top=\sum_{\aid\in A^*\cap\mathcal{C}}\vecs_{\aid}^\top\matB_{\aid}+\sum_{\aid_i^*\in A^{*}\cap \mathcal{N}}\vecs_i^\top\matB_{\aid_i^*}+\hat{\vece}_{2}^\top,\\
 &\vecc_{3}^\top=\sum_{\aid\in A^*\cap\mathcal{C}}\vecs_{\aid}^\top\matP_{\aid}+ \sum_{\aid_i^*\in A^{*}\cap \mathcal{N}}\vecs_i^\top\matP_{\aid_i^*}+\hat{\vece}_{3}^\top+ \vecu_b^\top\matG.
\end{align*}
		Therefore, the full ciphertext generated by $\mathcal{B}$ is identically distributed to that in the experiment $\H_b$.

    Therefore, the advantage $\Adv_{\mathcal{S_{\mathcal{A}}},\mathcal{B}}^\post$ in the $\INDIPFE$ assumption with respect to the sampling algorithm $\mathcal{S}_{\mathcal{A}}$ is non-negligible. By the $\INDIPFE$ assumption, this implies the existence of another efficient algorithm $\mathcal{B}'$, such that the advantage $\Adv_{\mathcal{S_{\mathcal{A}}},\mathcal{B}'}^\pre$ is also non-negligible, thereby contradicting the indistinguishability guarantee asserted in Claim \ref{indpre}.
    Hence, under the $\INDIPFE$ assumption, no efficient adversary can distinguish between $\H_0$ and $\H_1$ with non-negligible advantage. This concludes the proof of Theorem~\ref{security2}.
\end{proof}

We now provide the proof of Claim~\ref{indpre}, which was used in the proof of Theorem~\ref{security2}.
\begin{proof}[Proof of Claim~\ref{indpre}]
We prove the claim via a sequence of hybrid games, gradually transitioning from the real $\INDIPFE$ experiment to a game independent of the challenge bit $b$. In the following, we define a series of hybrid experiments $\H_{0,b}^\pre,\H_{1,b}^\pre,\H_{2,b}^\pre,\H_{3,b}^\pre$ for $b\in \{0,1\}$, where $\H_{0,b}^\pre$ corresponds to the original $\INDIPFE$ game, and $\H_{3,b}^{\pre}$ is constructed to be independent of $b$. By showing that each adjacent pair of hybrids is indistinguishable, we can conclude that $\H_{0,0}^\pre\overset{c}{\approx}\H_{0,1}^\pre$, thereby completing the proof.

\iitem{Game $\H_{0,b}^\pre$}
This experiment corresponds to the real game in the $\INDIPFE$ assumption. Specifically, the challenger proceeds as follows:

\begin{enumerate}
\item  Let $\kappa=\kappa(\lambda)$ be an upper bound on the number of random bits used by the adversary $\mathcal{S}_\mathcal{A}$. Sample $\vecr_{\pub}\rand\{0,1\}^\kappa$.

\item Run the sampling algorithm $\mathcal{S}_{\mathcal{A}}(\gp;\vecr_{\pub})$ which outputs \begin{align*}
    & 1^{\ell},\{1^{N_i+Q'}\}_{i\in [\ell]}; \\
 & (\vecr_{\rho_1(1)},\vecv_{\rho_1(1)}),\ldots,(\vecr_{\rho_1(N_1)},\vecv_{\rho_{1}(N_1)}),(\vecr_1',\vecv_1'),\ldots,(\vecr_{Q'}',\vecv_{Q'}');                   \\
 & (\vecr_{\rho_2(1)},\vecv_{\rho_2(1)}),\ldots,(\vecr_{\rho_2(N_2)},\vecv_{\rho_{2}(N_2)}),(\vecr_1',\vecv_1'),\ldots,(\vecr_{Q'}',\vecv_{Q'}');                   \\
 & \cdots \\
& (\vecr_{\rho_\ell(1)},\vecv_{\rho_\ell(1)}),\ldots,(\vecr_{\rho_\ell(N_\ell)},\vecv_{\rho_{\ell}(N_\ell)}),(\vecr_1',\vecv_1'),\ldots,(\vecr_{Q'}',\vecv_{Q'}');\\
&\vecu_0,\vecu_1.
\end{align*}

\item Sample $\matB\rand\Z_q^{n\ell\times m'},\matP\rand\Z_q^{n\ell\times m}$, and parse the matrices $$\matB = \begin{pmatrix} \matB_1 \\ \vdots \\ \matB_\ell \end{pmatrix}, \quad
\matP = \begin{pmatrix} \matP_1 \\ \vdots \\ \matP_\ell \end{pmatrix},
$$ where $\matB_{i}\in\Z_q^{n\times m' },\matP_{i}\in\Z_q^{n\times m}$ for each $i\in [\ell]$.

\item For each $i\in [\ell]$, define:
 \begin{align*}
\matQ_i^{(1)}\leftarrow\left[\matP_{i}\matG^{-1}(\vecv_{\rho_i(1)})+\matB_{i}\vecr_{\rho_i(1)}\mid\cdots\mid\matP_{i}\matG^{-1}(\vecv_{\rho_i(N_i)})+\matB_{i}\vecr_{\rho_i(N_i)}\right]\in \Z_q^{n\times N_i}. \end{align*}

and similarly define
$$\matQ_i^{(2)}=\left[\matP_{i}\matG^{-1}(\vecv_{1}')+\matB_{i} \vecr_{1}'\mid \cdots\mid \matP_{i}\matG^{-1}(\vecv_{Q'}')+\matB_{i} \vecr_{Q'}'\right]\in\Z_q^{n\times Q'}.$$
 
Set $\matQ_i=[\matQ_i^{(1)}\mid \matQ_i^{(2)}]\in\Z_{q}^{n\times (N_i+Q')}$. The construction of each matrix $\matQ_i$ matches that specified in Assumption~\ref{INDIPFE}.

 \item Then the challenger samples $\vecs_1,\ldots,\vecs_\ell\rand\Z_q^n$ and sets $\vecs^\top=\left[\vecs_1^\top\mid\cdots\mid \vecs_{\ell}^\top\right]\in \Z_q^{n\ell}$. For each $i\in [\ell]$, it samples $\vece_{1,i}\leftarrow D_{\Z,\chi}^m,\vece_{3,i}\leftarrow D_{\Z,\chi}^{N_i+Q'}$. Then it samples $\vece_2\leftarrow D_{\Z,\chi}^{m'+m}$.

 \item Next, the challenger samples $(\matA_{1},\td_{1}),\ldots,(\matA_{\ell},\td_{\ell})\leftarrow \trapgen(1^n,1^m,q)$. It computes the following values :
\begin{itemize}
\item  $\vecz_{1,i}^\top\leftarrow \vecs_i^\top\matA_i+\vece_{1,i}^\top\in\Z_q^m$ for each $i\in [\ell]$.

\item  $\vecz_{2}^\top\leftarrow \vecs^\top[\matB\mid\matP]+\vece_2^\top+[\veczero_{m'}^\top\mid \vecu_b^\top\matG]=\left[\sum_{i\in[\ell]}\vecs_i^\top\matB_i \,\Bigg|\,\sum_{i\in[\ell]}\vecs_i^\top\matP_i+\vecu_b^\top\matG\right]+\vece_2^\top\in\Z_q^{m'+m}.$
\item $\vecz_{3,i}^\top\leftarrow \vecs_{i}^\top\matQ_i+\vece_{3,i}^\top\in \Z_q^{N_i+Q'}$ for each $i\in [\ell]$.
\end{itemize}  
The component $\vecz_{3,i}$ and $\vece_{3,i}$ are parsed as $$\vecz_{3,i}^\top=[\vecz_{3,i}^{(1)\top}\mid \vecz_{3,i}^{(2)\top}]\in \Z_q^{N_i+Q'},\quad \vece_{3,i}^{\top}=[\vece_{3,i}^{(1)\top}\mid \vece_{3,i}^{(2)\top}]\in \Z_q^{N_i+Q'},$$ where $\vecz_{3,i}^{(1)},\vece_{3,i}^{(1)}\in\Z_q^{N_i}$ and $\vecz_{3,i}^{(2)},\vece_{3,i}^{(2)}\in\Z_q^{Q'}$. For clarity, define $\vect_{i,j}=\matP_{i}\matG^{-1}(\vecv_{\rho_i(j)})+\matB_{i}\vecr_{\rho_i(j)}\in \Z_q^n$ and $y_{i,j}=\vecs_i^\top\vect_{i,j}\in\Z_q$ for each $i\in [\ell]$ and $j\in[N_i]$. Similarly define $\vect_{i,j}'=\matP_{i}\matG^{-1}(\vecv_{j}')+\matB_{i}\vecr_{j}'\in \Z_q^n$ and $y_{i,j}'=\vecs_i^\top\vect_{i,j}'\in\Z_q$ for each $i\in [\ell]$ and $j\in[Q']$.

In summary, we obtain \begin{align*}
\vecz_{3,i}^{(1)\top} & =\vecs_i^\top\matQ_i^{(1)}+\vece_{3,i}^{(1)\top}                =\left[\vecs_i^\top\vect_{i,1}\mid\cdots\mid\vecs_i^\top\vect_{i,N_i}\right]+\vece_{3,i}^{(1)\top}=\left[y_{i,1}\mid\cdots\mid y_{i,N_i}\right]+\vece_{3,i}^{(1)\top}\in\Z_q^{N_i}, \\
\vecz_{3,i}^{(2)\top} & =\vecs_i^\top\matQ_i^{(2)}+\vece_{3,i}^{(2)\top}=\left[\vecs_i^\top\vect_{i,1}'\mid\cdots\mid\vecs_i^\top\vect_{i,Q'}'\right]+\vece_{3,i}^{(2)\top}=\left[y_{i,1}'\mid\cdots\mid y_{i,Q'}'\right]+\vece_{3,i}^{(2)\top}\in\Z_q^{Q'}.
\end{align*}

\item The challenger sends the tuple $(1^\lambda,\vecr_{\pub},\{(\matA_i,\vecz_{1,i}^\top)\}_{i\in [\ell]},[\matB\mid\matP],\vecz_2^\top,\{\vecz_{3,i}^\top\}_{i\in[\ell]})$ to the distinguisher $\mathcal{D}$.

\item The distinguisher $\mathcal{D}$ outputs a bit $\hat{b}\in\{0,1\}$, which is taken as the output of the experiment.
\end{enumerate}

\iitem{Game $\H_{1,b}^{\pre}$} The experiment is identical to $\H_{0,b}^\pre$, except for how the values $\vecz_{1,i},\vecz_2,\vecz_{3,i}$ for each $i\in [\ell]$ are generated. The changes are as follows:
\begin{itemize}
 \item $\vecz_{1,i}$: For each $i\in [\ell]$, sample $\tilde{\vece}_{1,i}\leftarrow D_{\Z,\chi_s}^m$, and compute $\boxed{\vecz_{1,i}^\top\leftarrow \vecs_i^\top\matA_i+\tilde{\vece}_{1,i}^\top+\vece_{1,i}^\top}$.
						  
\item $\vecz_2$: For each $i\in [\ell]$, sample $\tilde{\vece}_{2,i}\leftarrow D_{\Z,\chi_s}^{m'}$ and $\tilde{\vece}_{2,i}'\leftarrow D_{\Z,\chi_s}^m$, then compute $\tilde{\vecz}_{2,i}^\top\leftarrow\vecs_i^\top\matB_i+\tilde{\vece}_{2,i}^\top$ and $\tilde{\vecz}_{2,i}'^\top\leftarrow\vecs_i^\top\matP_i+\tilde{\vece}_{2,i}'^\top$. Finally set $$\boxed{\vecz_2^\top\leftarrow \left[\sum_{i\in [\ell]}\tilde{\vecz}_{2,i}^\top\,\Bigg|\,\sum_{i\in [\ell]}\tilde{\vecz}_{2,i}'^\top+\vecu_b^\top\matG\right]+\vece_2^\top=\left[\sum_{i\in [\ell]}(\vecs_i^\top\matB_i+\tilde{\vece}_{2,i}^\top)\,\Bigg|\,\sum_{i\in [\ell]}(\vecs_i^\top\matP_i+\tilde{\vece}_{2,i}'^\top)+\vecu_b^\top\matG\right]+\vece_2^\top}.$$
						  
\item $\vecz_{3,i}^{(1)}$: For each $i\in [\ell]$ and $j\in [N_i]$, compute 
\begin{center}
							   \fbox{
						  \parbox{0pt}{\begin{align*}
							  y_{i,j}&\leftarrow (\vecs_i^\top\matP_i+\tilde{\vece}_{2,i}'^\top)\matG^{-1}(\vecv_{\rho_i(j)})+(\vecs_i^\top\matB_i+\tilde{\vece}_{2,i}^\top)\vecr_{\rho_i(j)}\\
							  &=\tilde{\vecz}_{2,i}'^\top\matG^{-1}(\vecv_{\rho_i(j)})+\tilde{\vecz}_{2,i}^\top\vecr_{\rho_i(j)},
						  \end{align*} }} 
						  \end{center}
						  and set $$\vecz_{3,i}^{(1)\top}\leftarrow \left[y_{i,1}\mid\cdots\mid y_{i,N_i}\right]+\vece_{3,i}^{(1)\top}\in\Z_q^{N_i}. $$
						  
						   \item $\vecz_{3,i}^{(2)}$: For each $i\in [\ell]$ and $j\in [Q']$, compute 
						   \begin{center}
							   \fbox{
						  \parbox{0pt}{\begin{align*}
						   y_{i,j}'&\leftarrow (\vecs_i^\top\matP_i+\tilde{\vece}_{2,i}'^\top)\matG^{-1}(\vecv_{j}')+(\vecs_i^\top\matB_i+\tilde{\vece}_{2,i}^\top)\vecr_{j}'\\
							  &=\tilde{\vecz}_{2,i}'^\top\matG^{-1}(\vecv_{j}')+\tilde{\vecz}_{2,i}^\top\vecr_{j}',
						  \end{align*}}}
						   \end{center}
and set $$\vecz_{3,i}^{(2)\top}\leftarrow \left[y_{i,1}'\mid\cdots\mid y_{i,Q'}'\right]+\vece_{3,i}^{(2)\top}\in\Z_q^{Q'}. $$
					  \end{itemize}
					  
\iitem{Game $\H_{2,b}^{\pre}$} The experiment is identical to $\H_{1,b}^\pre$, except for how the values $\vecz_{1,i},\vecz_2,\vecz_{3,i}$ for each $i\in [\ell]$ are generated.
\begin{itemize}
\item $\vecz_{1,i}$: For each $i\in [\ell]$, sample $\boxed{\vecz_{1,i}\rand\Z_q^m}$.
						  
\item $\vecz_2$: For each $i\in [\ell]$, sample $\boxed{\tilde{\vecz}_{2,i}\rand\Z_q^{m'}}$ and $\boxed{\tilde{\vecz}_{2,i}'\rand\Z_q^m}$. Then set $$\vecz_2^\top\leftarrow \left[\sum_{i\in [\ell]}\tilde{\vecz}_{2,i}^\top\,\Bigg|\,\sum_{i\in [\ell]}\tilde{\vecz}_{2,i}'^\top+\vecu_b^\top\matG\right]+\vece_2^\top.$$
						  
\item $\vecz_{3,i}^{(1)}$: For each $i\in [\ell]$ and $j\in [N_i]$, compute 
$$y_{i,j}\leftarrow \tilde{\vecz}_{2,i}'^\top\matG^{-1}(\vecv_{\rho_i(j)})+\tilde{\vecz}_{2,i}^\top\vecr_{\rho_i(j)},$$ then set $$\vecz_{3,i}^{(1)\top}\leftarrow \left[y_{i,1}\mid\cdots\mid y_{i,N_i}\right]+\vece_{3,i}^{(1)\top}\in\Z_q^{N_i}. $$
						  
 \item $\vecz_{3,i}^{(2)}$: For each $i\in [\ell]$ and $j\in [Q']$, compute 
$$y_{i,j}'\leftarrow \tilde{\vecz}_{2,i}'\matG^{-1}(\vecv_{j}')+\tilde{\vecz}_{2,i}^\top\vecr_{j}',$$ then set $$\vecz_{3,i}^{(2)\top}\leftarrow \left[y_{i,1}'\mid\cdots\mid y_{i,Q'}'\right]+\vece_{3,i}^{(2)\top}\in\Z_q^{Q'}. $$
\end{itemize}
						  
\iitem{Game $\H_{3,b}^\pre$} The experiment is identical to $\H_{2,b}^\pre$, except it samples $\boxed{\vecz_{3,1}^{(1)}\rand\Z_q^{N_1}}$.

\begin{lemma}\label{indpre0}
	Suppose that $\chi\geq\lambda^{\omega(1)}\cdot(\sqrt{\lambda}(m+\ell)\chi_s+\lambda m'\chi'\chi_s)$. Then we have $\H_{0,b}^\pre \overset{s}{\approx} \H_{1,b}^\pre$.
\end{lemma}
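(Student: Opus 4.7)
The plan is to follow essentially the same noise-smudging strategy used in the proof of Lemma~\ref{evpre0}, since $\H_{0,b}^\pre$ and $\H_{1,b}^\pre$ differ only in that the latter adds extra Gaussian noise (with scale $\chi_s$) to $\vecz_{1,i}$, $\vecz_2$, and propagates this extra noise into $\vecz_{3,i}^{(1)},\vecz_{3,i}^{(2)}$ through the linear relations $y_{i,j}=\tilde{\vecz}_{2,i}'^\top\matG^{-1}(\vecv_{\rho_i(j)})+\tilde{\vecz}_{2,i}^\top\vecr_{\rho_i(j)}$ and $y_{i,j}'=\tilde{\vecz}_{2,i}'^\top\matG^{-1}(\vecv_{j}')+\tilde{\vecz}_{2,i}^\top\vecr_{j}'$. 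For each affected component I will bound the magnitude of the extra noise using Lemma~\ref{truncated} and then invoke the smudging lemma (Lemma~\ref{smudge}) to argue that the outer noise of scale $\chi$ absorbs it statistically. Since the challenge plaintexts $\vecu_0,\vecu_1$ enter the two games in exactly the same way and are independent of these added noise terms, the argument is insensitive to the choice of $b$.

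In detail, I would proceed component-by-component. First, for $\vecz_{1,i}$, the extra term $\tilde{\vece}_{1,i}\leftarrow D_{\Z,\chi_s}^m$ satisfies $\|\tilde{\vece}_{1,i}\|\leq \sqrt{\lambda}\chi_s$ with overwhelming probability; since $\chi\geq \lambda^{\omega(1)}\cdot\sqrt{\lambda}\chi_s$, Lemma~\ref{smudge} yields $\vece_{1,i}+\tilde{\vece}_{1,i}\overset{s}{\approx}\vece_{1,i}$. Second, for $\vecz_2$, the extra aggregated noise $[\sum_{i}\tilde{\vece}_{2,i}\mid\sum_{i}\tilde{\vece}_{2,i}']$ has infinity norm at most $\ell\sqrt{\lambda}\chi_s$; the hypothesis $\chi\geq \lambda^{\omega(1)}\cdot\sqrt{\lambda}(m+\ell)\chi_s$ is more than sufficient for smudging against $\vece_2\leftarrow D_{\Z,\chi}^{m'+m}$.

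The main (though still routine) step is the analysis of $\vecz_{3,i}$. Here the extra noise in each entry $y_{i,j}$ equals $\tilde{\vece}_{2,i}'^\top\matG^{-1}(\vecv_{\rho_i(j)})+\tilde{\vece}_{2,i}^\top\vecr_{\rho_i(j)}$, and similarly for $y_{i,j}'$. Using that $\matG^{-1}(\cdot)\in\{0,1\}^m$ and that $\|\vecr_{\rho_i(j)}\|,\|\vecr_j'\|\leq \sqrt{\lambda}\chi'$ with overwhelming probability (Lemma~\ref{truncated} applied to the Gaussian hash outputs), the total magnitude is bounded by $\sqrt{\lambda}m\chi_s+\lambda m'\chi'\chi_s$. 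This is exactly the quantity the hypothesis $\chi\geq \lambda^{\omega(1)}\cdot(\sqrt{\lambda}(m+\ell)\chi_s+\lambda m'\chi'\chi_s)$ is designed to dominate, so Lemma~\ref{smudge} again makes the perturbed noise statistically close to $\vece_{3,i}^{(1)},\vece_{3,i}^{(2)}\leftarrow D_{\Z,\chi}^{\ast}$.

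Combining the three smudging steps via the triangle inequality for statistical distance yields $\H_{0,b}^\pre\overset{s}{\approx}\H_{1,b}^\pre$. The only mildly subtle point I foresee is making sure the norm bound on $\vecr_{\rho_i(j)}$ (which is sampled by the hash/sampling oracle as a discrete Gaussian of width $\chi'$, not as $\chi_s$) is correctly propagated into the $\vecz_{3,i}$-analysis; this is the reason the hypothesis includes the explicit $\lambda m'\chi'\chi_s$ factor. Everything else mirrors Lemma~\ref{evpre0} verbatim.
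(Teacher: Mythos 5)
Your proposal is correct and follows exactly the route the paper takes: the paper's own proof simply defers to the noise-smudging argument of Lemma~\ref{evpre0}, which is precisely the component-by-component analysis (bounding the extra $\chi_s$-scale noise in $\vecz_{1,i}$, $\vecz_2$, and $\vecz_{3,i}$ via Lemma~\ref{truncated} and absorbing it with Lemma~\ref{smudge}) that you spell out. Your added remark that the argument is insensitive to the challenge bit $b$ is accurate and harmless.
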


\begin{proof}
	The claim follows directly from a standard noise smudging argument, as in the proof of Lemma \ref{evpre0}.
\end{proof}

\begin{lemma}\label{indpre1}
	Suppose that the assumption $\LWE_{n,m_1,q,\chi_s}$ holds for $m_1=\poly(m,m')$. We have $\H_{1,b}^\pre\overset{c}{\approx}\H_{2,b}^\pre$ for $b\in \{0,1\}$.
\end{lemma}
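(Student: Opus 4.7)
The plan is to prove this by a standard hybrid argument over the $\ell$ indices, following the same template as Lemma~\ref{evpre1}. Specifically, for each $d \in \{0, 1, \ldots, \ell\}$, I define an intermediate hybrid $\H_{1,b,d}^\pre$ that agrees with $\H_{1,b}^\pre$ except that, for every $i \le d$, the samples $\vecz_{1,i}$ is sampled uniformly from $\Z_q^m$, and the vectors $\tilde{\vecz}_{2,i} \in \Z_q^{m'}$ and $\tilde{\vecz}_{2,i}' \in \Z_q^m$ used in the construction of $\vecz_2$ and of the entries $y_{i,j}, y_{i,j}'$ in $\vecz_{3,i}$ are also sampled uniformly from their respective ranges. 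By construction $\H_{1,b,0}^\pre \equiv \H_{1,b}^\pre$ and $\H_{1,b,\ell}^\pre \equiv \H_{2,b}^\pre$, so it suffices to show $\H_{1,b,d-1}^\pre \overset{c}{\approx} \H_{1,b,d}^\pre$ for each $d \in [\ell]$.

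The reduction for a single step is the central technical point. Suppose some efficient distinguisher $\mathcal{D}$ separates $\H_{1,b,d-1}^\pre$ and $\H_{1,b,d}^\pre$ with non-negligible advantage. I would build an algorithm $\mathcal{D}'$ against $\LWE_{n,m_1,q,\chi_s}$ with $m_1 = 2m + m'$ as follows. On input a challenge $(\matD, \vecy)$, parse $\matD = [\matA_d \mid \matB_d \mid \matP_d]$ and $\vecy^\top = [\vecy_1^\top \mid \vecy_2^\top \mid \vecy_2'^\top]$; simulate the sampler $\mathcal{S}_{\mathcal{A}}$ using fresh public randomness (this is legitimate since $\mathcal{S}_{\mathcal{A}}$ is public-coin in the $\INDIPFE$ setting); generate $\matA_i, \matB_i, \matP_i$ honestly for $i \ne d$; sample $\vecs_i \rand \Z_q^n$ for $i > d$; then assemble $\vecz_{1,i}$, $\tilde{\vecz}_{2,i}$, $\tilde{\vecz}_{2,i}'$ uniformly for $i < d$, set $\vecz_{1,d} \leftarrow \vecy_1 + \vece_{1,d}$, $\tilde{\vecz}_{2,d} \leftarrow \vecy_2$, $\tilde{\vecz}_{2,d}' \leftarrow \vecy_2'$ (possibly adding fresh smudging noise from $D_{\Z,\chi}$ if the parameters demand), and otherwise proceed exactly as in $\H_{1,b,d}^\pre$ to build $\vecz_2$ and $\vecz_{3,i}$. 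Finally $\mathcal{D}'$ hands the assembled tuple to $\mathcal{D}$ and outputs its verdict.

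When $\vecy^\top = \vecs^\top \matD + \vece^\top$ for $\vecs \rand \Z_q^n$ and $\vece \leftarrow D_{\Z,\chi_s}^{m_1}$, the triple $(\vecz_{1,d}, \tilde{\vecz}_{2,d}, \tilde{\vecz}_{2,d}')$ is distributed exactly as in $\H_{1,b,d-1}^\pre$ (after combining the LWE noise with the smudging noise of magnitude $\chi$, using $\chi \ge \lambda^{\omega(1)} \cdot \sqrt\lambda \chi_s$ and Lemmas~\ref{truncated}, \ref{smudge}); when $\vecy \rand \Z_q^{m_1}$, it is distributed exactly as in $\H_{1,b,d}^\pre$. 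Note that the downstream values $\vecz_2, \vecz_{3,i}^{(1)}, \vecz_{3,i}^{(2)}$ are publicly computable as linear functions of $\tilde{\vecz}_{2,d}$, $\tilde{\vecz}_{2,d}'$, and the publicly known $\matG^{-1}(\vecv_{\rho_i(j)}), \matG^{-1}(\vecv_j'), \vecr_{\rho_i(j)}, \vecr_j'$, plus independently sampled noise $\vece_{3,i}^{(1)}, \vece_{3,i}^{(2)}, \vece_2$, so $\mathcal{D}'$ can simulate them without knowing $\vecs$. Thus $\mathcal{D}'$ wins with the same advantage as $\mathcal{D}$, contradicting $\LWE_{n,m_1,q,\chi_s}$.

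The main subtlety I anticipate is ensuring that the $\vecu_b^\top \matG$ term, which appears only inside $\vecz_2$ (and not inside the $y_{i,j}$ or $y_{i,j}'$ that go into $\vecz_{3,i}$), is added correctly by $\mathcal{D}'$ after the substitution; this is straightforward because $\mathcal{D}'$ knows both $\vecu_0$ and $\vecu_1$ (they are output by $\mathcal{S}_{\mathcal{A}}$) and the shift $[\veczero_{m'}^\top \mid \vecu_b^\top \matG]$ is a public additive offset. A secondary point is that in $\H_{1,b,d}^\pre$, the uniform replacements of $\tilde{\vecz}_{2,i}$ and $\tilde{\vecz}_{2,i}'$ for $i \le d$ must still be used consistently inside the formulas for $y_{i,j}, y_{i,j}'$; this is exactly how both $\H_{1,b,d-1}^\pre$ and $\H_{1,b,d}^\pre$ are defined, so the simulation remains faithful in either branch. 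Combining the $\ell$ indistinguishability steps via a standard hybrid argument yields $\H_{1,b}^\pre \overset{c}{\approx} \H_{2,b}^\pre$, completing the proof.
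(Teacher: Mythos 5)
Your proof is correct and follows essentially the same route as the paper: the paper's own proof of this lemma simply defers to the argument of Lemma~\ref{evpre1}, which is exactly the hybrid over $d\in[\ell]$ with an $\LWE_{n,2m+m',q,\chi_s}$ reduction embedding the challenge into $(\matA_d,\matB_d,\matP_d)$ and $(\vecz_{1,d},\tilde{\vecz}_{2,d},\tilde{\vecz}_{2,d}')$ that you describe, and your handling of the public additive offset $[\veczero_{m'}^\top\mid\vecu_b^\top\matG]$ is the only adaptation needed for the $\INDIPFE$ setting. One minor remark: no fresh smudging is actually required in this step, since both noise terms $\tilde{\vece}_{1,i}+\vece_{1,i}$ are already present in the definition of $\H_{1,b}^\pre$, so setting $\vecz_{1,d}\leftarrow\vecy_1+\vece_{1,d}$ matches the hybrid distribution exactly without invoking Lemma~\ref{smudge}.
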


\begin{proof}
	The claim follows directly from a standard application of the \LWE indistinguishability argument, as in the proof of Lemma \ref{evpre1}.
\end{proof}

\begin{lemma}\label{indpre2}
	Let $m'>6n\log q$ and $\chi'=\Omega(\sqrt{n\log q})$. Suppose that $\chi_s$ is an error parameter such that $\chi\geq \lambda^{\omega(1)}\cdot\sqrt{\lambda}\chi_s$, and that the $\LWE_{n,m_1,q,\chi_s}$ assumption holds for $m_1=\poly(Q_0)$, where $Q_0$ is the upper bound of the number of secret-key queries submitted by the adversary. We have that $\H_{2,b}^\pre\overset{c}{\approx}\H_{3,b}^\pre$.
    \end{lemma}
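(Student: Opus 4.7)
The plan is to prove Lemma~\ref{indpre2} by closely mirroring the inner hybrid structure developed in the proof of Lemma~\ref{evpre2}, since the only difference between $\H_{2,b}^\pre$ and $\H_{3,b}^\pre$ is that the component $\vecz_{3,1}^{(1)}$ (the Type~I portion for authority $\aid_1^*$) is replaced by a uniformly random vector in $\Z_q^{N_1}$. This transition is internal to the construction of $\matQ_1$ and is insensitive to the choice of the challenge bit $b$, so the argument proceeds for each fixed $b \in \{0,1\}$ independently.

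First, I would introduce an inner sequence of hybrid experiments $\H_{2,b,d,0}^\pre, \H_{2,b,d,1}^\pre, \ldots, \H_{2,b,d,6}^\pre$ indexed by $d \in [N_1]$, with the boundary conditions $\H_{2,b,1,0}^\pre \equiv \H_{2,b}^\pre$ and a telescoping identity yielding $\H_{2,b,N_1,6}^\pre$ equivalent to $\H_{3,b}^\pre$. In $\H_{2,b,d,0}^\pre$ the entries $y_{1,j}$ for $j < d$ have already been replaced by independent uniform samples, whereas $y_{1,j}$ for $j \geq d$ is still computed from $\tilde{\vecz}_{2,1}', \tilde{\vecz}_{2,1}, \matG^{-1}(\vecv_{\rho_1(j)})$, and $\vecr_{\rho_1(j)}$ as in the scheme. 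The critical transition is from $\H_{2,b,d,0}^\pre$ to $\H_{2,b,d,6}^\pre$, which randomizes the single entry $y_{1,d}$. I would exploit the Type~I constraint $(A_{\rho_1(d)} \cup \mathcal{C}) \cap A^* \subsetneqq A^*$ to select the smallest index $i^* > 1$ with $\aid_{i^*}^* \notin A_{\rho_1(d)}$, then rewrite $\tilde{\vecz}_{2,1} \leftarrow \tilde{\tilde{\vecz}}_{2,1} + \vecs_0$ and $\tilde{\vecz}_{2,i^*} \leftarrow \tilde{\tilde{\vecz}}_{2,i^*} - \vecs_0$ for a fresh $\vecs_0 \rand \Z_q^{m'}$. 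Because the contributions of $\vecs_0$ cancel in the sum $\sum_i \tilde{\vecz}_{2,i}$ appearing in $\vecz_2$, this change is purely symbolic at the level of $\vecz_2$ but injects $\pm \vecs_0^\top \vecr$ terms into the $y_{1,\cdot}$, $y_{1,\cdot}'$, $y_{i^*,\cdot}$, and $y_{i^*,\cdot}'$ components. Subsequent sub-hybrids smudge in additional noise drawn from $D_{\Z,\chi_s}$ (using $\chi \geq \lambda^{\omega(1)}\cdot\sqrt{\lambda}\chi_s$ and Lemma~\ref{smudge}), then apply $\flipLWE_{m', m_1, q, \chi', \chi_s}$ with $m_1 = Q + Q' = \poly(Q_0)$ to replace every term of the form $\vecs_0^\top \vecr + e$ with a uniformly random element of $\Z_q$. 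After this replacement, $y_{1,d}$ contains an independent uniform summand $\delta_{\rho_1(d)}$, and one verifies that the index $\rho_1(d)$ does not collide with any $\rho_{i^*}(j)$ (since $\aid_{i^*}^* \notin A_{\rho_1(d)}$), so $\delta_{\rho_1(d)}$ appears nowhere else and $y_{1,d}$ is genuinely uniform. The last two sub-hybrids reverse the flipped LWE and smudging steps to restore the symbolic form of the other $y$-entries, keeping only the uniformization of $y_{1,d}$.

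The main obstacle I anticipate is verifying that the Type~II queries are correctly handled through the flipped LWE step, since the same masking $\vecs_0$ simultaneously perturbs $y_{1,j}'$ and $y_{i^*,j}'$ for every $j \in [Q']$, and these need to be randomized together with the Type~I entries indexed by $\rho_1(\cdot)$ and $\rho_{i^*}(\cdot)$. Invoking a single $\flipLWE$ instance of column-count $m_1 = \poly(Q_0)$, exactly as in Claim~\ref{evpre22}, takes care of all these entries at once. Crucially, this lemma does not require the functional indistinguishability bound $\|(\vecu_0 - \vecu_1)^\top \vecv\| \leq B_1$: since we are only randomizing $\vecz_{3,1}^{(1)}$ and the challenge plaintext $\vecu_b$ appears only inside $\vecz_2$, the argument is completely insensitive to $b$, and the $B_1$-bound will only be invoked in a later hybrid that directly transitions between $b=0$ and $b=1$. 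A standard hybrid argument across $d \in [N_1]$ then concludes $\H_{2,b}^\pre \overset{c}{\approx} \H_{3,b}^\pre$ under the stated parameter regime.
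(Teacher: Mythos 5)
Your proposal is correct and matches the paper's approach: the paper proves this lemma by observing that it "follows essentially the same argument as Lemma~\ref{evpre2}," and your inner hybrid sequence (the $\pm\vecs_0$ re-randomization of $\tilde{\vecz}_{2,1}$ and $\tilde{\vecz}_{2,i^*}$ using the Type~I constraint to find $i^*$, smudging, a single $\flipLWE$ instance over all $Q+Q'$ affected entries, the non-collision of $\rho_1(d)$ with $\rho_{i^*}(\cdot)$, and the reversal steps) is exactly the structure of that proof, transplanted to the fixed-bit setting. Your observations that the transition is insensitive to $b$ and that the $B_1$ bound is not needed here also align with the paper's justification for reusing the earlier argument.
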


\begin{proof}
	This claim follows essentially the same argument as in the proof of Lemma \ref{evpre2}. Specifically, the only syntactic difference between the security model of the $\manipfe$ scheme and that of the $\maev$ scheme lies in whether the Type II secret-key queries are determined solely by the adversary or jointly with the challenger. However, in the proof of $\H_{2,b}^\pre\overset{c}{\approx}\H_{3,b}^\pre$ (and likewise $\H_{2}^\pre\overset{c}{\approx}\H_3^\pre$ in Lemma \ref{evpre2}), the distribution of the components involving Type II secret-key queries remains unchanged throughout the transition. Therefore, the indistinguishability follows directly from the argument in Lemma~\ref{evpre2}.
\end{proof}

\begin{lemma}\label{indpre3}
	Suppose that $\chi>\lambda^{\omega(1)}\cdot B_1$. Then we have $\H_{3,0}^\pre\overset{s}{\approx}\H_{3,1}^\pre$.
\end{lemma}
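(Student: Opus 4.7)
The plan is to reduce the claim to a standard noise-smudging argument (Lemma \ref{smudge}) by performing a change of variable that relocates the only $b$-dependent part of the adversary's view from $\vecz_2$ into $\vecz_{3,1}^{(2)}$, where it can be absorbed by the Gaussian noise $\vece_{3,1}^{(2)}$. The crucial ingredient will be the Type~II admissibility condition $\|(\vecu_0-\vecu_1)^\top\vecv_j'\|\le B_1$, which is precisely what is needed to make smudging succeed under the parameter assumption $\chi\ge\lambda^{\omega(1)}\cdot B_1$.

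The first step is to inspect the distribution of $\H_{3,b}^\pre$ and identify its dependence on $b$. Since every $\vecz_{1,i}$ is uniform, $\vecz_{3,1}^{(1)}$ is uniform (by definition of $\H_{3,b}^\pre$), and $\vecz_{3,i}^{(1)},\vecz_{3,i}^{(2)}$ for $i\ge 2$ are defined from $\tilde{\vecz}_{2,i},\tilde{\vecz}_{2,i}'$ with $i\ge 2$ only, the bit $b$ enters the view exclusively through the second block of $\vecz_2$ (which contains $\vecu_b^\top\matG$) and through $\vecz_{3,1}^{(2)}$ (which also uses the uniform $\tilde{\vecz}_{2,1}'$ but no $\vecu_b$). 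I would then perform the substitution $\tilde{\vecz}_{2,1}'\leftarrow \tilde{\tilde{\vecz}}_{2,1}'-\vecu_b^\top\matG$ with $\tilde{\tilde{\vecz}}_{2,1}'\rand\Z_q^m$; since $\tilde{\vecz}_{2,1}'$ was already uniform and independent of everything else, this is a measure-preserving relabeling. Under it, the $+\vecu_b^\top\matG$ summand cancels in $\vecz_2$, and each $y_{1,j}'$ becomes $\tilde{\tilde{\vecz}}_{2,1}'^\top\matG^{-1}(\vecv_j')+\tilde{\vecz}_{2,1}^\top\vecr_j'-\vecu_b^\top\vecv_j'$. Hence, in the relabeled experiment, $b$ appears only through the additive shift $-\vecu_b^\top\vecv_j'$ inside the $j$-th coordinate of $\vecz_{3,1}^{(2)\top}$, which is of the form $\omega_j-\vecu_b^\top\vecv_j'+e_{3,1,j}^{(2)}$ with $\omega_j$ independent of $b$ and $e_{3,1,j}^{(2)}\leftarrow D_{\Z,\chi}$.

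It then remains to apply Lemma \ref{smudge} coordinatewise. Every $\vecv_j'$ appearing here comes from a Type~II secret-key query, hence by admissibility satisfies $\|(\vecu_0-\vecu_1)^\top\vecv_j'\|\le B_1$; combined with the hypothesis $\chi\ge\lambda^{\omega(1)}\cdot B_1$, this yields that the distribution of $-\vecu_0^\top\vecv_j'+e_{3,1,j}^{(2)}$ is $\negl(\lambda)$-close to that of $-\vecu_1^\top\vecv_j'+e_{3,1,j}^{(2)}$. A union bound over $j\in[Q']$ with $Q'=\poly(\lambda)$ gives the claimed statistical indistinguishability $\H_{3,0}^\pre\overset{s}{\approx}\H_{3,1}^\pre$.

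The only delicate point in this plan---and the place where the argument could fail---is verifying that the change of variable does not introduce an analogous $-\vecu_b^\top\vecv_{\rho_1(j)}$ term in $\vecz_{3,1}^{(1)}$ (for which admissibility offers no bound, since Type~I vectors are not constrained in $\langle\vecu_0-\vecu_1,\vecv\rangle$). Luckily, $\vecz_{3,1}^{(1)}$ was explicitly replaced by a fresh uniform vector in the transition from $\H_{2,b}^\pre$ to $\H_{3,b}^\pre$, so the hybrid $\H_{3,b}^\pre$ no longer depends on $\tilde{\vecz}_{2,1}'$ through the Type~I block, and the substitution is harmless there. This observation is exactly the reason the preceding hybrid $\H_{2,b}^\pre\overset{c}{\approx}\H_{3,b}^\pre$ (Lemma \ref{indpre2}) had to be carried out first.
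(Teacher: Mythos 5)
Your proposal is correct and follows essentially the same route as the paper: the paper's intermediate hybrid $\H_{3,b,\midd}^\pre$ is exactly your change of variable $\tilde{\vecz}_{2,1}'\leftarrow\tilde{\tilde{\vecz}}_{2,1}'-\vecu_b^\top\matG$, after which the only $b$-dependence sits in $\vecz_{3,1}^{(2)}$ and is removed coordinatewise by Lemma~\ref{smudge} using the Type~II bound $\|(\vecu_0-\vecu_1)^\top\vecv_j'\|\leq B_1$. Your closing observation---that the substitution is harmless on the Type~I block precisely because $\vecz_{3,1}^{(1)}$ was already uniformized in $\H_{3,b}^\pre$---is the correct reason the hybrid ordering matters, and matches the role the paper assigns to Lemma~\ref{indpre2}.
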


\begin{proof}
	We prove this by defining an intermediate hybrid experiment $\H_{3,b,\midd}^\pre$ between $\H_{3,0}^\pre$ and $\H_{3,1}^\pre$.
	
	\iitem{Game $\H_{3,b,\midd}^\pre$} The game is identical to $\H_{3,b}^\pre$, except for how the challenger samples $\tilde{\vecz}_{2,1}'$.
	Specifically:
	\begin{itemize}
		\item $\vecz_{2}$: The challenger first samples $\tilde{\tilde{\vecz}}_{2,1}'\rand\Z_q^m$, and sets $\boxed{\tilde{\vecz}_{2,1}'^\top\leftarrow \tilde{\tilde{\vecz}}_{2,1}'-\vecu_b^\top\matG}$. Then it computes $$\boxed{\vecz_2^\top\leftarrow \left[\sum_{i\in [\ell]}\tilde{\vecz}_{2,i}^\top\,\Bigg|\,\sum_{i\in [\ell]}\tilde{\vecz}_{2,i}'^\top+\vecu_b^\top\matG\right]+\vece_2^\top=\left[\sum_{i\in [\ell]}\tilde{\vecz}_{2,i}^\top\,\Bigg|\,\ \tilde{\tilde{\vecz}}_{2,1}'+\sum_{2\leq i\leq \ell}\tilde{\vecz}_{2,i}'^\top\right]+\vece_2^\top}.$$

\item $\vecz_{3,1}^{(2)}$: In particular, we have the affected component $$y_{1,j}'=\tilde{\vecz}_{2,1}'^\top\matG^{-1}(\vecv_{j}')+\tilde{\vecz}_{2,1}^\top\vecr_{j}'=\tilde{\tilde{\vecz}}_{2,1}'^\top\matG^{-1}(\vecv_{j}')+\tilde{\vecz}_{2,1}^\top\vecr_{j}'-\vecu_b^\top\vecv_j'.$$ 
\end{itemize}

\begin{claim}\label{h3b}
	For $b\in \{0,1\}$, the two experiments $\H_{3,b}^\pre$ and $\H_{3,b,\midd}^\pre$ are identical, i.e., $\H_{3,b}^\pre\equiv \H_{3,b,\midd}^\pre$.
	\end{claim}
	
	\begin{proof}
	The only difference between Game $\H_{3,b}^\pre$ and Game $\H_{3,b,\midd}^\pre$ lies in how the challenger samples $\tilde{\vecz}_{2,1}'$. In Game $\H_{3,b,\midd}^\pre$, $\tilde{\vecz}_{2,1}'$ is generated by first sampling an independent uniform vector $\tilde{\tilde{\vecz}}_{2,1}'\leftarrow \Z_q^m$ and then shifted by $-\vecu_b^\top\matG$, the resulting $\tilde{\vecz}_{2,1}'$ is uniform and independent of other components, which matches exactly the distribution of $\tilde{\vecz}_{2,1}'$ in $\H_{3,b}^\pre$.
\end{proof}

\begin{claim}\label{h3bmid}
	Suppose that $\chi\geq \lambda^{\omega(1)}\cdot B_1$. Then we have $\H_{3,0,\midd}^\pre\overset{s}{\approx}\H_{3,1,\midd}^\pre$.
\end{claim}

\begin{proof}
	Since in experiment $\H_{3,b,\midd}^\pre$, the components $\vecz_{1,i},\vecz_2,\vecz_{3,i}\ (i\neq 1)$ are sampled independently of the choice of the challenge bit $b$, it suffices to consider the distribution of $\vecz_{3,1}$ in the two experiments. In $\H_{3,0,\midd}^{\pre}$, $\vecz_{3,1}^{(2)\top}=[y_{1,1}'\mid\cdots\mid y_{1,Q'}']+\vece_{3,1}^{(2)\top}$. Specifically, for each $j\in[Q']$, the $j$-th entry of $\vecz_{3,1}^{(2)}$ is given by \begin{align}\label{eqsub6}
		y_{1,j}'+e_{3,1,j}^{(2)} & =\tilde{\vecz}_{2,1}'^\top\matG^{-1}(\vecv_j')+\tilde{\vecz}_{2,1}^\top\vecr_{j}'+e_{3,1,j}^{(2)}\nonumber \\
        & =(\tilde{\tilde{\vecz}}_{2,1}'^\top-\vecu_0^\top\matG)\matG^{-1}(\vecv_j')+\tilde{\vecz}_{2,1}^\top\vecr_{j}'+e_{3,1,j}^{(2)}\nonumber                           \\
	& =\tilde{\tilde{\vecz}}_{2,1}'^\top\matG^{-1}(\vecv_j')+\tilde{\vecz}_{2,1}^\top\vecr_{j}'+e_{3,1,j}^{(2)}-\vecu_0^\top\vecv_j'\nonumber                           \\
	 & = \tilde{\tilde{\vecz}}_{2,1}'^\top\matG^{-1}(\vecv_j')+\tilde{\vecz}_{2,1}^\top\vecr_{j}'+e_{3,1,j}^{(2)}-(\vecu_1^\top+(\vecu_0-\vecu_1)^\top)\vecv_j'\nonumber \\
	 & \overset{s}{\approx} \tilde{\tilde{\vecz}}_{2,1}'^\top\matG^{-1}(\vecv_j')+\tilde{\vecz}_{2,1}^\top\vecr_{j}'+e_{3,1,j}^{(2)}-\vecu_1^\top\vecv_j'\nonumber       \\
	& =(\tilde{\tilde{\vecz}}_{2,1}'^\top-\vecu_1^\top\matG)\matG^{-1}(\vecv_j')+\tilde{\vecz}_{2,1}^\top\vecr_{j}'+e_{3,1,j}^{(2)}.\end{align}
	The approximate identity holds due to the following reason: for each Type II secret-key query $(\gid_j',A_j',\vecv_j')$, we have $\|(\vecu_0-\vecu_1)^\top\vecv_j'\|\leq B_1$ by the bounded inner product condition.  Since $\chi\geq \lambda^{\omega(1)}\cdot B_1$, Lemma \ref{smudge} implies that the noise term statistically hides the difference, i.e., $$(\vecu_0-\vecu_1)^\top\vecv_j'+e_{3,1,j}^{(2)}\overset{s}{\approx} e_{3,1,j}^{(2)}.$$ Moreover, the expression \eqref{eqsub6} matches exactly the distribution of $y_{1,j}'+e_{3,1,j}^{(2)}$ in experiment $\H_{3,1,\midd}^\pre$. Since the distribution of $\vecz_{3,1}^{(1)}$ remains unchanged across the two experiments, the distributions of $\vecz_{3,1}$ are statistically indistinguishable between $\H_{3,0,\midd}^{\pre}$ and $\H_{3,1,\midd}^{\pre}$. This completes the proof.
\end{proof}

\noindent\emph{Proof of Lemma \ref{indpre3} (Continued).} By Claims \ref{h3b} and \ref{h3bmid}, and by applying a standard hybrid argument, we conclude that $\H_{3,0}^\pre\overset{c}{\approx}\H_{3,1}^\pre$.
\end{proof}

\noindent\emph{Proof of Claim \ref{indpre} (Continued)}. Recall that our goal in Claim \ref{indpre} is to show $\H_{0,0}^\pre\overset{c}{\approx}\H_{0,1}^\pre$. By Lemmas \ref{indpre0}$\sim$\ref{indpre3} and standard hybrid argument, we conclude that no efficient distinguisher can distinguish between experiments $\H_{0,0}^\pre$ and $\H_{0,1}^\pre$ with non-negligible advantage. Specifically, \begin{align*}
		&(1^\lambda,\vecr_{\pub},\{\matA_i,\vecs_i^\top\matA_i+\vece_{1,i}^\top\}_{i\in[\ell]},[\matB\mid\matP],\vecs^\top[\matB\mid\matP]+\vece_2^\top+[\veczero\mid\vecu_0^\top\matG],\{\vecs_i^\top\matQ_i+\vece_3^\top\}_{i\in[\ell]}) \\\overset{c}{\approx}\ &(1^\lambda,\vecr_{\pub},\{\matA_i,\vecs_i^\top\matA_i+\vece_{1,i}^\top\}_{i\in[\ell]},[\matB\mid\matP],\vecs^\top[\matB\mid\matP]+\vece_2^\top+[\veczero\mid\vecu_1^\top\matG],\{\vecs_i^\top\matQ_i+\vece_3^\top\}_{i\in[\ell]}).
			\end{align*} 
\end{proof}

\subsection{Parameters}\label{con2par}
Let $\lambda$ be the security parameter.
\begin{itemize}
    \item We set the lattice dimension $n=\lambda^{1/\epsilon}$ and the modulus $q=2^{\tilde{O}(n^{\epsilon})}=2^{\tilde{O}(\lambda)}$ for some constant $\epsilon>0$, where $\tilde{O}(\cdot)$ suppresses constant and logarithmic factors. Then $m=n\lceil\log q\rceil=\tilde{O}(\lambda^{1+1/\epsilon})$.
    \item We set $L=2^\lambda$ to support ciphertext policies of arbitrary polynomial size, i.e., $\ell=\poly(\lambda)$.         
    \item For static security (Theorem~\ref{security2}), we set $m'=\tilde{O}(\lambda^{1+1/\epsilon}),Q_0=\poly(\lambda)$ and $\chi'=\Theta(\lambda^{1+1/\epsilon})$. We relies on the $\LWE_{n,m_1,q,\chi_s}$ assumption and the $\INDIPFE_{n,m,m',q,\chi,\chi}$ assumption, where the noise parameter $\chi_s=\chi_s(\lambda)$ is set polynomial-bounded and $\chi$ satisfies the bound $\chi\geq \lambda^{\omega(1)}\cdot\max\{B_1,\sqrt{\lambda}\chi_s\}$. By setting $B_1=\poly(\lambda)$, we can choose $\chi=2^{\tilde{O}(n^{\epsilon})}$. 
    
    \item To ensure correctness (Theorem~\ref{correct11}), we require $\chi > \sqrt{n \log q} \cdot \omega(\sqrt{\log n})$, which is also satisfied by setting $\chi = 2^{\tilde{O}(n^{\epsilon})}$. In addition, the correctness bound requires $B_0=\sqrt{\lambda}\chi m+\lambda \chi\chi' m'+\lambda \chi^2 mL$, which implies $B_0=2^{\tilde{O}(n^{\epsilon})}$.
     \end{itemize}
     
     We summarize with the following instantiation:
     \begin{corollary}[$(B_0,B_1)$-$\manipfe$ for subset policies in the random oracle model]
    Let $\lambda$ be a security parameter.  Assuming polynomial hardness of \LWE and the $\INDIPFE$, both holding under a sub-exponential modulus-to-noise ratio, there exists a statically secure $(B_0,B_1)$-$\manipfe$ scheme for subset policies in the random oracle model with $B_0/B_1=\superpoly(\lambda)$.
\end{corollary}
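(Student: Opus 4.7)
The plan is to derive the corollary directly from the preceding correctness theorem (Theorem~\ref{correct11}) and the static-security theorem (Theorem~\ref{security2}) applied to the reinterpreted Construction~\ref{con1} as an $\manipfe$ scheme, by instantiating the free parameters as in Section~\ref{con2par}. Since the scheme itself, its correctness, and its security have all been established in Section~\ref{sec:7}, the only remaining work is to exhibit a concrete parameter setting under which: (i) all the numerical hypotheses of Theorems~\ref{correct11} and~\ref{security2} are satisfied, (ii) both the $\LWE$ and the $\INDIPFE$ assumptions reduce to their polynomial-hardness versions with sub-exponential modulus-to-noise ratio, and (iii) the resulting noise bound $B_0$ and the ``closeness gap'' $B_1$ satisfy $B_0/B_1=\superpoly(\lambda)$.

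First, I would fix a constant $\epsilon>0$ and set $n=\lambda^{1/\epsilon}$, $q=2^{\tilde{O}(n^{\epsilon})}=2^{\tilde{O}(\lambda)}$, so that $m=n\lceil\log q\rceil=\tilde{O}(\lambda^{1+1/\epsilon})$. I would then take $m'>6n\log q$ of the same order, $\chi'=\Theta(\sqrt{n\log q})$, and $L=2^{\lambda}$ so that an arbitrary polynomial number $\ell=\poly(\lambda)$ of authorities is supported per ciphertext. I would let $\chi_s=\poly(\lambda)$ be an error width for which $\LWE_{n,m_1,q,\chi_s}$ is assumed hard for $m_1=\poly(m,m',Q_0)$ with $Q_0=\poly(\lambda)$, and let $B_1=\poly(\lambda)$ be the target closeness bound. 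Finally, I would take $\chi=2^{\tilde{O}(n^{\epsilon})}$ chosen large enough so that $\chi\geq \lambda^{\omega(1)}\cdot\max\{B_1,\sqrt{\lambda}\chi_s,\chi_0(n,q)\}$ holds, which is clearly achievable since the right-hand side is at most $\poly(\lambda)\cdot\lambda^{\omega(1)}$ whereas $\chi$ is sub-exponential.

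Next, I would simply check that these choices satisfy every hypothesis of Theorems~\ref{correct11} and~\ref{security2}. For Theorem~\ref{correct11}, one only needs $\chi\geq\chi_0(n,q)=\sqrt{n\log q}\cdot\omega(\sqrt{\log n})$, which holds by construction. For Theorem~\ref{security2}, one needs $m'>6n\log q$, $\chi'=\Omega(\sqrt{n\log q})$, $\chi\geq \lambda^{\omega(1)}\cdot\max\{B_1,\sqrt{\lambda}\chi_s,\chi_0\}$, and the $\INDIPFE_{n,m,m',q,\chi,\chi}$ assumption; all are satisfied by the parameter setting together with our working hypotheses. Correctness then yields decryption error bounded by
\begin{equation*}
B_0=\sqrt{\lambda}\chi m+\lambda\chi\chi'm'+\lambda\chi^{2}m\ell=2^{\tilde{O}(n^{\epsilon})},
\end{equation*}
so that $B_0/B_1=2^{\tilde{O}(n^{\epsilon})}/\poly(\lambda)=\superpoly(\lambda)$, as required. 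Static security of the resulting $(B_0,B_1)$-$\manipfe$ scheme is immediate from Theorem~\ref{security2}.

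I do not anticipate a substantive obstacle in this step, since all of the work has already been packaged into the two theorems. The only delicate point is bookkeeping: one must ensure that the lower bound on $\chi$, which combines the smudging bounds $\lambda^{\omega(1)}\cdot B_1$ and $\lambda^{\omega(1)}\cdot\sqrt{\lambda}\chi_s$ and the trapdoor threshold $\chi_0$, is simultaneously compatible with the sub-exponential modulus-to-noise ratio required for the $\LWE$ and $\INDIPFE$ assumptions; the setting $\chi=2^{\tilde{O}(n^{\epsilon})}$ with $q=2^{\tilde{O}(n^{\epsilon})}$ and $B_1,\chi_s=\poly(\lambda)$ resolves this cleanly. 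The policy support (arbitrary polynomial size) follows from $L=2^{\lambda}$ and the fact that all asymptotic bounds above tolerate $\ell=\poly(\lambda)$.
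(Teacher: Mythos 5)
Your proposal is correct and follows essentially the same route as the paper, which establishes this corollary by the parameter instantiation in Section~\ref{con2par}: fix $n=\lambda^{1/\epsilon}$, $q,\chi=2^{\tilde{O}(n^{\epsilon})}$, $B_1,\chi_s=\poly(\lambda)$, verify the hypotheses of Theorems~\ref{correct11} and~\ref{security2}, and read off $B_0=2^{\tilde{O}(n^{\epsilon})}$ so that $B_0/B_1=\superpoly(\lambda)$. Your only (harmless) deviation is taking $\chi'=\Theta(\sqrt{n\log q})$ rather than the paper's $\Theta(\lambda^{1+1/\epsilon})$; both satisfy the required $\chi'=\Omega(\sqrt{n\log q})$.
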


\section{Noiseless $\MAABIPFE$ Scheme from $\INDIPFE$ Assumption (in the Random Oracle Model)} \label{sec:8}
In this section, we present a construction of a \emph{noiseless} $\MAABIPFE$ scheme for subset policies in the random oracle model. This construction is obtained by applying the modulus-switching
technique to Construction \ref{con1}.

\begin{construction}[$\maipfe$ for subset policies in the random oracle model]\label{con2}
    Let $\lambda$ be the security parameter. Let $n,p$ and $q$ be lattice parameters, and define $m = n\lceil\log q\rceil$. Let $m'$, $\chi$, and $\chi'$ be additional lattice parameters.  Let $\AU=\{0,1\}^{\lambda}$ be the universe of authority identifiers, and $\GID=\{0,1\}^{\lambda}$ be the universe of global user identifiers. Let $\H: \GID\times\Z_p^n\rightarrow\Z_q^{m'}$ be a hash function, modeled as a random oracle which outputs samples drawn from the discrete Gaussian distribution $D_{\Z, \chi'}^{m'}$, as in Construction \ref{con1}. 
    
    A multi-authority attribute-based (noiseless) inner-product functional encryption scheme ($\maipfe$) over $\Z_p^n$ for subset policies consists of a tuple of efficient algorithms $$\Pi_{\maipfe}=(\globalset, \authset,\keygen,\enc,\dec).$$ The algorithms proceed as follows: 

\begin{itemize}
    \item $\globalset(1^\lambda)\rightarrow \gp$: The global setup algorithm takes as input the security parameter $\lambda$, and outputs the global parameters $\gp=(\lambda,n,m,m',p,q,\chi,\chi',\H)$.
    
    \item $\authset(\gp,\aid)\rightarrow (\pk_{\aid},\msk_{\aid})$: The authority setup algorithm takes as input the global parameters $\gp$ and an authority identifier $\aid\in\AU$. It samples $(\matA_{\aid},\td_{\aid})\leftarrow\trapgen(1^n,1^m,q),\matB_{\aid}\rand\Z_q^{n\times m'},\matP_{\aid}\rand\Z_q^{n\times m}$. It outputs a public key $\pk_{\aid}=(\matA_{\aid},\matB_{\aid},\matP_{\aid})$ and a master secret key $\msk_{\aid}=\td_{\aid}$.
    
    \item $\keygen(\gp,\pk_\aid,\msk_{\aid},\gid,\vecv)\rightarrow \sk_{\aid,\gid,\vecv}$: The key generation algorithm takes as input the global parameters $\gp$, the public key $\pk_\aid=(\matA_\aid,\matB_{\aid},\matP_{\aid})$, the authority's master secret key $\msk_\aid=\td_\aid$, the user identifier $\gid\in \GID$, the key vector $\vecv\in\Z_p^n$ (naturally lifted to $\Z_q^n$ when required). It first computes $\vecr\leftarrow \H(\gid,\vecv)$ and then uses the trapdoor $\td_\aid$ for $\matA_\aid$ to sample $\veck\leftarrow\samplepre(\matA_{\aid},\td_{\aid},\matP_\aid\matG^{-1}(\vecv)+\matB_{\aid}\vecr,\chi)$. It outputs a secret key $\sk_{\aid,\gid,\vecv}=\veck$.
    
    \item $\enc(\gp,\{\pk_{\aid}\}_{\aid\in A},\vecu)\rightarrow\ct$: The encryption algorithm takes as input the global parameters $\gp$, a set of public keys $\{\pk_{\aid}\}_{\aid\in A}=\{(\matA_{\aid},\matB_{\aid},\matP_{\aid})\}_{\aid\in A}$ associated with authorities $A\subseteq \AU$, and a plaintext vector $\vecu\in \Z_p^n$. For each $\aid\in A$, it samples $\vecs_{\aid}\rand\Z_q^n, \vece_{1,\aid}\leftarrow D_{\Z,\chi}^{m}$. It also samples $\vece_2\leftarrow D_{\Z,\chi}^{m'}$, and $\vece_3\leftarrow D_{\Z,\chi}^{m}$. It outputs the ciphertext $\ct\in\Z_q^m\times\Z_q^{m'}\times\Z_q^{m}$, where $$\ct=\left(\left\{\vecs_{\aid}^\top\matA_{\aid}+\vece_{1,\aid}^\top\right\}_{\aid\in A},\sum_{\aid\in A}\vecs_{\aid}^\top\matB_{\aid}+\vece_2^{\top},\sum_{\aid\in A}\vecs_{\aid}^{\top}\matP_{\aid}+\vece_3^{\top}+\lceil\vecu^\top\rfloor_{p\rightarrow q}\matG\right).$$
    
    \item $\dec(\gp,\{\sk_{\aid,\gid,\vecv}\}_{\aid\in A},\gid,\vecv,\ct)\rightarrow \Gamma$: The decryption algorithm takes as input the global parameters $\gp$, a collection of secret keys $\sk_{\aid,\gid,\vecv}=\veck_{\aid,\gid,\vecv}$ associated with authorities $\aid\in A$, a user identifier $\gid\in\GID$, a key vector $\vecv\in\Z_p^n$, and a ciphertext $\ct=(\{\vecc_{1,\aid}^\top\}_{\aid\in A},\vecc_2^\top,\vecc_3^\top)$. It computes $\vecr\leftarrow \H(\gid, \vecv)$ and outputs 
$$\Gamma=\lceil\vecc_3^{\top}\matG^{-1}(\vecv)+\vecc_2^\top\vecr-\sum_{\aid\in A}\vecc_{1,\aid}^\top\veck_{\aid,\gid,\vecv}\rfloor_{q\rightarrow p}.$$
\end{itemize}
\end{construction}

\subsection{Correctness}
\begin{theorem}[Correctness]~\label{correct3}
     Let $\chi_0=\sqrt{n\log q}\cdot \omega(\sqrt{\log n})$ be a polynomial such that Lemma \ref{preimage} holds. Suppose that the lattice parameters $n,m,m',p,q,\chi,\chi'$ satisfy the following conditions:
    \begin{itemize}
        \item $\chi\geq \chi_0(n,q)$.
        \item $q>np^2+2pB_0$, where $B_0=\sqrt{\lambda}\chi m+\lambda \chi\chi' m'+\lambda \chi^2 m\ell$.
    \end{itemize} Then the scheme $\Pi_{\maipfe}$ in Construction \ref{con2} is \emph{correct} as a noiseless $\MAABIPFE$ scheme.
\end{theorem}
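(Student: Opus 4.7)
The plan is to reduce correctness of Construction~\ref{con2} to the approximate correctness already established for Construction~\ref{con1} in Theorem~\ref{correct1}, and then argue that the modulus‐switching step eliminates the residual noise exactly when $q>np^{2}+2pB_{0}$.

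First I would reuse the computation from the proof of Theorem~\ref{correct1} verbatim. Since the algorithms $\globalset$, $\authset$, $\keygen$ are unchanged, each secret key $\veck_{\aid,\gid,\vecv}$ is a discrete Gaussian preimage satisfying $\matA_{\aid}\veck_{\aid,\gid,\vecv}=\matP_{\aid}\matG^{-1}(\vecv)+\matB_{\aid}\vecr$ with $\vecr=\H(\gid,\vecv)$, and the cross‐term cancellations in the decryption expression are identical to those in Theorem~\ref{correct1}. The only substantive difference is the message encoding $\lceil\vecu^{\top}\rfloor_{p\to q}\matG$ in $\vecc_{3}$ instead of $\vecu^{\top}\matG$. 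Consequently, with overwhelming probability,
\begin{equation*}
\vecc_{3}^{\top}\matG^{-1}(\vecv)+\vecc_{2}^{\top}\vecr-\sum_{\aid\in A}\vecc_{1,\aid}^{\top}\veck_{\aid,\gid,\vecv}\;=\;\lceil\vecu^{\top}\rfloor_{p\to q}\vecv+\tilde{e}\pmod{q},
\end{equation*}
where the noise $\tilde{e}$ is bounded exactly by $B_{0}=\sqrt{\lambda}\chi m+\lambda\chi\chi'm'+\lambda\chi^{2}m\ell$, as in Theorem~\ref{correct1}.

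Next I would unfold the encoding. Writing $\lceil(q/p)u_{i}\rfloor=(q/p)u_{i}+f_{i}$ with $|f_{i}|\leq 1/2$, one gets $\lceil\vecu^{\top}\rfloor_{p\to q}\vecv=(q/p)\,\vecu^{\top}\vecv+\vecf^{\top}\vecv$ (here $\vecu^{\top}\vecv$ denotes the integer inner product of centered representatives in $[-p/2,p/2)$), and $|\vecf^{\top}\vecv|\leq np/2$. Therefore the quantity inside the outer rounding is
\begin{equation*}
(q/p)\,\vecu^{\top}\vecv+(\vecf^{\top}\vecv+\tilde{e})\pmod{q},
\end{equation*}
with total additive error bounded by $np/2+B_{0}$. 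Applying $\lceil\cdot\rfloor_{q\to p}$ yields $\vecu^{\top}\vecv\bmod p$ exactly whenever $(p/q)(np/2+B_{0})<1/2$, i.e., whenever $q>np^{2}+2pB_{0}$. Reducing $\vecu^{\top}\vecv$ modulo $p$ afterwards gives precisely the inner product computed in $\Z_{p}$, as required by the noiseless syntax.

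The only mildly delicate step is the arithmetic of the modulus switch: one has to verify that the rounding error introduced by $\lceil\cdot\rfloor_{p\to q}$ and the lattice noise $\tilde{e}$ combine additively and that reduction mod $q$ does not wrap around before the final $\lceil\cdot\rfloor_{q\to p}$ is applied. Both facts follow from the strict inequality $q>np^{2}+2pB_{0}$, which guarantees that $(q/p)\vecu^{\top}\vecv+(\vecf^{\top}\vecv+\tilde{e})$ lies in a single fundamental interval of length $q$ around $(q/p)(\vecu^{\top}\vecv\bmod p)$, so the rounding is unambiguous. Combining this with the overwhelming probability bound on $\tilde{e}$ from Theorem~\ref{correct1} completes the proof.
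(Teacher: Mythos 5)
Your proposal is correct and follows essentially the same route as the paper's proof: reuse the noise bound $B_0$ from Theorem~\ref{correct1}, decompose the encoding as $\lceil\vecu^\top\rfloor_{p\to q}=(q/p)\vecu^\top+\vecf^\top$ with $\|\vecf\|\le 1/2$, and show the combined error is absorbed by the final $\lceil\cdot\rfloor_{q\to p}$ rounding exactly when $q>np^2+2pB_0$. The only cosmetic difference is that you phrase the threshold as the pre-rounding error being below $q/(2p)$ while the paper tracks the post-rounding error $|e'|<1$ including the outer rounding term $\delta'$; these are equivalent and yield the identical parameter condition.
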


\begin{proof}
    The proof idea follows the same structure as that of Theorem  \ref{correct1}. 
 Take any plaintext vector $\vecu\in\Z_p^n$, any key vector $\vecv\in\Z_p^n$, an arbitrary set of authorities $\{\aid\}_{\aid\in A}$, and an arbitrary user identifier $\gid\in\GID$. First, sample the global parameters $\gp\leftarrow\globalset(1^{\lambda})$, and generate the authority keys $(\pk_{\aid},\msk_{\aid})\leftarrow\authset(\gp,\aid)$ for each $\aid\in A$. Then, generate the secret keys $\sk_{\aid,\gid,\vecv}=\veck_{\aid,\gid,\vecv}\leftarrow\keygen(\gp,\msk_{\aid},\gid,\vecv)$ for each $\aid\in A$, and obtain the resulting ciphertext $\ct\leftarrow\enc(\gp,\{\pk_{\aid}\}_{\aid\in A},\vecu)$. 

 The decryption algorithm outputs:  $$\Gamma=\left\lceil\lceil\vecu^\top\rfloor_{p\rightarrow q}\cdot \vecv+\vece_3^\top\matG^{-1}(\vecv)+\vece_2^\top\vecr-\sum_{\aid\in A}\vece_{1,\aid}^\top\veck_{\aid,\gid,\vecv}\right\rfloor_{q\rightarrow p}.$$ 
 Suppose that $\lceil\vecu^\top\rfloor_{p\rightarrow q}=\frac{q}{p}\vecu^\top+\delta_\vecu^\top$ with $\|\delta_\vecu\|\leq 1/2$. Substituting into the expression for $\Gamma$, we obtain \begin{align*}
     \Gamma&=\left\lceil\frac{q}{p}\vecu^\top\vecv+\delta_{\vecu}^\top\vecv+\vece_3^\top\matG^{-1}(\vecv)+\vece_2^\top\vecr-\sum_{\aid\in A}\vece_{1,\aid}^\top\veck_{\aid,\gid,\vecv}\right\rfloor_{q\rightarrow p}\\
     &=\frac{p}{q}\left(\frac{q}{p}\vecu^\top\vecv+\delta_{\vecu}^\top\vecv+\vece_3^\top\matG^{-1}(\vecv)+\vece_2^\top\vecr-\sum_{\aid\in A}\vece_{1,\aid}^\top\veck_{\aid,\gid,\vecv}\right)+\delta'\\
     &=\vecu^\top\vecv+\dfrac{p}{q}\left(\delta_{\vecu}^\top\vecv+\vece_3^\top\matG^{-1}(\vecv)+\vece_2^\top\vecr-\sum_{\aid\in A}\vece_{1,\aid}^\top\veck_{\aid,\gid,\vecv}\right)+\delta'\in \Z_p,
 \end{align*}
 for some $|\delta'|\leq 1/2$.
 Since both $\Gamma$ and $\vecu^\top\vecv$ lie in $\Z_p$, we have
$\Gamma=\vecu^\top\vecv$ if and only if $$|e'|=\left|\dfrac{p}{q}\left(\delta_{\vecu}^\top\vecv+\vece_3^\top\matG^{-1}(\vecv)+\vece_2^\top\vecr-\sum_{\aid\in A}\vece_{1,\aid}^\top\veck_{\aid,\gid,\vecv}\right)+\delta'\right|< 1.$$ According to the noise bounds analysis in Theorem \ref{correct1}, we can get the upper bound $$|e'|\leq \frac{1}{2q}np^2+\frac{p}{q}B_0+\frac{1}{2},$$ where $B_0=\sqrt{\lambda}\chi m+\lambda\chi\chi'm'+\lambda\chi^2mL$. Thus, when $q>np^2+2pB_0$, it holds that the total rounding error $|e'|<1$, which ensures that $\Gamma=\vecu^\top\vecv$. Therefore,  $\Pi_{\maipfe}$ is correct as a noiseless $\MAABIPFE$ scheme.
\end{proof}

\subsection{Static Security}
The proof follows essentially the same structure and reduction strategy as that of Theorem~\ref{security2}, and is therefore omitted.
\begin{theorem}[Static Security]~\label{security3}
     Let $\chi_0(n,q)=\sqrt{n\log q}\cdot\omega(\sqrt{\log n})$ be a polynomial such that Lemma \ref{preimage} holds. Let $Q_0$ be an upper bound on the number of secret-key queries submitted by the adversary $\mathcal{A}$. Suppose that the following conditions hold:
\begin{itemize}
\item $m'>6n\log q$.
\item Let $\chi'$ be an error distribution such that $\chi'=\Omega(\sqrt{n\log q})$.
\item Let $\chi$ be an error distribution parameter such that $\chi\geq \max\{\chi_0,\lambda^{\omega(1)}\cdot\sqrt{\lambda}\chi_s\}$, where $\chi_s$ is an error parameter such that $\LWE_{n,m_1,q,\chi_s}$ assumption holds for some $m_1=\poly(m,m',Q_0)$.
\item The assumption $\INDIPFE_{n,m,m',q,\chi,\chi}$ holds.
			\end{itemize}
			Construction \ref{con2} is \emph{statically secure} as a noiseless $\maipfe$ scheme.
\end{theorem}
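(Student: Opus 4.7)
The plan is to mirror the hybrid-plus-reduction argument of Theorem~\ref{security2}, treating the modulus-switching encoding $\lceil \cdot \rfloor_{p \to q}$ as a thin wrapper around the real $\INDIPFE$ challenge. First I would define games $\H_0$ and $\H_1$ exactly as in the proof of Theorem~\ref{security2}, except that the challenge ciphertext produced by the challenger embeds $\lceil \vecu_b^\top \rfloor_{p \to q}\matG$ into its third component for $b \in \{0,1\}$. The authority setup, key generation, secret-key responses, and random-oracle bookkeeping are identical to those of Construction~\ref{con1}, so no reproof of these steps is needed; only the encryption/decryption procedure differs between Construction~\ref{con1} and Construction~\ref{con2}, and only the encryption side feeds into the security reduction.

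Next I would construct the $\INDIPFE$ sampling algorithm $\mathcal{S}_{\mathcal{A}}$ in the same way as in the proof of Theorem~\ref{security2}, with the single modification that the pair of challenge vectors returned by $\mathcal{S}_{\mathcal{A}}$ is $(\lceil \vecu_0 \rfloor_{p\to q},\, \lceil \vecu_1 \rfloor_{p\to q}) \in (\Z_q^n)^2$ rather than $(\vecu_0,\vecu_1)$. The reduction algorithm $\mathcal{B}$ that turns any efficient distinguisher between $\H_0$ and $\H_1$ into an $\INDIPFE$ post-challenge distinguisher is constructed verbatim from the one in the proof of Theorem~\ref{security2}: it embeds the $\INDIPFE$ matrices $\matA_i$, $[\matB\mid\matP]$ as the public keys of the non-corrupt authorities in $A^* \cap \mathcal{N}$, uses the preimages $\matK_i$ as secret-key responses for both Type I and Type II queries via the index maps $\rho_i$, and assembles the ciphertext from the challenge components $\vecy_{1,i}$, $\hat{\vecy}_2$, $\hat{\vecy}_3$. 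The resulting post-challenge advantage equals the adversary's distinguishing advantage between $\H_0$ and $\H_1$, exactly as in the noisy case.

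The precondition verification is where the modification from the noisy proof actually appears. The hybrid chain $\H_{0,b}^{\pre} \to \H_{1,b}^{\pre} \to \H_{2,b}^{\pre} \to \H_{3,b}^{\pre}$ from Claim~\ref{indpre} carries over unchanged, since those transitions manipulate only the $\LWE$-structured components $\vecz_{1,i}$, $\vecz_2$, and the first block of $\vecz_{3,i}$, and do not depend on the specific form of the challenge vector. The only step that requires a new argument is the final statistical step $\H_{3,0}^{\pre} \overset{s}{\approx} \H_{3,1}^{\pre}$ analogous to Lemma~\ref{indpre3}. There, the affected entries of $\vecz_{3,1}^{(2)}$ reduce to expressions involving
\[
\bigl(\lceil \vecu_0 \rfloor_{p\to q} - \lceil \vecu_1 \rfloor_{p\to q}\bigr)^\top \vecv_j' \;=\; \tfrac{q}{p}(\vecu_0 - \vecu_1)^\top \vecv_j' \;+\; (\vecdelta_0 - \vecdelta_1)^\top \vecv_j',
\]
where $\vecdelta_b \in (-\tfrac{1}{2},\tfrac{1}{2}]^n$ is the rounding residue of $\lceil \vecu_b \rfloor_{p\to q}$. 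Type II admissibility in the noiseless setting ($B_1 = 0$) forces $(\vecu_0 - \vecu_1)^\top \vecv_j' = 0$ as an integer, so the $\tfrac{q}{p}$-term vanishes in $\Z_q$ and only the bounded rounding residue $(\vecdelta_0 - \vecdelta_1)^\top \vecv_j'$ of magnitude at most $np/2$ remains. A standard application of Lemma~\ref{smudge} then shows this residue is absorbed by the discrete Gaussian noise $e_{3,1,j}^{(2)}$ whenever $\chi$ exceeds $np$ by a superpolynomial factor---an implicit consequence of the theorem's parameter regime $\chi \geq \lambda^{\omega(1)}\cdot\sqrt{\lambda}\chi_s$ once $p,\chi_s = \poly(\lambda)$.

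The only delicate point I expect to have to argue carefully is the lifting of the Type II admissibility equation $\vecu_0^\top \vecv = \vecu_1^\top \vecv$ from $\Z_p$ to an honest integer identity, since the reduction needs $\tfrac{q}{p}(\vecu_0 - \vecu_1)^\top \vecv_j'$ to vanish exactly in $\Z_q$ rather than collapse to a small nonzero multiple of $q/p$. This lifting is guaranteed by the correctness regime of Theorem~\ref{correct3}, whose condition $q > np^2 + 2pB_0$ already implicitly ensures that honest inner products do not wrap around $p$, so that the noiseless admissibility condition is naturally read as an integer equation. Once this bookkeeping is in place, the remainder of the argument is a direct transcription of the proof of Theorem~\ref{security2}, and combining the precondition indistinguishability with the $\INDIPFE$ assumption yields the desired static security of Construction~\ref{con2} as a noiseless $\MAABIPFE$ scheme.
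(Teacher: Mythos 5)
Your overall route is exactly what the paper intends: the proof of Theorem~\ref{security3} is omitted there with a pointer to Theorem~\ref{security2}, and your plan --- reuse the hybrids, the sampler $\mathcal{S}_{\mathcal{A}}$, and the reduction $\mathcal{B}$ verbatim, with the challenge pair replaced by $(\lceil\vecu_0\rfloor_{p\to q},\lceil\vecu_1\rfloor_{p\to q})$, and re-examine only the final statistical hybrid of the precondition claim --- is the right decomposition. You have also correctly located the one step that genuinely changes. Two remarks on how you handle it. First, the ``delicate lifting'' you worry about is actually automatic and does not need the correctness bound $q>np^2+2pB_0$: if the admissibility condition gives $(\vecu_0-\vecu_1)^\top\vecv_j'\equiv 0\pmod p$ for the integer representatives, i.e.\ the integer inner product equals $kp$, then $\tfrac{q}{p}\cdot kp=kq\equiv 0\pmod q$ regardless of whether the inner product ``wraps around'' $p$; so the $\tfrac{q}{p}$-term vanishes in $\Z_q$ with no further bookkeeping.

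The genuine issue is the smudging of the rounding residue $(\vecdelta_0-\vecdelta_1)^\top\vecv_j'$, which is bounded by $np$ in magnitude. Absorbing it into $e^{(2)}_{3,1,j}\leftarrow D_{\Z,\chi}$ via Lemma~\ref{smudge} requires $\chi\geq\lambda^{\omega(1)}\cdot np$, and this condition is neither among the hypotheses of Theorem~\ref{security3} nor a consequence of $\chi\geq\lambda^{\omega(1)}\cdot\sqrt{\lambda}\chi_s$. Your justification assumes $p=\poly(\lambda)$, but the paper's instantiation in the parameters subsection of Section~\ref{sec:8} takes $p=2^{\tilde{O}(n^{\epsilon})}$, i.e.\ superpolynomial, so the condition does not come for free; it holds only if the constants in $\chi=2^{\tilde{O}(n^{\epsilon})}$ and $p=2^{\tilde{O}(n^{\epsilon})}$ are chosen so that $\chi/(np)=\superpoly(\lambda)$. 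You should either add $\chi\geq\lambda^{\omega(1)}\cdot np$ as an explicit hypothesis (and check it against the instantiation), or eliminate the residue altogether by taking $p\mid q$ so that $\lceil\vecu\rfloor_{p\to q}=\tfrac{q}{p}\vecu$ exactly --- though note the latter conflicts with $q$ being prime as required by Lemma~\ref{preimage}. With that one condition supplied, the rest of your argument goes through as a direct transcription of the proof of Theorem~\ref{security2}.
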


\subsection{Parameters}
Let $\lambda$ be the security parameter.
\begin{itemize}
    \item We set the lattice dimension $n=\lambda^{1/\epsilon}$ and the modulus $q=2^{\tilde{O}(n^{\epsilon})}=2^{\tilde{O}(\lambda)}$ for some constant $\epsilon>0$, where $\tilde{O}(\cdot)$ suppresses constant and logarithmic factors. Then $m=n\lceil\log q\rceil=\tilde{O}(\lambda^{1+1/\epsilon})$.
    
    \item We set $L=2^\lambda$ to support ciphertext policies of arbitrary polynomial size, i.e., $\ell=\poly(\lambda)$.   
    
    \item For static security (Theorem~\ref{security3}), we set $m'=\tilde{O}(\lambda^{1+1/\epsilon}),Q_0=\poly(\lambda)$ and $\chi'=\Omega(\lambda^{1+1/\epsilon})$. The construction relies on the $\LWE_{n,m_1,q,\chi_s}$ assumption and the $\INDIPFE_{n,m,m',q,\chi,\chi}$ assumption, where the noise parameter $\chi_s=\chi_s(\lambda)$ is polynomial-bounded and $\chi$ must satisfy the bound $\chi\geq \lambda^{\omega(1)}\cdot\sqrt{\lambda}\chi_s$. We can set $\chi=2^{\tilde{O}(n^{\epsilon})}$.
    \item To ensure correctness (Theorem~\ref{correct3}), we require $\chi > \sqrt{n \log q} \cdot \omega(\sqrt{\log n})$, which can also be satisfied by setting $\chi = 2^{\tilde{O}(n^{\epsilon})}$. In addition, the correctness bound requires $q>np^2+2pB_0$ where $B_0=\sqrt{\lambda}\chi m+\lambda \chi\chi' m'+\lambda \chi^2 m\ell=2^{\tilde{O}(n^{\epsilon})}$, and $p = 2^{\tilde{O}(n^{\epsilon})}$ is chosen accordingly to meet this bound.
     \end{itemize}
     
     We summarize with the following instantiation:
     \begin{corollary}[Noiseless $\MAABIPFE$ for Subset Policies in the Random Oracle Model]\label{con3par}
    Let $\lambda$ be a security parameter.  Assuming polynomial hardness of \LWE and the $\INDIPFE$, both holding under a sub-exponential modulus-to-noise ratio, there exists a statically-secure noiseless $\MAABIPFE$ scheme for subset policies in the random oracle model.
\end{corollary}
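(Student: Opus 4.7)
The plan is to obtain the corollary as a direct instantiation of Construction~\ref{con2} with the parameter schedule given in Section~8.3, deducing correctness from Theorem~\ref{correct3} and static security from Theorem~\ref{security3}. I would first fix the lattice parameters $n=\lambda^{1/\epsilon}$, $q=2^{\tilde O(n^{\epsilon})}$, $m=n\lceil\log q\rceil$, $m'=\tilde O(\lambda^{1+1/\epsilon})$, $\chi'=\Theta(\lambda^{1+1/\epsilon})$, together with the Gaussian parameter $\chi=2^{\tilde O(n^{\epsilon})}$ and the plaintext modulus $p=2^{\tilde O(n^{\epsilon})}$, and take the $\LWE$ noise parameter $\chi_{s}=\poly(\lambda)$ so that $q/\chi_{s}=2^{\tilde O(\lambda)}$ is sub-exponential as required by the corollary's hypothesis.

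Next, I would verify that all hypotheses of Theorems~\ref{correct3} and~\ref{security3} are simultaneously satisfied by this schedule. For correctness, I would check $\chi\geq\chi_{0}(n,q)=\sqrt{n\log q}\cdot\omega(\sqrt{\log n})$, which is immediate from $\chi=2^{\tilde O(n^{\epsilon})}$, and then substitute the explicit bound $B_{0}=\sqrt{\lambda}\chi m+\lambda\chi\chi' m'+\lambda\chi^{2}mL$ with $L=2^{\lambda}$ to conclude $B_{0}=2^{\tilde O(n^{\epsilon})}$; choosing $p=2^{\tilde O(n^{\epsilon})}$ small enough makes the condition $q>np^{2}+2pB_{0}$ hold. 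For static security, I would verify $m'>6n\log q$, $\chi'=\Omega(\sqrt{n\log q})$, and the smudging bound $\chi\geq\lambda^{\omega(1)}\cdot\sqrt{\lambda}\chi_{s}$; the last inequality is satisfied because $\chi=2^{\tilde O(n^{\epsilon})}$ and $\chi_{s}=\poly(\lambda)$. The sub-exponential modulus-to-noise ratio assumed in the hypothesis supplies precisely the hardness of $\LWE_{n,m_{1},q,\chi_{s}}$ for $m_{1}=\poly(m,m',Q_{0})$ that Theorem~\ref{security3} invokes.

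With both sets of hypotheses discharged, I would then simply apply Theorem~\ref{security3} to conclude that Construction~\ref{con2} is a statically secure noiseless $\MAABIPFE$ scheme for subset policies in the random oracle model under $\LWE$ and $\INDIPFE$, which is exactly the statement of the corollary. Support for ciphertext policies of arbitrary polynomial size follows from the choice $L=2^{\lambda}$, so every set $A\subseteq\AU$ with $|A|=\poly(\lambda)$ falls within the scheme's policy class.

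The main obstacle I anticipate is the book-keeping required to simultaneously satisfy the correctness inequality $q>np^{2}+2pB_{0}$ and the smudging/indistinguishability bounds on $\chi$ while still maintaining a sub-exponential modulus-to-noise ratio. In particular, since $B_{0}$ grows multiplicatively with $\chi^{2}$, the choice of $p$ must be made after $\chi$ is fixed; one has to verify that there is enough ``slack'' between $\chi$ and $q$ so that a nontrivial plaintext modulus $p\geq 2$ exists with $np^{2}+2pB_{0}<q$. With $q=2^{\tilde O(n^{\epsilon})}$ and $B_{0}=2^{\tilde O(n^{\epsilon})}$ (both depending on the same implicit constants), this reduces to checking that the hidden constants in $q$ can be taken strictly larger than those in $B_{0}$, which is achievable by a straightforward (but slightly tedious) adjustment of the $\tilde O(\cdot)$ constants. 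Once this compatibility is confirmed, the rest of the argument is a routine invocation of the cited theorems.
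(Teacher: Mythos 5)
Your proposal is correct and matches the paper's own treatment: the corollary is obtained exactly by fixing the parameter schedule of Section~8.3 and invoking Theorem~\ref{correct3} for correctness and Theorem~\ref{security3} for static security, with $p=2^{\tilde O(n^{\epsilon})}$ chosen after $\chi$ so that $q>np^{2}+2pB_{0}$ holds. The slack issue you flag is handled the same way in the paper (the constants in $q$ are simply taken large enough relative to those in $B_{0}$), so no further argument is needed.
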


\bibliographystyle{alpha}
\bibliography{reference}
\begin{appendices}
    \section{Hardness of Evasive $\IPFE$}\label{sec:app}
    \subsection{Evasive $\LWE$ from \cite{WWW22}}
    We recall the evasive $\LWE$ assumption proposed in~\cite{WWW22}.
    \begin{theorem}[Evasive $\LWE$~\cite{WWW22}]\label{wwwev}
    Let $\lambda\in\N$ be a security parameter, and let $n,q,m,K,\chi,\chi'$ be lattice parameters specified by $\lambda$. Denote $\gp=(1^\lambda,q,1^n,1^m,1^K,1^\chi,1^{\chi'})$. Let $\samp$ be a sampling algorithm, which takes as input the global parameter $\gp$, and outputs a matrix $\matC\in\Z_q^{n\times K}$, a set of matrices $\matQ_1\in \Z_q^{n\times k_1},\ldots,\matQ_{\ell}\in\Z_q^{n\times k_{\ell}}$, and auxiliary information $\aux\in\{0,1\}^*$.

     For two adversaries $\mathcal{A}_0$ and $\mathcal{A}_1$, we define their advantage functions as follows:
    \begin{align*}
        \Adv_{\samp,\mathcal{A}_0}^{\pre}(\lambda)&:=\left|\pr\left[\mathcal{A}_0(1^\lambda,\vecr_{\pub},\{\matA_i,\vecz_{1,i}^\top\}_{i\in[\ell]},\matC,\vecz_2^\top,\{\vecz_{3,i}^\top\}_{i\in [\ell]})=1\right]\right.\\
        &-\left.\pr\left[\mathcal{A}_0(1^\lambda,\vecr_{\pub},\{\matA_i,\vecdelta_{1,i}^\top\}_{i\in[\ell]},\matC,\vecdelta_2^\top,\{\vecdelta_{3,i}^\top\}_{i\in 
        [\ell]})=1\right]\right|;\\
        \Adv_{\samp,\mathcal{A}_1}^{\post}(\lambda)&:=\left|\pr\left[\mathcal{A}_1(1^\lambda,\vecr_{\pub},\{\matA_i,\vecz_{1,i}^\top\}_{i\in[\ell]},\matC,\vecz_2^\top,\{\matK_i\}_{i\in [\ell]})=1\right]\right.\\
        &-\left.\pr\left[\mathcal{A}_1(1^\lambda,\vecr_{\pub},\{\matA_i,\vecdelta_{1,i}^\top\}_{i\in[\ell]},\matC,\vecdelta_2^\top,\{\matK_i\}_{i\in [\ell]})=1\right]\right|;
    \end{align*}
    where the parameters are sampled as follows:
    \begin{itemize}
        \item $(\matB,\matQ_1,\ldots,\matQ_{\ell},\aux)\leftarrow\samp(\gp),$
        \item $(\matA_1,\td_1),\ldots,(\matA_{\ell},\td_\ell)\leftarrow\trapgen(1^n,1^m,q),$
        \item $\vecs_1,\ldots,\vecs_{\ell}\rand\Z_q^n, \vecs^\top\leftarrow[\vecs_1^\top\mid\ldots\mid\vecs_{\ell}^\top]\in\Z_q^{n\ell},$
        \item $\vece_{1,i}\leftarrow D_{\Z,\chi}^m,\vece_{3,i}\leftarrow D_{\Z,\chi}^{k_i} \text{ for each } i\in [\ell], \vece_2\leftarrow D_{\Z,\chi}^{K},$
        \item $\vecdelta_{1,i}\rand\Z_q^m,\vecdelta_{3,i}\rand\Z_q^{k_i} \text{ for each } i\in [\ell], \vecdelta_2\rand\Z_q^{K},$
        \item $\vecz_{1,i}^\top\leftarrow \vecs_i^\top\matA_i+\vece_{1,i}^\top,\vecz_{3,i}^\top\leftarrow\vecs_i^\top\matQ_i+\vece_{3,i}^\top\text{ for each } i\in [\ell], \quad\vecz_{2}^\top\leftarrow\vecs^\top\matC+\vece_2^\top$,
        \item $\matK_i\leftarrow\samplepre(\matA_{i},\td_{i},\matQ_i,\chi') \text{ for each } i\in [\ell].$
    \end{itemize}
    We say that the $\evlwe_{n,m,K,q,\chi,\chi'}$ assumption holds, if for all efficient samplers $\mathcal{S}$, the following implication holds: If there exists an efficient adversary $\A_1$ with a non-negligible advantage function $\Adv_{\samp,\mathcal{A}_1}^\post(\lambda)$, then there exists another efficient adversary $\A_0$ with a non-negligible advantage function $\Adv_{\samp,\mathcal{A}_0}^\pre(\lambda)$.
    \end{theorem}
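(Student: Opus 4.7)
The statement labelled Theorem~\ref{wwwev} is, as the citation indicates, a direct restatement of the public-coin evasive $\LWE$ assumption introduced by Waters, Wee, and Wichs in \cite{WWW22}. Since evasive $\LWE$ is itself an axiomatic hardness assumption—there is no known reduction to it from a standard lattice assumption such as $\LWE$ or $\SIVP$—my proof plan is not to reduce it to a weaker primitive, but rather to verify that the formulation presented here is syntactically and distributionally equivalent to the one in \cite{WWW22}, after which the statement follows by direct invocation.

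The first step is to align the syntax. Our sampler emits a single stacked matrix $\matC \in \Z_q^{n\ell \times K}$ together with per-instance preimage targets $\{\matQ_i\}_{i \in [\ell]}$, whereas \cite{WWW22} keeps the $\ell$ sub-instances as a sequence of block components. I would exhibit the obvious block decomposition $\matC^\top = [\matC_1^\top \mid \cdots \mid \matC_\ell^\top]$ and check that the stacked $\LWE$ secret $\vecs^\top = [\vecs_1^\top \mid \cdots \mid \vecs_\ell^\top]$ induces exactly the same joint distribution on $(\{\matA_i,\vecs_i^\top\matA_i + \vece_{1,i}^\top\}_{i\in[\ell]},\vecs^\top\matC + \vece_2^\top,\{\vecs_i^\top\matQ_i + \vece_{3,i}^\top\}_{i\in[\ell]})$ as the one specified in \cite{WWW22}. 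The postcondition objects $\matK_i \leftarrow \samplepre(\matA_i, \td_i, \matQ_i, \chi')$ are generated identically in both formulations.

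The second step is to verify the parameter regime. The Gaussian widths $\chi, \chi'$, the modulus $q$, and the number of columns $m$ of each $\matA_i$ must satisfy the trapdoor conditions of Lemma~\ref{trapdoor} and the preimage sampling conditions of Lemma~\ref{preimage}, which coincide with the regime in \cite{WWW22}. I would also verify the public-coin discipline—that $\aux$ contains all random coins used by $\samp$—since this is precisely what sidesteps the obfuscation-based counterexamples (cf.\ \cite{BUW24}) and is built into the original assumption of \cite{WWW22}.

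The main ``obstacle'' is bookkeeping rather than cryptographic: because the paper's multi-authority constructions stack per-authority matrices, secrets, and errors into long vectors, I would need to carefully track the correspondence between per-block quantities in our notation and the per-instance quantities in \cite{WWW22}. Once this correspondence is made explicit, the statement is nothing more than the $\evlwe$ assumption of \cite{WWW22}, and no further cryptographic argument is required; the purpose of stating it as a theorem in this appendix is to pin down the exact parameterization that will be used in the subsequent reduction from $\evlwe$ to $\EVIPFE$.
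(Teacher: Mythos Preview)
Your assessment is correct and matches the paper's treatment: this statement is not proved in the paper at all—it is simply a restatement of the evasive $\LWE$ assumption from \cite{WWW22}, presented without a proof environment, and serves only to fix notation for the subsequent reduction to $\EVIPFE$. Your plan to verify the syntactic and distributional correspondence with the original formulation is reasonable due diligence, but the paper itself does not even do that much; it just states the assumption and moves on.
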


    \subsection{Reduction from $\evlwe$ (\cite{WWW22}) to $\EVIPFE$}
    \begin{theorem}
        Suppose that the assumption $\evlwe_{n,m,K,q,\chi,\chi'}$ holds, where $K=m+m'$, then the assumption $\EVIPFE_{n,m,m',q,\chi,\chi'}$ holds.
    \end{theorem}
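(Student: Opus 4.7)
The plan is to reduce the $\EVIPFE$ assumption to the WWW22 $\evlwe$ assumption by constructing, from any $\EVIPFE$ sampler $\mathcal{S}=(\mathcal{S}_{\vecv},\mathcal{S}_{\vecu})$, an $\evlwe$ sampler $\samp(\gp)$ as follows: sample $\vecr_{\pub}$ and $\vecr_{\pri}$; run $\mathcal{S}_{\vecv}(\gp;\vecr_{\pub})$ to obtain $(\ell,\{k_i\})$ and the tuples $\{(\vecr_{i,j},\vecv_{i,j})\}$; sample $\matB_i\rand\Z_q^{n\times m'}$ and $\matP_i\rand\Z_q^{n\times m}$; set $\matC\leftarrow[\matB\mid\matP]$ and compute each $\matQ_i$ exactly as specified in Assumption~\ref{evIPFE}; evaluate $\vecu\leftarrow\mathcal{S}_{\vecu}(\gp,\vecr_{\pub};\vecr_{\pri})$; and output $(\matC,\{\matQ_i\}_{i\in[\ell]},\aux)$ with $\aux=(\vecr_{\pub},\vecu)$. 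The critical design choice is to stash the private-coin output $\vecu$ inside $\aux$, which $\evlwe$ permits since it places no restriction on the auxiliary string, and which makes the dimension $K=m+m'$ of $\matC$ match the theorem's hypothesis.

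To establish that the $\EVIPFE$ implication ``post-adversary $\Rightarrow$ pre-adversary'' holds for every $\mathcal{S}$, I would split into cases on whether the $\EVIPFE$ precondition for $\mathcal{S}$ fails or holds. If it fails, a pre-adversary already exists and the implication is satisfied trivially. If it holds, I argue that no post-adversary can exist, making the implication vacuous. The key observation here is that $\EVIPFE$ pre subsumes the $\LWE$ pseudorandomness of the samples $(\matA_i,\vecs_i^\top\matA_i+\vece_{1,i})$ and hence implies the standard $\LWE$ assumption. From $\LWE$ together with the independence of $\aux=(\vecr_{\pub},\vecu)$ from the LWE secrets and noise, I can derive that the $\evlwe$ precondition for $\samp$ holds: pseudorandomness of $(\{\matA_i,\vecs_i^\top\matA_i+\vece_{1,i}\},\matC,\vecs^\top\matC+\vece_2,\{\vecs_i^\top\matQ_i+\vece_{3,i}\})$ conditioned on $\aux$ reduces, for each block $i$, to multi-sample $\LWE$ with secret $\vecs_i$ and concatenated public matrix $[\matA_i\mid\matB_i\mid\matP_i\mid\matQ_i]$, with the summed term $\vecs^\top\matC=\sum_i\vecs_i^\top[\matB_i\mid\matP_i]$ handled by a standard hybrid plus independent-noise/noise-flooding argument. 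Applying $\evlwe$ on $\samp$ then yields the $\evlwe$ postcondition for $\samp$.

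The final step is a direct reduction showing that any efficient $\A_1^{\text{IPFE}}$ with non-negligible $\Adv^{\post}_{\mathcal{S},\A_1^{\text{IPFE}}}$ yields an efficient $\evlwe$ post-adversary $\A_1^{\text{LWE}}$ for $\samp$ with the same advantage. Namely, $\A_1^{\text{LWE}}$ receives $(\aux=(\vecr_{\pub},\vecu),\{\matA_i,\vecz_{1,i}\},\matC,\vecz_2,\{\matK_i\})$, reads $\vecu$ off $\aux$, sets $\vecz_2^{\text{IPFE}}\leftarrow\vecz_2+[\veczero_{m'}^\top\mid\vecu^\top\matG]$, and invokes $\A_1^{\text{IPFE}}(\vecr_{\pub},\{\matA_i,\vecz_{1,i}\},[\matB\mid\matP],\vecz_2^{\text{IPFE}},\{\matK_i\})$, echoing its decision. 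In the $\evlwe$ genuine branch the shift produces exactly the $\EVIPFE$ genuine distribution, because $\vecu$ was sampled according to $\mathcal{S}_{\vecu}$ inside $\samp$ and is independent of $\vecs$ and $\vece_2$; in the uniform branch, the additive shift preserves uniformity. Combined with the previous paragraph, this contradicts the $\evlwe$ postcondition for $\samp$, so no such $\A_1^{\text{IPFE}}$ exists.

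The main obstacle I anticipate is reconciling the private-coin nature of $\vecu$ with the need to translate between the $\evlwe$ and $\EVIPFE$ forms of $\vecz_2$, a translation that requires knowing $\vecu$. The resolution is to let the $\evlwe$ sampler generate $\vecu$ internally and place it in $\aux$, so that the reduction $\A_1^{\text{LWE}}$ has access to $\vecu$ while $\A_1^{\text{IPFE}}$ still sees only $\vecr_{\pub}$. A second subtle point is the case analysis: the observation that ``$\EVIPFE$ pre holding'' already forces $\LWE$ to hold is what allows $\evlwe$ pre for $\samp$ to be derived without assuming $\LWE$ as a separate hypothesis---otherwise the theorem would need $\LWE$ as an additional assumption, and a careful accounting of which pseudorandomness step depends on which assumption is needed when writing the proof in full detail.
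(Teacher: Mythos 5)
Your high-level architecture (build an $\evlwe$ sampler from $\mathcal{S}$ with $\matC=[\matB\mid\matP]$ and $\matQ_i$ as in Assumption~\ref{evIPFE}, then lift a post-adversary by adding $[\veczero_{m'}^\top\mid\vecu^\top\matG]$ to $\vecz_2$) matches the paper's reduction in Appendix~\ref{sec:app}, and your observation that the $\EVIPFE$ precondition already subsumes $\LWE$ is correct and cleanly stated. The fatal step is in your Case~2, where you assert that the $\evlwe$ precondition for $\samp$ follows from $\LWE$ by treating $[\matA_i\mid\matB_i\mid\matP_i\mid\matQ_i]$ as a fresh public matrix. It is not fresh: $\matQ_i=[\matB_i\mid\matP_i]\matR_i$, where $\matR_i$ (the matrix whose columns are $(\vecr_{i,j},\matG^{-1}(\vecv_{i,j}))$) is publicly computable from $\vecr_{\pub}$, so $\vecz_{3,i}$ is correlated with $\vecz_2$ in a detectable way that no amount of $\LWE$ or noise flooding removes. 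Concretely, for $\ell=1$ one has $\vecz_2^\top\matR_1-\vecz_{3,1}^\top=\vece_2^\top\matR_1-\vece_{3,1}^\top$, which is short whenever the $\vecr_{1,j}$ are low-norm (as they are for the samplers actually arising in Sections~\ref{sec:6} and~\ref{sec:7}, where $\vecr_{i,j}\leftarrow D_{\Z,\chi'}^{m'}$), whereas the corresponding combination $\vecdelta_2^\top\matR_1-\vecdelta_{3,1}^\top$ is uniform. Hence the $\evlwe$ precondition for your $\samp$ is simply \emph{false} for the samplers of interest, the $\evlwe$ assumption is vacuously satisfied for $\samp$, and you cannot extract the postcondition needed to rule out an $\EVIPFE$ post-adversary. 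Note that the $\EVIPFE$ precondition itself survives this very attack only because the shift contributes the terms $\vecu^\top\vecv_{1,j}$ to that linear combination, with $\vecu$ private and the noisy inner products pseudorandom; this is exactly why the shift cannot be peeled off the precondition and re-applied later.

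Placing $\vecu$ inside $\aux$ compounds rather than resolves the problem. Since the shift by $[\veczero_{m'}^\top\mid\vecu^\top\matG]$ is efficiently invertible once $\vecu$ is known, your $\evlwe$ precondition becomes equivalent to the $\EVIPFE$ precondition \emph{with $\vecu$ revealed}, which is precisely the trivial-attack scenario that the private-coin requirement on $\mathcal{S}_{\vecu}$ (see the remark on public- and private-coin samplers after Assumption~\ref{evIPFE}) is designed to exclude: an adversary recovers $\vecu^\top\vecv_{i,j}$ plus small noise from $\vecz_2$ and $\vecz_{3,i}$ and compares it against the value it computes directly from $\aux$. The paper instead keeps $\vecu$ out of $\aux$ and has the post-reduction $\mathcal{A}_1'$ sample $\vecu$ with fresh private coins (valid because $\vecu$ is independent of the rest of the instance given $\vecr_{\pub}$), which at least avoids strengthening the precondition; but the paper then leaves the precondition transfer essentially unargued, and the implication it does state ($\evlwe$ precondition implies $\EVIPFE$ precondition) points in the opposite direction from what the reduction chain needs. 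The precondition transfer is the genuinely delicate point of this theorem, and your proposal does not close it.
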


    \begin{proof}
        Suppose, for contradiction, that $\EVIPFE_{n,m,m',q,\chi,\chi'}$ does not hold with respect to some sampler $\mathcal{S}=(\mathcal{S}_{\vecv},\mathcal{S}_{\vecu})$. First, we begin by constructing a sampler $\samp$ for the $\evlwe_{n,m,K,q,\chi,\chi'}$ assumption. On input the global parameter $\gp=(1^\lambda,q,1^n,1^m,1^K,1^\chi,1^{\chi'})$, the sampler $\samp$ proceeds as follows.
        \begin{enumerate}
            \item It runs $\mathcal{S}_{\vecv}(\gp')$ where $\gp'=(1^\lambda,q,1^n,1^m,1^{m'},1^\chi,1^{\chi'})$, which outputs $$\begin{array}{l}
          1^{\ell}, \{1^{k_i}\}_{i\in[\ell]};\\
        (\vecr_{1,1},\vecv_{1,1}),\ldots,(\vecr_{1,k_1},\vecv_{1,k_1});\\
            \cdots\\
            (\vecr_{\ell,1},\vecv_{\ell,1}),\ldots,(\vecr_{\ell,k_\ell},\vecv_{\ell,k_\ell}).      \end{array}$$
            \item It samples $\matC\rand\Z_q^{n\ell\times K}$, and parses it as $$\matC=[\matB\mid\matP]=\left[\begin{array}{c|c}
                \matB_1 & \matP_1 \\
                \vdots&\vdots \\
                \matB_{\ell} &\matP_{\ell}
            \end{array}\right],$$ where $\matB\in \Z_q^{n\ell\times m'},\matP\in\Z_q^{n\ell\times m}$ and $\matB_i\in\Z_q^{n\times m'},\matP_i\in\Z_q^{n\times m}$ for each $i\in [\ell]$.

            \item It constructs each $\matQ_i$ as 
            $$\matQ_i\leftarrow [\matB_i\mid\matP_i]\left[\begin{array}{c|c|c}
            \vecr_{i,1} & \cdots &\vecr_{i,k_i}  \\
            \matG^{-1}(\vecv_{i,1}) & \cdots &\matG^{-1}(\vecv_{i,k_i})
        \end{array}\right]\in \Z_q^{n\times k_i}$$
        \item It then outputs $(\matB,\matQ_1,\ldots,\matQ_{\ell},\aux)$, where $\aux$ consists of the randomness used by $\mathcal{S}_{\vecv}$.
        \end{enumerate}
        We claim that the assumption $\evlwe_{n,m,m',q,\chi,\chi'}$ does not hold with respect to the sampler $\samp$. Suppose that there exists an adversary $\mathcal{A}_1$ such that breaks the $\EVIPFE$ assumption. We then construct an adversary $\mathcal{A}_1'$ that breaks the $\evlwe$ assumption, using $\mathcal{A}_1$ as a subroutine. The adversary $\mathcal{A}_1'$ proceeds as follows:
        \begin{enumerate}
            \item Adversary $\mathcal{A}_1'$ begins by receiving a $\EVIPFE$ challenge $(1^\lambda,\vecr_{\pub},\{\matA_i,\vecy_{1,i}^\top\}_{i\in [\ell]},\matC,\vecy_2^\top,\{\matK_i\}_{i\in [\ell]})$, where $\matA_i\in \Z_q^{n\times m}, \vecz_{1,i}\in \Z_q^m,\matK_i\in\Z_q^{m\times k_i}$ for each $i\in [\ell]$, and $\matC\in\Z_q^{n\ell\times K},\vecz_2\in \Z_q^{n\times K}$.
        \item It samples $\vecr_{\pri}\rand\{0,1\}^\kappa$, where $\kappa$ denotes the upper bounds of the random bits used by $\mathcal{S}_{\vecu}$, runs $\vecu\leftarrow \mathcal{S}_{\vecu}(\gp,\vecr_{\pub};\vecr_{\pri})$, and sends the tuple $$(1^\lambda,\vecr_{\pub},\{\matA_{1},\vecy_{1,i}^\top\}_{i\in [\ell]},\matC,\vecy_2^\top+[\veczero_{m'}\mid \vecu^\top\matG],\{\matK_i\}_{i\in [\ell]})$$ to $\mathcal{A}_1$.
        \item Finally, adversary $\mathcal{A}_1'$ outputs whatever $\mathcal{A}_1$ outputs.
        \end{enumerate}

   By the definition of $\mathcal{S}$ and $\samp$ above, adversary $\mathcal{A}_1'$ simulates perfectly the challenger of the $\evlwe$ distinguishing game. The tuple sent to $\mathcal{A}_1$ has the same distribution as the two possible types of $\EVIPFE$ challenge instances, depending on which one is received.

   It remains to show that the parameters provided by $\mathcal{A}_1'$ satisfies the precondition of the $\evlwe$ assumption, i.e., $$(1^\lambda,\vecr_{\pub},\{\matA_i,\vecz_{1,i}^\top\}_{i\in [\ell]},\matC,\vecz_2^\top,\{\vecz_{3,i}^\top\}_{i\in [\ell]})\overset{c}{\approx}(1^\lambda,\vecr_{\pub},\{\matA_i,\vecdelta_{1,i}^\top\}_{i\in [\ell]},\matC,\vecdelta_2^\top,\{\vecdelta_{3,i}^\top\}_{i\in [\ell]}),$$ then the corresponding $\EVIPFE$ precondition naturally holds, i.e., $$(1^\lambda,\vecr_{\pub},\{\matA_i,\vecz_{1,i}^\top\}_{i\in [\ell]},\matC,\vecz_2^\top+\vecu^\top\matG,\{\vecz_{3,i}^\top\}_{i\in [\ell]})\overset{c}{\approx}(1^\lambda,\vecr_{\pub},\{\matA_i,\vecdelta_{1,i}^\top\}_{i\in [\ell]},\matC,\vecdelta_2^\top,\{\vecdelta_{3,i}^\top\}_{i\in [\ell]}),$$ where the parameters $\vecz_{1,i},\vecz_2,\vecz_{3,i},\vecdelta_{1,i},\vecdelta_2,\vecdelta_{3,i}$ are sampled as in Theorem \ref{wwwev}. This equivalence of distributions follows directly from the parameter construction defined in the reduction procedure.
    \end{proof} 
\end{appendices}

\end{document}